\DeclareFontShape{T1}{lmr}{bx}{sc} { <-> ssub * cmr/bx/sc }{}
\newcommand{\remove}[1]{}
\newcommand{\LLNCS}[1]{\ifdefined\IsLLNCS #1 \fi}
\newcommand{\TLLNCS}[2]{\ifdefined\IsLLNCS#1\else #2 \fi}
\newcommand{\authnote}[2]{[{\color{red}\textbf{#1:}}~{\color{blue} #2}]}
\newcommand{\authnote}[2]{}
\newcommand{\stkout}[1]{\ifmmode\text{\sout{\ensuremath{#1}}}\else\sout{#1}\fi}
\newcommand{\deleted}[2]{{\textbf{Deleted:}~{\color{red} \stkout{#2} }}}
\newcommand{\deletedtwo}[2]{{{\color{red} #2 }}}
\newcommand{\deleted}[2]{}
\newcommand{\deletedtwo}[2]{}
\newcommand{\changedtwo}[2]{}
\newcommand{\Tableofcontents}{
\thispagestyle{empty}
\pagenumbering{gobble}
\clearpage
{\small{
\tableofcontents
}}
\thispagestyle{empty}
\clearpage
\pagenumbering{arabic}
}
\newcommand{\QED}{\qed}
\newcommand{\QED}{}
\newcommand{\SUBSUBSEC}{.}
\newcommand{\SUBSUBSEC}{}
\newcommand{\resp}{resp.,\xspace}
\newcommand{\ie}  {i.e.,\xspace}
\newcommand{\eg}  {e.g.,\xspace}
\newcommand{\wrt} {with respect to\xspace}
\newcommand{\ceil}[1]{\lceil #1 \rceil}
\newcommand{\sset}[1]{\{#1\}}
\newcommand{\ssize}[1]{|#1|}
\renewcommand{\Pr}{{\mathrm {Pr}}}
\newcommand{\pr}[1]{\Pr\left[#1\right]}
\newcommand{\characteristic}{\mathsf{\textsf{char}}}
\newcommand{\field}{{\FF}}
\newcommand{\GG}{{\mathbb{G}}}
\newcommand{\FF}{{\mathbb{F}}}
\newcommand{\NN}{{\mathbb{N}}}
\newcommand{\ZZ}{{\mathbb{Z}}}
\newcommand{\zo}{\{0,1\}}
\newcommand{\zn}{{\zo^n}}
\newcommand{\is}{{i^\ast}}
\newcommand{\js}{{j^\ast}}
\newcommand{\zs}{{z^\ast}}
\newcommand{\poly}{\mathsf{poly}}
\newcommand{\polylog}{\mathsf{polylog}}
\newcommand{\ptp}{{point-to-point}\xspace}
\newcommand{\func}[1]{\ensuremath{f_\mathsf{#1}}\xspace}
\newcommand{\faecomm}{\func{ae\mhyphen comm}}
\newcommand{\faggrsig}{\func{aggr\mhyphen sig}}
\newcommand{\fba}{\func{ba}}
\newcommand{\fct}{\func{ct}}
\newcommand{\fcrs}{\func{crs}}
\newcommand{\fpki}{\func{pki}}
\newcommand{\fagg}{\func{agg}}
\newcommand{\prot}[1]{\pi_{\ensuremath{\mathsf{#1}}}\xspace}
\newcommand{\protba}{\prot{ba}}
\newcommand{\party}{\mathsf{party}}
\renewcommand{\cref}{\Cref}
\newaliascnt{claiml}{theorem}
\newtheorem{claiml}[claiml]{Claim}
\renewenvironment{claim}{\begin{claiml}}{\end{claiml}}
\newtheorem{theorem}{Theorem}[section]
\newaliascnt{lemma}{theorem}
\newtheorem{lemma}[lemma]{Lemma}
\newaliascnt{claim}{theorem}
\newtheorem{claim}[claim]{Claim}
\newaliascnt{corollary}{theorem}
\newtheorem{corollary}[corollary]{Corollary}
\newaliascnt{proposition}{theorem}
\newaliascnt{conjecture}{theorem}
\newaliascnt{definition}{theorem}
\newtheorem{definition}[definition]{Definition}
\newaliascnt{remark}{theorem}
\newaliascnt{example}{theorem}
\newaliascnt{construction}{theorem}
\newtheorem{construction}[construction]{Construction}
\crefname{lemma}{Lemma}{Lemmas}
\crefname{figure}{Figure}{Figures}
\crefname{claim}{Claim}{Claims}
\crefname{corollary}{Corollary}{Corollaries}
\crefname{proposition}{Proposition}{Propositions}
\crefname{conjecture}{Conjecture}{Conjectures}
\crefname{definition}{Definition}{Definitions}
\crefname{remark}{Remark}{Remarks}
\crefname{exmaple}{Example}{Examples}
\crefname{equation}{Equation}{Equations}
\newaliascnt{proto}{theorem}
\crefname{proto}{protocol}{protocols}
\newaliascnt{algo}{theorem}
\crefname{algo}{algorithm}{algorithms}
\newaliascnt{expr}{theorem}
\crefname{experiment}{experiment}{experiments}
\newcommand \mycaption {\small }     
\newcommand \mylabel {}
    \newenvironment{nfbox}[3]{
    \renewcommand \mycaption {#1}
    \renewcommand \mylabel {#2}
    \begin{center}
    \begin{small}
    \begin{tabular}{|ll|}
    \hline
    \hspace{.3ex}
    \begin{minipage}{.97\linewidth}
         \vspace{0.5ex}
         #3
         }
         {
         \vspace{-2ex}
         \captionof{figure}{\mycaption}
         \label{\mylabel}
     \end{minipage}
     &\hspace{.3ex} \\
     \hline
     \end{tabular}
     \end{small}
     \end{center}
    }
    \newenvironment{nfbox}[3]{
    \renewcommand \mycaption {#1}
    \renewcommand \mylabel {#2}
    \begin{center}
    \begin{small}
    \begin{tabular}{|ll|}
    \hline
    \hspace{0.3ex}
    \begin{minipage}{.94\linewidth}
         \smallskip
         #3
         }
         {\smallskip
    \end{minipage}
    &\hspace{-0.3ex} \\
    \hline
    \end{tabular}
   \captionsetup{type=figure}
   \captionof{figure}{\mycaption}
    \label{\mylabel}
    \end{small}
    \end{center}
    }
\newcommand{\srds}{succinctly reconstructed distributed signatures\xspace}
\newcommand{\Srds}{Succinctly reconstructed distributed signatures\xspace}
\newcommand{\SRDS}{Succinctly Reconstructed Distributed Signatures\xspace}
\newcommand{\PCDsigma}{\ensuremath{\sigma_{\mathsf{pcd}}}\xspace}
\newcommand{\PCDtau}{\ensuremath{\tau_{\mathsf{pcd}}}\xspace}
\newcommand{\PCDGen}{\ensuremath{\mathsf{PCD.Gen}}\xspace}
\newcommand{\PCDProve}{\ensuremath{\mathsf{PCD.Prover}}\xspace}
\newcommand{\PCDVerify}{\ensuremath{\mathsf{PCD.Verify}}\xspace}
\newcommand{\dcgen}{\ensuremath{\mathsf{DC_{\mathsf{Gen}}}}\xspace}
\newcommand{\MSSetup}{\ensuremath{\mathsf{MS.Setup}}\xspace}
\newcommand{\MSGen}{\ensuremath{\mathsf{MS.KeyGen}}\xspace}
\newcommand{\MSSign}{\ensuremath{\mathsf{MS.Sign}}\xspace}
\newcommand{\MSVerify}{\ensuremath{\mathsf{MS.Verify}}\xspace}
\newcommand{\MSCombine}{\ensuremath{\mathsf{MS.Combine}}\xspace}
\newcommand{\MSMverify}{\ensuremath{\mathsf{MS.MVerify}}\xspace}
\newcommand{\MSVerifyAgg}{\ensuremath{\MSVerify_\mathsf{agg\mhyphen key}}\xspace}
\newcommand{\MSSignDeg}{\ensuremath{\MSSign_\mathsf{deg\mhyphen key}}\xspace}
\newcommand{\MSVerifyInv}{\ensuremath{\mathsf{MS.MVerifyInv}}\xspace}
\newcommand{\TSSetup}{\ensuremath{\mathsf{Setup}}\xspace}
\newcommand{\TSCR}{\ensuremath{\mathsf{Setup}}\xspace}
\newcommand{\TSGen}{\ensuremath{\mathsf{KeyGen}}\xspace}
\newcommand{\TSVerify}{\ensuremath{\mathsf{Verify}}\xspace}
\newcommand{\TSSignShare}{\ensuremath{\mathsf{Sign}}\xspace}
\newcommand{\TSAggr}{\ensuremath{\mathsf{Aggregate}}\xspace}
\newcommand{\DSGen}{\ensuremath{\mathsf{DS.KeyGen}}\xspace}
\newcommand{\DSOGen}{\ensuremath{\mathsf{DS.OKeyGen}}\xspace}
\newcommand{\DSSign}{\ensuremath{\mathsf{DS.Sign}}\xspace}
\newcommand{\DSVerify}{\ensuremath{\mathsf{DS.Verify}}\xspace}
\newcommand{\naive}{na\"{i}ve\xspace}
\newcommand{\ProofSetup}{\ensuremath{\mathsf{S.Setup}}\xspace}
\newcommand{\ProofProve}{\ensuremath{\mathsf{S.Prove}}\xspace}
\newcommand{\ProofVerify}{\ensuremath{\mathsf{S.Verify}}\xspace}
\newcommand{\ProofGen}{\ensuremath{\mathsf{Proof}_{\mathsf{Gen}}}\xspace}
\newcommand{\vect}[1]{{ \boldsymbol #1}}
\newcommand{\Adv}{{\ensuremath{\mathcal{A}}}\xspace}
\newcommand{\Sim}{{\ensuremath{\cal S}}\xspace}
\newcommand{\Party}{{\ensuremath{P}}\xspace}
\newcommand{\secParam}{{\ensuremath{\kappa}}\xspace}
\newcommand{\aux}{{\mathsf{aux}}\xspace}
\newcommand{\negl}{\mathsf{negl}}
\newcommand{\Inv}{\mathsf{Inv}}
\newcommand{\PInv}{\mathsf{PInv}}
\newcommand{\leafnode}{{v}\xspace}
\newcommand{\inputlen}{{\ensuremath{\ell_\mathsf{in}}}\xspace}
\newcommand{\outputlen}{{\ensuremath{\ell_\mathsf{out}}}\xspace}
\newcommand{\heads}{{\ensuremath{\mathsf{heads}}}\xspace}
\newcommand{\Exp}{{\ensuremath{\mathbb{E}}}\xspace}
\newcommand{\LS}{{\ensuremath{\mathcal{L}}}\xspace}
\newcommand{\BS}{{\ensuremath{\mathcal{B}}}\xspace}
\newcommand{\CS}{{\ensuremath{\mathcal{C}}}\xspace}
\newcommand{\DS}{{\ensuremath{\mathcal{D}}}\xspace}
\newcommand{\E}{{\ensuremath{\mathcal{E}}}\xspace}
\newcommand{\FS}{{\ensuremath{\mathcal{F}}}\xspace}
\newcommand{\HS}{{\ensuremath{\mathcal{H}}}\xspace}
\newcommand{\IS}{{\ensuremath{\mathcal{I}}}\xspace}
\newcommand{\JS}{{\ensuremath{\mathcal{J}}}\xspace}
\newcommand{\MS}{{\ensuremath{\mathcal{M}}}\xspace}
\newcommand{\PS}{{\ensuremath{\mathcal{P}}}\xspace}
\newcommand{\XS}{{\ensuremath{\mathcal{X}}}\xspace}
\newcommand{\validset}{\ensuremath{S_{\mathsf{val}}}\xspace}
\newcommand{\verifyset}{{\ensuremath{S_{\mathsf{ver}}}}\xspace}
\newcommand{\compliance}{{\ensuremath{\mathtt{C}}}\xspace}
\newcommand{\yesD}{{\ensuremath{\mathcal{D}_\yes}}\xspace}
\newcommand{\noD}{{\ensuremath{\mathcal{D}_\no}}\xspace}
\newcommand{\yes}{{\mathsf{yes}}\xspace}
\newcommand{\no}{{\mathsf{no}}\xspace}
\newcommand{\indexMax}{{\ensuremath{\mathsf{max}}}\xspace}
\newcommand{\indexMin}{{\ensuremath{\mathsf{min}}}\xspace}
\newcommand{\goodP}{{\ensuremath{\mathcal{N}}}\xspace}
\newcommand{\setP}{{\ensuremath{\mathcal{S}}}\xspace}
\newcommand{\init}{\mathsf{init}}
\newcommand{\crs}{\mathsf{crs}}
\newcommand{\pp}{\mathsf{pp}}
\newcommand{\vk}{\mathsf{vk}}
\newcommand{\ovk}{\mathsf{ovk}}
\newcommand{\sk}{\mathsf{sk}}
\newcommand{\sig}{\mathsf{sig}}
\newcommand{\skdeg}{\sk_\mathsf{deg}}
\newcommand{\vkdeg}{\vk_\mathsf{deg}}
\newcommand{\vkagg}{\vk_\mathsf{agg}}
\newcommand{\signsig}{\mathsf{sig}}
\newcommand{\rroot}{\mathsf{root}}
\newcommand{\height}{\mathsf{height}}
\newcommand{\setsig}{S_{\mathsf{sig}}}
\newcommand{\Expt}{\ensuremath{\mathsf{Expt}}\xspace}
\newcommand{\ExptTSrobust}{\Expt^\mathsf{robust}\xspace}
\newcommand{\ExptTSforge}{\Expt^\mathsf{forge}\xspace}
\mathchardef\mhyphen="2D
\newcommand{\mode}{\mathsf{mode}\xspace}
\newcommand{\bbpki}{\mathsf{b\mhyphen pki}\xspace}
\newcommand{\trpki}{\mathsf{tr\mhyphen pki}\xspace}
\newcommand{\iith}[1] {$#1$\textsuperscript{th}\xspace}
\newcommand{\ith}           {\iith{i}}
\newcommand{\jth}           {\iith{j}}
\newcommand{\kth}           {\iith{k}}
\newcommand{\child}{\ensuremath{\mathsf{child}}}
\newcommand{\parent}{\ensuremath{\mathsf{parent}}}
\newcommand{\graph}{G\xspace}
\newcommand{\tree}{T\xspace}
\newcommand{\vertex}{V\xspace}
\newcommand{\edge}{E\xspace}
\newcommand{\inputvar}{\textsf{in}}
\newcommand{\outputvar}{\textsf{out}}
\newcommand{\commtree}{\tree}
\newcommand{\vertexCommtree}{\vertex}
\newcommand{\edgeCommtree}{\edge}
\newcommand{\trans}{\ensuremath{\mathsf{trans}}}
\newcommand{\linp}{\ensuremath{\mathsf{linp}}}
\newcommand{\data}{\ensuremath{\mathsf{data}}}
\newcommand{\prove}{\ensuremath{\mathsf{proof}}}
\newcommand{\inputs}{\ensuremath{\mathsf{inputs}}}
\newcommand{\subsetsize}{\ensuremath{s}}
\newcommand{\progP}{\ensuremath{\mathsf{P}}}
\newcommand{\ms}{\mathsf{ms}}
\newcommand{\sigmams}{\sigma_\ms}
\newcommand{\ppms}{\pp_\ms}
\newcommand{\XSms}{\XS_\ms}
\newcommand{\ppsrds}{\pp_{\mathsf{srds}}}
\newcommand{\fmsg}{f_{\mathsf{msg}}}
\newcommand{\hash}{H}
\newcommand{\sibling}{\mathsf{sibling}}
\newcommand{\msetup}{\mathsf{Merkle.Setup}}
\newcommand{\mhash}{\mathsf{Merkle.Hash}}
\newcommand{\mproof}{\mathsf{Merkle.Proof}}
\newcommand{\mverify}{\mathsf{Merkle.Verify}}
\newcommand{\seed}{\mathsf{seed}}
\newcommand{\mapping}{\mathsf{idmap}}
\newcommand{\Range}{\mathsf{range}}
\newcommand{\plus}{\raisebox{.4\height}{\scalebox{.6}{+}}}
\title{Breaking the $O(\sqrt n)$-Bit Barrier:\\ Byzantine Agreement with Polylog Bits Per Party}
    \author{}
    \date{}
        \author{Elette Boyle\inst{1}
        \and Ran Cohen\inst{2}
        \and Aarushi Goel\inst{3}
        }
        \institute{Reichman University and NTT Research\\ \email{elette.boyle@runi.ac.il}
        \and Reichman University\\ \email{cohenran@runi.ac.il}
        \and NTT Research\\ \email{aarushi.goel@ntt-research.com}
        }
        \author{Elette Boyle\thanks{Reichman University and NTT Research. E-mail: \texttt{elette.boyle@runi.ac.il}.}
        \and Ran Cohen\thanks{Reichman University. E-mail: \texttt{cohenran@runi.ac.il}. }
        \and Aarushi Goel\thanks{NTT Research. E-mail: \texttt{aarushi.goel@ntt-research.com}.}
        }
\begin{document}
\sloppy

\maketitle
\thispagestyle{empty}


\begin{abstract}
\emph{Byzantine agreement} (BA), the task of $n$ parties to agree on one of their input bits in the face of malicious agents, is a powerful primitive that lies at the core of a vast range of distributed protocols. Interestingly, in BA protocols with the best overall communication, the demands of the parties are highly \emph{unbalanced}: the amortized cost is $\tilde O(1)$ bits per party, but some parties must send $\Omega(n)$ bits.
In best known \emph{balanced} protocols, the overall communication is sub-optimal, with each party communicating $\tilde O(\sqrt{n})$.

In this work, we ask whether asymmetry is inherent for optimizing total communication. In particular, is BA possible where \emph{each} party communicates only $\tilde O(1)$ bits? Our contributions in this line are as follows:

\begin{itemize}[leftmargin=*]
\item
We define a cryptographic primitive---\emph{succinctly reconstructed distributed signatures} (SRDS)---that suffices for constructing $\tilde O(1)$ balanced BA. We provide two constructions of SRDS from different cryptographic and Public-Key Infrastructure (PKI) assumptions.

\item
The SRDS-based BA follows a paradigm of boosting from ``almost-everywhere'' agreement to full agreement, and does so in a single round. Complementarily, we prove that PKI setup and cryptographic assumptions are necessary for such protocols in which every party sends $o(n)$ messages.

\item
We further explore connections between a natural approach toward attaining SRDS and average-case succinct non-interactive argument systems (SNARGs) for a particular type of NP-Complete problems (generalizing Subset-Sum and Subset-Product).
\end{itemize}

\noindent
Our results provide new approaches forward, as well as limitations and barriers, towards minimizing per-party communication of BA. In particular, we construct the first two BA protocols with $\tilde O(1)$ balanced communication, offering a tradeoff between setup and cryptographic assumptions, and answering an open question presented by King and Saia (DISC'09).
\end{abstract}

\ifdefined\IsLLNCS\else
\Tableofcontents
\fi


\section{Introduction}

The problem of \emph{Byzantine agreement (BA)}~\cite{PSL80,LSP82} asks for a set of $n$ parties to agree on one of their input bits, even facing malicious corruptions. BA is a surprisingly powerful primitive that lies at the core of virtually every interactive protocol tolerating malicious adversaries, ranging from other types of consensus primitives such as broadcast~\cite{PSL80,LSP82} and blockchain protocols (\eg \cite{CM19}), to secure multiparty computation (MPC) \cite{Yao82,GMW87,BGW88,CCD88,RB89}.
In this work, we study BA in a standard context, where a potentially large set of $n$ parties runs the protocol within a synchronous network, and security is guaranteed facing a constant fraction of statically corrupted parties.

Understanding the required communication complexity of BA as a function of $n$ is the subject of a rich line of research.
For the relaxed goal of \emph{almost-everywhere agreement}~\cite{DPPU88}, \ie agreement of all but $o(1)$ fraction of the parties, the full picture is essentially understood. The influential work of \citet{KSSV06} showed a solution roughly ideal in every dimension: in which each party speaks to $\tilde O(1)$ other parties (\ie polylog degree of communication graph, a.k.a.\ communication \emph{locality}~\cite{BGT13}), and communicates a total of $\tilde O(1)$ bits throughout the protocol, in $\tilde O(1)$ rounds;\footnote{We follow the standard practice in large-scale cryptographic protocols, where $\tilde{O}$ hides polynomial factors in $\log{n}$ and in the security parameter $\secParam$, see \eg \cite{DI06,DIKNS08}.}
further, the solution does not require cryptographic and/or trusted setup assumptions and is given in the full-information model.
The main challenge in BA thus becomes extending almost-everywhere to full agreement.

In this regime, our current knowledge becomes surprisingly disconnected.
While it is known how to employ cryptography and setup assumptions to compute BA with $\tilde{O}(1)$ locality~\cite{BGT13,CCGGOZ15,BCDH18}, the number of \emph{bits} that must be communicated by each party is large, $\Omega(n)$.\footnote{In fact, the constructions in \cite{BGT13,CCGGOZ15,BCDH18} are for MPC protocols that enable secure computation of any function with $\tilde{O}(1)$ locality; these protocols are defined over point-to-point networks, and so also provide a solution for the specific task of BA.}
BA with \emph{amortized} $\tilde{O}(1)$ per-party communication (and computation) can be achieved~\cite{BGH13,CM19,ACDNPRS19}; however, the structure of these protocols is wildly unbalanced: with some parties who must each communicate with $\Theta(n)$ parties and send $\Omega(n)$ bits.
The existence of ``central parties'' who communicate a large amount facilitates fast convergence in these protocols.
When optimizing per-party communication, the best BA solutions degrade to $\tilde\Theta(\sqrt n)$ bits/party, with suboptimal $\tilde O(n^{3/2})$ overall communication~\cite{KS11,KLST11}.

This intriguing  gap leads us to the core question studied in this paper: Is such an imbalance inherent?
More specifically:
\begin{quote}
	\centering
	\emph{Is it possible to achieve Byzantine agreement with \emph{(balanced)} \\ per-party communication of $\tilde{O}(1)$?}
\end{quote}	

Before addressing our results, it is beneficial to consider the relevant lower bounds.
It is well known that any \emph{deterministic} BA protocol requires $\Omega(n^2)$ communication~\cite{DR85} (and furthermore, the connectivity of the underlying communication graph must be $\Omega(n)$~\cite{DOlev82,FLM86}).
This result extends to randomized BA protocols, in the special case of very \emph{strong adversarial} (adaptive, strongly rushing\footnote{A \emph{strongly rushing} adversary in~\cite{ACDNPRS19} can adaptively corrupt a party that has sent a message $m$ and replace the message with another $m'$, as long as no honest party received $m$.}) capabilities~\cite{ACDNPRS19}.
Most closely related is the lower bound of \citet{HKK08}, who showed that without trusted setup assumptions, at least one party must send $\Omega(\sqrt[3]{n})$ messages.\footnote{The lower bound in \cite{HKK08} easily extends to a public setup such as a common reference string.} But, the bound in \cite{HKK08} applies only to a restricted setting of protocols with \emph{static message filtering}, where every party decides on the set of parties it will listen to before the beginning of each round (as a function of its internal view at the end of the previous round).
We note that while the almost-everywhere agreement protocol in~\cite{KSSV06} falls into the static-filtering model, all other scalable BA protocols mentioned above crucially rely on \emph{dynamic message filtering} (which is based on incoming messages' content).
This leaves the feasibility question open.

\subsection{Our Results}\label{sec:intro:results}

We perform an in-depth investigation of boosting from almost-everywhere to full agreement with $\tilde O(1)$ communication per party.
Motivated by the $\tilde O(1)$-locality protocol of \citet*{BGT13}, we first achieve an intermediate step of \emph{certified almost-everywhere agreement}, where almost all of the parties reach agreement, and, in addition, hold a certificate for the agreed value. \citet{BGT13} showed how to boost certified almost-everywhere agreement to full agreement in a single round, where every party communicates with $\tilde{O}(1)$ parties.

Our initial observation is that the protocol from \cite{BGT13} achieves low communication aside from one expensive piece: the distributed generation of the certificate, which is of size $\Theta(n)$, and its dissemination.
We thus target this step and explore.

Our contributions can be summarized as follows.

\begin{itemize} [leftmargin=*]
\item \textbf{SRDS and balanced BA.}
We define a minimal cryptographic primitive whose existence implies $\tilde O(1)$ balanced BA: \emph{succinctly reconstructed distributed signatures} (SRDS).

We provide two constructions of SRDS, each based on a different flavor of a public-key infrastructure (PKI): (1) from one-way functions in a ``trusted-PKI'' model, and (2) from collision-resistant hash functions (CRH) and a strong form of succinct non-interactive arguments of knowledge (SNARKs)\footnote{A SNARK~\cite{Micali94,BCCGLRT17} is a proof system that enables a prover holding a witness $w$ to some public NP statement $x$ to convince a verifier that it indeed knows $w$ by sending a single message. The proof string is succinct in the sense that it is much shorter than the witness $w$, and knowledge is formalized via an efficient extractor that succeeds extracting $w$ from a malicious prover $P^*$ with roughly the same probability that $P^*$ convinces an honest verifier.} in a model with a ``bare PKI'' and a common random string (CRS). Roughly, trusted-PKI setup assumes that parties' keys are generated properly, whereas bare PKI further supports the case where corrupt parties may generate keys maliciously. We elaborate on the difference between the PKI models in \cref{sec:intro:technique}.

\item \textbf{Necessity of setup for one-shot ``boost.''}
Our SRDS-based BA follows a paradigm of boosting from almost-everywhere to full agreement, and does so in a single communication round.
Complementarily, we prove two lower bounds for any such one-shot boost in which every party sends $o(n)$ messages. The first shows that some form of PKI (or stronger setup, such as correlated randomness\footnote{In the \emph{correlated-randomness} model a trusted dealer samples $n$ secret strings from a joint distribution and delivers to each party its corresponding secret string, \eg a setup for threshold signatures.}) is \emph{necessary} for this task. The second shows that given only PKI setup (as opposed to stronger, correlated-randomness setup), then \emph{computational assumptions} (namely, at least one-way functions) are additionally required.

In contrast to prior lower bounds (\eg \cite{HKK08,ACDNPRS19}), this holds even against a static adversary, and where parties can exercise dynamic filtering (\ie without placing limitations on how parties can select to whom to listen).

\item \textbf{Connections to succinct arguments.}
We further explore connections between a natural approach toward attaining SRDS in weaker PKI models and \emph{average-case} succinct non-interactive argument (SNARG) systems\footnote{Similarly to a SNARK, a SNARG allows a prover holding a witness $w$ to some public NP statement $x$ to convince a verifier that $x$ belongs to the language; however, as opposed to a SNARK, here the prover does not prove that it knows $w$ (only that such a witness exists), hence there is no requirement to extract the witness from a cheating prover.} for a particular type of NP-Complete problems (generalizing Subset-Sum and Subset-Product). This can be interpreted as a barrier toward this approach for constructing SRDS without heavy ``SNARG-like'' tools.
\end{itemize}

\noindent
Collectively, our results provide an initial mapping for the feasibility landscape of BA with $\tilde O(1)$ per-party communication, including new approaches forward, as well as limitations and barriers.
Our approach yields two BA protocols with $\tilde O(1)$ communication per party, offering a tradeoff between the setup assumptions and the cryptographic assumptions.
These results answer an open question presented by \citet{KS09}, asking whether cryptography can be used to construct BA with $o(\sqrt{n})$ communication per party.
Our BA results are summarized in \cref{tbl:intro:ae_to_e} alongside other almost-everywhere to everywhere agreement protocols.

\renewcommand{\arraystretch}{1.2}

\begin{table}[!htb]
\footnotesize{
\centering
\begin{tabular}{llllllll}
    \toprule
    \textbf{protocol} &
    \textbf{rounds} &
    \Centerstack[l]{\textbf{max com.}\\\textbf{per party}} &
    \textbf{setup} &
    \Centerstack[l]{\textbf{cryptographic}\\\textbf{assumptions}}  &
    \Centerstack[l]{\textbf{message}\\\textbf{filtering}}  &
    \textbf{corrupt.}&
    \textbf{remark}\\

    \midrule

    HKK'08~\cite{HKK08} &  & $\Omega(\sqrt[3]{n})$ & crs & & static & static & lower bound\\
    KS'09~\cite{KS09} & $O(1)$ & $\tilde{O}(n \cdot\sqrt{n})$ & - & - & dynamic & static & \\
    KS'11~\cite{KS11} & $\polylog(n)$ & $\tilde{O}(\sqrt{n})$ & - & - & dynamic & adaptive & \\
    KLST'11~\cite{KLST11} & $\polylog(n)$ & $\tilde{O}(\sqrt{n})$ & - & - & dynamic & static & \\
    BGH'13~\cite{BGH13} & $O(1)$ & $\tilde{O}(n)$ & - & - & dynamic & static & \\
    BGT'13~\cite{BGT13} & $1$ & $\tilde{O}(n)$ & pki & owf & dynamic & static & \\
    CM'19~\cite{CM19}$^{\dag}$ & exp $O(1)$ & $\tilde{O}(n)$ & trusted-pki & RO\plus unique-sig & dynamic & adaptive & \\
    ACD$^+$'19~\cite{ACDNPRS19}$^{\dag}$ & exp $O(1)$ & $\tilde{O}(n)$ & trusted-pki & bilinear maps & dynamic & adaptive & \\
    CKS'20~\cite{CKS20}$^{\dag}$ & exp $O(1)$ & $\tilde{O}(n)$ & trusted-pki & vrf & dynamic & adaptive & asynchronous\\
    BKLL'20~\cite{BKLL20}$^{\dag}$ & exp $O(1)$ & $\tilde{O}(n)$ & trusted-pki & fhe\plus nizk & dynamic & adaptive & asynchronous\\

    \midrule

     & $1$ & $\Omega(n)$ & crs & & dynamic & static & lower bound\\
    \textbf{This work} & $1$ & $\tilde{O}(1)$ & pki\plus crs & snarks$^*$\plus crh & dynamic & static & \\
     & $1$ & $\tilde{O}(1)$ & trusted pki & owf & dynamic & static & \\
    \bottomrule
\end{tabular}
\caption \footnotesize {Comparison of protocols boosting from almost-everywhere to full agreement, tolerating $(1/3-\epsilon)\cdot n$ corruptions. The $\tilde{O}$ notation hides polynomial terms in the security parameter $\secParam$ and in $\log{n}$. \emph{crs} stand for a common random string, \emph{pki} stands for bare pki, and \emph{trusted pki} stands for honestly generated pki. By \emph{snarks$^*$} we refer to SNARKs with linear extraction, \ie where the size of the extractor is linear in the size of the prover.
\emph{RO} stands for random oracle and \emph{unique-sig} for unique signatures. \emph{vrf} stand for verifiable pseudorandom functions, \emph{fhe} for fully homomorphic encryption, and \emph{nizk} for non-interactive zero-knowledge proofs.
Static corruptions are done before the protocol begins but can be a function of the trusted setup; adaptive corruptions can occur during the course of the protocol.
$^{(\dag)}$ The protocols from \cite{CM19,ACDNPRS19,CKS20,BKLL20} reach agreement from scratch (hence also from almost-everywhere agreement) with amortized $\tilde{O}(1)$ communication per party; the expected round complexity is constant and termination is guaranteed in $\polylog(n)$ rounds. \emph{Static message filtering} requires honest parties to decide on the parties they will listen to and process their messages before the beginning of each round, whereas \emph{dynamic message filtering} allows this decision to be done during the round and depending on the content of the incoming messages.
}
\label{tbl:intro:ae_to_e}
}
\end{table}

\subsection{Technical Overview}\label{sec:intro:technique}

We now proceed to present our results in greater detail.

\paragraph{\Srds.}
Our first contribution is identifying and formalizing a cryptographic primitive that enables boosting from almost-everywhere agreement to full agreement on a value, with low per-party communication.

The primitive---\emph{\srds} (SRDS)---is a new type of a distributed signature scheme, with a natural motivation: allowing a set of parties to jointly produce a signature on some message $m$, which can serve as a succinct certificate for proving that a \emph{majority} of the parties agree on $m$.
Interestingly, this task does not seem to be attained by existing distributed signature notions, such as \emph{multi-signatures}~\cite{IN83}, \emph{aggregate signatures}~\cite{BGLS03}, or \emph{threshold signatures}~\cite{DF89}. For example, while multi-signatures (and, similarly, aggregate signatures) can succinctly combine signatures of many parties, to \emph{verify} the signature, the (length-$\Theta(n)$!)\ vector of contributing-parties identities must also be communicated.\footnote{Indeed, the verification algorithm of multi-signatures (and aggregate signatures) must receive the set of parties who signed the message. This is precisely the culprit for the large $\tilde \Theta(n)$ per-party communication within the low-locality protocol of~\cite{BGT13}.}
As discussed in the related-work section (\cref{sec:relatedWork}), threshold signatures are implied by SRDS but also do not suffice: while identities of the signers are no longer needed to verify a combined signature, this information is necessary to \emph{reconstruct} the combined signature in the first place (even within specific existing schemes, \eg \cite{GJKR01,Boldyreva03}).
We provide a more detailed comparison to different signature notions in \cref{sec:relatedWork}.

An SRDS scheme is based on a PKI for signatures, where every party is set with a secret signing key and a public verification key.\footnote{As mentioned, we will distinguish between a \emph{bare PKI}, where every party locally chooses its keys and corrupted parties can set their keys as a function of all verification keys (and any additional public information), and a \emph{trusted PKI}, which is honestly generated (either locally or by a trusted party) and where corrupted parties cannot change their verification keys. See further discussion below.}
The parties may receive additional setup information that may contain, for example, public parameters for the signature scheme or a common random string (CRS), depending on the actual construction.
Given a message $m$, every party can locally generate a signature on $m$, and signatures on the same message can be succinctly aggregated into a new signature.
The new aspect is that given a combined signature and a message $m$, it is possible to verify whether is was aggregated from a ``large'' number of ``base'' signatures on $m$, and both aggregation and verification can be done \emph{succinctly}.

Three properties are required from an SRDS scheme:
\emph{robustness} means that an adversary cannot prevent the honest parties from generating an accepting signature on a message; \emph{unforgeability} prevents an adversary controlling a minority from forging a signature; and \emph{succinctness} requires that the ``final'' signature (including \emph{all} information needed for verification) is short (of size $\tilde{O}(1)$) and can be incrementally reconstructed from ``base'' signatures in small batches of size $\polylog(n)$.\footnote{$\polylog(n)$ denotes $\log^c(n)$ for some constant $c>1$.}
An SRDS scheme is \emph{$t$-secure} if it satisfies the above properties even facing $t$ colluding adversarial parties.

\paragraph{Balanced BA from SRDS.}
We demonstrate how to attain $\tilde O(1)$-balanced BA against $\beta n$ corruptions (for $\beta<1/3$) given black-box access to any $\beta n$-secure SRDS scheme.
We begin by presenting a distilled version of the ``certified almost-everywhere agreement'' approach from~\cite{BGT13} that we tailor for Byzantine agreement, where only correctness matters and privacy is not required.\footnote{The focus of \cite{BGT13} was on MPC and required stronger assumptions and additional rounds; in particular, a \naive use of their MPC protocol \emph{cannot} lead to communication-balanced BA as it requires all parties to send information to a designated $\polylog(n)$-size set, the so-called \emph{supreme committee}.}
\begin{enumerate}[leftmargin=*]
\item
The parties execute the almost-everywhere agreement protocol of King et al.~\cite{KSSV06}; this establishes a $\polylog(n)$-degree communication tree (which is essentially a sparse overlay network) in which each node is assigned with a committee of $\polylog(n)$ parties. The guarantees are that the $\polylog(n)$-size \emph{supreme committee} (\ie the committee assigned to the root) has a $2/3$ honest majority and almost all of the parties are connected to the supreme committee via the communication tree.
\item
The supreme committee executes a BA protocol on their inputs to agree on the output $y$, and, in addition, runs a coin-tossing protocol to agree on a random seed $s$. Next, the supreme committee propagates the pair $(y,s)$ to \emph{almost} all of the parties.
\item
Once a party receives the pair $(y,s)$, the party signs it (in~\cite{BGT13}, using a multi-signature scheme), and sends the signature back to the supreme committee that aggregates all the signatures. The aggregated signature attesting to $(y,s)$ is then distributed to \emph{almost} all of the parties.
\end{enumerate}

Once this form of \emph{certified} almost-everywhere agreement on $(y,s)$ is reached, full agreement can be obtained in one round. Every party $\Party_i$ that receives the signed pair $(y,s)$, evaluates a pseudorandom function (PRF) on the seed $s$ and its identity $i$ to determine a set of (sufficiently random) $\polylog(n)$ parties, and sends the signed $(y,s)$ to every party in that set. A party that receives such a signed pair, can verify that a majority of the parties agree on $(y,s)$ (by the guarantees of multi-signatures) and that it was supposed to receive a message from the sender (by evaluating the PRF on $s$ and the sender's identity). In this case, it can output $y$ and halt.

The protocol from~\cite{BGT13} achieves $\tilde{O}(1)$ locality. However, recall that even though the size of a multi-signature might itself be ``small,'' the verification algorithm additionally requires a list of contributing parties, where the description size of this list will need to be proportional to $n$. Hence, the effective size of the aggregated signature, and thus per-party communication, is stuck at $\Theta(n)$.

\medskip
At this point the new notion of SRDS comes into the picture.
We use the \emph{succinctness} property of SRDS combined with the communication tree established by the protocol from~\cite{KSSV06} to bound the size of the aggregated signatures by $\tilde{O}(1)$.
In essence, the parties aggregate the signatures in a recursive manner up the communication tree such that in each step at most $\polylog(n)$ signatures are aggregated.

This technique introduces additional subtleties that must be addressed. For example, since the partially aggregated signature can no longer afford to describe the set of contributing parties, it is essential to make sure that the same ``base'' signature is not aggregated multiple times (this may allow the adversary to achieve more influence on the final aggregated signature than its proportional fraction of ``base'' signatures).
To address this, we assign $\polylog(n)$ (virtual) identities to every party, one identity for each path from that party to the supreme committee in the communication tree; this ensures that the fraction of signatures that are generated by corrupted parties is equal to the corruption threshold. We refer the reader to \cref{sec:ba_from_srds} for more details.

\begin{theorem}[balanced BA, informal]\label{thm:intro:ba}
Let $\beta<1/3$ be a constant. Assuming the existence of $\beta n$-secure SRDS, there exists an $n$-party, $\beta n$-resilient BA protocol that terminates after $\polylog(n)$ rounds, and where every party communicates $\polylog(n)\cdot\poly(\secParam)$ bits.
\end{theorem}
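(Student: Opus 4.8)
The plan is to instantiate the three-phase ``certified almost-everywhere agreement'' blueprint described above, but with an SRDS scheme in place of the multi-signature of \cite{BGT13}, and then apply the one-round boost. Fix $\beta<1/3$ and an SRDS scheme that is $\beta n$-secure. The protocol has four stages. \textbf{(i) Overlay construction:} run the almost-everywhere agreement protocol of \cite{KSSV06} to establish the $\polylog(n)$-degree communication tree with a $\polylog(n)$-size committee at each node; its guarantees are that the supreme committee (at the root) has a $2/3$ honest majority and that all but an $o(1)$-fraction of the parties are connected to the root via the tree, at a cost of $\polylog(n)$ rounds and $\polylog(n)$ per-party communication. \textbf{(ii) Agreement and coin:} the supreme committee runs a small-scale BA on its members' inputs to fix an output bit $y$ and a coin-toss to fix a PRF seed $s$ (only $\polylog(n)$ parties participate, so the quadratic cost is $\polylog(n)$ per party), and propagates $(y,s)$ down the tree. \textbf{(iii) Certification:} each connected party, on receiving $(y,s)$, produces an SRDS base signature on $(y,s)$ for each of its virtual identities (a party lying on $k=\polylog(n)$ root-paths gets $k$ identities); these signatures are aggregated \emph{recursively up the tree}, each internal node combining the $\polylog(n)$ signatures of its children (and its committee's own contributions) into one succinct signature via the SRDS combine algorithm, so that every combining step merges only $\polylog(n)$ signatures and every transmitted signature has size $\tilde O(1)$ by succinctness; the root's aggregated signature $\sigma^\ast$ is then sent back down the tree. \textbf{(iv) One-shot boost:} every party $P_i$ holding $(y,s,\sigma^\ast)$ evaluates the PRF on $(s,i)$ to obtain a pseudorandom set $R_i$ of $\polylog(n)$ recipients and sends $(y,s,\sigma^\ast)$ to each of them; a party receiving such a message runs SRDS verification on $\sigma^\ast$ w.r.t.\ $(y,s)$, re-derives $R_{\mathsf{sender}}$ from the PRF to check the sender was supposed to contact it, and if both checks pass outputs $y$ and halts; a party never receiving a valid certified triple falls back to a default value.

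For correctness I would argue the three BA properties in turn. \textbf{Communication:} each stage touches a party through only $\polylog(n)$ neighbours with messages of size $\tilde O(1)$ --- stage (i) by \cite{KSSV06}, stages (ii)--(iii) because the tree has $\polylog(n)$ degree and SRDS signatures are succinct, and stage (iv) because $|R_i|=\polylog(n)$ and, by pseudorandomness of the PRF, each party's in-degree concentrates to $\polylog(n)$ except with negligible probability; summing over $\polylog(n)$ rounds gives $\polylog(n)\cdot\poly(\secParam)$ bits per party, and total round count is $\polylog(n)$. \textbf{Validity:} if all honest parties input $b$, the honest-majority supreme committee outputs $y=b$, so all connected parties receive $(b,s)$; since honest parties are connected, there are all-honest root-paths along which, by SRDS \emph{robustness}, the honest base signatures aggregate into a $\sigma^\ast$ that verifies, hence every connected honest party outputs $b$, and the fallback handles the rest. \textbf{Agreement:} a stage-(iv) output on value $y$ is produced only with a $\sigma^\ast$ that SRDS-verifies on $(y,s)$; by $\beta n$-\emph{unforgeability} --- and crucially, because the virtual-identity assignment leaves the adversarial \emph{fraction of virtual identities} equal to $\beta<1/3$ --- no such $\sigma^\ast$ exists for a value not signed by a majority of virtual identities, and since a majority of virtual identities are honest and each honest party signs a single value, any two stage-(iv) outputs coincide. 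The coin $s$ being fixed by the supreme committee together with PRF pseudorandomness guarantees the recipient sets $R_i$ are well-spread, so every connected party --- hence, via honest senders, also each of the $o(1)$-fraction of initially disconnected honest parties --- is contacted by enough honest senders to learn the certified triple and output consistently.

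The main obstacle, and the reason SRDS rather than a plain multi-signature is needed, is stage (iii): a partially aggregated signature cannot carry the list of contributing parties, so if a base signature were aggregated twice the adversary could inflate its effective weight beyond its $\beta$-fraction. The remedy is the virtual-identity scheme --- one identity per root-path --- which makes the total number of base signatures exactly the number of tree paths (known a priori, so there is a fixed ``majority'' threshold to verify against) and keeps the adversarial fraction of identities at $\beta$. The delicate part of the proof is therefore showing that this assignment preserves the $\beta<1/3$ threshold through \emph{every} combine step up the tree, and that SRDS unforgeability is invoked against the correct ``majority-of-virtual-signers'' predicate; once that is pinned down, the remaining analysis is bookkeeping layered on top of the guarantees of \cite{KSSV06} and the single-round boost of \cite{BGT13}.
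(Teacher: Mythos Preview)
Your proposal follows essentially the same approach as the paper: establish the \cite{KSSV06} tree, have the supreme committee agree on $(y,s)$, aggregate SRDS signatures up the tree using one virtual identity per leaf-assignment so the corrupt fraction of virtual identities remains $\beta$, propagate the certificate down, and finish with the PRF-based one-round boost; agreement reduces to SRDS unforgeability and the certificate's existence reduces to SRDS robustness.

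One point is wrong, though: the ``fallback to a default value'' you introduce in stage~(iv) is both unnecessary and, as you use it in the validity argument, incorrect. You write that ``every connected honest party outputs $b$, and the fallback handles the rest'' --- but an isolated honest party outputting a default $\neq b$ \emph{breaks} validity. The actual argument (which you give correctly in your agreement paragraph) is that the PRF-based boost reaches \emph{every} honest party, isolated or not, with overwhelming probability, so no fallback is needed and validity for isolated parties follows from the same delivery guarantee. Drop the fallback and route validity through the same ``everyone gets a valid certified triple'' claim.

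A minor technical point you glossed over: in the paper, the aggregation step at each tree node is carried out by a small MPC among that node's $\polylog(n)$ committee members (because $\TSAggr_2$ may be randomized and its coins may need to remain private), not by a single local call to the combine algorithm. This preserves the $\polylog$ cost and is what makes the reduction to the SRDS robustness experiment (where the challenger computes aggregates at good nodes) line up with the protocol execution.
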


We note that our BA protocol is the first to establish a $\polylog(n)$-degree communication graph where \emph{every} party has an ``honest path'' to a $2/3$-honest committee, such that the per-party communication required for establishing it is $\tilde{O}(1)$. Thus, we can obtain the following corollaries.

\begin{corollary}[informal] \label{cor:intro:mpc}
Let $\beta<1/3$ be a constant. Assuming the existence of $\beta n$-secure SRDS:
\begin{enumerate}[leftmargin=*]
    \item {\bf Broadcast}:
    There exists a $\beta n$-resilient 1-bit broadcast protocol such that $\ell$ protocol executions (potentially with different senders) require $\ell\cdot\polylog(n)\cdot\poly(\secParam)$ bits of communication per party.
    \item {\bf MPC}:
    Assuming fully homomorphic encryption, a function $f:(\zo^\inputlen)^n\to\zo^\outputlen$ can be securely computed with guaranteed output delivery tolerating a static, malicious $\beta n$-adversary, such that the total communication complexity (of all parties) is $n\cdot\polylog(n)\cdot\poly(\secParam)\cdot(\inputlen+\outputlen)$ bits.
\end{enumerate}
\end{corollary}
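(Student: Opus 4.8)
The plan is to observe that the protocol underlying \cref{thm:intro:ba} naturally decomposes into a one-time \emph{infrastructure phase}---running the almost-everywhere protocol of \citet{KSSV06} to establish the $\polylog(n)$-degree communication tree, a $2/3$-honest supreme committee $\CS$, and a reliable relaying path from \emph{every} party up the tree to $\CS$---and a \emph{per-instance phase} in which the parties feed values up to $\CS$, $\CS$ agrees and tosses a fresh seed $s$, the pair is propagated down, locally signed, SRDS-aggregated up the tree into a succinct certificate, re-disseminated down, and the one-round PRF-based step boosts to full agreement. The infrastructure phase costs $\tilde{O}(1)$ per party, and each per-instance phase costs $\tilde{O}(1)$ per party. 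The textbook reduction---sender sends $b$ to all parties, then run BA---is unsuitable here since it costs the sender $\Omega(n)$ bits; instead, to realize broadcast with sender $P_s$, I would keep the infrastructure phase and replace each party's input to the per-instance phase by ``the bit relayed up from $P_s$, or a default if none arrived.'' Agreement and consistency are inherited verbatim. For validity, if $P_s$ is honest it injects $b$ at its leaf committee(s); since every committee along the relaying paths is $2/3$-honest \whp, every honest member of $\CS$ receives $b$, so $\CS$ agrees on $b$, and the certified-agreement-plus-boost machinery then delivers $b$ to every honest party. For $\ell$ executions---possibly with distinct senders---one reuses the single infrastructure phase and repeats the per-instance phase $\ell$ times (with a fresh seed $s$ and a fresh aggregated SRDS signature each time), for a total of $\ell\cdot\polylog(n)\cdot\poly(\secParam)$ bits per party.

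\paragraph{MPC (Item~2).}
Reusing the same infrastructure, I would run the committee-based FHE template, in the spirit of the MPC protocol of \citet{BGT13}: (i)~$\CS$ runs a robust honest-majority MPC among its $\polylog(n)$ members to generate a threshold FHE key, broadcasting the public key down the tree (via Item~1) and keeping the secret key secret-shared within $\CS$; (ii)~each party $P_i$ FHE-encrypts $x_i$, attaches a proof that the ciphertext is well-formed, and relays it up the tree to $\CS$, where any party whose ciphertext fails to arrive or verify is assigned the default ciphertext $\Enc(0)$; (iii)~each member of $\CS$ homomorphically evaluates $f$ on the resulting $n$ ciphertexts to obtain $c_{\mathsf{out}}$, and $\CS$ runs a robust (guaranteed-output) honest-majority MPC that threshold-decrypts $c_{\mathsf{out}}$ to $y=f(\vec{x})$; (iv)~$y$ is broadcast to all parties using Item~1. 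Guaranteed output delivery holds because every honest $P_i$'s ciphertext traverses $2/3$-honest committees and hence reaches $\CS$, and because $\CS$ has honest majority so the threshold decryption terminates with the correct value. Security is shown via the standard simulator, which generates the keys, substitutes encryptions of $0$ for the honest parties' ciphertexts (semantic security), extracts the corrupted parties' inputs by decrypting their proved-well-formed ciphertexts using the key shares it controls, submits them to the ideal functionality, and uses the simulatability of threshold-FHE decryption (together with its control of a committee majority) to force the decryption MPC to output the functionality's answer $y$.

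\paragraph{Accounting and the main obstacle.}
The one place calling for care is the bookkeeping of steps~(ii)--(iii): each party lies in $O(\polylog(n))$ subtrees, each internal node's committee has $\polylog(n)$ members and relays at most (subtree-size)-many ciphertexts of size $\poly(\secParam)\cdot\inputlen$, so the relaying costs $n\cdot\polylog(n)\cdot\poly(\secParam)\cdot\inputlen$ in total---in particular it is permissible that the $\polylog(n)$ members of $\CS$ each receive all $n$ input ciphertexts, since the corollary bounds \emph{total} (not per-party) communication; the two committee MPCs are over $\polylog(n)$ parties and cost $\poly(\polylog(n),\secParam,\outputlen)$ each; and broadcasting $y$ costs $n\cdot\polylog(n)\cdot\poly(\secParam)\cdot\outputlen$ by Item~1, for a grand total of $n\cdot\polylog(n)\cdot\poly(\secParam)\cdot(\inputlen+\outputlen)$. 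I expect the main obstacle to be the malicious-security argument rather than this accounting: one must ensure that corrupted parties cannot inject malformed ciphertexts that derail the homomorphic evaluation (handled by attaching proofs of ciphertext well-formedness---using the CRS in the bare-PKI instantiation, and a small interactive proof to the $2/3$-honest committee, or an evaluation-robust FHE, in the trusted-PKI instantiation), that the threshold-decryption step is simultaneously robust and equivocable to an arbitrary target output, and that every honest party's input in fact reaches $\CS$---which is exactly where the new guarantee of \cref{thm:intro:ba}, that \emph{every} party (not merely an almost-everywhere fraction) has a reliable relaying path to the supreme committee, is essential.
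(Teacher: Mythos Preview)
There is a genuine gap in your infrastructure/per-instance decomposition. You assert that the one-time infrastructure phase---``running the almost-everywhere protocol of \citet{KSSV06}''---yields ``a reliable relaying path from \emph{every} party up the tree to $\CS$.'' It does not: the KSSV tree guarantees good paths only for all but a $o(1)$ fraction of parties, and the adversary (who specifies the tree via $\faecomm$) can choose to isolate any particular honest party, in particular the designated sender $P_s$. Your validity argument (``since every committee along the relaying paths is $2/3$-honest whp, every honest member of $\CS$ receives $b$'') then fails: if $P_s$ is isolated, its bit never reaches $\CS$, the committee agrees on the default, and the protocol outputs the default rather than $b$. The same issue bites the MPC item, where isolated honest parties cannot push their encrypted inputs up to $\CS$; you correctly flag at the end that the ``every party has a path'' guarantee is essential, but you never explain where it actually comes from.

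The paper's fix is exactly this missing step: run the \emph{full} BA protocol of \cref{thm:intro:ba} once on dummy inputs. The output of that run is not just a bit but a seed $s$ on which \emph{all} parties---including previously isolated ones---agree (via the SRDS-certified boost and the final PRF round). This seed now induces a $\polylog(n)$-degree communication graph in which every party has an honest path to $\CS$. With that graph in hand, broadcast becomes trivial and no longer uses SRDS at all: the sender signs its bit with an ordinary digital signature (from the PKI), sends it up to $\CS$ along the everywhere graph, and $\CS$ relays the signed bit back down. The SRDS machinery is invoked exactly once, to bootstrap the everywhere graph; subsequent broadcasts cost $\tilde O(1)$ per party with no fresh SRDS aggregation. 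Your MPC sketch is otherwise in line with the paper's (the committee-based threshold-FHE template of \cite{BGT13}), but it too should sit on top of this everywhere graph rather than on the raw KSSV tree.
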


One remark regarding the corruption model is in place.
In this work, we consider \emph{static} adversaries that choose the set of corrupted parties before the beginning of the protocol. As mentioned above, our constructions are based on some form of trusted setup, which, as we prove below, is necessary. We emphasize that (as standard) we avoid trivialized settings, \eg where the trusted setup determines a $\polylog(n)$-degree communication tree for achieving \emph{full} agreement,\footnote{If the communication tree is sampled \emph{after} the corruptions have been fixed and is given to the parties as setup, then with overwhelming probability all nodes are \emph{good} (i.e., with more than two-thirds honest majority in the parties assigned). This ensures that the path from all leaf nodes to the root only contains good nodes and hence all parties can communicate with the committee assigned to the root node without disruptions to reach full agreement.} by considering the adversarial model where the adversary can corrupt the parties adaptively during the setup phase \emph{given} the setup information of the corrupted parties and any public setup information. During the online phase, the adversary is static and cannot corrupt additional parties.

\paragraph{Constructing SRDS.}
We present two constructions of SRDS, offering a tradeoff between setup assumptions and cryptographic assumptions.

Our first construction is influenced by the ``sortition approach'' of Algorand~\cite{CM19} and merely requires one-way functions (OWF); however, the public-key infrastructure (PKI) is assumed to be \emph{honestly} generated (either by the parties themselves or by an external trusted third party), and corrupted parties cannot alter their keys. The construction is based on digital signatures augmented with an oblivious key-generation algorithm for sampling a verification key without knowing the corresponding signing key.\footnote{We note that standard signatures can be used if we strengthen the model assumptions, \eg by assuming that a party can securely erase its signature key, or by considering a trusted party that only provides the verification keys to some parties. We opted not to rely on stronger model assumption since we can establish signatures with oblivious key generation from the minimal assumption of one-way functions.} Lamport's signatures~\cite{Lamport79}, which are based on OWF, can easily be adjusted to support this property.
To establish the PKI, every party decides whether to generate its public verification key obliviously or together with a signing key by tossing a biased coin, such that with overwhelming probability all but $\polylog(n)$ keys are generated obliviously.
Since those with the ability to sign are determined at random (as part of the trusted PKI), only parties who hold a signing key can sign messages. The
oblivious key-generation algorithm ensures that an adversary who
only sees a list of verification keys, cannot distinguish between the
keys that have a corresponding signing key and ones that do not.
As a result, even if the adversary chooses the set of corrupt parties
after the keys are sampled, with a high probability, the fraction
of honest parties will be preserved in the signing subset. SRDS
signature-aggregation is done by concatenation, and verification of
an SRDS signature requires counting how many valid signatures
were signed on the message.

It would be desirable to reduce the trust assumption in establishing the PKI, \eg by using verifiable pseudorandom functions (VRF)~\cite{MRV99} as done in~\cite{CM19}.
However, this approach~\cite{CM19} is defined within a blockchain model where a fresh random string (the hash of the recent block) is assumed to be consistently available to all parties later in the protocol and serves as the seed for the sortition;
equivalently, that parties have access to a common random string (CRS) \emph{independent} of corrupted parties' public keys.
Without this extra model assumption, their VRF approach does not apply.
We note that several recent consensus protocols \cite{ACDNPRS19,CPS19,CHMOS19,CPS20,CKS20,BKLL20,WXSD20,WXDS20} also follow the sortition approach of~\cite{CM19}; however, similar to our first construction, their PKI is assumed to be honestly generated by a trusted third party.

\begin{theorem}[SRDS from OWF and trusted PKI, informal]\label{thm:intro:srds_owf}
Let $\beta<1/3$ and assume that OWF exist. Then, there exists a $\beta n$-secure SRDS in the trusted-PKI model.
\end{theorem}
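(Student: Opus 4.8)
The plan is to realize SRDS from a digital signature scheme that admits \emph{oblivious key generation}: an algorithm that samples a verification key $\vk$ together with (computationally) no ability to sign under it, so that obliviously-sampled keys are computationally indistinguishable from honestly-generated ones. Such a scheme follows from one-way functions; Lamport's scheme~\cite{Lamport79} already works, since its verification key is merely a vector of images of the one-way function and an oblivious key is obtained by sampling that vector uniformly at random. The trusted PKI is parameterized by a bias $p := m/n$ with $m := \polylog(n)\cdot\poly(\secParam) = \tilde O(1)$: independently for each party $P_i$, with probability $p$ sample an honest key pair $(\vk_i,\sk_i)$, and with probability $1-p$ sample $\vk_i$ obliviously (so $P_i$ cannot sign). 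Let $S := \{\, i : P_i \text{ holds } \sk_i \,\}$ be the resulting random set of \emph{signers}; a Chernoff bound gives $|S| = (1\pm o(1))m$ with overwhelming probability, while $S$ itself is only computationally hidden by the verification keys. A signer signs a message $x$ by outputting a base signature under $\sk_i$; $\mathsf{Combine}$ takes at most $\polylog(n)$ base or aggregated signatures on a common message $x$, discards any that are invalid or that repeat an already-present signer index, and concatenates the survivors tagged by index; $\mathsf{Verify}(x,\sigma)$ accepts iff $\sigma$ carries at least $\tau := \lfloor m/2\rfloor + 1$ valid base signatures on $x$ with distinct signer indices. (The number $m$ is part of the public setup; the set $S$ is not.)

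Succinctness and robustness are the routine directions. For succinctness, at most $|S| = \tilde O(1)$ distinct valid base signatures can ever appear in an aggregate, and each, together with its $\lceil\log n\rceil$-bit signer index, has size $\poly(\secParam,\log n)$; hence every intermediate aggregate is of size $\tilde O(1)$, and $\mathsf{Combine}$ processes batches of size $\polylog(n)$ as required. For robustness, the honest signers form the set $S \setminus C$; using $|C| \le \beta n$ and the concentration of $|C \cap S|$ around $\beta m$ (established in the unforgeability step below), one gets $|S \setminus C| \ge (1 - \beta - \epsilon - o(1))m \ge \tau$ with overwhelming probability for a suitably small constant $\epsilon > 0$, using $\beta < 1/2$. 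The honest parties can therefore always assemble an accepting signature by concatenating their own valid base signatures on $x$, and the adversary cannot prevent this local computation.

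The technical core is \textbf{unforgeability}. Suppose an adversary \Adv that corrupts a set $C$ with $|C| \le \beta n$ and has oracle access to honest parties' signatures on messages of its choice outputs an accepting $\sigma^*$ on a message $x^*$ on which it never obtained an honest signature. Then $\sigma^*$ carries at least $\tau$ valid base signatures on $x^*$ with distinct signer indices; at most $|C \cap S|$ of those indices lie in $C$, so---provided $|C \cap S| < \tau$, which holds with overwhelming probability as discussed next---at least one of them is a valid base signature on $x^*$ under the verification key $\vk_j$ of an \emph{honest} signer $j$ that never signed $x^*$, i.e., an existential forgery against the base scheme. I would reduce this to (one-time) existential unforgeability of the base signature: guess a uniformly random honest-signer index $j$, plant the challenge verification key as $\vk_j$, and answer \Adv's signature-oracle queries---which all concern messages $\ne x^*$---using the challenger's signing oracle; this is clean because honest parties are never corrupted, so $\sk_j$ is never needed, and a valid base signature on $x^*$ under $\vk_j$ is exactly a forgery. (One-time security suffices because each signer signs at most once per execution; repeated signing is supported by a stateful many-time signature with oblivious key generation, still implied by OWF.)

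Showing $|C \cap S| < \tau$ except with negligible probability is the step I expect to be the main obstacle, and it is the only place the \emph{obliviousness} property---and the adaptive-during-setup corruption model---is used. I would argue in two layers. In the combinatorial layer one models \Adv as an adaptive process that repeatedly picks a fresh party index, learns its i.i.d.\ $\mathrm{Bernoulli}(p)$ signer bit, and after at most $\beta n$ picks outputs $C$; since each newly revealed bit is independent of all previously revealed ones, $|C \cap S|$ is stochastically dominated by $\sum_{t \le \beta n} Y_t$ with $\mathbb{E}[e^{\lambda Y_t} \mid \text{history}] \le e^{p(e^{\lambda} - 1)}$, and a Chernoff-type bound gives $|C \cap S| \le (\beta + \epsilon)m < \tau$ except with probability $e^{-\Omega(\epsilon^2 \beta m)}$, which is negligible in $\secParam$ and is $n^{-\omega(1)}$ once $m$ is taken large enough (this is what fixes the $\tilde O(1)$ size of $m$). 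The computational layer must establish that the public verification keys give \Adv no advantage beyond this abstract process, and here the difficulty is that a naive hybrid replacing honest keys by oblivious ones breaks down: a Lamport key is perfectly binding, so a ``real'' challenge key cannot be opened as oblivious once \Adv corrupts it. The fix I would use exploits that an oblivious key is \emph{literally} a uniformly chosen image-vector, so the reduction can always answer a corruption of its planted challenge key by declaring it oblivious with randomness equal to the key itself---which is consistent regardless of the challenge bit---and thus run the hybrid without aborting; any non-negligible advantage the keys confer toward overlapping $C$ with $S$ then yields a distinguisher against obliviousness. Combining the two layers completes the unforgeability argument, and with it the theorem. (Note that the construction itself only needs $\beta < 1/2$ with a constant slack; the stronger $\beta < 1/3$ in the statement matches the target application of \cref{thm:intro:ba}.)
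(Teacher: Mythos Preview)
Your approach is exactly the paper's: signatures with oblivious key generation (Lamport instantiated with a PRG), a biased coin per party selecting a $\tilde O(1)$-size signer set, concatenation as aggregation, and a threshold check at verification. Your threshold $\lfloor m/2\rfloor+1$ differs from the paper's $\ell/6$, but both work.

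There is one gap in your unforgeability argument. You claim that among the $\ge\tau$ valid base signatures in a forgery, ``at most $|C\cap S|$ of those indices lie in $C$,'' and conclude the rest are honest \emph{signers}. This silently assumes every valid base signature has an index in $S$---i.e., that the adversary cannot produce a verifying signature under any obliviously-generated $\vk_j$ with $j\notin S$ (whether $j$ is corrupt or honest). That is precisely the \emph{obliviousness} property, and it must be invoked explicitly alongside EUF: the correct case split is that each of the $\ge\tau-|C\cap S|$ ``excess'' base signatures is either under an honest signer's key (EUF break) or under a non-signer's key (obliviousness break). The paper makes both reductions explicit; your write-up only gives the EUF one. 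A smaller point: your robustness claim (``honest parties can always assemble an accepting signature\dots the adversary cannot prevent this local computation'') does not engage with the tree-based robustness game of the SRDS definition, in which the adversary controls aggregation at bad nodes and may drop honest signatures not on a good root-path; the guaranteed-surviving count is $|S\setminus(C\cup\goodP)|$ rather than $|S\setminus C|$, but since $|\goodP|=o(n)$ your threshold still clears. Finally, on your computational layer for $|C\cap S|<\tau$: the paper in fact handwaves this step entirely (asserting without a reduction that uncorrupted signer bits are hidden, then applying Chernoff), so your explicit two-layer treatment is already more careful than the paper's, even if the hybrid sketch needs tightening.
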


Our second construction is based on a weaker bare-PKI setup, in which each party locally computes its signature keys, and the adversary can corrupt parties and change their keys as a function of honest parties' public keys.
To illustrate the underlying ideas, consider a simplified case where all of the nodes in the almost-everywhere communication tree are honest (each node will essentially be realized by some committee of parties). A \naive construction would be to have all parties sign the message and send the signature to their respective leaf nodes. Every leaf node would then count the number of verified signatures received and send the message and the counter to its parent. In a recursive way, every node would simply add the counters received from its children and send it to its parent. At the end of this process, the root node would get a final count of the total number of verified signatures. This approach completely breaks down, however, if even one node is not honest as it can lie about its count. To enforce an honest behavior of the nodes, we need to make sure that the aggregation is done in a verifiable way, \ie ensure that bad nodes send a valid count of the number of signatures aggregated so far.

Toward this, our first idea is to require each node to attach a ``succinct proof'' of honest behavior to their messages. In particular, in addition to the message $m$ and count $c$ that a leaf node sends to its parent, it must also send a proof to convince the parent that it knows $c$ distinct signatures on the message $m$. Similarly, every node must prove that they
received sufficiently many valid proofs backing up its count. To verify, it is sufficient to check at the root node, whether sufficiently many ``base'' signatures were aggregated. This approach requires proof systems that support \emph{recursive composition};
for this reason, we use \emph{proof-carrying data} (PCD) systems \cite{CT10}.

A PCD system extends the notion of SNARKs to the distributed setting by allowing recursive composition in a succinct way.
Informally, every party can generate a succinct proof on some statement, certifying that it satisfies a given local property with respect to its private input and previously received messages (statements and their proofs).
\citet{BCCT13} proved that PCD systems for logarithmic-depth DAGs exist assuming SNARKs with \emph{linear extraction}, \ie where the size of the extractor is linear in the size of the prover.\footnote{We note that although SNARKs with linear extraction are a stronger assumption than standard SNARKs (with polynomial extraction), standard SNARKs techniques do not separate the two notions.} Extractability assumptions of this kind have been previously considered in, \eg \cite{Val08,DFH12,GS14,BJPY18}.
Since PCD systems allow for propagation of information up a communication tree in a succinct and publicly verifiable way, they seem to exactly capture our requirements for SRDS.

This simple idea, however, is vulnerable to an adversary that generates a valid-looking aggregate signature by using multiple copies of the same signature. Indeed, since the partially aggregated signature must be succinct, the parties cannot afford to keep track of \emph{which} base signatures were already incorporated, leaving them vulnerable to a repeat occurrence.
We protect against such an attack by encoding additional information in the partially aggregated signatures using collision-resistant hash functions (CRH). We refer the reader to \cref{sec:SRDS:snarks} for the detailed solution.

\begin{theorem}[SRDS from CRH, SNARKs and bare PKI, informal]\label{thm:intro:srds_snarks}
Let $t<n/3$ and assume that CRH and SNARKs with linear extraction exist. Then, there exists a $t$-secure SRDS in the CRS and bare-PKI model.
\end{theorem}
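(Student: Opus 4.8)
The plan is to realize SRDS by layering three ingredients: a proof-carrying data (PCD) system for logarithmic-depth DAGs (which, by \citet{BCCT13}, exists given SNARKs with linear extraction), an existentially-unforgeable signature scheme (which follows from CRH, indeed from OWF), and a CRH-based succinct digest of ``which base signatures have been aggregated so far.'' Concretely, \Setup publishes the CRS as the PCD reference string together with public parameters for the signature scheme; the bare-PKI phase has each party $\Party_i$ run the signature key generator locally to obtain $(\vk_i,\sk_i)$ and publish $\vk_i$, so that corrupt parties may choose their keys adversarially as a function of the honest $\vk_j$'s and of the CRS. To produce a base signature on $m$, party $\Party_i$ computes $\sigma_i \leftarrow \Sign_{\sk_i}(m)$ and then a PCD proof at a \emph{source} node of the aggregation DAG; \Combine on a batch of $\le\polylog(n)$ SRDS signatures runs the PCD prover at an \emph{internal} node; \Verify runs $\PCDVerify$ on the top proof and accepts iff the carried counter clears a fixed majority threshold.

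The heart of the construction is the compliance predicate $\compliance$ that the PCD proofs certify. Each node's ``message'' carries a tuple $(m,c,\mathsf{acc})$, where $m$ is the signed value, $c$ an integer counter, and $\mathsf{acc}$ a CRH digest laid out over the \emph{fixed topology} of the aggregation tree (inherited from the \cite{KSSV06} communication tree) recording, slot by slot, which (virtual) identities have contributed a base signature. At a source node for identity $\mathsf{id}$, $\compliance$ checks $c=1$, that $\mathsf{acc}$ is the hash of the leaf content marking slot $\mathsf{id}$ as ``signed,'' and that the private witness contains $\sigma$ with $\Verify_{\vk_{\mathsf{id}}}(m,\sigma)=1$. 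At an internal node ingesting proofs $(m,c_1,\mathsf{acc}_1),\dots,(m,c_k,\mathsf{acc}_k)$ with $k\le\polylog(n)$, $\compliance$ checks that all inputs carry the same $m$, that $c=\sum_j c_j$, and that $\mathsf{acc}=H(\mathsf{acc}_1,\dots,\mathsf{acc}_k)$ in the position prescribed by the topology. Because every identity occupies exactly one leaf slot of this fixed-shape CRH tree, merging children is a single hash evaluation and double-counting is structurally impossible for honestly-shaped proofs; a CRH collision is what it takes to deviate.

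The three SRDS properties then follow. \emph{Succinctness} is immediate: PCD proofs and the digest $\mathsf{acc}$ are of size $\poly(\secParam)$, each \Combine step ingests $\le\polylog(n)$ proofs, and the aggregation DAG has depth $\polylog(n)$, so the PCD result of \cite{BCCT13} applies --- this is precisely why SNARKs with \emph{linear} extraction (hence a polynomial-size recursive extractor over a logarithmic-depth DAG) are the stated assumption. \emph{Robustness} follows from perfect completeness of the PCD system: honest parties produce valid base proofs, honest combiners discard any malformed adversarial contribution via the $\compliance$ check, and the aggregated proof therefore verifies with counter at least the number of honest contributors, which is $n-t>n/2$ since $t<n/3$. \emph{Unforgeability} is the crux: given an adversary controlling $t<n/3$ parties that outputs an accepting SRDS signature on $m$ for which fewer than the threshold of parties actually signed $m$, I would invoke the PCD knowledge extractor to recover the entire transcript tree of base signatures and intermediate states; the fixed-topology accumulator together with the $c=\sum_j c_j$ checks force the root counter $c$ to equal the number of \emph{distinct} identities carrying a genuine base signature, EUF-CMA of the signature scheme bounds the base signatures on $m$ under honest non-signers' keys to zero, and a reduction to CRH collision-resistance rules out the off-topology case; hence $c \le (\#\ \text{honest signers of }m) + t$, which is below the threshold --- a contradiction.

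The main obstacle I anticipate is twofold and interlocking. First, encoding ``which base signatures have been aggregated'' succinctly yet unforgeably, so the repeated-signature attack flagged in the overview is impossible: the resolution above is to bind $\mathsf{acc}$ to the rigid topology of the aggregation tree so that each identity has a unique slot and every merge is one CRH call, reducing the no-double-count guarantee to collision-resistance rather than to any per-node bookkeeping. Second, and more delicate, is controlling the recursive extraction: one must track the multiplicative size blow-up of the extractor per level against the $\polylog(n)$ depth, invoke linear extraction exactly where it is needed to keep the composed extractor polynomial, and make sure the unforgeability reduction --- and the number of its signing-oracle and collision-finding queries --- remains polynomial-time. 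Once these two points are nailed down, the remaining steps (formalizing $\compliance$, checking $\PCDProve/\PCDVerify$ plug-in correctness, and the counting argument) are routine given \cite{BCCT13}.
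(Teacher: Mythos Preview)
Your high-level architecture --- PCD over a logarithmic-depth aggregation DAG, signatures at the leaves, a counter in the carried data --- matches the paper's, and your identification of the two obstacles (preventing double-counting and controlling recursive extraction) is on point. However, there is a genuine gap that the paper explicitly confronts and you do not: \emph{how does the compliance predicate obtain $\vk_{\mathsf{id}}$ when checking a base signature?} You write that at a source node, $\compliance$ checks ``that the private witness contains $\sigma$ with $\Verify_{\vk_{\mathsf{id}}}(m,\sigma)=1$.'' But $\vk_{\mathsf{id}}$ is one of $n$ keys, and the keys are fixed only \emph{after} the CRS (hence after $\compliance$) is published in the bare-PKI model. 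If all $n$ keys are hardwired into $\compliance$, the predicate has size $\Omega(n)$ and the PCD prover --- which you correctly place inside $\TSAggr_2$ --- runs in time $\poly(n)$, violating the succinctness requirement that $\TSAggr_2$ be a $\tilde O(1)$-size circuit (this is precisely what the $\TSAggr_1/\TSAggr_2$ decomposition is meant to enforce). If instead $\vk_{\mathsf{id}}$ is supplied in the witness, a malicious prover substitutes a key it controls, and your unforgeability reduction to EUF-CMA collapses. The paper's fix is to Merkle-hash the vector $(\vk_1,\dots,\vk_n)$ to a root $\hash_\vk$, carry $\hash_\vk$ in every PCD transcript, and have the base-level transcript include both $\vk_{\mathsf{id}}$ and a Merkle authentication path binding it to position $\mathsf{id}$ under $\hash_\vk$; the final SRDS verifier recomputes $\hash_\vk$ from the public list of keys and checks it against the value in the transcript. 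This is where CRH is actually deployed in the construction --- for authenticating verification keys to a small predicate --- not for your ``who-has-signed'' accumulator.

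On double-counting, your fixed-topology accumulator is a different mechanism from the paper's: the construction instead carries an interval $(\indexMin,\indexMax)$ in each transcript and has $\compliance$ enforce that children's intervals are pairwise disjoint and ordered, with the output interval being their span. This is lighter (no per-node hashing of a topology-shaped structure) and dovetails with the Merkle-of-keys trick, since the same index appears in the Merkle path. Your accumulator is not obviously wrong, but you would still need to spell out how $\compliance$ enforces that each child sits in its prescribed slot without $\compliance$ itself encoding the global tree --- the same $\Omega(n)$-size hazard as above.
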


\paragraph{Necessity of PKI for single-round boost of almost-everywhere agreement.}
Our SRDS-based BA protocol (\cref{thm:intro:ba}) shows how to boost almost-everywhere agreement to full agreement in a single round with small communication. Both our constructions crucially rely on a public-key infrastructure (PKI) that enables each party to publish its verification key on a bulletin board. We show that this setup assumption is necessary for this task. That is, given \emph{only} public setup---\ie the common reference string (CRS) model---this task is not possible.

We note that the lower bound of \citet{HKK08} does not translate to our setting, as it considers \emph{static} message filtering, where every party chooses to whom to listen in a given round based on its view prior to that round (and then may perform additional sanity checks on the incoming messages from these parties to ensure they are not malformed).
The lower bound in~\cite{HKK08} shows that \emph{dynamic} filtering, \ie where in any given round, parties may decide to accept or ignore a message from other parties based on the incoming messages received in that round, is required (at least in the CRS model).
We present the first such lower bound in the dynamic-filtering model.

\begin{theorem}[no single-shot boost in CRS model, informal]\label{thm:intro:lb_crs}
There is no single-round protocol from almost-everywhere to everywhere agreement in the CRS model where every party sends sublinear (\ie $o(n)$) many messages.
\end{theorem}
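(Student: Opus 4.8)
The plan is to argue by contradiction: suppose $\Pi$ is a single-round protocol that boosts almost-everywhere agreement to everywhere agreement in the CRS model, in which each party sends $o(n)$ messages. First I would fix the input/setup configuration of $\Pi$: the parties begin holding the outputs of some almost-everywhere agreement instance on a common value $v\in\zo$, and $\Pi$ must, in one round, bring \emph{all} honest parties to output $v$. The key structural observation is that in the CRS model there is no per-party secret setup, so a corrupted party's behavior in the single round is fully determined by (i) the public CRS, (ii) its ``local post-almost-everywhere view,'' and (iii) the (few) messages it receives in that one round. Because the protocol is one-shot, a receiving party $\Party_i$ must decide its output based solely on the CRS, its own local view, and the $o(n)$-sized multiset of messages it happens to receive. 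The adversary's goal is to make some honest $\Party_i$ that \emph{should} be in the ``everywhere'' set output the wrong bit.

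The core of the argument is an indistinguishability/confusion argument exploiting sparsity. I would set up two executions. In the ``good'' world, all parties start with $v=0$ and run honestly; by correctness every honest party outputs $0$. In the ``bad'' world, the adversary corrupts a $\beta n$-fraction of parties; these behave toward the ``victim'' party $\Party_i$ \emph{exactly as honest parties would behave in a world where the agreed value is $1$}. The point is that, since each honest party sends only $o(n)$ messages, a uniformly random honest party is contacted by only $o(n)$ senders in expectation; hence there exists a victim $\Party_i$ whose entire incoming view in the single round can be ``covered'' — i.e., every message $\Party_i$ receives either comes from a corrupted party (who can send whatever a $v=1$-world honest party would send) or from an honest party whose own behavior toward $\Party_i$ is, crucially, \emph{independent of $v$ conditioned on $\Party_i$'s being outside that honest party's target set} or can be absorbed into the corruption budget. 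More carefully: because total message complexity is $o(n^2)$, a counting/averaging argument yields a set $B$ of $o(n)$ parties such that every party outside $B$ receives messages only from $B$ in the single round; putting the adversary in control of (a subset of) $B$ — which fits within the $\beta n$ budget since $o(n)\ll\beta n$ — lets the adversary completely control the view of every party outside $B$. Now the adversary simulates, toward all of $V\setminus B$, the messages that honest parties \emph{would have sent had the agreed value been $1$} (this simulation is feasible precisely because the CRS is public and there is no secret keying material to forge). Each such party, being a legitimate member of the ``everywhere'' set, must then output $1$ — contradicting that the almost-everywhere instance agreed on $0$ and that honest parties should output $0$. Since $V\setminus B$ is all but $o(n)$ parties, this violates everywhere agreement (indeed even almost-everywhere preservation).

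The steps, in order: (1) formalize the one-shot boost and its correctness requirement — almost-everywhere agreement on $v$ among honest parties in, say, all but $o(n)$ of them, implies \emph{every} honest party outputs $v$; (2) prove the covering lemma: if each party sends $o(n)$ messages then there is a set $B$ with $|B|=o(n)$ such that all messages received in the single round by parties in $V\setminus B$ originate in $B$ (a pigeonhole on the $o(n^2)$ total messages, taking $B$ to be the $o(n)$ highest-in-degree recipients, or a random-subset argument); (3) observe $|B|=o(n)<\beta n$, so the adversary can corrupt $B$; (4) simulatability: in the CRS model the honest message-generation function in the single round is a public, efficiently computable function of the CRS and the sender's local view, so corrupted parties in $B$ can produce messages distributed identically to those an honest party holding ``agreed value $=1$'' would send; (5) conclude: every party in $V\setminus B$ receives a view identically distributed to an honest execution with agreed value $1$, hence outputs $1$, contradicting a setup where the true agreed value is $0$.

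The main obstacle I anticipate is step (2) together with step (4): one must handle messages that victim parties receive from \emph{honest} parties not in $B$. The covering lemma as stated sidesteps this by ensuring \emph{no} out-of-$B$ messages reach $V\setminus B$, but that requires the $o(n)$ bound to be on messages \emph{sent}, and one must be careful that making $B$ adversarial doesn't spoil the almost-everywhere preconditions for the parties in $V\setminus B$ (it does not, since the precondition only requires agreement among the honest parties, and $V\setminus B$ is still $(1-o(1))n$ parties). A secondary subtlety is ensuring the adversary's simulation toward different victims is mutually consistent — but since each victim's decision depends only on its own incoming messages and no honest party outside $B$ talks to it, consistency is automatic. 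If one instead wants the bound on messages \emph{received}, a slightly more delicate argument (randomly planting the ``fake-$1$'' simulation on a random $o(n)$-subset and arguing some honest party is fully covered whp) is needed, but the high-level contradiction is the same.
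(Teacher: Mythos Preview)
Your step (2) --- the ``covering lemma'' --- is false as stated, and this is the central gap. The claim is that if every party sends $o(n)$ messages, then there is a set $B$ of size $o(n)$ such that every message received by a party in $V\setminus B$ originates in $B$. But consider the directed cycle: each party sends a single message to its successor. Every party sends $1=o(n)$ messages, yet for any $B$ with $|B|=o(n)$, almost every party outside $B$ receives its (unique) message from a party also outside $B$. Taking $B$ to be the high-\emph{in}-degree vertices does not help: that only tells you parties outside $B$ have low in-degree, not that their few incoming edges come from $B$. Once the covering lemma fails, your step (4)--(5) collapse: honest parties outside $B$ will send genuine ``$0$-world'' messages to other honest parties outside $B$, and over authenticated channels the adversary cannot suppress or overwrite these.

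The paper's argument is structurally different from yours in two ways, both of which you are missing. First, it does \emph{not} attempt to flip all of $V\setminus B$; it targets a \emph{single} honest victim $\Party_{i^*}$ and uses the \emph{agreement} property to propagate: if $\Party_{i^*}$ can be made to output the wrong bit, agreement forces all honest parties to do so, contradicting validity. This is why it suffices to corrupt only the (small) set of senders to one party rather than a global cover. Second --- and this is the idea your ``more delicate argument'' still does not reach --- the set of parties that message $\Party_{i^*}$ in the last round \emph{depends on the agreed value}. To confuse $\Party_{i^*}$, the adversary must both suppress the real-world ($v=0$) senders and supply the alternate-world ($v=1$) senders, and these are in general different sets. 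The paper handles this by having the adversary simulate \emph{two} full executions in its head (all-$0$ and all-$1$, with a common silent set $\mathcal J$), recording for each $j\in\mathcal J$ the sender sets $\mathcal C_j^0,\mathcal C_j^1$, and then proving by a counting argument that some $j$ has $|\mathcal C_j^0\cup\mathcal C_j^1|=o(n)$. Corrupting $\mathcal J\cup\mathcal C_{i^*}^0\cup\mathcal C_{i^*}^1\setminus\{i^*\}$ (still $o(n)$, hence within budget) then lets the adversary make $\Party_{i^*}$'s last-round view in the $0$-world indistinguishable from a $1$-world execution. Your proposal never confronts this two-world dependency.
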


Recall that almost-everywhere agreement guarantees that all parties agree on the common output aside from a $o(n)$-size set of \emph{isolated} parties, whose identities are unknown to the remaining honest parties.
In the setting of static filtering, one can prove continued isolation of these parties for any low-communication protocol in a relatively clean manner~\cite{HKK08}: The probability that an honest party $\Party_i$ will send messages to an honest isolated $\Party_j$ is \emph{independent} of the event that $\Party_j$ will choose to process messages from $\Party_i$ in this round, thus placing a birthday-type bound on information successfully being conveyed.
With dynamic filtering, however, $\Party_j$ may process messages \emph{dependent} on some property of this message, \eg whether it contains particular authentication, which may only be contained in honest messages.\footnote{In general, message filtering should be via a simple and ``light'' test, \eg counting how many messages arrived, or verifying a signature. We refer to \cite{BCDH18} for a discussion on message filtering in protocols over incomplete graphs.}
In such case, there is strong bias toward accepting honest messages, and one must work harder to ensure that isolated parties do not reach agreement.

At a high level, the idea of our lower bound is to make a linear set of corrupted parties emulate the role of isolated parties during the first part of the protocol (reaching almost-everywhere agreement). This way, the honest parties cannot distinguish between isolated honest parties and faking corrupted parties, and must attempt to communicate the output value to all such parties. However, if each honest party only sends a sublinear number of messages, then with a very high probability, most isolated honest parties (and faking corrupted parties) only receive messages from a sublinear number of non-isolated parties in the last round. The adversary can use this fact to keep an isolated honest party confused in the following sense. In the last round of an execution with preagreement on 0 (\resp on 1), the adversary sends to this party messages corresponding to an execution with preagreement on 1 (\resp on 0). Without private-coin setup such as PKI, an isolated party cannot distinguish between honest messages in the real execution and fake messages from the simulation.

To carry out this attack, we need to show that there exist parties who receive messages from a ``small'' set of neighbors in both scenarios: when all parties start with input $0$ and when all start with input $1$ (otherwise, the adversary may not have a sufficient corruption budget for the attack). Before the protocols begins, the adversary decides on the set of parties to corrupt by first emulating in its head two executions, one with preagreement on $0$ and the other with preagreement on $1$, where the same linear-size set of parties act as isolated parties. We use a counting argument to show that there exist isolated parties who receive messages from a sublinear set of neighbors in both executions. The adversary targets one of these parties to attack and corrupts all ``simulated isolated parties'' except for the targeted one, along with the pair of neighbor-sets who communicate with the targeted party in the simulation.
We refer the reader to Section~\ref{sec:ba_lb_crs} for a formal description and analysis of the attack.

\paragraph{On the different PKI models.}
As discussed above, SRDS implies a single-round boost of almost-everywhere to full agreement, which in turn (by \cref{thm:intro:lb_crs}) requires some form of private-coin setup. Given this, one of our goals is to minimize the trust assumptions in the setup phase. Our SNARK-based construction offers the minimal setup requirement---a bare PKI---where every party locally generates its own signature keys and publishes the verification key on a bulletin board. The adversary can adaptively corrupt parties and change their keys as a function of all the public setup information (including the honest parties' verification keys and the CRS, in case it exists). This is the prevalent PKI model that has appeared in, \eg \cite{Canetti04,KK06,KL11,CSV16}.

Our OWF-based construction, on the other hand, assumes an honestly generated PKI, where the adversary cannot alter the corrupted parties' keys. Such a setup assumption is normally captured by a trusted party who samples the keys for all the parties, and provides each party with its secret key as well as all public keys; see, \eg \cite{LLR06,ACDNPRS19,CPS19,CHMOS19,CPS20}.
We note that our trusted-PKI setup is \emph{weaker} than a full-blown trusted party in two aspects: First, the distribution from which the trusted party samples the values is a \emph{product distribution}, \ie parties' keys are independent; this is weaker than a general correlated randomness setup (as used, \eg in \cite{ADDNR19} for threshold-signatures setup). Second, we consider \emph{public-coin} sampling in the sense that the sampling coins are revealed to the corresponding party (\ie intermediate key-generation values are not kept hidden). In fact, one can consider the model where every party honestly generates and publishes its public key, and corrupted parties can deviate from the protocol only in the online phase.

\paragraph{Necessity of OWF for single-round boost in PKI model.}
\cref{thm:intro:lb_crs} states the necessity of private-coin setup for single-round protocols (from almost-everywhere agreement to full agreement) where every party sends $o(n)$ messages. In the PKI model, where the public/private keys of each party are independently generated, we further prove that \emph{cryptographic assumptions} are necessary. Intuitively, if one-way functions (OWF) do not exist, an adversary can invert the PKI algorithm with noticeable probability to find a pre-image for each public key. In this case, the adversary can carry out the attack for the CRS model, discussed above.

\begin{theorem}[OWF needed for single-shot boost in PKI model, informal]\label{thm:intro:lb_pki}
If OWF do not exist, there is no single-round protocol from almost-everywhere to everywhere agreement in the trusted PKI model where every party sends sublinear many messages.
\end{theorem}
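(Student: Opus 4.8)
\emph{Proof approach.} The plan is to reduce to \cref{thm:intro:lb_crs}: assuming OWF do not exist, I would show that any single-round boosting protocol $\Pi$ in the trusted-PKI model in which every party sends $o(n)$ messages can be ``compiled'' into a protocol $\Pi'$ in the plain CRS model with the same per-party message complexity, contradicting \cref{thm:intro:lb_crs}. The engine of the compilation is the standard Impagliazzo--Luby fact that the non-existence of OWF implies the non-existence of \emph{distributional} OWF: for the fixed polynomial-time key-generation algorithm $\Gen$ there is a polynomial-time inverter $\Inv$ that, on input a verification key $\vk$ produced by $\Gen$ on uniform coins, outputs coins (hence a signing key) whose joint distribution with $\vk$ is within any prescribed inverse polynomial $1/p(n)$ of the honest distribution of $(\text{coins},\vk)$.

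I would then set up $\Pi'$ as follows. Its setup is a CRS consisting of an honestly sampled bulletin board $(\vk_1,\dots,\vk_n)$ --- a structured but purely public string, requiring no private per-party coins. Party $i$ first runs $\Inv(\vk_i)$ locally to recover a plausible signing key $\sk_i'$, and then runs $\Pi$ on $(\vk_i,\sk_i')$ together with the bulletin board and its almost-everywhere input. Two observations make $\Pi'$ a legitimate CRS-model boosting protocol: (i) each party sends exactly the messages it would in $\Pi$, so the per-party message count remains $o(n)$; and (ii) by a hybrid over the $n$ parties, the joint distribution of $\bigl((\vk_i,\sk_i')\bigr)_{i\in[n]}$ is within $n/p(n)$ of the real trusted-PKI distribution, so $\Pi'$ inherits the everywhere-agreement guarantee of $\Pi$ up to an additional $n/p(n)$ error; taking $p(n)=n^3$ makes this slack an arbitrarily small inverse polynomial, below the error tolerated by \cref{thm:intro:lb_crs}. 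Hence $\Pi'$ cannot exist, and neither can $\Pi$. Equivalently, one may phrase this as directly lifting the CRS-model adversary of \cref{thm:intro:lb_crs}: the PKI adversary runs the same head-simulations of the two preagreement executions, except that wherever it needs an honest party's internal state it additionally samples a consistent signing key via $\Inv$; statistical closeness then guarantees that the targeted isolated honest party still cannot distinguish the preagreement-$0$ scenario from the preagreement-$1$ scenario, so it cannot safely decide.

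The main obstacle is getting the \emph{conditional} distribution of signing keys right: the inverter must return a key distributed close to uniform over those consistent with the given public key, not merely some preimage --- but this is precisely what the distributional-OWF characterization buys us, so the work is in invoking it correctly rather than in devising a new argument. A secondary point to handle with care is the accumulation of statistical error across the $n$ recovered keys, controlled by tuning $\Inv$'s precision to a sufficiently small inverse polynomial and noting (as is standard for this style of impossibility) that \cref{thm:intro:lb_crs} is robust to inverse-polynomial error. Finally, it is worth flagging where the argument genuinely uses that the setup is a \emph{trusted PKI} in the sense described above: it is a product distribution sampled with public coins, so inversion applies party-by-party; were the setup to provide correlated randomness instead (\eg a threshold-signature setup), the compilation would break --- consistent with the fact that \cref{thm:intro:lb_crs} is stated only for PKI-type setup.
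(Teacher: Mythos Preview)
Your proposal is correct and essentially matches the paper's proof: the paper takes precisely your second (``direct lifting'') phrasing, having the adversary apply the Impagliazzo--Luby distributional inverter to the honestly generated public keys, simulate the two preagreement executions in its head with the recovered coins, and rerun the CRS-model attack, with the inverter's precision set to $1/2p(n)$ where $1/p(n)$ is the success probability of the CRS attack. Your compilation phrasing is an equivalent repackaging of the same idea, and your closing remark about why the argument breaks for correlated-randomness setup is also made explicitly in the paper.
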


We note that this lower bound does not extend to more complex private-coin setups, where the parties receive correlated secret strings that are jointly sampled from some distribution, \eg setup for information-theoretic signatures. Indeed, given such a setup it is possible to boost almost-everywhere to everywhere agreement in a single round with information-theoretic security and where every party sends $\polylog$ many messages (albeit, each of size $\Omega(n)$)~\cite{BCDH18}.
The reason that the proof approach of \cref{thm:intro:lb_pki} does not apply in this case is that when the private keys of two honest parties are  correlated, it is unclear how an (even computationally unbounded) adversary that only receives partial information about this correlation can consistently invert the setup information and impersonate honest parties. We leave the feasibility of single-round boost protocols from almost-everywhere to everywhere in the correlated-randomness model, in which every party sends sublinear many \emph{bits} (as opposed to messages), as an interesting open question.

\paragraph{Connection to succinct arguments.}
Our SRDS construction from CRH and SNARKs works with minimal setup requirements, but relies on relatively undesirable cryptographic assumptions (in particular, SNARKs are a non-falsifiable assumption~\cite{GW11}). On the other hand, our construction from one-way functions uses light computational assumptions, but (as with many other works in this area, \eg \cite{ACDNPRS19,CPS19,CHMOS19,CPS20}) requires a stronger assumption of trusted PKI. A clear goal is to obtain SRDS from better computational assumptions within a better setup model, ultimately reducing to bare PKI, or even more fine-grained intermediate models such as \emph{registered PKI}\footnote{In the registered PKI, every party can arbitrarily choose its public key (just like in bare PKI), but in order to publish it, the party must prove knowledge of a corresponding secret key.} (see \cite{Boldyreva03,LOSSW13} and a discussion in \cite{BN06}). A natural approach toward doing so is to build upon one of the closest existing relatives within this setting: \emph{multi-signatures}.

Recall that multi-signatures \emph{almost} provide the required properties of SRDS in this setting, in that they support succinct aggregation of signatures, with the sole issue that multi-signature verification requires knowledge of the set of parties who contributed to it---information that requires $\Theta(n)$ bits to describe. Multi-signatures have been constructed from (standard) falsifiable assumptions in the registered-PKI model, \eg \cite{LOSSW13}.
A natural approach toward constructing SRDS within this model is thus to simply augment a multi-signature scheme with some method of succinctly convincing the verifier that a given multi-signature is composed of signatures from sufficiently many parties. We demonstrate challenges toward such an approach, by showing that in some cases this \emph{necessitates} a form of succinct non-interactive arguments.

More specifically, we observe that asserting approval of a multi-signature by sufficiently many parties is inherently equivalent to asserting existence of a large subset of parties $S\subseteq[n]$, such that their corresponding verification keys $\sset{\vk_i}_{i\in S}$ satisfy a given function-target relation $f_{\sigma,m}(\sset{\vk_i}_{i\in S})=1$. (Here $m$ is the message, $\sigma$ is the multi-signature and $f_{\sigma,m}$ is a function that is derived from the multi-signature verification function.) Such a task can be viewed as a class of ``Subset-$f$'' problems on the verification keys $\vk_1,\ldots,\vk_n$, capturing as special cases the standard Subset-Sum and Subset-Product problems with functions $f_{\Sigma}(\sset{x_i}_{i\in S})=\sum_{i\in S} x_i$ and $f_{\Pi}(\sset{x_i}_{i\in S})=\prod_{i\in S} x_i$, respectively.

Considering even a generous setup model of trusted PKI, where parties' verification keys $\vk_1,\ldots,\vk_n$ are generated independently and honestly, the Subset-$f$ problem begins taking the form of problems where we do not know (or even possibly believe) that the witness $S\subseteq[n]$ can be compressed to $o(n)$ bits. As we show, an SRDS of this form implies a type of average-case non-interactive argument for asserting membership in Subset-$f$, with \emph{succinct} proof size: namely, a form of succinct non-interactive argument (SNARG) as in \cite{BCCT13}, with \emph{average-case} soundness guarantees. Although this average-case notion does not directly fall within the negative results of \citet{GW11}, it appears to be a powerful notion, which may be interpreted as a barrier toward this approach to SRDS construction without SNARG-like tools.

Motivated by this, we explore hardness of the Subset-$f$ problem for more general classes of functions $f$. We show that over rings with appropriate structure (namely, Hadamard product), NP-hardness results for Subset-Sum and Subset-Product can be extended to include (worst-case) Subset-$\phi$ for all elementary symmetric polynomials $\phi$.

We make explicit the above connection for the multi-signature scheme of \citet{LOSSW13} (LOSSW) in relation to (average-case) Subset-Product, and extend to multi-signature schemes of appropriate structure in relation to the Subset-$\phi$ problem for elementary symmetric polynomials $\phi$. The reduction leverages homomorphism, where a combined signature for a set of parties on message $m$ in the multi-signature scheme corresponds to a valid single-party signature with respect to a specific joint function of the parties' verification keys $\vk_i$; for LOSSW, their product $\vk^* = \prod_i \vk_i$.

\begin{theorem}[SRDS from multi-signatures requires average-case SNARGs, informal]
Any SRDS based on the LOSSW~\cite{LOSSW13} multi-signature scheme in a natural way (as we define) implies the existence of succinct non-interactive arguments for average-case Subset-Product.
This extends to a more general class of multi-signature schemes and Subset-$\phi$ for elementary symmetric polynomials.
\end{theorem}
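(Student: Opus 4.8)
The plan is to compile a ``natural'' SRDS scheme built on top of LOSSW directly into a SNARG for an average-case version of Subset-Product, using the SRDS \emph{aggregation} procedure as the prover, the SRDS \emph{verification} procedure (plus one check tying the proof to the instance's target) as the verifier, reading off completeness from SRDS \emph{robustness} and average-case soundness from SRDS \emph{unforgeability}. \textbf{Step 1: pin down ``SRDS based on LOSSW in a natural way.''} Here each party's SRDS key pair is an LOSSW (BLS-style) pair $(\vk_i = g^{x_i}, \sk_i = x_i)$; a per-party SRDS signature on $m$ is the LOSSW share $H(m)^{x_i}$; an aggregated SRDS signature has the form $(\sigma, \aux)$ where $\sigma = H(m)^{\sum_{i\in S} x_i}$ is exactly the LOSSW multi-signature of the contributing set $S$ --- hence a valid BLS signature on $m$ under the aggregate key $\vk^\ast = \prod_{i\in S}\vk_i$ --- and $\aux$ is the $\tilde O(1)$-size auxiliary data SRDS uses to certify ``$|S|$ is large;'' and $\mathsf{SRDS.Verify}((\sigma,\aux),m)$ checks the multi-signature together with $\aux$. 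The consequence of unforgeability that we will use: whenever $\mathsf{SRDS.Verify}$ accepts $(\sigma,\aux)$ on a message $m$ that no honest party signed, $\sigma$ must in fact equal $H(m)^{\sum_{i\in S'}x_i}$ for some $S'$ with $|S'|\ge(1-\beta)n$, \ie $\prod_{i\in S'}\vk_i = g^{\sum_{i\in S'}x_i}$.

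\textbf{Step 2: the distribution and the SNARG.} The average-case Subset-Product distribution samples the SRDS public parameters as the SNARG CRS and $(\vk_1,\dots,\vk_n)\gets\mathsf{LOSSW.Gen}^n$ honestly (matching the trusted-PKI/average-case setting); a \textsc{yes} instance draws a uniform subset $S$ of the prescribed size and sets target $T=\prod_{i\in S}\vk_i$ with witness $S$, while a \textsc{no} instance draws $T$ from the chosen hard ``no'' distribution. The SNARG prover, on witness $S$, fixes a canonical message $m_0$, computes the shares $\{H(m_0)^{x_i}\}_{i\in S}$, runs the SRDS aggregation algorithm on them (faithfully emulating its $\polylog(n)$-batch/tree structure, which is a purely local computation once all shares are in hand) to get $\Sigma=(\sigma,\aux)$, and outputs $\Sigma$; succinctness gives $|\Sigma|=\tilde O(1)$. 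The SNARG verifier, on instance $(\vk_1,\dots,\vk_n,T)$ and proof $(\sigma,\aux)$, checks (i) that $\sigma$ is a valid LOSSW/BLS signature on $m_0$ under aggregate key $T$ (\eg $e(\sigma,g)=e(H(m_0),T)$) --- binding the proof to \emph{this} target --- and (ii) $\mathsf{SRDS.Verify}((\sigma,\aux),m_0)$ under the PKI $(\vk_1,\dots,\vk_n)$, accepting iff both pass. Completeness is immediate from SRDS robustness and correctness: an honestly aggregated $\Sigma$ over $S$ passes $\mathsf{SRDS.Verify}$, and $T=g^{\sum_{i\in S}x_i}$ makes check (i) pass.

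\textbf{Step 3: soundness and the generalization.} For average-case soundness, suppose a polynomial-size $P^\ast$ makes the verifier accept on a \textsc{no} instance with non-negligible probability. We build an SRDS forger $\Adv$ that corrupts \emph{no} parties (trivially within the $\beta n$ budget), receives the honest PKI $(\vk_1,\dots,\vk_n)$, embeds it as the SNARG instance with a target $T$ drawn from the \textsc{no} distribution, runs $P^\ast$ to get $(\sigma^\ast,\aux^\ast)$, and outputs this as a forgery on $m_0$; since no honest party ever signed $m_0$, an accepting $(\sigma^\ast,\aux^\ast)$ is by definition an SRDS forgery, contradicting unforgeability --- so the verifier rejects \textsc{no} instances except with negligible probability. (Combined with check (i), an accepting proof moreover pins $\sigma^\ast=H(m_0)^{\log_g T}$ and, via the unforgeability consequence from Step 1, exhibits a genuine large witness $S'$ with $\prod_{i\in S'}\vk_i=T$ --- which is how this ties back to Subset-Product membership.) The generalization uses only one structural feature of LOSSW: a multi-signature for a set $S$ is a single-party signature with respect to a fixed \emph{joint function} of $\{\vk_i\}_{i\in S}$ --- the product. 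For a multi-signature scheme whose aggregate key is instead the $k$-th elementary symmetric polynomial $\phi$ of the contributing keys (or whose verification reduces to checking such a value), the same template, with check (i) replaced by the corresponding verification equation against $T=\phi(\{\vk_i\}_{i\in S})$, yields a SNARG for average-case Subset-$\phi$.

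\textbf{Main obstacle.} The delicate point is choosing the \textsc{no} distribution and the precise average-case soundness statement so that the reduction is both valid and nonvacuous. Over a prime-order group of order $q$, a uniformly random target $T$ is a \textsc{yes} instance with overwhelming probability once $2^{\Theta(n)}\gg q$, so one cannot simply take uniform $T$; the statement must either restrict to a regime/encoding in which Subset-Product remains average-case hard, or phrase soundness as ``no efficient prover convinces the verifier of a statement that is not in fact true,'' which must then be reconciled with the clean ``no honest party signed $m_0$'' forgery argument --- in particular, checking that $m_0$ can legitimately serve as a challenge message that no honest party signs in the SRDS unforgeability game, and that $\Adv$'s behavior is admissible there. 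A secondary technicality is confirming that the SRDS aggregation algorithm --- designed to run incrementally up a committee tree --- can be executed ``out of band'' by a single prover holding all of $S$'s shares; this should hold since aggregation operates locally on each $\polylog(n)$-size batch, but must be checked against the formal SRDS syntax.
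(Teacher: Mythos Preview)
Your proposal has a genuine gap in the completeness step. Your SNARG prover, on input the instance $(\vk_1,\dots,\vk_n,T)$ and witness $S$, wants to ``compute the shares $\{H(m_0)^{x_i}\}_{i\in S}$'' and then run SRDS aggregation. But the Subset-Product witness is only the set $S$; the prover sees the public group elements $\vk_i=g^{x_i}$ and does \emph{not} know the discrete logarithms $x_i$. Producing individual LOSSW signature shares therefore requires solving discrete log, so your prover simply cannot run.

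The paper avoids this with two ideas that are absent from your outline. First, it introduces an extra $(n{+}1)$-th key $\vk_{n+1}=t^{-1}$, so that for a yes instance the product $\prod_{i\in S\cup\{n+1\}}\vk_i=1$, which corresponds to the \emph{degenerate} aggregate secret key $\sk^\ast=0$. The prover can then produce a valid LOSSW multi-signature on $m$ under aggregate key $1$ directly, using $\sk^\ast=0$, with no knowledge of any individual $x_i$. Second, the paper's formalization of ``SRDS based on a multi-signature scheme'' builds in a dedicated \emph{completeness} clause: there is an efficient algorithm $\progP$ that, given only a valid multi-signature $\sigma_\ms$ and the set $S$ (not individual shares, not secret keys), outputs the auxiliary $\pi$ making the SRDS verify. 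This is precisely what lets the SNARG prover finish. Your Step~1 definition instead assumes the SRDS is built by aggregating per-party shares, which bakes in the very dependence on secret keys that breaks completeness.

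A smaller point: for soundness the paper does not reduce to the generic SRDS unforgeability game, but to a tailored \emph{soundness} clause in the same definition (no corruptions, no signing oracle: hard to produce an accepting SRDS whose multi-signature fails to verify against every large subset). Combined with the fact that for LOSSW a multi-signature determines a \emph{unique} aggregate key, and that for a random no instance there is (with overwhelming probability, in the regime $n/\log|\GG_T|<1$) no large subset with product equal to that key, this gives average-case soundness cleanly. Your reduction to unforgeability is in the right spirit but would need the paper's uniqueness observation and the definitional soundness clause to close.
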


At a high level, the reduction interprets a (random) Subset-Product instance with target $(x_1,\dots,x_n,t)$ as a set of uniform \emph{verification keys} $(\vk_1,\dots,\vk_n,\vk_{n+1}=t^{-1})$ for the multi-signature scheme.
Given a satisfying witness $S \subseteq [n]$ with $\prod_{i \in S} x_i = t$ of appropriate size, this translates to knowing a large subset of verification keys for which generating an SRDS on their behalf can be achieved efficiently with respect to the \emph{degenerate} verification key $\vk^* = \prod_{i \in S} \vk_i \cdot t^{-1} = 1$.
On the other hand, for uniformly sampled keys \emph{without} such an embedded trapdoor subset, forging such an SRDS will be hard.

We refer the reader to \cref{sec:succinct_arguments} for formal definitions of these notions (including SRDS ``based on'' a multi-signature scheme and average-case SNARGs), as well as further discussion and details of our claims and proofs.

\subsection{Additional Related Work}\label{sec:relatedWork}

\paragraph{Distributed signatures.}
Distributed signatures come in many flavors. We compare SRDS to existing notions from the literature.

\emph{Threshold signatures} \cite{DF89,Rabin98,GJKR01,Sho00,Boldyreva03,HAP18} can guarantee that a sufficiently large number of parties signed the message, while keeping the signature-length (including all information needed to verify) independent of $n$. However, threshold signatures require the keys to be generated by a trusted party in a correlated way (\eg as a Shamir sharing of the signing key), and the signature-reconstruction protocol of existing schemes does not offer succinct aggregation in ``small'' batches. SRDS imply threshold signatures by having the setup algorithm produce the PKI for the parties, and using the aggregation algorithm to reconstruct a signature.

We note that \citet{LJY16} constructed \emph{fully distributed} threshold signatures that do not require any setup assumptions. However, this scheme is not applicable in our setting, since it requires an interactive key-generation protocol to generate the public and secret keys, and this protocol in turn uses a broadcast channel. In fact, as indicated by our lower bound, some form of private-coin setup is inherently needed for constructing SRDS.

\emph{Multi-signatures} \cite{IN83,MOR01,Boldyreva03,BN06,LOSSW13,BDN18} guarantee that a subset of parties signed the message.
Unlike threshold signatures, correlated trusted setup is not needed and a bare PKI suffices; in addition, some of the constructions enable succinct aggregation in ``small'' batches.
\emph{Aggregate signatures} \cite{BGLS03,LMRS04,BGOY07,LOSSW13,LLY13,HW18} are a similar primitive that allows aggregating signatures on different messages. The main distinction of SRDS is succinctness that enables verification \emph{without} knowing the signing parties. This property is crucial for our BA protocol construction.

\emph{Group signatures}~\cite{CH91} and \emph{ring signatures}~\cite{RST01} allow any individual party to sign a message on behalf of a set while hiding their identity.  This is different than our setting where we need to prove that a majority of the parties signed the message.

\paragraph{Large-scale MPC.}
The focus of this work is communication complexity of Byzantine agreement protocols; however, \cref{cor:intro:mpc} demonstrates applications \wrt general secure multiparty computation (MPC).
Large-scale MPC was initially studied by \citet{DI06} and successors (\eg \cite{DN07,DIKNS08,DIK10}), in the sense that the amortized per-party work grows only as $\tilde O(|C|/n + \poly(n))$, where $C$ is the circuit to be computed.
\citet{DKMSZ17} applied the almost-everywhere agreement protocol~\cite{KSSV06} to achieve MPC with amortized per-party communication of $\tilde{O}(\ssize{C}/n+\sqrt{n})$.
Using cryptographic assumptions (threshold FHE), \citet{ZMS14} reduced the amortized cost to $\tilde{O}(\ssize{C}/n)$. Under comparable assumptions, our results achieve amortized cost of $\tilde{O}(\inputlen+\outputlen)$ (where $\inputlen$ and $\outputlen$ stand for the function's input/output length).

The \emph{bottleneck} complexity of MPC was studied in \cite{BJPY18}, as the maximum communication complexity required by any party within the protocol execution. It was shown that for some $n$-party functions $f: \zo^n \to \zo$, some parties must communicate $\Omega(n)$ bits to compute $f$, even if security is not required. This result rules out generic MPC with \emph{balanced}, sublinear communication per party, and motivates our MPC results of amortized sublinear communication per party. Note that in \cite{BCP15} load-balanced MPC was achieved, however, amortized over large programs (and in a model that allows each party to have a single use of a broadcast channel).

\paragraph{Communication-efficient BA.}
Known protocols that break the $\Omega(n^2)$ communication barrier from~\cite{DR85} (for deterministic protocols) follow one of two paradigms. The first is starting with the almost-everywhere agreement protocol of \cite{KSSV06} and boost it to full agreement; this approach includes \cite{KS09,KS11,KLST11,BGH13,BGT13}, as well as our results. The second is based on the sortition approach from Algorand~\cite{CM19}, where only a ``small'' set of parties are allowed to talk in every round, and includes \cite{CM19,ACDNPRS19}. The latter approach inherently leads to unbalanced protocols, since parties that are eligible to talk send messages to all other parties.

We note that while \cite{KS09,KLST11,BGH13,BGT13} and our results hold in the static-corruptions setting, some protocols are resilient to \emph{adaptive} corruptions.
Assuming secure data erasures (\ie where honest parties can erase some parts of their internal states) $\tilde{O}(\sqrt{n})$-balanced BA \cite{KS11} and $\tilde{O}(1)$-amortized BA~\cite{CM19} can be achieved against adaptive corruptions. In the erasure-free setting, \cite{ACDNPRS19} achieved $\tilde{O}(1)$-amortized BA against adaptive corruptions.
One of the interesting open questions we pose in \cref{sec:intro:questions} is whether $\tilde{O}(1)$-balanced BA can be achieved in the adaptive setting.

\subsection{Open Questions}\label{sec:intro:questions}

Our results leave open several interesting questions for followup work.

Our constructions of SRDS offer a trade-off between cryptographic assumptions and setup assumptions (indeed, our lower bound indicates that some form of private-coin setup is needed). Is it possible to get the best of both, \ie construct SRDS with bare PKI under standard, falsifiable assumptions? This in turn would imply $\tilde O(1)$-balanced BA from the corresponding computational assumption and setup. Alternatively, does SRDS in a weak setup model \emph{require} strong computational assumptions: For example, do SRDS with bare PKI \emph{imply} some kind of succinct non-interactive arguments (SNARGs)?

Taking a step back: Is it possible to achieve $\tilde O(1)$-balanced BA \emph{unconditionally}? While our SRDS-based approach inherently makes use of computational assumptions (and our lower bound implies this necessity for a \emph{one-shot} boost from almost-everywhere to everywhere agreement in the PKI model), this leaves open the possibility of removing cryptography via an alternative approach.

Can one further extend the lower bound in this work, identifying a minimal required \emph{round complexity} for generically converting from almost-everywhere to everywhere agreement within various setup models?

In the $\tilde{O}(1)$-amortized BA setting, known constructions consider stronger security models. Namely, the protocol in \citet{BGH13} is secure against static corruptions (similarly to our protocols); however, no trusted setup assumptions are required. The protocol of \citet{ACDNPRS19} guarantees security against adaptive corruptions; however, it requires a trusted PKI assumption. In contrast, the protocol of \citet{KS11} does not require setup assumptions and is resilient to adaptive corruptions, but it provides suboptimal total communication $\tilde{O}(n\sqrt{n})$. It is interesting to explore if $\tilde{O}(1)$-balanced BA can be achieved without setup or in the adaptive setting.

Regarding the communication model, the vast majority of sub-quadratic BA protocol are defined in the synchronous model. In the asynchronous setting unbalanced sub-quadratic BA in the trusted PKI model was recently proposed \cite{CKS20,BKLL20}. We note that \emph{balanced} sub-quadratic BA is not known even in the partially synchronous model. An interesting question is to expand our techniques beyond the synchronous realm.

Finally, all known BA protocols with $o(n^2)$ total communication follow either the approach of \citet{KSSV06} or of \citet{CM19}, which are based on electing a polylog-size committee. As such, these protocols only support a non-optimal constant fraction of corruptions. Is it possible to achieve $o(n^2)$ total communication while tolerating the optimal number of corruptions $t<n/2$?

\subsection*{Paper Organization}
In \cref{sec:Preliminaries}, we provide basic definitions. SRDS are defined in \cref{sec:SRDS}. Our BA protocol and the lower bounds appear in \cref{sec:app:BA}. \cref{sec:SRDS:construction} presents two constructions of SRDS, and in \cref{sec:succinct_arguments}, we explore the connection of SRDS based on multi-signatures to succinct non-interactive arguments.
Some of the definitions and proofs are deferred to the appendix.

\section{Preliminaries}\label{sec:Preliminaries}

In this section, we present the security model and the definition of Byzantine agreement. Additional definitions of proof-carrying data systems, of Merkle hash proof systems, and of multi-signatures, can be found in \cref{sec:Preliminaries_cont}.

\paragraph{Protocols.}
All protocols considered in this paper are PPT (probabilistic polynomial time): the running time of every party is polynomial in the (common) security parameter, given as a unary string. For simplicity, we consider Boolean-input Boolean-output protocols, where apart from the common security parameter, each party has a single input bit, and each of the honest parties outputs a single bit. We note that our protocols can be used for agreement on longer strings, with an additional dependency of the communication complexity on the input-string length.

We consider protocols in the PKI model, and we distinguish between two flavors of PKI: a \emph{trusted PKI} and a \emph{bare PKI}. In both settings, a trusted party samples a secret key $\sk_i$ and a public key $\vk_i$, for every $i\in[n]$, from some distribution. The adversary is allowed to corrupt parties adaptively based on $(\vk_1,\ldots,\vk_n)$ and learn the secret key associated with every corrupted party. In the bare-PKI model, the adversary can replace the public key of every corrupted party by an arbitrary string $\vk_i'$ of its choice, even ``after'' looking at the public keys of honest parties.

The communication model is \emph{synchronous}, meaning that protocols proceed in rounds. In each round, every party can send a message to every other party over a private channel.\footnote{We note that using standard techniques, our constructions can be defined over \emph{authenticated} channels (that do not provide privacy) by additionally assuming the existence of key-agreement protocols.} It is guaranteed that every message sent in a round will arrive at its destinations by the end of that round. The adversary is \emph{rushing} in the sense that it can use the messages received by corrupted parties from honest parties in a given round to determine the corrupted parties' messages for that round.

Note that an adversary can always blow up the communication complexity of a protocol by flooding honest parties with many bogus messages. It is therefore standard to count only messages that are actually processed by honest parties. We follow the model of \cite{BCDH18}, who formalized this intuition. Namely, message receival consists of two phases: a \emph{filtering} phase, where incoming messages are inspected according to specific filtering rules defined by the protocol, and some messages may be discarded, followed by a \emph{processing} phase, where the party computes its next-message function based on the remaining non-filtered messages. In practice, the filtering procedure should be ``lightweight,'' and consist of operations like counting messages or verifying validity of a signature.

\paragraph{Byzantine Agreement.}
Informally, in an $n$-party, $t$-resilient Byzantine agreement protocol, the honest parties must agree on one of their input bits, even when $t$ parties collude and actively try to prevent it. We provide two definitions for BA: the first is the standard, property-based definition and the second is based on the real/ideal paradigm.

We start with the property-based definition. This definition captures the core properties required for consensus; namely, \emph{agreement}, \emph{validity}, and \emph{termination}. This is a weaker definition than the simulation-based one, and as such is most suitable for proving lower bounds.

\begin{definition}[BA, property-based]\label{def:ba:properties}
Let $\pi$ be an $n$-party protocol in which every party $\Party_i$ has an input bit $x_i\in\zo$ and outputs a bit $y_i\in\zo$ at the end of the protocol. The protocol $\pi$ is an $n$-party, $t$-resilient BA protocol (according to the property-based definition) if the following properties are satisfied with all but negligible probability when up to $t$ parties maliciously attack the protocol:
\begin{itemize}
    \item\textbf{Agreement.}
    For every pair of honest parties $\Party_i$ and $\Party_j$ it holds that $y_i=y_j$.
    \item\textbf{Validity.}
    If there exists a bit $x$ such that for every honest party $\Party_i$ it holds that $x_i=x$, then for every honest party $\Party_i$ it holds that $y_i=x$.
    \item\textbf{Termination.}
    Every honest party eventually outputs a bit.
\end{itemize}
\end{definition}

We proceed with the simulation-based definition, which requires the protocol to realize an ideal BA functionality.
Roughly speaking, an ideal functionality represents a trusted third party that receives inputs from all the parties and provides them with the correct output. A protocol in the real world (where parties communicate between themselves and no trusted party exists) securely realizes an ideal functionality if every attack in the real protocol can be simulated in the ideal-world computation; since by definition no attack can happen in the ideal world, it is concluded that no attack can happen in the real world as well.
We refer the reader to \cite{Canetti00,Canetti01,Goldreich04} for further details on the real/ideal paradigm. We follow the standard ideal functionality for Byzantine agreement, see, \eg \cite{Nielsen02,Cohen16,CGHZ16,CCGZ19,CCGZ21}, where the functionality collects the inputs from all parties and counts them; in case of a strong majority of the inputs equals some bit, then this bit is set as the output, and otherwise the adversary gets to choose the output.
This definition implies the property-based one and is stronger as it guarantees security under composition; we use this definition for our protocol constructions.

\begin{nfbox}{The Byzantine agreement functionality}{fig:fba}
\begin{center}
	\textbf{The functionality} $\fba$
\end{center}
The functionality $\fba$ proceeds as follows, running with parties $\Party_1, \ldots, \Party_n$ and an adversary $\Sim$, statically corrupting a subset of parties indexed by a set $\IS\subseteq[n]$ of size $\ssize{\IS}\leq t$.
\begin{enumerate}
	\item
    For each $i\in[n]\setminus\IS$, party $\Party_i$ sends a bit $x_i\in\zo$ as its input; the functionality sends $(\Party_i,x_i)_{i\in[n]\setminus\IS}$ to the adversary.
    \item
    The adversary sends input bits $(x_i)_{i\in\IS}$ for the corrupted parties and a ``tie-breaker bit'' $\tilde{x}$.
	\item
    Once all parties provided their inputs, if there exists a bit $b$ such that $\ssize{\sset{i\mid x_i=b}}\geq n-t$, then set $y=b$; otherwise, set $y=\tilde{x}$. Send $y$ to every party.
\end{enumerate}
\end{nfbox}

\begin{definition}[BA, simulation-based]\label{def:ba:simulation}
An $n$-party, $t$-resilient Byzantine agreement protocol (according to the simulation-based definition) is a protocol $\pi$ that realizes the BA ideal functionality (defined in \cref{fig:fba}) tolerating a malicious adversary statically corrupting up to $t$ parties.
\end{definition}

\paragraph{Balanced BA.}
In this work, we design a balanced Byzantine agreement protocol, with $\tilde{O}(1)$ per-party communication; i.e., for all adversarial strategies, the communication complexity incurred by each honest party is $\polylog(n)\cdot\poly(\secParam)$ with overwhelming probability. One can also consider a slightly relaxed variant, where the per-party communication is $\tilde{O}(1)$ \emph{in expectation}. The relaxed notion is satisfied by committee-based BA protocols such as \cite{CM19,ACDNPRS19}, where each party has a similar probability of being elected in the committee and hence the parties incur a similar per-party communication when given sufficiently many invocations of the protocol. In this work, however, we focus on the former stronger notion that is formalized as follows.

\begin{definition}[Balanced BA]\label{def:balanced-BA}
Let $\alpha(n,\secParam)$ be a function. An $n$-party, $t$-resilient Byzantine agreement protocol has $\alpha(n,\secParam)$-balanced communication complexity, if for every PPT adversary corrupting up to $t$ parties, the communication complexity incurred by each honest party is $\alpha(n,\secParam)$, except for negligible probability.
\end{definition}

\section{\SRDS}\label{sec:SRDS}

In this section, we introduce a new notion of a distributed signature scheme for $n$ parties, which can be used to obtain low-communication BA. As discussed earlier, every party has signing/verification keys based on some form of PKI, and the parties may receive additional setup information consisting of public parameters for the underlying signature scheme and potentially a common random string (CRS).
We allow the adversary to adaptively corrupt a subset of the parties before the protocol begins, based on the setup information and all $n$ verification keys. We consider two PKI models: a \emph{bare PKI}, where the adversary can choose the corrupted parties' keys, and a \emph{trusted PKI}, where the keys are honestly generated and cannot be changed. We do not permit adaptive corruptions once the parties start signing messages.

Below, we define the new signature scheme and the security requirements.
Later, in \cref{sec:SRDS:construction}, we present two constructions offering a tradeoff between cryptographic and setup assumptions: the first assuming one-way functions in the trusted-PKI model and the second assuming CRH and SNARKs with linear extraction in the CRS and bare-PKI model.
In \cref{sec:succinct_arguments}, we show that a natural approach towards constructing SRDS in an untrusted-PKI model has strong connections to succinct average-case argument systems for certain NP-Complete problems.

\ifdefined\IsTrackChanges\else
\vspace{-0.3cm}
\fi
\paragraph{The Definition.}\label{sec:SRDS:def}

We start by presenting the syntax of the definition, and later, define the required properties from the scheme: succinctness, robustness, and unforgeability.

\begin{definition}[SRDS syntax]\label{def:SRDS}
A \textsf{\srds scheme} with message space $\MS$ and signature space $\XS$ for a set of parties $\PS=\sset{\Party_1,\ldots,\Party_n}$, is defined by a quintuple of PPT algorithms $(\TSCR$, $\TSGen$, $\TSSignShare$, $\TSAggr$, $\TSVerify)$ as follows:
\begin{itemize}[leftmargin=*]
	\item
    $\TSCR(1^\secParam, 1^n)\to\pp$: On input the security parameter $\secParam$ and the number of parties $n$, the setup algorithm outputs public parameters $\pp$.
	\item
    $\TSGen(\pp)\to(\vk,\sk)$: On input the public parameters $\pp$, the key-generation algorithm outputs a verification key $\vk$ and a signing key $\sk$.
	\item
    $\TSSignShare(\pp, i, \sk, m)\to\sigma$: On input the public parameters $\pp$, the signer's identity $i$, a signing key $\sk$, and a message $m\in \MS$, the signing algorithm outputs a signature $\sigma\in\XS\cup\sset{\bot}$.
    \item
    $\TSAggr(\pp, \sset{\vk_1,\ldots,\vk_n}, m, \sset{\sigma_1,\ldots,\sigma_q})\to \sigma$: On input the public parameters $\pp$, the set of all verification keys $\sset{\vk_i}_{i\in [n]}$, a message $m\in\MS$, and a set of signatures $\sset{\sigma_i}_{i\in [q]}$ for some $q=\poly(n)$, the aggregation algorithm outputs a signature $\sigma\in\XS\cup\sset{\bot}$.
    \item
    $\TSVerify(\pp,\sset{\vk_1,\ldots,\vk_n}, m, \sigma)\to b$: On input the public parameters $\pp$, the set of all verification keys $\sset{\vk_i}_{i\in [n]}$, a message $m\in \MS$, a signature $\sigma\in\XS$, the verification algorithm outputs a bit $b\in\zo$, representing accept or reject.
\end{itemize}
\end{definition}

We assume without loss of generality that each signature encodes the index $i$ of the corresponding verification key $\vk_i$, and each aggregated signature encodes information about the maxima and minima of the indices associated with verification keys corresponding to the base signatures that are aggregated within them.\footnote{In \cref{sec:SRDS:construction} we will show how this property can be achieved by each of our constructions.} Given a base signature/aggregated signature, let $\indexMax(\sigma)$ denote the function that extracts the maxima associated with $\sigma$ and $\indexMin(\sigma)$ denote the function that extracts the maxima associated with $\sigma$ (in case of a base signature, both $\indexMax(\sigma)$ and $\indexMin(\sigma)$ will return the same value).

\paragraph{Remark (Notation $n$).}
Here, we use $n$ to denote the number of parties in the SRDS scheme. Looking ahead, the effective number of (virtual) parties in the SRDS used in our BA protocol in \cref{sec:app:BA} will be larger than the actual (real) participants of the protocol.

\medskip
We proceed to define three properties of an SRDS scheme: \emph{succinctness}, \emph{robustness}, and \emph{unforgeability}.
We define these properties with respect to any $t<n/3$ corruptions. Although the definitions can be stated for $t < n/2$, we opted for the former for clarity and concreteness, as both our BA protocol (\cref{sec:app:BA}) and our SRDS constructions (\cref{sec:SRDS:construction}) support $n/3$ corruptions.

\paragraph{Succinctness.}
We require that the size of each signature is $\tilde{O}(1)$. This holds both for signatures in the support of $\TSSignShare$ and of $\TSAggr$. In order for parties to jointly perform the signature aggregation process with low communication, we also require that the aggregate algorithm can be decomposed into two algorithms $\TSAggr_1$ and $\TSAggr_2$.
Depending on the set of input signatures $\sset{\sigma_i}_{i\in [q]}$ and the verification keys, the first algorithm $\TSAggr_1$ \emph{deterministically} outputs a subset of the signatures $\setsig$.
The second (possibly randomized) algorithm $\TSAggr_2$ then aggregates these signatures without relying on the verification keys. In particular, the input to the (possibly randomized) step $\TSAggr_2$ is short.

Looking ahead at the BA protocol in \cref{sec:ba_from_srds}, subsets of the parties will collectively run the aggregation algorithm.
Although the inputs to the aggregation algorithm need not be kept private, it could be the case that the randomness used should remain secret. For this reason, the computation of $\TSAggr_2$ in the BA construction will be carried out using an MPC protocol; to keep the overall communication of every party $\tilde{O}(1)$, we require the circuit size representing $\TSAggr_2$ to be $\tilde{O}(1)$. The goal of $\TSAggr_1$ is to deterministically filter out invalid inputs (using the verification keys), such that $\TSAggr_2$ only depends on the verified signatures and \emph{not} on the $n$ verification keys (otherwise the circuit size will be too large).

\begin{definition}[succinctness]
An $n$-party SRDS scheme is \textsf{succinct} if it satisfies the following:

\begin{enumerate}[leftmargin=*]
	\item \textbf{Size of signatures:}
    There exists $\alpha(n,\secParam)\in\poly(\log{n},\secParam)$ such that $\XS\subseteq\zo^{\alpha(n,\secParam)}$.
	\item \textbf{Decomposability:} The $\TSAggr$ algorithm can be decomposed into 2 algorithms $\TSAggr_1$ and $\TSAggr_2$, such that the following hold:
	\begin{itemize}[leftmargin=*]
		\item $\TSAggr_1(\pp,\sset{\vk_1,\ldots,\vk_n},m,\sset{\sigma_1,\ldots,\sigma_q})\to\setsig$, where $\setsig$ is of size $\poly(\log n, \secParam)$ and $\TSAggr_1$ is deterministic.
		\item $\TSAggr_2(\pp, m,\setsig)\to\sigma$, \ie aggregate the signatures in $\setsig$ into a new signature $\sigma$. 	\end{itemize}
\end{enumerate}
\end{definition}

\paragraph{Robustness.}
Informally, a scheme is robust if no adversary can prevent sufficiently many honest parties from generating an accepting signature on a message.
We define robustness as a game between a challenger and an adversary $\Adv$. The game is formally defined in \cref{fig:exp_robust} and comprises of three phases.
In the \emph{setup and corruption} phase, the challenger generates the public parameters $\pp$ and a pair of signature keys for every party. Given $\pp$ and all verification keys $\vk_1,\ldots,\vk_n$, the adversary can adaptively corrupt a subset of (up to) $t$ parties and learn their secret keys.
In the case of a bare PKI (but \emph{not} of trusted PKI), the adversary can replace the verification key of any corrupted party by another key of its choice. Unless specified otherwise, we consider the bare PKI to be the default setup model.

In the \emph{robustness challenge} phase, the adversary chooses a tree $\tree$ describing the order in which the signatures of all the parties are to be aggregated. The nodes on level $0$  correspond to set of all parties who generate signatures (\ie all \emph{virtual} parties in the BA protocol). We slightly abuse notation and refer to level-$1$ nodes as leaf nodes, as they correspond to the actual leaves in the communication tree of~\cite{KSSV06}. For our application in the BA protocol in \cref{sec:ba_from_srds}, we require this tree to be an ``$(n,\IS)$\textsf{-almost-everywhere-communication tree}'' (see \cref{def:rob-com-tree}), where $n$ is the number of parties and $\IS$ is the set of corrupt parties.\footnote{This tree is a combinatorial object that was first defined by \citet{KSSV06}. They also proposed an interactive protocol that allows the parties to collectively build such a tree on the fly. This tree and that protocol will be an integral part of our BA protocol in \cref{sec:ba_from_srds}.}
Furthermore, we assume that level-$0$ nodes are indexed and ordered by the parties in such a way that when the tree topology is expressed flat as a planar graph (no crossovers), then the IDs of level-$0$ nodes are in increasing order. Looking ahead, we will show that this property of the tree is sufficient for our BA protocol in \cref{sec:ba_from_srds}. The adversary also chooses messages $m\in\MS$ and $\sset{m_i}_{i\in \goodP}$, where $\goodP$ is the subset of honest parties that are assigned to leaf nodes that do not have a \emph{good path} (\ie where more than a third of the parties assigned to at least one of the nodes on the path are corrupt) to the root.

Given signatures of parties in $\goodP$ on the respective $m_i$'s and of the remaining honest parties on $m$, the adversary computes signatures of all corrupt parties. The challenger and adversary then interactively aggregate all these signatures in the order specified by the tree $\tree$. In particular, partially aggregated signatures corresponding to  intermediate nodes in the tree that consist of a majority of honest parties are computed by the challenger, while partially aggregated signatures corresponding to the remaining nodes are chosen by the adversary.

Finally, in the \emph{output} phase, the challenger runs the verification algorithm on the message $m$ and the final aggregated signature obtained in the root of the tree, and \Adv wins if the verification fails. We say that an SRDS scheme is robust if no adversary can win this game except with negligible probability.

We start by formally describing the properties of an $(n,\IS)$\textsf{-almost-everywhere-communication tree}, which is a slight variant of the tree described in \citet{KSSV06}.

\begin{definition}[$(n,\IS)$\textsf{-almost-everywhere-communication tree}]\label{def:rob-com-tree}
Let $\IS\subseteq[n]$ be a subset of size $t$ for  $t<n/3$.
A directed rooted tree $\tree=(\vertex,\edge)$ is an \textsf{$(n,\IS)$-almost-everywhere-communication tree} if the following properties are satisfied:

\begin{enumerate}[leftmargin=*]
	\item
	The height of $\tree$ is $\ell^*\in O(\log n/ \log \log n)$.
	Each node $v$ from level $\ell> 1$ has $\log n$ children in level $\ell-1$.
	\item Each node on level $\ell>1$ is assigned a set of $\log^3 n$ parties.
	\item
	A node is \emph{good} if less than a third of the parties assigned to it are in $\IS$.
	Then, it holds that the root is good.
	\item
	All but a $3/\log n$ fraction of the leaves have a \emph{good path} (consisting of good nodes) to the root.
	\item The nodes on level 0 correspond to the $n$ parties.
	\item Each party (on level 0) is assigned to exactly one leaf node (on level 1).
	\item There are $n/\log^5 n$ leaf nodes and each leaf node is assigned a set of $\log^5 n$ parties.
\end{enumerate}
\end{definition}

\begin{definition}[robustness]\label{def:TSig:robust}
Let $t<n/3$.
An SRDS scheme $\Pi$ is $t$-\textsf{robust} with a bare PKI (\resp with a trusted PKI) if for $\mode=\bbpki$ (\resp $\mode=\trpki$) and for any (stateful) PPT adversary $\Adv$ it holds that:
\[
\pr{ \ExptTSrobust_{\mode,\Pi,\Adv}(\secParam, n, t)=0 } \leq \negl(\secParam,n).
\]
The experiment $\ExptTSrobust_{\mode,\Pi,\Adv}$ is defined in \cref{fig:exp_robust}.
\end{definition}

We note that \emph{robustness} is a strictly stronger notion than \emph{completeness}. In a complete scheme, correctness is guaranteed if all the parties are honest. In a robust scheme, even if a subset of parties are corrupted, as long as there are sufficiently many honest parties, correctness is still guaranteed. Hence, any signature scheme satisfying robustness, immediately satisfies completeness.

\begin{nfbox}{Robustness experiment for SRDS}{fig:exp_robust}
\begin{center}
\textbf{Experiment} $\ExptTSrobust_{\mode,\Pi,\Adv}(\secParam,n,t)$
\end{center}
The experiment $\ExptTSrobust$ is a game between a challenger and the adversary $\Adv$. The game is parametrized by an SRDS scheme $\Pi$ and proceeds as follows:
\begin{enumerate}
	\item[A.] \textbf{Setup and corruption.}
    In the first phase, the challenger generates the public parameters and the signature keys for the parties. Given the public information, $\Adv$ can adaptively corrupt parties, learn their secret information, and potentially change their public keys.
	\begin{enumerate}[label={(\arabic*)}]
		\item Compute $\pp\gets\TSCR(1^\secParam, 1^n)$.
		\item For every $i\in[n]$, compute $(\vk_i,\sk_i)\gets \TSGen(\pp)$.
		\item Invoke $\Adv$ on $(1^\secParam, 1^n,\pp,\sset{\vk_1,\ldots,\vk_n})$ and set $\IS=\emptyset$.
        \item\label{step:corruption} As long as $\ssize{\IS}\leq t$
        and $\Adv$ requests to corrupt a party $\Party_i$:
		\begin{enumerate}[label={(\alph*)}]
			\item Send $\sk_i$ to $\Adv$ and receive back $\vk_i'$.
			\item\label{step:replacevk} If $\mode=\bbpki$, set $\vk_i=\vk_i'$.
			\item Set $\IS=\IS\cup \sset{i}$.
		\end{enumerate}
	\end{enumerate}
	\item[B.] \textbf{Robustness challenge.} In this phase, \Adv tries to break the robustness of the scheme.
	\begin{enumerate}[label={(\arabic*)}]
        \item \Adv chooses an $(n,\IS)$-\textsf{almost-everywhere-communication tree} $\tree=(\vertex,\edge)$ (as per \cref{def:rob-com-tree}), in which level-$0$ nodes are indexed and ordered by the parties in such a way that when the tree topology is expressed flat as a planar graph (no crossovers), then the IDs of level-$0$ nodes are in increasing order. Let $\goodP$ be the set of honest parties assigned to the leaf nodes that do not have a good path to the root.
        \item \Adv also chooses a message $m\in \MS$ and a message $m_i\in\MS$ for each $i\in \goodP$.
        \item
        For every $i\in [n]\setminus( \IS\cup \goodP)$, let $\sigma_i\gets\TSSignShare(\pp,i, \sk_i, m)$ and for every $i\in \goodP$, let $\sigma_i\gets\TSSignShare(\pp,i, \sk_i, m_i)$.

        \item
        Send $\sset{\sigma_i}_{i\in [n]\setminus \IS}$ to \Adv and receive back $\sset{\sigma_i}_{i\in \IS}$.
        \item
        For each $\ell=\sset{2,\ldots,\height(\tree)}$ and every node $v$ on level $\ell$:
        \begin{itemize}
        	\item If $v$ is a good node, compute $\sigma_v\gets \TSAggr(\pp, \sset{\vk_1,\ldots,\vk_n}, m, \sset{\sigma_u}_{u\in \child(v)})$, where $\child(v)\subseteq\vertex$ refers to the set of children of the node $v\in\vertex$, and send $\sigma_v$ to \Adv.
        	\item Else, if $v$ is a bad node, receive $\sigma_v$ from $\Adv$.
        \end{itemize}
   	\end{enumerate}

	\item[C.] \textbf{Output Phase.}
    Output $\TSVerify(\pp,\sset{\vk_1,\ldots,\vk_n},m,\sigma_\rroot)$, where $\rroot$ is the root node in $\tree$.
\end{enumerate}
\end{nfbox}

\paragraph{Unforgeability.}
Informally, a scheme is unforgeable if no adversary can use signatures of a large majority of the honest parties on a message $m$ and of a few honest parties on messages of its choice to forge an aggregated SRDS signature on a message other than $m$.

In a similar way to robustness, we consider an unforgeability game between a challenger and an adversary. The \emph{setup and corruption} phase is identical to that in the robustness game.
In the \emph{forgery challenge} phase, the adversary chooses a set $\setP\subseteq[n]\setminus\IS$ such that $\ssize{\setP\cup\IS}<n/3$, and messages $m$ and $\sset{m_i}_{i\in \setP}$. Given signatures of all honest parties outside of $\setP$ on the message $m$ and a signature of each honest party $\Party_i$ in $\setP$ on the message $m_i$, the adversary outputs a signature $\sigma$.
In the \emph{output} phase, the challenger checks whether $\sigma$ is a valid signature on a message different than $m$; if so, the adversary wins. An SRDS scheme is unforgeable if no adversary can win the game except for negligible probability.

\begin{definition}[unforgeability]
Let $t<n/3$.
An SRDS scheme $\Pi$ is $t$-\textsf{unforgeable} with a bare PKI (\resp with a trusted PKI) if for $\mode=\bbpki$ (\resp $\mode=\trpki$) and for every (stateful) PPT adversary $\Adv$ it holds that
\[
\pr{\ExptTSforge_{\mode,\Pi,\Adv}(\secParam, n, t)=1 }\leq \negl(\secParam,n).
\]
The experiment $\ExptTSforge_{\mode,\Pi,\Adv}$ is defined in \cref{fig:expforge}.
\end{definition}

\begin{nfbox}{Forgery experiment for SRDS}{fig:expforge}
\begin{center}
\textbf{Experiment} $\ExptTSforge_{\mode,\Pi,\Adv}(\secParam,n, t)$
\end{center}
The experiment $\ExptTSforge$ is a game between a challenger and the adversary $\Adv$. The game is parametrized by an SRDS scheme $\Pi$ and consists of the following phases:
\begin{enumerate}
	\item[A.] \textbf{Setup and Corruption.} As in the robustness experiment in \cref{fig:exp_robust}.
	\item [B.] \textbf{Forgery Challenge.} In this phase, the adversary tries to forge a signature.
	\begin{enumerate}
		\item
        \Adv chooses a subset $\setP\subseteq [n]\setminus\IS$ such that $\ssize{\setP\cup \IS}< n/3$.
        It also chooses messages $m$ and $\sset{m_i}_{i\in \setP}$ from $\MS$.
		\item For every $i\in \setP$, compute $\sigma_i\gets\TSSignShare(\pp, i, \sk_i, m_i)$.
		\item  For every $i\notin (\setP\cup\IS)$, compute $\sigma_i\gets\TSSignShare(\pp, i, \sk_i, m)$.
		\item Send $\sset{\sigma_i}_{i\in [n]\setminus \IS}$ to \Adv and get back $\sigma'\in\XS$ and $m'\in\MS$.
	\end{enumerate}

	\item[C.] \textbf{Output Phase.}
    Output $1$ if and only if $\TSVerify(\pp,\sset{\vk_1,\ldots,\vk_n}, m',\sigma')=1$ and $m'\neq m$.
\end{enumerate}
\end{nfbox}

We note that as described, the security definition is only for one-time SRDS signatures. Although this is sufficient for our applications in \cref{sec:app:BA}, it is possible to extend the definition and provide the adversary an oracle access to signatures of honest parties on messages of its choice. However, in that case, the adversary must choose the set $\setP$ before getting oracle access.

\paragraph{Security.}
We say that an SRDS scheme is secure in the respective PKI model, if it satisfies all the above properties.
\begin{definition}[secure SRDS]
Let $t<n/3$. An SRDS scheme $\Pi$ is $t$-\textsf{secure} with a bare PKI (\resp with a trusted PKI) if it is \textsf{succinct}, $t$-\textsf{unforgeable} and $t$-\textsf{robust} with a bare PKI (\resp with a trusted PKI).
\end{definition}

\section{Balanced Communication-Efficient Byzantine Agreement}\label{sec:app:BA}

In this section, we consider Byzantine agreement protocols with $\tilde{O}(1)$ communication per party.
In \cref{sec:ba_from_srds}, we show how to use \srds (SRDS) to boost almost-everywhere agreement to full agreement in a balanced way via a single communication round.
In \cref{sec:ba_lb}, we show that a similar task cannot be achieved under weaker setup assumptions.

\subsection{Balanced Byzantine Agreement from SRDS}\label{sec:ba_from_srds}
We start by showing how to combine \srds (SRDS) with the protocol of \cite{BGT13} to obtain BA with balanced $\tilde{O}(1)$ communication. We prove the following theorem.

\begin{theorem}[\cref{thm:intro:ba}, restated]\label{thm:ba}
Let $\beta<1/3$ and assume existence of a $\beta n$-secure SRDS scheme in the bare-PKI model (\resp trusted PKI model). Then, there exists a $\beta n$-resilient BA protocol (according to \cref{def:ba:simulation}) in a hybrid model for generating the SRDS setup and the relevant PKI, such that:
\begin{itemize}
    \item
    The round complexity and communication locality are $\polylog(n)$; every party communicates $\polylog(n)\cdot\poly(\secParam)$ bits.
    \item
    The adversary can adaptively corrupt the parties based on the public setup and the PKI before the onset of the protocol. For bare PKI, the adversary can additionally replace the corrupted parties' public keys.
\end{itemize}
\end{theorem}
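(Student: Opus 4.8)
The plan is to instantiate the three-phase ``certified almost-everywhere agreement'' template from \cref{sec:intro:technique}, replacing the multi-signature by an SRDS, and then to show the resulting protocol realizes $\fba$. First I would spell out the protocol $\protba$. (i) Run the almost-everywhere agreement protocol of \citet{KSSV06}, which produces a $\polylog(n)$-degree communication tree $\tree$: each node carries a $\polylog(n)$-size committee, the root committee (the supreme committee) has a $2/3$-honest majority, and all but a $3/\log n$-fraction of the leaves are joined to the root by an all-good path. (ii) The supreme committee runs an off-the-shelf BA on its members' inputs to fix an output $y$ and an off-the-shelf coin-toss to fix a uniform seed $s$; both cost $\polylog(n)\cdot\poly(\secParam)$ since the committee is small with honest $2/3$-majority. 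The pair $(y,s)$ is propagated down $\tree$, committee-to-committee along tree edges. (iii) Each party signs the pair it received with $\TSSignShare$; walking up $\tree$, the committee at a node $v$ computes $\TSAggr_1$ locally (it is deterministic and may use all $n$ verification keys, which parties hold from setup) to extract a $\polylog(n)$-size batch $\setsig$ of its children's signatures, then the committee computes $\TSAggr_2$ on the short input $\setsig$ inside a small-committee, information-theoretic MPC tolerating $<1/3$ corruptions, for a circuit of size $\tilde{O}(1)$, so the aggregation randomness remains hidden; the root's aggregate $\sigma$ is then propagated back down $\tree$. (iv) One round of boosting: every party holding $(y,s,\sigma)$ uses a PRF on $s$ and its identity to derive a pseudorandom $\polylog(n)$-size recipient set, sends $(y,s,\sigma)$ to each recipient, and outputs $y$; a party receiving $(y',s',\sigma')$ from $i$ filters, accepting only if it lies in the PRF-set derived from $(s',i)$ and $\TSVerify(\pp,\sset{\vk_j}_j,(y',s'),\sigma')=1$, and upon acceptance outputs $y'$. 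To keep the corrupt fraction among aggregated signatures below $1/3$, each real party is assigned $\polylog(n)$ virtual identities (one per committee slot it occupies in $\tree$), so the SRDS is invoked with $n'=n\cdot\polylog(n)$ virtual parties and threshold $\beta n'<n'/3$, and the $\indexMax/\indexMin$ convention together with the crossing-free left-to-right ordering of the level-$0$ nodes prevents any base signature from being aggregated twice.

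The resource bounds are then a direct accounting: the KSSV phase costs $\polylog(n)$ rounds and $\tilde{O}(1)$ bits per party; each propagation phase is $O(\log n/\log\log n)$ rounds over $\polylog(n)$ tree-neighbors with $\tilde{O}(1)$-size messages by SRDS succinctness; the committee BA, the coin-toss, and the per-node $\TSAggr_2$ MPC cost their $\polylog(n)$ participants $\polylog(n)\cdot\poly(\secParam)$ bits, and each party sits in only $\polylog(n)$ committees; and the boosting round has $\polylog(n)$ recipients and $\tilde{O}(1)$-size messages. Summing gives $\polylog(n)$ rounds and locality, and $\polylog(n)\cdot\poly(\secParam)$ bits per party.

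For correctness, with overwhelming probability the KSSV execution yields, after the virtual-identity relabelling, an $(n',\IS)$-almost-everywhere-communication tree in the sense of \cref{def:rob-com-tree} with $\IS$ the corrupt virtual parties ($\ssize{\IS}<n'/3$), and the set $\goodP$ of honest parties in bad-path leaves has size $o(n')$, so $\ssize{\goodP\cup\IS}<n'/3$. The downward/upward/downward flow of $(y,s)$ and $\sigma$ along $\tree$ is exactly the aggregation schedule of the experiment $\ExptTSrobust$ (good nodes aggregate via $\TSAggr$, bad nodes are adversarial, honest good-path-leaf parties sign $m=(y,s)$ and the $\goodP$-parties sign whatever garbage they received), so robustness of the SRDS gives $\TSVerify(\pp,\sset{\vk_j}_j,(y,s),\sigma)=1$ except with negligible probability; hence all but an $o(1)$-fraction of the honest parties hold $(y,s,\sigma)$ and output $y$, and for a uniform seed $s$ a union bound shows that whp every remaining honest party is selected in the boost by at least one such party, therefore also receives $(y,s,\sigma)$ and outputs $y$. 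Conversely, if some honest party ever accepts $(y',s',\sigma')$ with $(y',s')\neq(y,s)$, then $\sigma'$ is a forgery against the same honest signatures, contradicting unforgeability of the SRDS; thus no honest party outputs anything but $y$. This yields agreement and termination, and validity follows because when all honest inputs equal $x$, majority-preservation of KSSV and validity of the committee BA force $y=x$. Finally, for the simulation-based statement (\cref{def:ba:simulation}) I would use a straight-line simulator $\Sim$ that plays the SRDS-setup/PKI hybrids honestly, lets $\Adv$ corrupt adaptively during setup, runs the honest parties of $\protba$ honestly on the inputs delivered by $\fba$, reads off the value $y^\ast$ they output, and submits corrupt inputs and the override to $\fba$ so that its output is $y^\ast$ --- possible because, using $\beta n<n/3$, the only case in which $\fba$'s output is not free is when all honest inputs equal some $b$, and there validity already forces $y^\ast=b$. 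The simulated view is identical to the real one, so real and ideal agree up to the negligible failure probability bounded above.

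The main obstacle I expect is making this reduction to SRDS genuinely faithful: one must verify that the combinatorial tree built by the KSSV protocol, once relabelled with virtual identities, literally satisfies every clause of \cref{def:rob-com-tree} (including the planar, crossing-free, increasing ordering of the level-$0$ nodes), that the protocol's aggregate-up/propagate-down schedule is captured verbatim by $\ExptTSrobust$ and $\ExptTSforge$, and --- most delicately --- that no base signature is ever double-counted, so that the effective corrupt fraction among aggregated signatures really stays below $1/3$; this last point is exactly what the $\indexMax/\indexMin$ bookkeeping and the ordering of level-$0$ nodes are engineered to enforce. The resource accounting, the boosting-round union bound, and the straight-line simulator are routine once these alignments are in place.
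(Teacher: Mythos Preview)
Your proposal is correct and follows essentially the same approach as the paper: the protocol, the reductions to SRDS robustness and unforgeability, the PRF-based last-round boost, the virtual-identity/$\indexMax$-$\indexMin$ bookkeeping, and the straight-line simulator all match the paper's construction and proof of \cref{lem:ba}. The only cosmetic difference is that the paper packages the sub-protocols as ideal functionalities $(\faecomm,\fba,\fct,\faggrsig)$ and proves security in that hybrid model before instantiating them, whereas you describe the instantiated protocol directly; the obstacles you flag (that the KSSV tree with relabelled level-$0$ nodes satisfies \cref{def:rob-com-tree}, and that the range checks prevent double-counting) are exactly the points the paper handles explicitly via the $\mapping$ function and the $\Range(v)$ checks in Step~\ref{step:recurse:secondlevel}.
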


By instantiating \cref{thm:ba} with our SRDS constructions from \cref{sec:SRDS:construction}, we get the following corollaries.

\begin{corollary}
Let $\beta<1/3$. Assuming OWF, there exists a $\beta n$-resilient BA protocol in the trusted-PKI model with balanced $\tilde{O}(1)$ communication per party.
\end{corollary}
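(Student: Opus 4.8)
The plan is to obtain the statement as a direct composition of the two results established above. First, recall \cref{thm:ba}: from any $\beta n$-secure SRDS scheme in the trusted-PKI model one obtains a $\beta n$-resilient BA protocol (satisfying \cref{def:ba:simulation}) with $\polylog(n)$ round complexity and locality, in which every party communicates $\polylog(n)\cdot\poly(\secParam)$ bits, operating in a hybrid model that provides the SRDS public parameters together with the PKI. Second, recall \cref{thm:intro:srds_owf}: assuming one-way functions exist, a $\beta n$-secure SRDS scheme in the trusted-PKI model exists. Instantiating the SRDS primitive invoked by \cref{thm:ba} with this construction immediately yields the claimed protocol, and the $\polylog(n)\cdot\poly(\secParam)$ per-party communication bound is exactly the $\tilde{O}(1)$ bound in the statement.

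The one point requiring care is the setup model. The hybrid functionality in \cref{thm:ba} produces, besides the PKI, the SRDS public parameters $\pp\gets\TSCR(1^\secParam,1^n)$; for the OWF-based construction these parameters are public-coin (essentially just a description of the underlying one-way-function-based signature scheme with oblivious key generation, carrying no private state), so they can be folded into the trusted setup without strengthening the model — the resulting protocol therefore lives in the plain trusted-PKI model, with no CRS and no correlated randomness beyond the independently generated key pairs. Likewise, the corruption model inherited from \cref{thm:ba} and from the SRDS robustness/unforgeability experiments — adaptive corruption during setup as a function of the public parameters and all $n$ verification keys, followed by a static online phase in which the corrupted parties' honestly generated keys cannot be altered — is precisely the trusted-PKI model of \cref{sec:Preliminaries}.

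There is essentially no obstacle beyond this bookkeeping: the corollary is a direct instantiation of \cref{thm:ba} with \cref{thm:intro:srds_owf}, and the only thing to verify is that plugging in the concrete OWF-based SRDS does not silently import a stronger setup or cryptographic assumption, which the observation above rules out.
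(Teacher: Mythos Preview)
Your proposal is correct and matches the paper's approach exactly: the paper states the corollary as an immediate instantiation of \cref{thm:ba} with the OWF-based SRDS construction (\cref{thm:srds_owf}), without further argument. Your additional observation that $\TSCR$ in the OWF-based scheme outputs only $\pp=1^\secParam$, so no extra setup beyond trusted PKI is introduced, is accurate and more explicit than the paper itself.
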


\begin{corollary}
Let $\beta<1/3$. Assuming CRH and SNARKs with linear extraction, there exists a $\beta n$-resilient BA protocol in the bare PKI and CRS model with balanced $\tilde{O}(1)$ communication per party.
\end{corollary}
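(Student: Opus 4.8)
The plan is to realize the ``certified almost-everywhere agreement'' template from \cref{sec:intro:technique}, using the SRDS scheme in place of the multi-signature of~\cite{BGT13}, and then to analyze the resulting protocol $\protba$ in three parts: the construction, its complexity, and its security. First I would make $\protba$ precise. (i)~The parties run the almost-everywhere agreement protocol of~\cite{KSSV06}; by its guarantees this fixes (in the parties' local views) an $(n,\IS)$-almost-everywhere-communication tree $\tree$ as in \cref{def:rob-com-tree}, together with a $\polylog(n)$-size supreme committee $\committee$ at the root that is good, with all but a $3/\log n$ fraction of the leaves having a good path to $\committee$. (ii)~The committee runs a plain $\polylog(n)$-party BA on its members' inputs to fix a bit $y$ and a secure coin toss to fix a seed $s$ (both against the $<1/3$ corrupt fraction inside $\committee$), and propagates $(y,s)$ down $\tree$, so that every party on a good path learns $(y,s)$; a strong global majority for some bit is reflected in $\committee$ (since $\committee$ is a representative sample and $\beta<1/3$ leaves slack), so $y$ equals it. (iii)~Each party on a good path signs $(y,s)$ via $\TSSignShare$ under each of its $\polylog(n)$ virtual identities (one per root-path in $\tree$, chosen so the fraction of corrupt virtual identities stays $<1/3$), and the base signatures are aggregated up $\tree$: each good node runs the deterministic $\TSAggr_1$ on the $\polylog(n)$ partial signatures from its children to obtain a $\polylog(n)$-size $\setsig$, then its committee evaluates $\TSAggr_2$ on the short input $\setsig$ inside a small honest-majority MPC (to keep the aggregation randomness hidden); the root obtains $\sigma^\ast$ with $\TSVerify(\pp,\{\vk_i\}_i,(y,s),\sigma^\ast)=1$, propagated back down $\tree$. (iv)~In a single final round, every party $\Party_i$ holding a verifying $(y,s,\sigma^\ast)$ computes $R_i=\mathsf{PRF}_s(i)$, a $\polylog(n)$-size recipient set, and sends $(y,s,\sigma^\ast)$ to each party in it; a party receiving such a message, after filtering out those whose SRDS signature fails to verify or whose sender $j$ fails the check ``I am in $R_j$'', outputs the accompanying $y$ and halts. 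Parties already holding a verifying $\sigma^\ast$ output $y$ directly.

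Next I would tally the costs. $\protba$ terminates in $\polylog(n)$ rounds: $\polylog(n)$ for~\cite{KSSV06}, $\polylog(n)$ for the committee's BA, coin toss, and the $\TSAggr_2$ MPCs, $O(\log n/\log\log n)$ for each of the up/down propagations, and one for the boost. Its locality is $\polylog(n)$: $\tree$ has degree $\log n$, every node's committee has size $\polylog(n)$, and $|R_i|=\polylog(n)$. Per-party communication is $\polylog(n)\cdot\poly(\secParam)$, since every message is either a~\cite{KSSV06} message, a single (aggregated) SRDS signature of size $\alpha(n,\secParam)\in\poly(\log n,\secParam)$ by succinctness, the contents of a $\TSAggr_1$ output $\setsig$ of total size $\polylog(n)\cdot\alpha(n,\secParam)$, or traffic of an MPC computing a size-$\tilde O(1)$ circuit among $\polylog(n)$ parties.

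For security I would exhibit a simulator $\Sim$ for $\fba$ in the SRDS-setup-and-PKI hybrid model (which $\Sim$ emulates, choosing public parameters and keys and answering adaptive corruptions as the hybrid would); since $\fba$ leaks all parties' inputs to the ideal adversary, $\Sim$ can run the honest parties' code on their true inputs and forward the $\tilde x$ extracted from the committee's BA, so indistinguishability reduces to establishing the property-based guarantees of \cref{def:ba:properties} in the real execution. \emph{Validity and ``everyone accepts'':} if all honest parties input $x$, the committee's BA fixes $y=x$, and robustness of the SRDS---invoked with the tree $\tree$ produced by the~\cite{KSSV06} run, with the honest parties on good paths signing $m=(y,s)$ and the isolated honest parties playing the role of $\goodP$ in \cref{fig:exp_robust}---guarantees $\sigma^\ast$ verifies, so every party on a good path outputs $x$; the last round then carries a verifying $(x,s,\sigma^\ast)$ to every remaining honest party. \emph{Agreement:} non-isolated honest parties agree by~\cite{KSSV06} plus deterministic propagation; an isolated honest party outputs only a value accompanied by a \emph{verifying} SRDS signature, and by unforgeability---obtained by embedding the honest BA parties into $\ExptTSforge$ with $\setP$ the isolated honest parties---no verifying signature exists on any $(y',s')\neq(y,s)$, so it outputs $y$ whenever it processes any message at all. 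That it does, except with negligible probability, follows from a covering argument: $(1-\beta-o(1))n$ honest parties lie on good paths, each holds $\sigma^\ast$ and sends it to the pseudorandom set $R_k$ of size $\polylog(n)=\log^c n$ with $c>1$, so by pseudorandomness of $\mathsf{PRF}_s$ (applied to the fresh $s$) and a union bound over the at most $n$ targets, no honest party is missed except with negligible probability. \emph{Termination} is immediate. Given the tailored form of \cref{fig:exp_robust} and \cref{fig:expforge}, the two reductions are largely mechanical once the protocol transcript is matched to the experiment.

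The hard part will be exactly this last reduction: verifying that the~\cite{KSSV06} execution together with the adversary's corruptions really induces a legal $(n,\IS)$-almost-everywhere-communication tree (including the planar monotone ordering of level-$0$ nodes), that the $\polylog(n)$-fold virtual-identity blow-up keeps the corrupted fraction below $1/3$ in the effective SRDS instance (this is where $\beta<1/3$ is consumed, and where the no-double-counting of base signatures matters---delegated black-box to the constructions of \cref{sec:SRDS:construction}), that the rushing adversary's freedom in the last round is faithfully captured by $\ExptTSforge$, and that the failure probabilities of the committee's coin toss and BA and of the $\TSAggr_2$ MPCs fold into the overall negligible error. The~\cite{KSSV06} layer and the $\TSAggr_2$-via-MPC step rely only on off-the-shelf tools and should not be the bottleneck.
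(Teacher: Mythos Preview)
Your proposal is correct in substance, but it is aimed at the wrong target. In the paper this corollary is a one-line instantiation: \cref{thm:ba} (balanced BA from any $\beta n$-secure SRDS) is combined with \cref{thm:SNARKSconstruction} (a $t$-secure SRDS in the bare-PKI and CRS model from CRH and SNARKs with linear extraction). No new argument is given or needed at the level of the corollary.

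What you have written is, in effect, a sketch of the proof of \cref{thm:ba} (more precisely, of \cref{lem:ba}). On that front your outline matches the paper's approach quite closely: the protocol $\protba$ is defined in the $(\faecomm,\fba,\fct,\faggrsig)$-hybrid model rather than by directly invoking~\cite{KSSV06}, the simulator is built exactly as you describe, and the paper's Claims on agreement and validity follow your three bullets (robustness gives a verifying $\sigma_\rroot$; unforgeability rules out a verifying signature on any other $(y',s')$; a PRF-based covering argument reaches the isolated parties). Two small deltas to note. First, the ``$\committee$ is a representative sample'' remark is not how validity is argued: the paper uses only that the root of the $(n,\IS)$-tree is good (so $\committee$ has $>2/3$ honest members), and invokes validity of the committee's $\fba$ directly; no sampling is needed. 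Second, the paper handles the virtual-identity blow-up explicitly by running the SRDS with $n\cdot z$ parties and transforming the $(n,\IS)$-tree \emph{with repeated parties} into an $(n\cdot z,\{(i,j)\}_{i\in\IS,j\in[z]})$-almost-everywhere-communication tree before invoking robustness; this is exactly the step you flag as ``the hard part,'' and in the paper it is mechanical once the $\mapping$ assigning virtual IDs is fixed so that level-$0$ IDs are monotone in the planar layout.
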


\paragraph{High-level overview.}
The protocol is defined in a hybrid model that abstracts the communication tree of \cite{KSSV06}. The parties can communicate in a way that mimics almost-everywhere agreement, and the adversary is allowed to isolate a $o(1)$ fraction of the parties. Each party is assigned to $z=O(\log^4n)$ leaf nodes and $\zs=O(\log^5 n)$ parties are assigned to each leaf node in the communication tree. Since each party will send a signature to every leaf node he is assigned to, it is essential to ensure the same fraction of signatures is generated by corrupted parties as their fraction in the party-set.
For this reason, we allocate $z$ ``virtual identities'' to every party.
The SRDS is used for $n\cdot z$ virtual identities and each party samples separate SRDS keys for each of his virtual identities. These virtual IDs are assigned to the parties in such a way that the virtual IDs associated with the \kth leaf node belong in the range $[(k-1)\cdot \zs+1,k\cdot \zs]$. This ensures that when the tree topology is expressed flat as a planar graph (no crossovers), then the virtual IDs of the leaf nodes are in increasing order. Looking ahead, this property is necessary for robustness, when using our SRDS construction from Section~\ref{sec:SRDS:snarks}.

The protocol starts by invoking $\faecomm$ (defined below) to obtain an almost-everywhere-communication tree where each party is assigned to $z$ leaves. The supreme committee members (parties assigned to the root-node) run Byzantine agreement on their inputs to agree on the output $y$ and run a coin-tossing protocol to agree on a random seed $s$. The supreme committee then makes use of the communication-tree to distribute these values to all non-isolated parties. The parties then collectively generate an SRDS signature to certify the pair $(y,s)$.

To compute this signature, each party locally signs the received pair of values; this is done using a different virtual identity for every leaf node corresponding to the party. Each signature is sent to all parties assigned to the corresponding leaf node. For each node in the tree, the assigned parties aggregate the received signatures, while making sure that the maxima and minima of virtual IDs associated with the signatures that they aggregate indeed lie within the range associated with the leaf nodes that have a path to the current node, and propagate them to the node's parent in a recursive way until reaching the root, where the final aggregated signature is computed.

Next, the supreme-committee again uses the communication-tree to distribute this aggregated signature to all non-isolated parties. Each non-isolated party evaluates a PRF on the seed $s$ and its identity to determine a set of parties, to which he sends the pair $(y,s)$ along with the signature. Isolated parties can now verify the signature and be convinced about the correct output $y$.
Here correctness crucially relies on the fact that the adversary could not have forged an aggregated SRDS signature on any other value.

In \cref{sec:app:ba:func}, we define the ideal functionalities to be used in the BA protocol, and in \cref{sec:app:ba:prot}, we describe the protocol and prove its security. Finally, in \cref{sec:app:ba:mpc}, we present applications of our protocol to broadcast and MPC.

\subsubsection{Functionalities used in the Protocol\SUBSUBSEC}\label{sec:app:ba:func}
We start by describing the functionalities used in our construction.

\paragraph{Almost-everywhere communication.}
The functionality $\faecomm$ is a reactive functionality that abstracts the properties obtained by the protocol from \cite{KSSV06}.
In the first invocation, the adversary specifies a special communication tree that allows all honest parties to communicate, except for a $o(1)$ fraction of isolated parties $\DS\subseteq[n]$. In all subsequent calls, the ``supreme committee,'' \ie the parties associated with the root of the tree, can send messages to all of the parties but~$\DS$.
We use a slightly modified version of the $(n,\IS)$\textsf{-almost-everywhere-communication tree} defined in \cref{sec:SRDS}. Specifically, in \cref{def:rob-com-tree}, each party was assigned to a single leaf node of the tree. Here, each party in the BA protocol will be assigned to multiple leaf nodes (but will participate in the SRDS aggregation as multiple ``virtual'' parties, one for each appearance).

\begin{definition}[$(n,\IS)$\textsf{-almost-everywhere-communication tree with repeated parties}]\label{def:com-tree}
Let $\IS\subseteq[n]$ be a subset of size $\beta n$ for a constant $\beta<1/3$.
A directed rooted tree $\tree=(\vertex,\edge)$ is an $(n,\IS)$-\textsf{almost-everywhere-communication tree with repeated parties} if it satisfies the first four properties of an $(n,\IS)$-\textsf{almost-everywhere-communication tree} (\cref{def:rob-com-tree}) and additionally, the following properties are satisfied:

\begin{enumerate}[leftmargin=*]
	\item Each leaf node of the tree is assigned a set of $\log^5 n$ parties.
	\item Each party is assigned to $O(\log^4 n)$ nodes at each level.
\end{enumerate}
\end{definition}

The original protocol of \cite{KSSV06} has an inverse-polynomial error in $n$; the reason is that the committees are chosen to be $O(\log{n})$. Boyle et al.\ \cite{BGT13} adjusted the protocol to have committees of poly-logarithmic size, thus obtaining poly-logarithmic locality with a negligible error in $n$. Note that the security parameter $\secParam$ is not used in this protocol, so the locality is independent of $\secParam$.

As observed in \cite{BGT13}, the fact that $1-o(1)$ fraction of the leaves are on good paths to the root implies that for a $1-o(1)$ fraction of the parties, a  majority of the leaf nodes that they are assigned to are good. The protocol of \citet{KSSV06} securely realizes $\faecomm$ in the authenticated-channels model tolerating a computationally unbounded, malicious adversary statically corrupting $\beta n$ parties, for a constant $\beta<1/3$.
Every invocation requires $\polylog(n)$ rounds, and every party sends and processes $\polylog(n)$ bits. Throughout all invocations, every party sends to, and processes messages received from $\polylog(n)$ other parties.

\begin{nfbox}{The almost-everywhere communication functionality}{fig:faecomm}
\begin{center}
	\textbf{The functionality} $\faecomm$
\end{center}
The $n$-party reactive functionality $\faecomm$ proceeds as follows:
\begin{itemize}[leftmargin=*]
	\item \textbf{First invocation:}
    Upon receiving an $\init$ message from each party, the functionality asks the adversary for a communication tree $\commtree=(\vertexCommtree,\edgeCommtree)$ and does the following:
    \begin{enumerate}[leftmargin=*]
	   \item
        Verify that $\commtree$ is an $n$-party almost-everywhere-communication tree \wrt the set of corrupted parties $\IS$ (otherwise, output $\bot$ to all parties).
	    \item
        Let $\DS$ be the set of isolated parties in $\commtree$ and let $\CS$ be the set of parties assigned to the root.
        \item
        The functionality sends to each $\Party_i$ for $i\in[n]$ its local view in the tree, consisting of:
        \begin{itemize}[leftmargin=*]
        	\item All the nodes that $\Party_i$ is assigned to (and the parties assigned to them).
        	\item All the parent and children nodes (and the parties assigned to them) of the nodes that $\Party_i$ is assigned to.
        \end{itemize}
    \end{enumerate}
    \item \textbf{Subsequent invocations:}
    Every party $\Party_i$ with $i\in\CS$ provides a message $m_i$. If more than $2/3$ of the parties in $\CS$ provided the same message $m$, send $m$ to the adversary and receive back $\sset{\hat{m}_j}_{j\in \DS}$. For every $i\notin\DS$ deliver $m$ to $\Party_i$ and for every $j\in\DS$ deliver $\hat{m}_j$ to $\Party_j$.
\end{itemize}
\end{nfbox}

\paragraph{Byzantine agreement.}
We consider the standard Byzantine agreement functionality $\fba$ as defined in \cref{sec:Preliminaries} (to be used within small committees in the larger protocol).
Every party sends its input to the trusted party who forwards the input value to the adversary. If more than $n-t$ inputs equal the same value $y\in\zo$, then deliver $y$ as the output for every party. Otherwise, let the adversary choose the value $y\in\zo$ to be delivered.

The $n$-party BA protocol of \citet{GM93} realizes $\fba$ over authenticated channels tolerating a computationally unbounded, malicious adversary statically corrupting $t<n/3$ parties using $t+1$ rounds and $\poly(n)$ communication complexity.
An immediate corollary is that for $n'=\polylog(n)$, the $n'$-party BA functionality $\fba$ can be instantiated using $\polylog(n)$ rounds and $\polylog(n)$ communication complexity.

\paragraph{Coin tossing.}
The coin-tossing functionality $\fct$ samples a uniformly distributed $s\in\zo^\secParam$ and delivers $s$ to all the parties. The protocol of \citet{CGMA85} realizes $\fct$ over a broadcast channel assuming an honest majority (by having each party verifiably secret share (VSS) a random value, and later reconstruct all values and XOR them). By instantiating the broadcast channel using the protocol of \cite{GM93}, $n'=\polylog(n)$ parties can agree on a random $\secParam$-bit string in $\polylog(n)$ rounds and $\polylog(n)\cdot\poly(\secParam)$ communication.

\paragraph{Signature aggregation.}
The signature-aggregation functionality $\faggrsig$ (formally described in \cref{fig:faggrsig}) is an $n'$-party functionality, where every party $\Party_i$ provides a message $m_i$ and a set of signatures.
The functionality first determines the set of signatures received from a majority of the parties and aggregates only those signatures to obtain a new signature $\sigma$, which is delivered as the output for every party.

Note that the inputs to the aggregation procedure are not private, so if the aggregation algorithm $\TSAggr_2$ is deterministic (for example, in the OWF-based SRDS construction in \cref{sec:SRDS:owf}) the parties simply need to agree on the common set of input signatures $S_\sig$ and locally run $\TSAggr_2$ to obtain the same aggregated signature. To agree on $S_\sig$, each party broadcasts its input signatures and filters-out invalid signatures by running the deterministic algorithm $\TSAggr_1$. However, if the algorithm $\TSAggr_2$ is randomized, it may be the case that security relies on keeping the random coins hidden from the parties. For this reason, after the parties agree on $S_\sig$, we use an MPC protocol to compute the aggregated signature and realize $\faggrsig$.

\begin{nfbox}{The signature-aggregation functionality}{fig:faggrsig}
\begin{center}
	\textbf{The functionality} $\faggrsig(\PS)$
\end{center}
The $n'$-party functionality $\faggrsig$, running with parties $\PS=\sset{\Party_1,\ldots,\Party_{n'}}$ and the adversary,
is parametrized by the public parameters $\pp$ and proceeds as follows.
\begin{enumerate}[leftmargin=*]
\item
    Every party $\Party_i$ sends $(m_i,S_{\sig_i})$ as input, where $S_{\sig_i}$ is a set of signatures.
	\item
    If at least $2/3$ of the parties provided the same message $m$ and the same set $S_\sig$, then compute
    \[
    \sigma\gets\TSAggr_2\big(\pp, m,S_\sig\big).
    \]
    Else, let the adversary choose $\sigma$.
	\item
    Finally, deliver $\sigma$ to every party $\Party_i$.
\end{enumerate}
\end{nfbox}

Assuming the existence of one-way functions, the protocol of \citet{DI05} can be used to realize the $n'$-party functionality $\faggrsig$, for $n'=\polylog(n)$, over secure channels, tolerating a malicious adversary corrupting a minority of the parties.
In addition, if the size of set $S_\sig$ is $\tilde{O}(1)$, the protocol requires $\polylog(n)\cdot\poly(\secParam)$ communication. In our construction, this functionality is used by the parties assigned to a node (in the almost-everywhere communication-tree obtained from $\faecomm$) for aggregating signatures received from parties assigned to their children. From \cref{def:com-tree}, we know that each node only has $\log(n)$ child nodes and each node is assigned $\polylog(n)$ parties. Therefore, $\faggrsig$ is only used for aggregating at most $\polylog(n)$ signatures.  Note that in \cite{DI05} a broadcast channel is also required and the resulting protocol is constant round. For $n'=\polylog(n)$ the broadcast can be realized by a deterministic protocol, \eg from \cite{GM93}, and the resulting protocol has $\polylog(n)$ rounds and $\polylog(n)\cdot\poly(\secParam)$ communication.

\subsubsection{The Byzantine Agreement Protocol\SUBSUBSEC}\label{sec:app:ba:prot}

Having defined the ideal functionalities, we are now ready to present our BA protocol in \cref{fig:protba}.
To reduce the security of $\protba$ to that of the SRDS scheme, we will show that by \emph{robustness} every honest party will receive an accepting signature on $(y,s)$, and by unforgeability, no party will receive an accepting signature on a different value. Before proceeding to the proof, we discuss a subtlety in the reduction.

\ifdefined\IsTrackChanges

\begin{nfbox}{Byzantine agreement with balanced $\polylog$ communication}{fig:protba}
\begin{center}
    \textbf{Protocol} $\protba$
\end{center}
\begin{itemize}
\vspace{-.7cm}
    \item\textbf{Common Input:}
    An SRDS scheme and a PRF family $\FS=\sset{F_s}_{s\in\zo^\secParam}$ mapping elements of $[n]$ to subsets of $[n]$ of size $\polylog(n)$.
\vspace{-.15cm}
    \item\textbf{Private Input:}
    Every party $\Party_i$, for $i\in[n]$, has input $x_i\in\zo$.
\vspace{-.15cm}
    \item\textbf{Setup:}
    Let $z=O(\log^4n)$, $\zs=O(\log^5 n)$ and let $\pp\gets\TSCR(1^\secParam,1^{n\cdot z})$.
   Every party $\Party_i$ locally computes $(\vk_{i,j},\sk_{i,j})\gets\TSGen(\pp)$ for every $j\in[z]$.
    The public output consists of $\pp$ and the set of public keys $\vk=\sset{\vk_{i,j}}_{i\in[n],j\in[z]}$.  We assume that there exists a mapping $\mapping:[n]\times[z]\to[n\cdot z]$ that maps the each $(i,j)$ above to a virtual ID $\is\in[n\cdot z]$, such that virtual IDs of the parties assigned and corresponding to the \kth leaf node belong in the range $[(k-1)\cdot\zs+1,k\cdot\zs]$ (This ensures that when the tree topology is expressed flat as a
planar graph (no crossovers), then the virtual IDs of the leaf nodes are in increasing order.). 
\vspace{-.15cm}
    \item\textbf{Hybrid Model:}
    The protocol is defined in the $(\faecomm,\fba,\fct,\faggrsig)$-hybrid model.
\vspace{-.15cm}
    \item\textbf{The Protocol:}
\end{itemize}
\begin{enumerate}
\vspace{-.6cm}
    \item\label{step:faecomm_first}
    Every party invokes $\faecomm$ and receives back its local view in the communication tree $\commtree=(\vertexCommtree,\edgeCommtree)$.
    Let $\CS$ denote the supreme committee, \ie the parties assigned to the root node.
\vspace{-.15cm}
    \item
    Every party $\Party_i$ in the supreme committee (\ie with $i\in\CS$) proceeds as follows:
    \begin{enumerate}
        \item\label{step:fba}
        Invoke $\fba$ on his input value $x_i$ and receive back $y\in\zo$.
        \item\label{step:fct}
        Invoke $\fct$ and receive back $s\in\zo^\secParam$.
    \end{enumerate}
\vspace{-.15cm}
    \item\label{step:faecomm_second}
    The parties in the supreme committee $\CS$ send $(y,s)$ to $\faecomm$.
    For every $i\in[n]$ denote the output of party $\Party_i$ as $(y_i,s_i)$.
\vspace{-.15cm}
    \item\label{step:sign}
    Every party $\Party_i$ signs the received message $(y_i,s_i)$ for each virtual identity $j\in[z]$ as $\sigma_{i,j}\gets\TSSignShare(\pp,\mapping(i,j),\sk_{i,j},(y_i,s_i))$.
    Let $L_i=\sset{\leafnode_{i_1},\ldots,\leafnode_{i_z}}\subseteq\vertexCommtree$ be the subset of leaves assigned to $\Party_i$.
    For each $j\in[z]$, party $\Party_i$ sends $\sigma_{i,j}$ to all the parties assigned to the leaf node~$\leafnode_{i_j}$.
\vspace{-.15cm}
	\item\label{step:recurse}
    Denote by $\party(v)$ the set of parties assigned to a node $v\in \vertex$. Similarly, denote by $\child(v)$ and $\parent(v)$ the set of children nodes and parent node of $v\in \vertex$, resp. Let $\Range(v)$ denote the range of virtual IDs of the parties assigned to the leaf nodes that have a path to node $v\in\vertex$.
    For each level $\ell=1,\ldots,\ell^*$ and for each node $v$ on level $\ell$, the protocol proceeds as follows:
    \begin{enumerate}
  	    \item\label{step:recurse:firstlevel}
        For each $i\in \party(v)$, let $S^{i,\ell,1}_\sig$ be the set of signatures received by $\Party_i$ in the previous round (for $\ell=1$, \ie for leaf nodes, from each $\Party_j$ with $v\in L_j$; for $\ell>1$, from every party $\Party_j$ assigned to a child node of $v$).
        \vspace{-.15cm}
        \item\label{step:recurse:ba}
        Every $\Party_i$ with $i\in\party(v)$ broadcasts\footnote{\tiny To ensure that the corrupt parties do not broadcast very long messages, we assume that the parties broadcast each element in $S^{i,\ell,1}_\sig$ one-by-one and each party is only allowed to initiate polylogarithmic number of broadcasts.} $S^{i,\ell,1}_\sig$  to all the parties in $\party(v)$. Let $S^{i,\ell,2}_\sig$ be the union of all sets received from the parties in $\party(v)$.
        \vspace{-.15cm}
  	    \item\label{step:recurse:secondlevel}
        Every $\Party_i$ with $i\in\party(v)$ computes $\TSAggr_1(\pp,\sset{\vk_{1,1},\ldots,\vk_{n,z}},(y_i,s_i),S^{i,\ell,2}_\sig)\to S^{i,\ell,3}_\sig$.
       	If $\ell=1$, for each  $\signsig$ in $S^{i,\ell,3}_\sig$ it checks if $\indexMin(\signsig)=\indexMax(\signsig)$ and if $\indexMin(\signsig)\in\Range(v)$ and if $\ell>1$, it checks if $\exists v'\in\child(v)$ such that the range  $[\indexMin(\signsig),\indexMax(\signsig)]$ falls within the range $\Range(v')$. If this check fails for any $\signsig$, it updates $S^{i,\ell,3}_\sig=S^{i,\ell,3}_\sig\setminus\sset{\signsig}$. It invokes $\faggrsig$ on input $((y_i,s_i),S^{i,\ell,3}_\sig)$ to obtain the aggregated signature $\sigma_v$.
       	\vspace{-.15cm}
  	    \item
        If $\ell<\ell^*$, for each $i\in\party(v)$, party $\Party_i$ sends $\sigma_v$ to all parties in $\parent(v)$.
    \end{enumerate}
\vspace{-.15cm}
    \item\label{step:faecomm_third}
    Let $\sigma_\rroot$ be the signature obtained by the supreme committee.
    The parties in the supreme committee send $(y,s,\sigma_\rroot)$ to $\faecomm$.
    Let the output of party $\Party_i$ for $i\in[n]$ be $(y_i',s_i',\sigma'_i)$

\vspace{-.15cm}
    \item\label{step:prf}
    Each party $\Party_i$ (for $i\in[n]$) computes $\CS_i=F_{s'_i}(i)$, and sends $(y'_i,s'_i,\sigma'_i)$ to every party in $\CS_i$.
\vspace{-.15cm}
    \item\label{step:output}
    A party $\Party_j$ that receives a valid message $(y,s,\sigma)$ from a party $\Party_i$, satisfying $j\in F_s(i)$ and $\TSVerify(\pp,\sset{\vk_{1,1},\ldots,\vk_{n,z}}, (y,s), \sigma)=1$, outputs $y$ and halts.

    \vspace{-.5cm}
\end{enumerate}
\vspace{-.15cm}
\end{nfbox}

\newpage
\fi

Recall that robustness of an SRDS scheme ensures that an adversary who after the \emph{setup and corruption} phase is allowed to choose a message $m$, and the order of aggregation (using a directed rooted tree $\tree$), cannot prevent the honest parties from successfully signing $m$.
Note that if in $\protba$, an adversary can prevent the honest parties from signing $(y,s)$, then we can derive the corresponding tree and partially aggregated signatures of the corrupted parties to break the robustness of the SRDS scheme. Note that here, in the robustness game, we will assume that the total number of parties are $n\cdot z$ (\ie each virtual party in the Byzantine agreement protocol is a real party in the SRDS game) and  $(n,\IS)$-\textsf{almost-everywhere-communication tree with repeated parties} $\tree$ used in the Byzantine agreement protocol is transformed into an $(n\cdot z,\{(i,j)\}_{i\in\IS,j\in[z]})$-\textsf{almost-everywhere-communication tree} by augmenting it with a level 0 comprising of $n\cdot z$ nodes (representing the $n\cdot z$ parties in the SRDS game), and adding an edge between each of these nodes and the leaf node that it (\ie the party that they represent) is assigned to.

\begin{lemma}\label{lem:ba}
Let $\beta<1/3$ and assume the existence of PRF and $\beta n$-secure SRDS in the bare-PKI model (\resp trusted-PKI model). Then, protocol $\protba$ is a $\beta n$-resilient BA protocol in the $(\faecomm,\fba,\fct,\faggrsig)$-hybrid model such that:

\begin{itemize}[leftmargin= *]
    \item
    The round complexity and the locality of the protocol are $\polylog(n)$; the number of bits communicated by each party is $\polylog(n)\cdot\poly(\secParam)$.
    \item
    The adversary can adaptively corrupt the parties based on the public setup of the SRDS, \ie $\pp$ and $\sset{\vk_{1,1},\ldots,\vk_{n,z}}$ before the onset of the protocol. For bare PKI, the adversary can additionally replace the corrupted parties' public keys.
\end{itemize}
\end{lemma}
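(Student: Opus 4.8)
The plan is to prove \cref{lem:ba} in two parts. The efficiency bounds are immediate from the (already established) costs of the hybrid calls together with \emph{succinctness} of the SRDS: a party participates in one call to $\faecomm$, in a constant-depth committee sub-protocol for $\fba$ and for $\fct$ over $\polylog(n)$ parties, in $O(\log^5 n/\log\log n)$ node-local invocations of $\faggrsig$ — each on $\polylog(n)$ input signatures, since in an almost-everywhere-communication tree with repeated parties every node has $\log n$ children and $\polylog(n)$ assigned parties and every party sits on $\polylog(n)$ nodes — and in one final round sending a $\tilde O(1)$-size message to a $\polylog(n)$-size set. Since succinctness bounds every base and aggregated signature, and the circuit for $\TSAggr_2$, by $\poly(\log n,\secParam)$, each party runs $\polylog(n)$ rounds, talks to $\polylog(n)$ parties, and sends $\polylog(n)\cdot\poly(\secParam)$ bits. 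The substance is the security claim, which I will prove by reducing correctness of $\protba$ to \emph{robustness} and \emph{unforgeability} of the SRDS scheme and to pseudorandomness of $\FS$.

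I will build a simulator $\Sim$ realizing $\fba$. $\Sim$ receives every input $x_i$ from $\fba$, runs the adversary internally, and plays $\faecomm$, the committee's $\fba$, $\fct$, and $\faggrsig$ toward it exactly as specified; since $\Sim$ holds all honest inputs and owns all randomness, this produces an internally-perfect copy of the hybrid execution, from which $\Sim$ reads the tree $\commtree$, the isolated set $\DS$, and the pair $(y,s)$ agreed by the supreme committee. $\Sim$ then submits to $\fba$ corrupted parties' inputs and a tie-break bit $\tilde x$ so that the ideal output is forced to $y$: if some bit $b$ is held by at least $n-t$ honest parties it submits $\tilde x=b$, relying on the validity sub-claim below that $y=b$; otherwise it splits the corrupted inputs so that neither bit reaches the $n-t$ threshold (feasible as $t<n/3<n/2$) and submits $\tilde x=y$. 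The environment's view then coincides with the hybrid execution conditioned on the event that, in the internal copy, \emph{every} honest party outputs $y$; it remains to show this event occurs \whp.

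This reduces to four sub-claims. \emph{(Validity of $y$.)} The root of $\commtree$ is good, so its committee carries a $2/3$ honest majority; treating it as a representative (pseudo)random subset and using a Chernoff bound with the strict slack $\beta<1/3$, if at least $n-t$ honest parties share a bit $b$ then at least $n'-t'$ committee members feed $b$ to the committee's $\fba$, so $y=b$ — the only case the simulator relies on. \emph{(Robustness.)} Map $\protba$ onto $\ExptTSrobust$ with $N=n\cdot z$ virtual parties and corrupted set $\sset{\mapping(i,j):i\in\IS,\ j\in[z]}$, of size $\ssize{\IS}\cdot z<N/3$; turn the almost-everywhere tree with repeated parties into an $(N,\cdot)$-almost-everywhere-communication tree by prepending a level-$0$ layer of $N$ nodes, each joined to the leaf its virtual party occupies — the $\mapping$ convention places each leaf's virtual IDs in a contiguous increasing block, which is precisely the planar/increasing-order requirement of \cref{def:rob-com-tree} and \cref{fig:exp_robust}. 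In this map, good nodes run $\faggrsig$ (which on the honestly agreed input computes $\TSAggr_2$ on the set filtered by $\TSAggr_1$ and the range checks, matching the challenger's $\TSAggr$) while nodes lacking a $2/3$ honest majority return adversarial aggregates that $\Sim$ forwards as the adversary's move; honest parties on good paths sign $(y,s)$ and the remaining honest parties (the set $\goodP$ at bad-path leaves) sign whatever they received. Hence a protocol run in which the supreme committee fails to obtain a $\sigma_\rroot$ with $\TSVerify(\pp,\vk,(y,s),\sigma_\rroot)=1$ is a win in $\ExptTSrobust$, so this happens with probability $\negl(\secParam,n)$. \emph{(Unforgeability.)} The analogous map into $\ExptTSforge$ with $\setP$ the isolated honest parties shows that, except with negligible probability, no party — honest or corrupt — ever holds an accepting signature on any $m'\neq(y,s)$. \emph{(Covering.)} For a truly random function the sets $\sset{F_s(i)}_{i\notin\DS,\ \Party_i\text{ honest}}$ cover $[n]$ \whp (each index lies in $\polylog(n)$ of them in expectation, and both the adversary's choices and $\DS$ are fixed before $s\gets\fct$ is drawn), and a distinguisher with oracle access to $F_s$ — which never needs $s$ itself — lifts this to $\FS$. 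Combining the last three: every honest $\Party_j$ lies in $F_s(i)$ for some honest non-isolated $\Party_i$, who sends it $(y,s,\sigma_\rroot)$; by robustness this passes $\Party_j$'s verification, so $\Party_j$ outputs $y$ one round after the final send, and by unforgeability $\Party_j$ never sees an accepting message carrying another value. Throughout, the reductions corrupt exactly when the adversary does, given only $\pp$ and $\vk$ (and, in bare-PKI mode, replacing corrupted keys), matching the permitted corruption model.

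The main obstacle is the robustness reduction: one must check that the node-by-node execution of $\protba$ — in which $\faggrsig$ recomputes only $\TSAggr_2$ on a set that has been filtered both by $\TSAggr_1$ and by the explicit $\indexMin/\indexMax$ range checks, and in which the adversarially-controlled nodes are exactly those lacking a $2/3$ honest majority — is reproduced faithfully by the robustness challenger of \cref{fig:exp_robust}, including the re-indexing to $n\cdot z$ virtual parties and the preservation of the good-path structure and the contiguous-ID ordering under the level-$0$ augmentation. The index-range bookkeeping is exactly what blocks the ``reuse a base signature'' attack, so its alignment between the protocol and the SRDS game must be argued with care; a lesser subtlety is the validity sub-claim, which needs the supreme committee to behave like a representative subset with a slack honest majority so that global near-unanimity of honest inputs is inherited.
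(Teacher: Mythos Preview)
Your approach mirrors the paper's: a simulator that perfectly replays the hybrid execution toward the adversary, reductions to SRDS robustness and unforgeability to establish agreement, and a PRF/covering argument for the final round. The robustness and unforgeability reductions --- reindexing to $n\cdot z$ virtual parties, augmenting the tree with a level-$0$ layer, mapping good nodes to the challenger and bad nodes to the adversary --- are exactly what the paper does.

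The one genuine error is your validity sub-claim. You write that the supreme committee may be ``treated as a representative (pseudo)random subset'' and then apply a Chernoff bound to conclude that if at least $n-t$ honest parties hold $b$ then so do $n'-t'$ committee members. But in the hybrid model of the lemma, the almost-everywhere-communication tree --- and hence the supreme committee $\CS$ --- is chosen by the \emph{adversary} via $\faecomm$, subject only to the combinatorial constraint that the root be ``good'' (fewer than a third of its members lie in $\IS$). There is no randomness in this choice over which a Chernoff bound could be taken: an adversary who knows the honest inputs through the environment can hand-pick which honest parties populate the root. The paper instead proves only the weaker validity statement that if \emph{all} honest parties share an input $x$ then $y=x$, which follows immediately from the committee's $2/3$ honest majority with no probabilistic argument, and combines this with the simulator sending $y$ as the tie-break to $\fba$. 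You should argue that case directly rather than via concentration.
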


\vspace{-0.1 cm}
\noindent The proof of \cref{lem:ba} can be found in \cref{sec:ba_from_srds_cont}.

\subsubsection{Applications\SUBSUBSEC}\label{sec:app:ba:mpc}

We point out a few applications of our BA protocol.

\paragraph{Broadcast with balanced polylog communication.}
Consider a single a run of the protocol (on dummy inputs). The communication graph forms a tree with stronger properties than \cref{def:com-tree}, achieving \emph{everywhere} agreement of all parties on the supreme committee, such that every party sends only $\tilde{O}(1)$ throughout the protocol constructing it. Having established the communication tree, it is possible to run a simple broadcast protocol in the PKI model.
The sender signs his input bit and sends it up to the supreme committee, which in turn sends the signed bit to all other parties. If fact, since the communication tree is reusable, after multiple executions (with different senders) the communication will grow in a proportional way only to the number of bits that have been broadcasted. In particular, note that the SRDS PKI is only needed for a single run of the protocol (to establish the communication tree) and is not needed afterwards.

\ifdefined\IsTrackChanges\else

\newpage
\fi

\begin{corollary}
Let $\beta<1/3$ be a constant.
Assuming $\beta n$-secure SRDS schemes, there exists an $n$-party binary broadcast protocol tolerating a malicious adversary that can statically corrupt $\beta n$ of the parties, such that the communication locality of $\ell$ executions is $\polylog(n)$, and the round complexity and the number of bits each party communicates is $\ell\cdot\polylog(n)\cdot\poly(\secParam)$.
\end{corollary}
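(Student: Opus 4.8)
The plan is to bootstrap from the balanced BA protocol of \cref{thm:ba}: a \emph{single} execution of $\protba$ on dummy inputs serves as a one-time setup that builds a \emph{reusable} communication overlay reaching \emph{every} party, and then each of the $\ell$ binary broadcasts is run as a light, signature-based protocol over this fixed overlay, so that only the first execution ever touches the SRDS setup and PKI.

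First I would run $\protba$ once (see \cref{fig:protba}), with one trivial modification: the supreme committee includes a description of the committee $\CS$ (the $\polylog(n)$ identities of its members) inside the SRDS-certified payload, alongside $(y,s)$. By \cref{thm:ba} this costs every party only $\polylog(n)\cdot\poly(\secParam)$ bits in $\polylog(n)$ rounds. The outcome is a tree $T$ with a supreme committee $\CS$ of size $\polylog(n)$ that has a $2/3$-honest majority (the root is ``good''), such that: (i) every non-isolated party knows its local view in $T$ from $\faecomm$; (ii) in the final PRF-forwarding round of $\protba$, every honest party — the $o(n)$ isolated ones included — learns and agrees on $\CS$, and records the set of parties from which it received the certified payload along an ``intended'' edge $j\in F_s(i)$ as its incoming neighbors; by a Chernoff bound on the PRF images this set has size $\polylog(n)$ w.h.p., and, by the coverage argument underlying $\protba$, it contains at least one honest, non-isolated party. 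Crucially, the PRF seed $s$ is sampled \emph{during} this execution, while the (static) adversary fixed its corruptions beforehand, so the forwarding edges $\{(i,F_s(i))\}_i$ stay ``good'' and may be reused in all later executions without enlarging anyone's neighbor set. Since every party's verification key is public (the PKI accompanying SRDS, augmented if needed with a standard digital signature scheme, which exists as SRDS implies one-way functions), knowing the identities in $\CS$ suffices to verify signatures of its members thereafter.

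Each subsequent broadcast with sender $\Party_s$ and bit $b$ then proceeds: $\Party_s$ signs $(b,\ell)$ and sends it to all members of $\CS$ (a fixed $\polylog(n)$-size set it knows from the setup); the members of $\CS$ run a $\polylog(n)$-party BA (instantiated as in \cref{sec:app:ba:func}) on the validly-signed bit they received from $\Party_s$ (defaulting to $0$ otherwise), reaching agreement on $b_\ell$; every honest member of $\CS$ signs $(b_\ell,\ell)$, and any $\lceil\tfrac{2}{3}|\CS|\rceil$ distinct signatures of $\CS$-members on $(b_\ell,\ell)$ form a \emph{committee certificate} $\cert$, which is sent down $T$ and then forwarded one hop along the $F_s$-edges, so that every party (isolated or not) receives $(b_\ell,\ell,\cert)$, checks $\cert$ against the keys of $\CS$, and outputs $b_\ell$. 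Validity holds because an honest $\Party_s$ delivers the same signed $(b,\ell)$ to all honest members of $\CS$, whose BA then outputs $b$; agreement holds because a valid certificate can only exist for the unique BA output $b_\ell$ — a certificate needs more than $2/3$ of $\CS$ to sign, the fewer than $1/3$ corrupt members cannot produce one alone, and honest members sign only $b_\ell$ — invoking existential unforgeability of the signature scheme; termination holds since each isolated honest party has an honest in-neighbor relaying the certified value. Counting: each execution is $\polylog(n)$ rounds and $\polylog(n)\cdot\poly(\secParam)$ bits per party over a neighbor set of size $\polylog(n)$ fixed once and for all, hence $\ell$ executions take $\ell\cdot\polylog(n)$ rounds, $\ell\cdot\polylog(n)\cdot\poly(\secParam)$ bits per party, and locality $\polylog(n)$, the one-time setup cost being absorbed.

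The main obstacle is the step buried in the setup phase: establishing that one run of $\protba$ yields a \emph{fixed, everywhere-reaching, reusable} overlay — i.e., that every isolated party becomes permanently attached (through its $F_s$-in-edges) to an honest non-isolated party and can, from then on, verifiably learn any value certified by $\CS$. This needs (a) that the $F_s$-forwarding graph keeps the coverage property of the $\protba$ analysis even though $s$ becomes public afterwards (which is fine because corruptions are static and independent of $s$), and (b) a careful argument that reuse does not grow rounds per execution, bits per execution, or locality — which reduces to the observation that no edges are added after setup (the sender-to-$\CS$ edges being themselves a fixed $\polylog(n)$-size set, since $\CS$ is fixed). The remaining ingredients — soundness and completeness of the committee certificate, and correctness of the $\polylog(n)$-party BA inside $\CS$ (where the corruption fraction is below $1/3$) — are standard.
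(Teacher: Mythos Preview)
Your approach mirrors the paper's: run $\protba$ once on dummy inputs to establish a reusable overlay in which every party (including isolated ones) agrees on the supreme committee $\CS$, then perform each of the $\ell$ broadcasts as a lightweight signature-based protocol over this fixed overlay. The paper's sketch is terser than yours (it omits the explicit committee-BA and certificate steps, which you correctly fill in), but the strategy is the same.

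There is, however, a genuine gap in your locality accounting. You have the sender $\Party_s$ send its signed bit \emph{directly} to all members of $\CS$, and you justify this by saying ``no edges are added after setup (the sender-to-$\CS$ edges being themselves a fixed $\polylog(n)$-size set, since $\CS$ is fixed).'' That is correct only from the \emph{sender's} side. From a \emph{committee member's} side, if the $\ell$ executions involve $\Theta(\ell)$ distinct senders (which the corollary explicitly allows), each member of $\CS$ acquires $\Theta(\ell)$ new neighbors, and its locality becomes $\Theta(\ell)+\polylog(n)$ rather than $\polylog(n)$. This breaks the stated bound whenever $\ell$ is super-polylogarithmic.

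The fix is to route the sender's signed bit \emph{through the existing overlay} instead of creating fresh sender-to-$\CS$ edges: a non-isolated sender sends to the parties at its assigned leaf nodes, who propagate the signed bit up the tree to $\CS$; an isolated sender sends back along its recorded $F_s$-in-edges to an honest non-isolated neighbor (who knows to accept, since it computed its own $F_s$-image), who then sends up the tree. This is precisely what the paper means by ``the sender signs his input bit and sends it \emph{up} to the supreme committee,'' and it keeps every party's neighbor set identical across all $\ell$ executions.
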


\paragraph{MPC with amortized polylog communication overhead.}
Following the MPC protocol from~\cite{BGT13}, the supreme committee can run among themselves a protocol establishing an encryption key of a public-key encryption scheme where the decryption key is secret shared among the committee members, and broadcast the public key. Every party encrypts its input and sends it up the tree to the supreme committee that run an MPC protocol for decrypting all ciphertexts and compute the function. Using FHE-based MPC that minimize the communication (\eg \cite{AJLTVW12}), we obtain the following corollary.
\begin{corollary}
Let $\beta<1/3$ be a constant.
Assuming $\beta n$-secure SRDS and FHE schemes, every $n$-party functionality $f:(\zo^\inputlen)^n\to \zo^\outputlen$ can be securely computed tolerating a malicious adversary that can statically corrupt $\beta n$ parties, such that communication locality and round complexity are $\polylog(n)$, and amortized communication complexity is $(\inputlen+\outputlen)\cdot\polylog(n)\cdot\poly(\secParam)$.
\end{corollary}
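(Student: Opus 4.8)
The plan is to realize the ideal (guaranteed-output-delivery) secure-function-evaluation functionality $\fsfe$ for $f$ by following the ``FHE in the supreme committee'' template of~\cite{BGT13}, bootstrapping all infrastructure from a single execution of $\protba$. The parties first run $\protba$ once on dummy inputs; as observed above for the broadcast corollary, this establishes a reusable $\polylog(n)$-degree communication tree with a supreme committee $\CS$ of size $\polylog(n)$ having a $2/3$ honest supermajority, in which a $1-o(1)$ fraction of parties (the non-isolated ones) have good paths to the root and---by the analysis underlying \cref{lem:ba}, with overwhelming probability---every isolated party $\Party_j$ holds the PRF seed $s$ (hence can compute its set of potential relays $\{\,i: j\in F_s(i)\,\}$, at least one of which is an honest non-isolated party). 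Next, $\CS$ runs an honest-majority MPC among its $\polylog(n)$ members (e.g.\ \cite{DI05} using the \cite{GM93} broadcast protocol inside the committee, or the threshold-FHE MPC of~\cite{AJLTVW12}) to sample a threshold-FHE key pair $(pk,\{sk_k\}_{k\in\CS})$ (threshold FHE follows from FHE), sends $pk$ down the tree, and isolated parties obtain $pk$ from their relays. Every party $\Party_j$ then computes $c_j\gets\Enc_{pk}(x_j)$, signs $c_j$ under its PKI signing key, and routes $(c_j,\text{sig})$ up the tree---an isolated $\Party_j$ routing through every party in $\{\,i: j\in F_s(i)\,\}$, reversing the PRF map. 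The committee verifies signatures, filters out duplicates exactly as the deterministic $\TSAggr_1$ step does in $\protba$, agrees on a vector $(c_1,\dots,c_n)$ (substituting a default ciphertext for any missing index), locally evaluates $f$ homomorphically on $(c_1,\dots,c_n)$ to obtain a ciphertext $\hat c$, and runs a second honest-majority MPC that threshold-decrypts $\hat c$, yielding $y=f(x_1,\dots,x_n)$. Finally, exactly as in the last steps of $\protba$ with $(y,s)$ in place of the BA output, $\CS$ disseminates $(y,s)$ down the tree, non-isolated parties produce SRDS base signatures on $(y,s)$, these are aggregated up the tree into a certificate $\sigma_\rroot$, the triple $(y,s,\sigma_\rroot)$ is disseminated back down, and each party relays it within its PRF-neighborhood, so that every honest party---including isolated ones---learns $y$ with an SRDS-verifying certificate and outputs $y$.

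\medskip\noindent
For the claimed parameters: the communication locality is $\polylog(n)$ since the tree has $\polylog(n)$ degree, every committee has $\polylog(n)$ members, every party belongs to $\polylog(n)$ committees, and an isolated party contacts only its $\polylog(n)$-size PRF-neighborhood; the round complexity is $\polylog(n)$, being a constant number of $\protba$-style phases (each $\polylog(n)$ rounds) plus two $\polylog(n)$-party MPCs. For communication, the only long messages are the FHE ciphertexts (size $\inputlen\cdot\poly(\secParam)$) and the evaluated ciphertext and output $y$ (size $\outputlen\cdot\poly(\secParam)$); these do not aggregate, so each ciphertext is forwarded along its root-to-leaf path of length $O(\log n/\log\log n)$ and by the $\polylog(n)$ members of each committee on that path, for a total of $n\cdot(\inputlen+\outputlen)\cdot\polylog(n)\cdot\poly(\secParam)$ bits, i.e.\ amortized $(\inputlen+\outputlen)\cdot\polylog(n)\cdot\poly(\secParam)$ per party; the setup run of $\protba$, the two committee MPCs, and the SRDS certificate (which is $\tilde{O}(1)$ by succinctness) contribute only $n\cdot\polylog(n)\cdot\poly(\secParam)$ in total.

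\medskip\noindent
Security follows by a simulation argument composing the simulators of the parts. The simulator invokes the $\protba$ simulator from \cref{lem:ba} to obtain $\CS$ and the tree, invokes the simulators of the two committee MPCs (which enjoy guaranteed output delivery and security against a malicious minority because $\CS$ is $2/3$-honest), simulates honest parties' ciphertexts as encryptions of $0$ (indistinguishable by semantic security of FHE), extracts the corrupt parties' inputs from their signed and verified ciphertexts using the decryption trapdoor it controls inside the simulated key-generation MPC, submits these inputs to $\fsfe$, and programs the simulated threshold-decryption MPC to output the value returned by $\fsfe$; the remaining SRDS/dissemination/relay steps are simulated as in the proof of \cref{lem:ba}. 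Robustness of the SRDS guarantees every honest party obtains a verifying $\sigma_\rroot$ on $(y,s)$, while unforgeability guarantees no honest party accepts a certificate on any other value, which yields agreement on $y$ and guaranteed output delivery.

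\paragraph{Main difficulty.}
The crux is getting the inputs of \emph{isolated} honest parties into the computation, and the output back to them, without letting corrupt relay parties inject, alter, or suppress them. This is handled by: (i) signing each ciphertext under the PKI, so a corrupt relay cannot substitute a different ciphertext on an honest party's behalf; (ii) having an isolated party route through \emph{every} relay in $\{\,i:j\in F_s(i)\,\}$, at least one of which is honest with overwhelming probability, so suppression also fails; and (iii) authenticating the delivered output $y$ with a fresh aggregated SRDS signature, exactly as $\protba$ authenticates the BA output. A secondary subtlety is that the proof must open up $\protba$ and reuse its communication tree rather than invoke $\fba$ as a black box, so the composition has to be argued directly against the hybrid functionalities $(\faecomm,\fba,\fct,\faggrsig)$ underlying $\protba$ together with the two new committee sub-protocols.
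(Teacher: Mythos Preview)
Your proposal is correct and follows essentially the same approach as the paper, which merely sketches the argument in one sentence before the corollary: establish the communication tree via one run of $\protba$, have the supreme committee set up a threshold (FHE) key and distribute the public key, let every party encrypt its input and route it up the tree, have the committee evaluate and threshold-decrypt, and disseminate the output as in $\protba$. Your write-up is considerably more detailed than the paper's own treatment---in particular, your explicit handling of isolated parties via the reversed PRF-relay edges and your simulation outline go beyond what the paper spells out---but the underlying construction and accounting are the same.
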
 

\subsection{Lower Bound on Balanced Byzantine Agreement}\label{sec:ba_lb}
In the previous section, we showed how to extend almost-everywhere agreement to full agreement in one round. The minimal setup assumptions used were a bare PKI and CRS. In \cref{sec:ba_lb_crs}, we show the some form of private-coin setup is necessary for this task.\footnote{We note that, our lower bound easily extends to the random oracle model, for the sake of simplicity we prove it merely with a CRS setup.}
In \cref{sec:ba_lb_pki}, we show that in the PKI model, where the public/private keys of each party are independently sampled, cryptographic assumptions are further needed.

\subsubsection{Lower Bound on Balanced Byzantine Agreement in CRS Model\SUBSUBSEC}\label{sec:ba_lb_crs}

We denote by $\faecomm^\ast$ a weakened version of the functionality $\faecomm$ (from \cref{fig:faecomm}) that enables communication between almost all of the parties, except for an isolated set $\DS$ that is randomly chosen by the functionality, rather than by the adversary.
We note that this notion is non-standard and is not achieved by existing protocols for almost-everywhere agreement.
The purpose of this adjustment is to provide a stronger lower bound, as the adversary's capabilities are more restricted. In fact, we only require that with some inverse-polynomial probability, there exists a single isolated party that is chosen by the functionality.

\def\ThmLBCRS
{
Let $\pi$ be a $\beta n$-resilient Byzantine agreement protocol in the $(\fcrs,\faecomm^\ast)$-hybrid model, for $\beta<1$.
Assume that $\pi$ has two parts: the first consists of a polynomial number of rounds where communication is via $\faecomm^\ast$, and the second consists of a single round over \ptp channels.
Then, there exists a party that sends $\Theta(n)$ messages in the last round.
}

\begin{theorem}[\cref{thm:intro:lb_crs}, restated]\label{thm:ba_lb_crs}
\ThmLBCRS
\end{theorem}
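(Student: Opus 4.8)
The plan is to prove the contrapositive: assuming $\pi$ is a $\beta n$-resilient BA protocol (according to the property-based definition, which suffices for the lower bound) in the $(\fcrs,\faecomm^\ast)$-hybrid model, whose communication consists of polynomially many $\faecomm^\ast$-rounds followed by a single point-to-point round, and in which \emph{every} party sends only $o(n)$ messages in that last round, I will construct a \emph{static} adversary (deciding its corruptions already from the public $\fcrs$) that violates \emph{agreement} with inverse-polynomial — hence non-negligible — probability. The attack is the ``confusion'' strategy outlined in the technical overview: a linear-size set of corrupted parties impersonates the behavior of isolated parties throughout the first part, so honest parties cannot tell the genuinely isolated party apart from a faking one; because each honest party addresses only $o(n)$ others in the last round, there is a genuinely isolated honest party $j^\ast$ all of whose last-round incoming neighbors are corrupted, and the adversary then feeds $j^\ast$ exactly the last-round messages it would see in an execution with the \emph{opposite} preagreement.

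Concretely, before the protocol begins the adversary (who sees $\fcrs$) fixes a decoy set $\mathcal S\subseteq[n]$ of linear size small enough that $|\mathcal S|+o(n)<\beta n$ (here $\beta>0$ is used), and mentally runs two honest executions of the first part, $\Pi_0$ with all inputs $0$ and $\Pi_1$ with all inputs $1$, in each of which every member of $\mathcal S$ follows the strategy of an isolated party (ignoring all output of $\faecomm^\ast$). From $\Pi_b$, $b\in\zo$, it records for every decoy $j\in\mathcal S$ the set $N_b(j)$ of parties that address $j$ in the last round and the set $T_b(j)$ of parties that $j$ addresses in the last round. Since each party sends at most $g(n)=o(n)$ last-round messages, $\sum_{j\in\mathcal S}\bigl(|N_0(j)|+|N_1(j)|+|T_0(j)|+|T_1(j)|\bigr)\le 4n\,g(n)$, so averaging over the linear-size $\mathcal S$ produces a target $j^\ast\in\mathcal S$ with $\bigl|N_0(j^\ast)\cup N_1(j^\ast)\cup T_0(j^\ast)\cup T_1(j^\ast)\bigr|=o(n)$. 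The adversary corrupts $\mathcal C:=(\mathcal S\setminus\{j^\ast\})\cup N_0(j^\ast)\cup N_1(j^\ast)\cup T_0(j^\ast)\cup T_1(j^\ast)$, of size $<\beta n$, leaving $j^\ast$ honest; the event that $\faecomm^\ast$ then isolates exactly $j^\ast$ has probability $1/\poly(n)$, and we analyze the attack conditioned on it, separately for each value $b^\ast\in\zo$ of the common honest input.

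Given that event: in the first part the corrupted decoys $\mathcal S\setminus\{j^\ast\}$ behave like isolated parties while every other corrupted party runs the honest protocol on input $b^\ast$, so the honest parties' joint first-part view — in particular $j^\ast$'s, which is just $(b^\ast,\fcrs)$ — is distributed exactly as in $\Pi_{b^\ast}$; hence the only honest parties that would address $j^\ast$ in the last round lie in $N_{b^\ast}(j^\ast)\subseteq\mathcal C$, so every last-round message reaching $j^\ast$ is adversarial, and the adversary delivers to $j^\ast$ precisely the last-round messages of $\Pi_{1-b^\ast}$ destined for $j^\ast$ (sent by $N_{1-b^\ast}(j^\ast)\subseteq\mathcal C$ and computable by the adversary). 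I then argue $j^\ast$'s resulting view is identically distributed to its view in a genuine execution $R$ in which $j^\ast$ is isolated with input $b^\ast$, all parties outside $\mathcal S$ hold $1-b^\ast$, and the parties of $\mathcal C$ behave as above but on input $1-b^\ast$ (so $j^\ast$ receives its $\Pi_{1-b^\ast}$ last-round messages, and, because $T_{1-b^\ast}(j^\ast)\cup T_{b^\ast}(j^\ast)\subseteq\mathcal C$, $j^\ast$'s own last-round messages are absorbed by corrupted recipients and cannot perturb honest parties). A short hybrid — comparing $R$ with the execution in which $j^\ast$ too holds $1-b^\ast$, where validity forces output $1-b^\ast$ and which differs from $R$ only in $j^\ast$'s input, invisible to the rest — shows the honest non-isolated parties output $1-b^\ast$ in $R$, so by agreement $j^\ast$ does too, hence also in the attack. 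But in the attack every honest party holds $b^\ast$, so by validity the honest parties other than $j^\ast$ output $b^\ast$: agreement is violated with probability $\ge 1/\poly(n)$, contradicting security of $\pi$, and therefore some party must send $\Theta(n)$ messages in the last round.

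The main obstacle is the chain of indistinguishability claims underlying ``$j^\ast$'s attack view $\equiv$ its view in $R$.'' One must verify that a corrupted party \emph{emulating} an isolated party is indistinguishable, both to the honest parties and to $\faecomm^\ast$, from a genuinely isolated party — this is exactly where the precise weakening in $\faecomm^\ast$ (the functionality, not the adversary, picking the isolated set, with inverse-polynomial probability of a single isolated party) is exploited; that the replayed $\Pi_{1-b^\ast}$ messages are computable from $\fcrs$ and the protocol description alone, so no private key needs to be forged — this is where the $\fcrs$-only setup, as opposed to PKI, is essential, and is the conceptual heart of why the same attack fails once signatures are available; and that the corrupted neighbor sets $N_b(j^\ast)$, read from a single head execution, really cover $j^\ast$'s honest last-round neighbors in the \emph{real} run despite fresh protocol randomness — which is handled by also including $j^\ast$'s head-simulation recipients in $\mathcal C$ and, if needed, by first arguing that the last-round communication pattern may be taken to depend only on the first-part view (which for $j^\ast$ is merely $(b^\ast,\fcrs)$, and for every non-isolated honest party is the value agreed in the first part). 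Assembling these pieces gives the contradiction.
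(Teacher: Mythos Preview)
Your high-level strategy matches the paper's: simulate the two preagreement executions in the adversary's head, find a decoy $j^\ast$ with small last-round neighborhoods in both, corrupt those neighborhoods together with the remaining decoys, and confuse the genuinely isolated $j^\ast$ by replaying opposite-bit messages. The hybrid you sketch at the end is also essentially the paper's sequence of adversaries.

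There is, however, a genuine gap at the step you yourself flag as ``the main obstacle.'' You write that in the real execution the honest parties' joint first-part view ``is distributed exactly as in $\Pi_{b^\ast}$; hence the only honest parties that would address $j^\ast$ in the last round lie in $N_{b^\ast}(j^\ast)$.'' The premise is about a \emph{distribution}; the conclusion is about the \emph{specific} set $N_{b^\ast}(j^\ast)$, which was computed from one particular sample (the head simulation, with adversary-chosen randomness). In the real run the honest parties draw fresh coins, so the set of parties that actually message $j^\ast$ is a fresh sample from that distribution and need not be contained in $N_{b^\ast}(j^\ast)$. Neither of your proposed fixes closes this: including $T_b(j^\ast)$ only controls $j^\ast$'s \emph{outgoing} edges, and the ``WLOG'' that the last-round communication pattern depends only on the first-part view both is unjustified (parties may randomize in the last round) and would not help anyway, since the first-part views themselves already differ between the head run and the real run via the honest parties' fresh randomness.

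The paper sidesteps this entirely by \emph{not} trying to suppress honest last-round messages to the target. In the paper's attack the corrupted neighbors play honestly on the real input and \emph{additionally} inject the simulated opposite-bit messages; the target therefore sees the union of ``real-$b$'' and ``simulated-$(1{-}b)$'' messages. The key observation is that, because the two head runs and the two real runs are four \emph{independent} samples of the same two distributions, the pair (real-$1$, simulated-$0$) seen in one scenario and the pair (real-$0$, simulated-$1$) seen in the other are identically distributed, so the target's view is the same in both. To make the conditioning benign, the paper chooses the target $i^\ast$ \emph{uniformly at random} in the decoy set (aborting if its simulated neighborhoods are too large) rather than deterministically by averaging as you do; this random choice is what lets the distributional symmetry argument go through with an inverse-polynomial loss. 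Your proof can be repaired along exactly these lines.
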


\begin{proof}
By classical results~\cite{PSL80,FLM86}, BA protocols cannot tolerate one-third of corrupted parties, even in the CRS model; therefore, we can assume that $\beta<1/3$.
Let $\pi$ be a protocol in the $(\fcrs,\faecomm^\ast)$-hybrid model that invokes $\faecomm^\ast$ for polynomially many rounds followed by a single \ptp round, and assume that the number of messages sent by every party in the last round is $o(n)$.
We will construct an adversarial strategy that violates the \emph{validity} of $\pi$ with noticeable probability.

\paragraph{Choosing the corrupted set.}
Given the common reference string $\crs$, the adversary starts by deciding on the set of corrupted parties. The adversary chooses a random subset $\JS\subseteq[n]$ of size $\beta n/2$ and simulates two executions of $\pi$ inside its head.
\begin{itemize}
    \item
    In the first execution, all parties have input bit $0$ where every party $\Party_j$ with $j\in\JS$ is corrupted and does not send any message throughout the protocol. For every $j\in\JS$, denote the set of parties that sends messages to $\Party_j$ in the last \ptp round by $\CS_j^0$ and record the messages as $\sset{\hat{m}^0_{i \to j}}_{i\in\CS_j^0}$.
    \item
    In the second execution, all parties have input bit $1$ where every party $\Party_j$ with $j\in\JS$ is corrupted and does not send any message throughout the protocol. For every $j\in\JS$, denote the set of parties that sends messages to $\Party_j$ in the last \ptp round by $\CS_j^1$ and record the messages as $\sset{\hat{m}^1_{i\to j}}_{i\in\CS_j^1}$.
\end{itemize}

In each of the virtual executions described above, from the joint view of all parties $\Party_i$ with $i\notin\JS$, every party $\Party_j$ with $j\in\JS$ could be an isolated honest party, so they must join forces and send messages to every such $\Party_j$.
Note that it could be that some parties in $\JS$ receive a linear number of messages, \eg if every party $\Party_i$ with $i\notin\JS$ sends a message to the same party $\Party_j$ for some $j\in\JS$.
However, as each party sends only $o(n)$ messages in this step, the number of such parties cannot be too large; in particular, there must be a party who receives $o(n)$ messages in \emph{both} of the above executions.

\begin{claim}\label{claim:lb_corruptions}
There exists $j\in\JS$ such that $\ssize{\CS_j^0\cup\CS_j^1}\in o(n)$.
\end{claim}
\begin{proof}
Consider the first virtual execution, where all honest parties start with input $0$.
Denote by $\JS'=\sset{j\in\JS\mid\ssize{\CS^0_j}\in\Theta(n)}$ the set of parties that receive a linear number of messages from $\sset{\Party_i}_{i\notin\JS}$ (\ie receive $\delta(n)$ messages for some $\delta\in\Theta(n)$). If $\ssize{\JS'}\in\Theta(n)$, \ie there are linear many parties that receive a linear number of messages, it must be that the number of messages sent from $\sset{\Party_i}_{i\notin\JS}$  to $\sset{\Party_j}_{j\in\JS}$ is quadratic. This will contradict to the assumption that every party in $\sset{\Party_i}_{i\notin\JS}$ only sends a sublinear number of messages.
Therefore, $\ssize{\JS'}\in o(n)$, and it holds that $\ssize{\CS^0_j}\in o(n)$ for a majority of $j\in\JS$. By an analogue argument, also in the second virtual execution, where all honest parties start with input $1$, it holds that $\ssize{\CS^1_j}\in o(n)$ for a majority of $j\in\JS$. Hence, there exists $j\in\JS$ for which $\ssize{\CS^0_j\cup\CS^1_j}\in o(n)$.
\QED
\end{proof}

The adversary proceeds by choosing uniformly at random $\is\in\JS$. If it holds that $\ssize{\CS_\is^0\cup\CS_\is^1}\geq\beta n/2$, the adversary aborts the attack and halts. By \cref{claim:lb_corruptions}, the adversary does not abort with probability at least $1/n$.
Next, the adversary chooses a random subset $\IS\subseteq[n]\setminus\sset{\is}$ of size $\beta n$, such that $\JS\cup\CS_\is^0\cup\CS_\is^1\setminus\sset{\is}\subseteq\IS$.
Denote by $\E$ the event where the adversary does not abort and that party $\Party_\is$ is isolated by $\faecomm^\ast$ \wrt the set of corrupted parties $\IS$ as defined above. By the definition of $\faecomm^\ast$ and by \cref{claim:lb_corruptions}, this event happens with inverse-polynomial probability. The attack defined below will be analyzed conditioned on the event $\E$.

\paragraph{The attack.}
We proceed by defining a series of hybrid experiments to contradict the \emph{validity} of $\pi$.
For the first claim, we define the adversarial strategy $\Adv_1$, where the corrupted parties are $\Party_i$ with $i\in\JS$. The parties in $\JS\setminus\sset{\is}$ do not send messages throughout the protocol, whereas party $\Party_\is$ does not send any message during the first part of the protocol, but in the last round sends messages as an honest party with input $0$ that was isolated in the first part.
\begin{claim}\label{claim:lb_one_corrupt}
Consider an execution of $\pi$ with $\Adv_1$, where all parties start with input bit $1$. Then, all honest parties output $1$ with all but negligible probability.
\end{claim}
\begin{proof}
The claim follows immediately by the \emph{validity} property of $\pi$.
\QED
\end{proof}

For the second claim, we define the adversarial strategy $\Adv_2$, where the set of corrupted parties is~$\IS$.
The parties in $\JS\setminus\sset{\is}$ do not send messages throughout the protocol, and the parties in $\IS\setminus\JS$ play honestly on input $1$, except that in the last round, the set of parties in $\CS_\is^0$ additionally sends the messages $\sset{\hat{m}_{i\to\is}^0}_{i\in\CS_\is^0}$ to $\Party_\is$.
\begin{claim}\label{claim:lb_second_adv}
Consider an execution of $\pi$ with $\Adv_2$, where party $\Party_\is$ starts with input bit $0$ and all other parties with input bit $1$. Then, conditioned on $\E$, all honest parties (including $\Party_\is$) output $1$ with all but negligible probability.
\end{claim}
\begin{proof}
Conditioned on $\E$, the view of all honest parties other than $\Party_\is$, is identically distributed as in \cref{claim:lb_one_corrupt}. It follows that every honest party but $\Party_\is$ will output $1$ except for negligible probability. By \emph{agreement}, $\Party_\is$ will also output $1$ except for negligible probability.
\QED
\end{proof}

Next, consider the adversarial strategy $\Adv_3$, where the set of corrupted parties is~$\IS$.
The parties in $\JS\setminus\sset{\is}$ do not send messages throughout the protocol, and the parties in $\IS\setminus\JS$ play honestly on input $0$, except that in the last round, the set of parties in $\CS_\is^1$ additionally sends the messages $\sset{\hat{m}^1_{i\to\is}}_{i\in\CS_\is^1}$ to $\Party_\is$.
\begin{claim}\label{claim:lb_third_adv}
Consider an execution of $\pi$ with $\Adv_3$ where all parties starts with input bit $0$. Then, conditioned on the event $\E$, all honest parties output $1$ with noticeable probability.
\end{claim}
\begin{proof}
We will show that, conditioned on $\E$, the view of $\Party_\is$ in this scenario will be distributed as in previous scenario with noticeable probability; hence, by \cref{claim:lb_second_adv}, party $\Party_\is$ will output $1$ with the same probability. By \emph{agreement} so will all other honest parties.

To analyze the view of $\Party_\is$ in the first scenario (where all parties outside of $\JS$ start with input $1$), let $\BS_1$ be the set of honest parties that send messages to $\Party_\is$ in the last round. Denote by $\sset{m^1_i}_{i\in\BS_1}$ the messages sent by these parties to $\Party_\is$.
The view of $\Party_\is$ consists of his input bit $0$, his random coins, the $\crs$, the messages $\sset{\hat{m}^0_{i\to\is}}_{i\in\CS_\is^0}$, and messages $\sset{m^1_i}_{i\in\BS_1}$.

To analyze the view of $\Party_\is$ in the second scenario (where all parties outside of $\JS$ start with input $0$), let $\BS_0$ be the set of honest parties that send messages to $\Party_\is$ in the last round. Denote by $\sset{m^0_i}_{i\in\BS_0}$ the messages sent by these parties to $\Party_\is$.
The view of $\Party_\is$ consists of his input bit $0$, his random coins, the $\crs$, the messages $\sset{\hat{m}^1_{i\to\is}}_{i\in\CS_\is^1}$, and messages $\sset{m^0_i}_{i\in\BS_0}$.

Recall that by \cref{claim:lb_corruptions}, when running two \emph{independent} executions of $\pi$ in which the parties in $\JS$ do not talk till the last round, the first where every $\Party_j$ with $j\in[n]\setminus\JS$ starts with $0$ and the second when every such $\Party_j$ starts with $1$, there exists $\js\in\JS$ such that $\Party_\js$ receives $o(n)$ messages in both executions with probability at least $1/n$.
Since the executions in the first and second scenarios are independent of each other and also of the two virtual executions run in the head of the adversary, it holds that there exists a party $\Party_\js$ with $\js\in\JS$ that receives $o(n)$ messages in each of the four executions with probability at least $1/n^2$.
Since $\is$ is chosen uniformly at random in $\JS$, it holds that the sizes of $\CS_\is^0$, $\CS_\is^1$, $\BS_0$, and $\BS_1$ are all is $o(n)$ with probability at least $1/n^3$. In this case, it holds that the pair of sets $\sset{\hat{m}^1_{i\to\is}}_{i\in\CS_\is^1}$ and $\sset{m^0_i}_{i\in\BS_0}$ is identically distributed as $\sset{\hat{m}^0_{i\to\is}}_{i\in\CS_\is^0}$ and $\sset{m^1_i}_{i\in\BS_1}$, and the view of $\Party_\is$ is identically distributed in both the first and second scenarios.
\QED
\end{proof}

Since by assumption, the event $\E$ occurs with inverse-polynomial probability, the attack succeeds with inverse-polynomial probability. This concludes the proof of \cref{thm:ba_lb_crs}.
\QED
\end{proof}

\subsubsection{Lower Bound on Balanced Byzantine Agreement in PKI Model\SUBSUBSEC}\label{sec:ba_lb_pki}
We proceed to prove the second lower bound, showing that in the trusted PKI model, where each party receives an independently sampled pair of public/private keys, one-way functions are necessary for extending almost-everywhere agreement to full agreement in a single communication round. Note that a lower bound in the trusted PKI model readily implies a lower bound in weaker PKI models.

\def\ThmLBPKI
{
Let $\pi$ be a $\beta n$-resilient Byzantine agreement protocol in the trusted PKI and $\faecomm^\ast$-hybrid model, for $\beta<1$.
Assume that $\pi$ has two parts: the first consists of a polynomial number of rounds where communication is via $\faecomm^\ast$, and the second consists of a single round over \ptp channels.
Then, if one-way functions do not exist, there exists a party that sends $\Theta(n)$ messages in the last round.
}
\begin{theorem}[\cref{thm:intro:lb_pki}, restated]\label{thm:ba_lb_pki}
\ThmLBPKI
\end{theorem}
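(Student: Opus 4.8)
The plan is to reduce the trusted-PKI lower bound to the CRS lower bound of \cref{thm:ba_lb_crs}: I will show that when one-way functions do not exist, an adversary holding only the published verification keys $\vk_1,\dots,\vk_n$ can impersonate honest parties faithfully enough to mount the very same last-round ``confusion'' attack. So assume toward a contradiction that $\pi$ is a $\beta n$-resilient BA protocol of the stated form in which every party sends $o(n)$ messages in the final \ptp round, and that OWF do not exist; by the classical impossibility results we may again take $\beta<1/3$.

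The one new ingredient is the ability of a PPT adversary to re-sample signing keys that are consistent with the published verification keys. Apply the folklore amplification (non-existence of one-way functions implies non-existence of \emph{distributional} one-way functions, \`a la Impagliazzo--Luby) to the polynomial-time map $r\mapsto\vk$ induced by $\TSGen$. For every polynomial $q$ this yields a PPT algorithm $\Inv$ such that, on input a verification key $\vk$ generated by $\TSGen$ on uniform randomness, $\Inv$ outputs randomness whose induced pair $(\vk,\sk')$ is $1/q(\secParam,n)$-close in statistical distance to $(\vk,\sk)$ --- i.e. $\sk'$ is a ``consistent'' signing key distributed essentially as the real one conditioned on $\vk$. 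Given the PKI, the adversary runs $\Inv$ on each $\vk_i$; since the key pairs are sampled independently across parties, a union bound (taking $q$ a large enough polynomial) shows that with probability $1-\negl$ the resampled tuple $(\vk_i,\sk_i')_{i\in[n]}$ is statistically close to the real tuple $(\vk_i,\sk_i)_{i\in[n]}$. Hence the adversary can emulate, up to negligible statistical error, any execution of $\pi$ in which it plays \emph{all} $n$ parties.

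With this in hand I would replay the attack from the proof of \cref{thm:ba_lb_crs} essentially verbatim, carried out on the emulated keys. The adversary picks a random $\JS\subseteq[n]$ of size $\beta n/2$ and, using the resampled secret keys, runs two head executions of $\pi$ (all inputs $0$, and all inputs $1$) with the parties of $\JS$ silent until the last round, recording the sender sets $\CS_\is^0,\CS_\is^1$ to each candidate $\is\in\JS$ together with their last-round messages. Since \cref{claim:lb_corruptions} is a pure message-counting statement it applies unchanged, so there is $\is\in\JS$ with $\ssize{\CS_\is^0\cup\CS_\is^1}\in o(n)$; the adversary sets $\IS\supseteq\JS\cup\CS_\is^0\cup\CS_\is^1\setminus\sset{\is}$ of size $\beta n$ and conditions on the inverse-polynomial-probability event $\E$ that $\faecomm^\ast$ isolates $\Party_\is$ with respect to $\IS$. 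In the online execution the corrupted parties use their \emph{real} signing keys and, in the last round, the parties of $\CS_\is^1$ additionally forward to $\Party_\is$ the recorded messages of the all-$1$ head execution while the remaining corrupted parties behave like honest parties on input $0$. Exactly as in \cref{claim:lb_third_adv}, after conditioning further on the sizes of $\CS_\is^0,\CS_\is^1$ and of the two real last-round sender sets being $o(n)$ (probability at least $1/n^3$, by independence of the four executions) and on the emulation being good, the view of $\Party_\is$ is statistically close to its view in the $\Adv_2$-style execution where every other party starts on $1$ and $\Party_\is$ on $0$; hence by \cref{claim:lb_second_adv} it outputs $1$, and by \emph{agreement} so do all honest parties --- violating \emph{validity} on the all-$0$ input. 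The accumulated error is negligible plus the inverse-polynomial conditioning, so the attack succeeds with inverse-polynomial probability, contradicting the assumed security.

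I expect the key-consistency step to be the main obstacle. An honest party's last-round message may depend on its secret key while the adversary sees only public keys, so plain one-way inversion (finding \emph{some} preimage of $\vk_i$) does not suffice --- the particular key used could leak through the party's messages. This is precisely why the argument must go through the distributional-inversion formulation, so that the emulated keys are distributed almost as the real ones conditioned on the public keys; tracking that this only injects negligible statistical error into the otherwise-identical CRS argument is the delicate part. A secondary point needing care is the quantifier bookkeeping: non-existence of one-way functions only guarantees successful (distributional) inversion for infinitely many values of $(\secParam,n)$, which is nonetheless enough to contradict the ``for all sufficiently large $n$'' security guarantee of Byzantine agreement.
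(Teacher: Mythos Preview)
Your proposal is correct and follows essentially the same route as the paper: invoke Impagliazzo--Luby to get a distributional inverter for the key-generation map, use it to emulate the two head executions on resampled secret keys, and then replay the CRS attack of \cref{thm:ba_lb_crs} verbatim, absorbing the inversion error into the final success probability. One small precision point: the distributional-inversion guarantee gives you $1/q(n)$-closeness for any chosen polynomial $q$ (on infinitely many $n$), not negligible closeness, so you should first fix the inverse-polynomial success probability $1/p(n)$ of the CRS attack and then pick $q>2p$ so that $1/p(n)-1/2p(n)$ remains noticeable---exactly as the paper does---rather than asserting the emulation error is ``negligible.''
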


At a high level, the proof of the theorem considers an adversary that receives the public keys $(\vk_1,\ldots,\vk_n)$ of the PKI setup, where each $\vk_i$ is sampled with a secret $\sk_i$ \emph{independently} of other keys. Under the assumption that one-way functions do not exist, with noticeable probability the adversary can find a corresponding secret key $\widetilde{\sk}_i$ (\ie a pre-image) for every $\vk_i$, and then carry out the attack from \cref{sec:ba_lb_crs}. This intuition, however, is not sufficient for proving the theorem, since the distribution of randomly generated keys $\sset{(\vk_i,\sk_i)}_{i\in[n]}$ may be different than the distribution of the inverted keys $\sset{(\vk_i,\widetilde{\sk}_i)}_{i\in[n]}$. In this case, the simulated messages generated by the adversary when emulating the executions in its head may be different than those generated in the real protocol, and so honest parties can tell them apart.

To overcome this subtlety, recall that \citet{IL89} showed that the existence of \emph{distributional one-way functions} (functions for which it is hard to sample a uniform pre-image) implies the existence of one-way functions. Stated differently, if one-way functions do not exist, then for any polynomial $p(\cdot)$ and any polynomial-time computable function $f$, there exists a PPT algorithm $\Inv$ such that, for infinitely many $n$, the following distributions are $1/p(n)$-statistically close:
\begin{itemize}
    \item
    $\sset{(x,f(x))\mid x\gets \zn}$.
    \item
    $\sset{(\Inv(f(y)),y)\mid x\gets \zn, y=f(x)}$.
\end{itemize}
In this case, we say that $\Inv$ inverts $f$ with $1/p(n)$-statistical closeness. In case the distributions are identically distributed we call the inverter \emph{perfect} and denote it by $\PInv$.

\begin{proof}[Proof of \cref{thm:ba_lb_pki}]
Without loss of generality, in the following we consider $n=\secParam$.
The trusted PKI setup can be modeled by a trusted party that for every $i\in[n]$ samples uniformly random $r_i\in\zo^n$, computes a polynomial-time function $(\vk_1,\ldots,\vk_n)=\fpki(r_1,\ldots,r_n)$, where for every $i\in[n]$, $\vk_i=\fpki^i(r_i)$ for some function $\fpki^{i}$. The trusted party outputs to each party $\Party_i$ the random coins $r_i$ along with $(\vk_1,\ldots,\vk_n)$.
Denote by $1/p(n)$ the success probability of the attack in the proof of \cref{thm:ba_lb_crs} and let $\Inv$ be the inverter algorithm for $\fpki$ that is guaranteed to exist by~\cite{IL89} with $1/2p(n)$-statistical closeness under the assumption that one-way functions do not exist.

Let $\pi$ be a protocol in the trusted PKI and $\faecomm^\ast$-hybrid model that invokes $\faecomm^\ast$ for polynomially many rounds followed by a single \ptp round, and assume that the number of messages sent by every party in the last round is $o(n)$.
Following the lines of the proof of \cref{thm:ba_lb_crs}, we will construct an adversarial strategy that violates the \emph{validity} of $\pi$ with non-negligible probability.

\paragraph{Choosing the corrupted set.}
Initially, the adversary receives the public keys $(\vk_1,\ldots,\vk_n)$ from the trusted party modeling the trusted PKI, and computes $(\tilde{r}_1,\ldots, \tilde{r}_n)\gets\Inv(\vk_1,\ldots,\vk_n)$. Next, the adversary chooses a random subset $\JS\subseteq[n]$ of size $\beta n/2$ and simulates two executions of $\pi$ inside its head.
\begin{itemize}
    \item
    In the first execution, every party $\Party_i$ has input bit $0$ and receives $\tilde{r}_i$ and $(\vk_1,\ldots,\vk_n)$ from the trusted PKI. Every party $\Party_j$ with $j\in\JS$ is corrupted and does not send any message throughout the protocol. For every $j\in\JS$, denote the set of parties that sends messages to $\Party_j$ in the last \ptp round by $\CS_j^0$ and record the messages as $\sset{\hat{m}^0_{i\to j}}_{i\in\CS_j^0}$.
    \item
    In the second execution, every party $\Party_i$ has input bit $1$ and receives $\tilde{r}_i$ and $(\vk_1,\ldots,\vk_n)$ from the trusted PKI. Every party $\Party_j$ with $j\in\JS$ is corrupted and does not send any message throughout the protocol. For every $j\in\JS$, denote the set of parties that sends messages to $\Party_j$ in the last \ptp round by $\CS_j^1$ and record the messages as $\sset{\hat{m}^1_{i\to j}}_{i\in\CS_j^1}$.
\end{itemize}

\begin{claim}\label{claim:lb_pki_corruptions}
There exists $j\in\JS$ such that $\ssize{\CS_j^0\cup\CS_j^1}\in o(n)$, except for probability $1/2p(n)$.
\end{claim}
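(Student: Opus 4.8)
The plan is to reduce \cref{claim:lb_pki_corruptions} to the purely combinatorial counting argument already established in \cref{claim:lb_corruptions}, charging to a single statistical-distance term the difference between the genuine PKI randomness and the pre-images produced by the inverter $\Inv$. Concretely, first I would consider an ``ideal-randomness'' thought experiment in which the adversary, rather than feeding the honest parties of its two head-simulations the inverted coins $(\tilde r_1,\ldots,\tilde r_n)$, feeds them the \emph{genuine} strings $(r_1,\ldots,r_n)$ that the trusted party actually used to generate $(\vk_1,\ldots,\vk_n)$, while still keeping every party in $\JS$ silent. In this experiment the last-round incoming-edge sets $\CS_j^0$ (inputs all $0$) and $\CS_j^1$ (inputs all $1$) are distributed exactly as in a genuine run of $\pi$, so the argument of \cref{claim:lb_corruptions} carries over verbatim: under the standing hypothesis that every party sends $o(n)$ messages in the last round, the total number of last-round messages delivered to $\sset{\Party_j}_{j\in\JS}$ is $o(n^2)$ in each of the two executions, so at most an $o(1)$ fraction of $\JS$ receives $\Theta(n)$ messages; since $\ssize{\JS}=\beta n/2\in\Theta(n)$, a strict majority of $\JS$ receives $o(n)$ messages in each execution, and two subsets of size $o(n)$ cannot cover $\JS$, so some $j\in\JS$ receives $o(n)$ messages in both. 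Hence in this experiment the event ``$\exists\, j\in\JS:\ssize{\CS_j^0\cup\CS_j^1}\in o(n)$'' holds with probability $1$.

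Next I would transfer this to the actual adversary using the distributional-one-way-function characterization of \citet{IL89}. Assuming one-way functions do not exist, $\Inv$ inverts $\fpki$ with $1/2p(n)$-statistical closeness, so the joint object $\big((\tilde r_1,\ldots,\tilde r_n),(\vk_1,\ldots,\vk_n)\big)$ that the adversary actually works with is $1/2p(n)$-close to the genuine joint object $\big((r_1,\ldots,r_n),(\vk_1,\ldots,\vk_n)\big)$. The target event is a (randomized) function of this joint object together with fresh protocol coins --- here it matters that the parties in $\JS$ transmit nothing, so $\CS_j^0$ and $\CS_j^1$ are determined by the honest parties' setup inputs and coins and involve no adversarial choice --- and applying a (possibly randomized) function to two $1/2p(n)$-close distributions changes the probability of any event by at most $1/2p(n)$. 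Since the event has probability $1$ in the ideal experiment, it has probability at least $1-1/2p(n)$ in the adversary's head, which is precisely the statement of the claim.

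I expect the delicate step to be this last transfer: $\Inv$'s outputs are only statistically close to uniform randomness and, on the bad $1/2p(n)$-fraction of inputs, need not even be valid pre-images of the $\vk_i$'s, so one must resist reasoning directly about the inverted keys and instead route everything through the single statistical-distance estimate above. (The same inverter, and the same care, will then be reused to port the full attack of \cref{thm:ba_lb_crs} from the CRS model to the PKI model in the remainder of the proof of \cref{thm:ba_lb_pki}.)
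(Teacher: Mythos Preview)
Your proposal is correct and follows essentially the same approach as the paper: establish the ideal case where the simulated randomness is distributed exactly as the genuine PKI randomness (the paper phrases this as ``consider a perfect inverter $\PInv$'' rather than ``use the genuine $(r_1,\ldots,r_n)$,'' but these yield identical distributions), invoke \cref{claim:lb_corruptions} verbatim to get probability $1$ in that case, and then pay a single $1/2p(n)$ statistical-distance penalty to pass from the ideal to the actual $\Inv$-based simulation. Your write-up is in fact more explicit than the paper's about why the transfer via statistical distance is valid.
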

\begin{proof}
Consider a perfect inverter $\PInv$ for $\fpki$. In that case for every $j\in\JS$, the simulated messages by the adversary $\sset{\hat{m}^0_{i\to j}}_{i\in\CS_j^0}$ (\resp $\sset{\hat{m}^1_{i\to j}}_{i\in\CS_j^1}$) are identically distributed as the messages that $\Party_j$ receives in the last round in an honest execution where all parties in $[n]\setminus\JS$ have input $0$ (\resp $1$) and parties in $\JS\setminus\sset{j}$ are corrupted and do not send messages. Therefore, by an identical argument to \cref{claim:lb_corruptions}, there exists $j\in\JS$ such that $\ssize{\CS_j^0\cup\CS_j^1}\in o(n)$.

The claim follows since $\Inv$ is an inverter with $1/2p(n)$-statistical closeness.
\QED
\end{proof}

The adversary proceeds by choosing uniformly at random $\is\in\JS$, and as before, if $\ssize{\CS_\is^0\cup\CS_\is^1}\geq\beta n/2$, the adversary aborts the attack and halts. By \cref{claim:lb_pki_corruptions} the adversary does not abort with probability at least $1/n - 1/2p(n)$ (recall that by the proof of \cref{thm:ba_lb_crs}, $1/p(n)\leq 1/n$; hence, $1/n - 1/2p(n)>0$).
Next, the adversary chooses a random subset $\IS\subseteq[n]\setminus\sset{\is}$ of size $\beta n$, such that $\JS\cup\CS_\is^0\cup\CS_\is^1\setminus\sset{\is}\subseteq\IS$.
Denote by $\E$ the event where the adversary does not abort and that party $\Party_\is$ is isolated by $\faecomm^\ast$ \wrt the set of corrupted parties $\IS$ as defined above. By the definition of $\faecomm^\ast$ and by \cref{claim:lb_pki_corruptions}, this event happens with inverse-polynomial probability.

The rest of the proof proceeds exactly as in the proof of \cref{thm:ba_lb_crs}, with the only difference that the statistical distance of the PKI private keys in the protocol and those simulated by the adversary is bounded by $1/2p(n)$. Since the attack in the proof of \cref{thm:ba_lb_crs} succeeds with probability $1/p(n)$, it holds that $1/p(n) - 1/2p(n)$ is noticeable.
\QED
\end{proof}

\section{Constructions of SRDS}\label{sec:SRDS:construction}

In \cref{sec:SRDS:owf}, we present an SRDS scheme with trusted PKI based on OWF, and in \cref{sec:SRDS:snarks}, an SRDS scheme with bare PKI based on proof-carrying data and CRH.

\subsection{SRDS from One-Way Functions\SUBSUBSEC}\label{sec:SRDS:owf}

\def\ThmSRDSOWF
{
Let $\beta<1/3$ be a constant.
Assuming the existence of one-way functions, there exists a $\beta n$-secure SRDS scheme in the trusted PKI model.
}
\begin{theorem}[\cref{thm:intro:srds_owf}, restated]\label{thm:srds_owf}
\ThmSRDSOWF
\end{theorem}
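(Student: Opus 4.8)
The plan is to realize the SRDS syntax using a digital signature scheme with an \emph{oblivious key-generation} algorithm, which exists assuming only one-way functions (e.g., an adapted version of Lamport's one-time signatures, where one can sample a verification key without the matching signing key and the two modes of sampling are indistinguishable given only the verification key). The trusted-PKI setup proceeds as follows: for each party $\Party_i$ (here $i \in [n]$ ranges over the $n$ \emph{virtual} identities of the SRDS), the trusted dealer tosses a biased coin that comes up ``real'' with probability $\Theta(\log^2 n / n)$; if it comes up ``real'' the dealer runs the honest key-generation algorithm and gives $(\vk_i,\sk_i)$ to party $i$, and otherwise it runs the oblivious key-generation algorithm, publishes $\vk_i$, and sets $\sk_i = \bot$. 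By a Chernoff bound, with overwhelming probability the number of ``real'' keys is $\Theta(\log^2 n)$ and is at least some fixed $\polylog(n)$; these are the only parties able to produce a valid signature. The algorithm $\TSSignShare(\pp,i,\sk_i,m)$ outputs an ordinary signature on $m$ tagged with index $i$ if $\sk_i \neq \bot$, and $\bot$ otherwise; $\TSAggr_1$ deterministically discards any input that is not a valid signature on the claimed message under the claimed key (and whose $\indexMin/\indexMax$ fall in the allowed ranges), and $\TSAggr_2$ simply concatenates the surviving signatures (together with the running $\indexMin$, $\indexMax$ and a count), so $\TSAggr_2$ is deterministic and has a $\polylog(n)\cdot\poly(\secParam)$-size circuit; $\TSVerify(\pp,\vk,m,\sigma)$ checks that $\sigma$ is a concatenation of distinct-index valid signatures on $m$ and that the count exceeds the appropriate threshold (say, more than one third of the ``real'' key-holders, which is determined by $\pp$/the public keys up to a $\polylog$ additive slack).

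Next I would verify the three properties. \textbf{Succinctness}: each base signature is $\poly(\secParam)$ bits and carries a $\log n$-bit index; an aggregated signature along the communication tree of \cref{def:rob-com-tree} absorbs at most $\log^3 n$ parties per node over $O(\log n/\log\log n)$ levels and, crucially, only the $\Theta(\log^2 n)$ ``real'' signatures ever survive $\TSAggr_1$, so the total size stays $\polylog(n)\cdot\poly(\secParam)$; decomposability holds by construction since $\TSAggr_1$ uses the keys only for filtering and $\TSAggr_2$ sees only the short filtered list. \textbf{Robustness}: condition on the high-probability event that the dealer planted $\Theta(\log^2 n)$ real keys. Since the assignment of real keys is independent of everything the adversary sees before corrupting (the oblivious-generation indistinguishability guarantees the adversary cannot tell real keys from oblivious ones from the public list), a Chernoff/union bound over nodes shows that with overwhelming probability, for every good node of the $(n,\IS)$-tree, more than a $2/3$ fraction of the real-key holders whose leaves route through that node are honest; in particular the honest parties' base signatures, once correctly aggregated by the challenger at every good node (and the adversary cannot destroy them, only add to them), yield a count at the root exceeding the threshold, so $\TSVerify$ accepts. \textbf{Unforgeability}: to win, the adversary must output an accepting $\sigma'$ on $m' \neq m$, i.e., a concatenation of more than (threshold-many) \emph{valid, distinct-index} signatures on $m'$; the honest parties in $\setP$ signed only their own $m_i$'s and the honest parties outside $\setP$ signed only $m \neq m'$, so every valid signature on $m'$ in $\sigma'$ must come from a corrupted real-key holder or from a signing-key the adversary forged. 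Because $\ssize{\setP \cup \IS} < n/3$ and, by the same Chernoff argument as in robustness, the number of real keys held by parties in $\setP \cup \IS$ is (with overwhelming probability) below the acceptance threshold, the only way to clear the threshold is to forge a signature under some honest party's verification key — but an honest real key is an honestly generated signature key never used to sign $m'$, and an honest oblivious key has \emph{no} signing key at all, so a forgery on either contradicts the EUF-CMA security of the underlying one-way-function-based signature scheme. A standard reduction embeds the challenge verification key into a uniformly random honest slot (real or oblivious, chosen to match the real/oblivious ratio so the embedding is undetectable) and uses a forged component of $\sigma'$ as the EUF-CMA forgery, losing a factor of $n$.

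I expect the main obstacle to be the \textbf{robustness} argument, specifically controlling the interaction between the adversarially chosen aggregation tree and the randomly planted real keys: one must argue that no adversarial choice of $(n,\IS)$-tree topology, combined with adaptive corruptions made \emph{after} seeing the public keys, can cause the real honest signatures to be ``crowded out'' or lost at a bad node in a way that pushes the root count below threshold. The resolution is that (i) the real-key planting is information-theoretically hidden from the adversary at corruption time by oblivious-generation indistinguishability, so the adversary is effectively choosing $\IS$ and the tree \emph{before} learning which keys are real — reducing to a balls-in-bins concentration statement — and (ii) at every good node the challenger, not the adversary, computes the aggregate, so honest real signatures that reach a good leaf propagate intact to the root; the routing guarantee (all but a $3/\log n$ fraction of leaves have a good path) plus the count threshold being set with a $\polylog$ slack below the expected honest real count absorbs the lost leaves. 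Formalizing (i) requires a careful hybrid that swaps the real/oblivious labels before the adversary's corruption queries, which is where the bulk of the technical care goes; the remaining pieces are routine.
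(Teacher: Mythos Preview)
Your proposal is essentially the same construction and proof strategy as the paper: biased-coin key generation using oblivious vs.\ real signature keys (both from OWF via Lamport-style signatures), aggregation by concatenation with deterministic filtering, and verification by counting valid distinct-index signatures against a $\polylog(n)$ threshold; succinctness, robustness, and unforgeability are each argued via Chernoff concentration on the random placement of real keys, plus unforgeability/obliviousness of the underlying signature scheme.

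One point where you overcomplicate relative to the paper: your robustness argument asserts a per-node statement---that for \emph{every} good node, more than $2/3$ of the real-key holders routing through it are honest---via Chernoff plus a union bound over nodes. This claim is both unnecessary and likely false as stated, since deep internal nodes may have only $O(1)$ real-key holders routing through them, too few for concentration. The paper sidesteps this entirely: since at every good node the challenger performs the aggregation, any honest real signature whose leaf has a good path survives to the root regardless of local ratios. Hence a \emph{single} global Chernoff bound suffices: the set $S$ of parties with good paths has size $>2n/3$, so the honest parties in $S$ number $>(1/3+\epsilon)n$, and the expected count of real keys among them is $>(1/3+\epsilon)\ell$, comfortably above the threshold $\ell'/3 = \ell/6$. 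Your step (ii) already contains this correct argument; step (i)'s per-node claim can simply be dropped.

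Conversely, you are more explicit than the paper about one point that the paper glosses over: the adversary chooses $\IS$ \emph{after} seeing all verification keys, so arguing that the real-key placement is ``independent'' of $\IS$ genuinely requires the computational indistinguishability of real vs.\ oblivious keys and a hybrid that replaces real keys by oblivious ones before the corruption phase. The paper's proof asserts this independence in one sentence without spelling out the hybrid; your identification of this as the place requiring care is accurate.
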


The main building block in our construction is an augmented version of digital signatures with the ability to obliviously sample a verification key without knowing the signing key. Note that by assuming secure erasures, or a trusted party that does not reveal the key-generation coins, our construction can be based on any digital signatures scheme.

\begin{definition} [signatures with oblivious key generation]\label{def:Okeygen}
A digital signature scheme $(\DSGen$, $\DSSign$, $\DSVerify)$ has \textsf{oblivious key generation} if there exists an algorithm $\DSOGen$ that on input the security parameter $1^\secParam$ outputs a key $\ovk$, such that the following hold:
\begin{itemize}[leftmargin=*]
	\item \textbf{Indistinguishability.}
     The distribution of $\vk$, where $(\vk,\sk)\gets\DSGen(1^\secParam)$, should be computationally indistinguishable from $\ovk$, where $\ovk\gets\DSOGen(1^\secParam)$.
	\item \textbf{Obliviousness.}
    A PPT adversary $\Adv$ can win the following game with negligible probability:
    \begin{enumerate}[leftmargin=*]
        \item
        Challenger computes $\ovk=\DSOGen(1^\secParam;r)$ and sends $(\ovk,r)$ to $\Adv$.
        \item
        $\Adv$ responds with a pair $(m,\sigma)$, and wins if $\DSVerify(\ovk,m,\sigma)=1$.
    \end{enumerate}
\end{itemize}
\end{definition}

\def\ClaimOBLKEYGEN
{
Assuming the existence of one-way functions, there exists a one-time digital signature scheme with oblivious key generation.
}
\begin{claim}\label{claim:obl_keygen}
\ClaimOBLKEYGEN
\end{claim}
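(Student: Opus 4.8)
The plan is to instantiate the claim with a variant of Lamport's one-time signature scheme, whose security rests on the existence of one-way functions. Recall Lamport's scheme for signing $\ell$-bit messages: the signing key consists of $2\ell$ uniformly random strings $\{x_{b,i}\}_{b\in\zo,i\in[\ell]}$ in $\zo^\secParam$, and the verification key consists of their images $\{y_{b,i}=g(x_{b,i})\}$ under a one-way function $g$; to sign a message $m\in\zo^\ell$ one reveals $\{x_{m_i,i}\}_{i\in[\ell]}$, and verification recomputes the hashes. For the oblivious key-generation algorithm $\DSOGen$, I would simply have it sample $2\ell$ uniformly random strings in the \emph{range} of $g$ directly---or, more robustly (to avoid assuming $g$ is range-samplable), replace $g$ by a length-doubling construction. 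Concretely, the cleanest route: let $f$ be a one-way function, set $g(x) = f(x)$ where we first argue that WLOG $g$ can be taken to have a range that is easy to characterize, or alternatively use the standard trick of composing with a pairwise-independent hash / using $g(x)=(h, h(f(x)))$ so that the verification key coordinates live in $\zo^{2\secParam}$ and $\DSOGen$ outputs uniform strings in $\zo^{2\secParam}$. I would then set the first bullet, \textbf{indistinguishability}, to follow from the pseudorandomness-of-image-type argument: a real verification-key coordinate $f(x)$ for uniform $x$ is computationally indistinguishable from a uniform string in $\zo^{2\secParam}$ (if it were not, one could invert $f$ on a noticeable fraction of inputs, contradicting one-wayness---this is the standard argument that a OWF with sparse range yields a distinguishing-to-inverting reduction, made fully rigorous by first going through a distributionally-one-way / Goldreich--Levin style step, or simply by using an injective-OWF-style padding so that the range is exactly characterized). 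A hybrid over the $2\ell$ coordinates then gives indistinguishability of the full keys.

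For the second bullet, \textbf{obliviousness}, the key point is that $\DSOGen$ samples the $2\ell$ coordinates \emph{uniformly at random in $\zo^{2\secParam}$ (equivalently, the range format) and outputs its coins $r$}, but crucially it never computes any preimage. An adversary given $(\ovk,r)$ who produces $(m,\sigma)$ with $\DSVerify(\ovk,m,\sigma)=1$ must, for each $i\in[\ell]$, supply a value $z_i$ with $g(z_i)$ equal to the $\ovk$-coordinate indexed by $(m_i,i)$. Since the coordinates of $\ovk$ were sampled uniformly in the range (or in $\zo^{2\secParam}$, the padded domain), with overwhelming probability a uniformly sampled target string of the padded form has \emph{no} preimage under $g$ at all when the domain is strictly smaller than the range---or, if one uses exactly the range, the probability that the adversary can find \emph{any} preimage of a random range element reduces directly to inverting $g$. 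Either way, winning the obliviousness game with non-negligible probability yields a PPT inverter for $f$, contradicting the one-way assumption. I would phrase this as: build $\Inv$ that on input $y=f(x)$ for uniform $x$ embeds $y$ in a random coordinate of a freshly generated $\ovk$ (possible since $\DSOGen$'s output distribution on that coordinate is $f$ applied to uniform, or uniform in the range, which matches $y$'s distribution), runs $\Adv$, and outputs the preimage $\Adv$ reveals for that coordinate; $\Inv$ succeeds with probability at least $\frac{1}{\ell}$ times $\Adv$'s winning probability.

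The main obstacle is the subtle tension between \textbf{indistinguishability} and \textbf{obliviousness}: indistinguishability wants real verification keys to look like $\DSOGen$'s output, so $\DSOGen$ should output something very close to ``$f$ of uniform,'' but obliviousness wants $\DSOGen$'s output to have \emph{no} preimages (so the adversary is stuck). These are reconciled precisely because $f$ of uniform is computationally---not statistically---close to having no preimage: a random element of the ambient space $\zo^{2\secParam}$ is almost-surely outside the sparse range of $f$, yet is computationally indistinguishable from an actual image. So the careful step is choosing the right ``padded'' one-way function $g$ (so that its range is sparse in an ambient set from which $\DSOGen$ samples uniformly, e.g. via $g(x) = f(x)\Vert 0^\secParam$ composed with a one-time pad / or just using that $\{f(x)\}$ over uniform $x$ of length $\secParam$ sits inside $\zo^{\secParam}$ but has support of size $\le 2^\secParam$ while we can re-randomize to length $2\secParam$), and then writing the two reductions to one-wayness cleanly. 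Everything else---correctness of Lamport signing/verification, the hybrid over the $2\ell$ coordinates, one-time unforgeability (which is the classical Lamport argument and is what makes this a bona fide signature scheme)---is routine and I would state it with a pointer to standard references.
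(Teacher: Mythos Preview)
Your approach is essentially the paper's: Lamport signatures with the underlying one-way function taken to be a length-doubling pseudorandom generator $G:\zo^\secParam\to\zo^{2\secParam}$, and $\DSOGen$ simply outputting $2\ell$ uniform strings in $\zo^{2\secParam}$. Two places where the paper's version is cleaner than your writeup, and where your phrasing slips: First, for \textbf{indistinguishability} you write that ``if $f(x)$ were distinguishable from uniform one could invert $f$''---this is false for a general one-way function (consider $f$ that prepends a zero bit). What you need is precisely that $G$ is a PRG; the HILL theorem gives you a PRG from any OWF, and then indistinguishability of a real key from an oblivious key is immediate from pseudorandomness via the $2\ell$-step hybrid you describe. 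Second, for \textbf{obliviousness} your reduction ``embed $y=f(x)$ into a random coordinate of $\ovk$'' is both unnecessary and slightly off (the coordinate distributions do not match---uniform versus $G(\text{uniform})$---so you would need to invoke pseudorandomness again). The direct argument is information-theoretic: since $|\mathrm{Im}(G)|\le 2^\secParam$, each uniform coordinate in $\zo^{2\secParam}$ lies outside $\mathrm{Im}(G)$ except with probability $2^{-\secParam}$, so by a union bound no valid signature exists with probability $\ge 1-2\ell\cdot 2^{-\secParam}$, regardless of the adversary's computation or its knowledge of $r$. Once you commit to the PRG instantiation up front, both properties fall out in one line each.
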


\begin{proof}[Proof Sketch.]
Recall the one-time signatures of \citet{Lamport79} for $\ell$-bit messages. Given a one-way function $f$, the signing key consists of $2\ell$ random $\secParam$-bits strings $x_1^0,x_1^1,\ldots,x_\ell^0,x_\ell^1$ and the verification key is $y_1^0,y_1^1,\ldots,y_\ell^0,y_\ell^1$, where $y_i^b=f(x_i^b)$. A signature on a message $m=(m_1,\ldots,m_\ell)$ is $\sigma=(x_1^{m_1},\ldots,x_\ell^{m_\ell})$. To verify a signature $\sigma=(\sigma_1,\ldots,\sigma_\ell)$, check for each $i\in[\ell]$ if $f(\sigma_i)=f(x_i^{m_i})$.

By instantiating the one-way function with a length-doubling pseudorandom generator $G$, we can define the oblivious key-generation algorithm by sampling $2\ell$ random $2\secParam$-bit strings. Indistinguishability follows from the pseudorandomness of $G$, and obliviousness from its one-wayness.
\QED
\end{proof}

Note that via standard transformations (\eg \cite[Sec.\ 6.4]{Goldreich04}) the one-time signature construction above can be extended to multi-message signatures with oblivious key generation.

\paragraph{Overview of the construction.}
Our construction makes use of a digital signature scheme with oblivious key generation (\cref{def:Okeygen}). For each party toss a biased coin that outputs $\heads$ with probability $\ell/n$, for some $\ell=\omega(\log(n))$. If the output is $\heads$, sample standard signature keys $(\vk_i,\sk_i)\gets\DSGen(1^\secParam)$; otherwise, obliviously sample $\vk_i\gets\DSOGen(1^\secParam)$.
A signature on a message $m$ can be computed only by parties with a valid signing key.
The aggregation algorithm concatenates these valid signatures.\footnote{Since this aggregation process is deterministic, decomposing the algorithm is redundant -- we represent it by two algorithms for completeness, to make the syntax compatible with the BA protocol in \cref{sec:ba_from_srds}.}
Verification of a signature requires counting how many valid signatures were signed on the message. Since each signature in this construction encodes the index associated with the corresponding verification key and each aggregate/partially aggregate signature is essentially a concatenation of the base signatures, it is easy to see that in this construction, given a signature/aggregate signature, the maxima and minima associated with it can be easily determined.

The construction of the SRDS scheme is formally described in \cref{fig:srds_from_owf} and the proof of \cref{thm:srds_owf} can be found in \cref{sec:SRDS:owf_cont}.

\begin{nfbox}{\Srds from one-way functions}{fig:srds_from_owf}
\begin{center}
\textbf{SRDS from OWF}
\end{center}
Let $(\DSGen,\DSOGen,\DSSign, \DSVerify )$ be a signature scheme with oblivious key generation, let $\alpha(n,\secParam)\in\poly(\log{n},\secParam)$, and let $\ell=\log^c(n)$ for some constant $c>1$.
\begin{itemize}[leftmargin=*]
	\item $\TSCR(1^\secParam,1^n):$ Output $\pp=1^\secParam$.
	
	\item $\TSGen(\pp):$
    Toss a biased coin that outputs $\heads$ with probability $\ell/n$. If the outcome is $\heads$, compute $(\vk,\sk)\gets \DSGen(1^\secParam)$; else compute $\vk\gets\DSOGen(1^\secParam)$ and set $\sk=\bot$. Output $(\vk,\sk)$.
	
	\item $\TSSignShare(\pp,i,\sk, m):$
    If $\sk\neq \bot$, compute $\signsig\leftarrow \DSSign(\sk,m)$ and set $\sigma=\sset{(i,m,\signsig)}$; otherwise, set $\sigma=\bot$. Output $\sigma$.
	
	\item $\TSAggr_1(\pp, \sset{\vk_1,\ldots,\vk_n}, m,  \sset{\sigma_1,\ldots,\sigma_q})$:
    Initialize $\setsig=\emptyset$. For every $i\in [q]$:
	\begin{itemize}[leftmargin=*]
        \item
        Parse $\sigma_i$ as a set of signature-tuples  $\sset{(i_j, m_{i_j},\signsig_{i_j})}_{i_j\in S_i}$ (for some set $S_i$).
		\item
        For every $(i_j, m_{i_j},\signsig_{i_j})\in\sigma_i$, check whether $m=m_{i_j}$ and $\DSVerify(\vk_{i_j},m,\signsig_{i_j})=1$ and $\nexists (i_j,\cdot,\cdot,\cdot)\in\setsig$.
        If so, set $\setsig=\setsig\cup\sset{(i_j, m,\signsig_{i_j})}$.
	\end{itemize}
    Denote by $\|\sigma\|$ the bit length of $\sigma$.
    If $\sum_{\sigma\in\setsig}\|\sigma\|\leq\alpha(n,\secParam)$, output $\setsig$; else, output $\bot$.

	\item $\TSAggr_2(\pp, m,\setsig):$
    If $\sum_{\sigma\in\setsig}\|\sigma\|\leq\alpha(n,\secParam)$, output $\setsig$; else, output $\bot$.
	
	\item $\TSVerify(\pp,\sset{\vk_1,\ldots,\vk_n},m,\sigma)$:
    Initialize $\verifyset=\emptyset$.
	\begin{itemize}[leftmargin=*]
        \item
        Parse $\sigma$ as a set of signature-tuples  $\sset{(i, m_{i},\signsig_{i})}_{i\in S}$ (for some set $S$).
        \item
        For each $(i, m_i,\signsig_i)\in\sigma$, if $m=m_i$ and $\DSVerify(\vk_i,m_i,\signsig_i)=1$ and $\nexists(i,\cdot,\cdot)\in\verifyset$,
        set $\verifyset=\verifyset\cup\sset{(i, m_i,\signsig_i)}$.
        \item
        If $\sum_{\sigma\in\verifyset}\|\sigma\|\leq\alpha(n,\secParam)$ and $\ssize{\verifyset}> \ell'/3$, where $\ell'=\ell/2$, output 1; else, output 0.
	\end{itemize}
\end{itemize}

\end{nfbox}

\subsection{SRDS from CRH and  SNARKs\SUBSUBSEC}\label{sec:SRDS:snarks}
The construction in \cref{sec:SRDS:owf} was in the trusted PKI model. In this section, we show how to construct SRDS in the bare PKI, albeit under stronger cryptographic assumptions.
Namely, we consider CRH and SNARKs with \textsf{linear extraction}, where the size of the extractor is linear in the size of the prover (\ie $\ssize{\mathbb{E}_{\PS^*}}\leq c\cdot \ssize{\PS^*}$ for some constant $c$). An extractability assumption of this kind has been considered in \cite{Val08,DFH12,GS14,BJPY18}.

\def\ThmSRDSSNARKs
{
Let $t<n/3$. Assuming the existence of CRH, digital signatures, and SNARKs with linear extraction, there exists a $t$-secure SRDS scheme in the CRS model with a bare PKI.
}
\begin{theorem}[\cref{thm:intro:srds_snarks}, restated]\label{thm:SNARKSconstruction}
\ThmSRDSSNARKs
\end{theorem}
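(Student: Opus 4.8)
The plan is to realize the ``succinct proof of honest aggregation'' idea of \cref{sec:intro:technique} with a proof-carrying data (PCD) system, using a collision-resistant hash to bind each base signature to the \emph{correct} owner's index. Recall that by \citet{BCCT13}, SNARKs with linear extraction yield PCD for DAGs of depth $O(\log n/\log\log n)$ --- exactly the height of the almost-everywhere-communication trees of \cref{def:rob-com-tree}. In the construction, $\TSCR$ outputs a CRH key, a PCD CRS for that depth bound, and parameters for an EUF-CMA digital signature scheme; $\TSGen$ just runs the signature key generation, so corrupted parties may pick $\vk_i$ arbitrarily and a bare PKI suffices. A base signature $\TSSignShare(\pp,i,\sk_i,m)$ is the tuple $(i,m,\signsig)$ with $\signsig\gets\DSSign(\sk_i,m)$. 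Aggregation runs the PCD prover along the communication tree: the ``source'' for virtual party $i$ has as local data the signature $\signsig$ together with a Merkle opening of $\vk_i$ as the $i$-th leaf of the hash $\rho_{\vk}$ of the key vector $(\vk_1,\dots,\vk_n)$; it emits the message $(c=1,\indexMin=i,\indexMax=i,\rho_{\vk},m)$, and its predicate checks the opening and $\DSVerify(\vk_i,m,\signsig)=1$. An internal node, given children's messages $(c_u,\indexMin_u,\indexMax_u,\rho_{\vk},m)$ with PCD proofs, verifies the proofs, checks the children agree on $(\rho_{\vk},m)$ and that their index-intervals $[\indexMin_u,\indexMax_u]$ are pairwise disjoint, and emits $(\sum_u c_u,\ \min_u\indexMin_u,\ \max_u\indexMax_u,\ \rho_{\vk},\ m)$ with a fresh proof. $\TSAggr_1$ is the deterministic step that, reading the $n$ keys, discards children whose signature/opening/proof is invalid and precomputes the $\polylog(n)$ Merkle openings; $\TSAggr_2$ is the PCD prover run on the surviving, already-checked inputs, so its circuit touches only $\polylog(n)$ children and $\poly(\secParam,\log n)$-size openings and proofs, hence has size $\polylog(n)\cdot\poly(\secParam)$. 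Finally $\TSVerify$ recomputes $\rho_{\vk}$ from the given keys, checks the root PCD proof against the statement $(c,\cdot,\cdot,\rho_{\vk},m)$, and accepts iff $c$ exceeds the corruption threshold $n/3$.

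Succinctness is immediate: a message is a count, two indices, a hash value, $m$, and a succinct PCD proof, all of size $\poly(\secParam,\log n)$; decomposability and $\ssize{\setsig}=\polylog(n)\cdot\poly(\secParam)$ hold by construction of $\TSAggr_1,\TSAggr_2$. Robustness follows from PCD completeness and the structure of \cref{def:rob-com-tree}: in the experiment of \cref{fig:exp_robust} every good node is computed honestly, so its count equals the number of honest base signatures on $m$ in its subtree; since the root is good, all but a $3/\log n$ fraction of leaves have good paths, and $\ssize{\IS}<n/3$ with $\beta<1/3$, the honest count reaching the root exceeds $n/3$, and $\TSVerify$ accepts except with negligible probability (the only bad events being a PCD completeness error or a hash collision).

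Unforgeability is the crux. Given a forger winning \cref{fig:expforge} with $(\sigma',m')$, $m'\neq m$, I peel the root PCD proof by recursive knowledge extraction: by linear extraction the composed extractor grows by only a constant factor per level, so over $O(\log n/\log\log n)$ levels it stays polynomial-size and, with probability negligibly close to the forger's, outputs a compliant transcript --- a bounded-depth tree with more than $n/3$ sources, each carrying a valid signature $\signsig_s$ on $m'$ under a key $\vk_s$ and a Merkle opening of $\vk_s$ at some position $i_s$ of $\rho_{\vk}$. Collision-resistance of the hash and the fact that $\TSVerify$ set $\rho_{\vk}=\mhash(\vk_1,\dots,\vk_n)$ force $\vk_s=\vk_{i_s}$ for every source (otherwise the opening and the true key vector exhibit a collision); the interval-disjointness enforced up the tree, together with each source interval being the singleton $\{i_s\}$, forces the $i_s$ to be pairwise distinct. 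Hence more than $n/3$ distinct indices carry a valid signature on $m'$ under the honestly-committed keys. Since $\ssize{\setP\cup\IS}<n/3$, one such index $i_0$ belongs to an honest party outside $\setP$, whose only signature request was on $m\neq m'$; thus the corresponding $\signsig_s$ is a \emph{fresh} EUF-CMA forgery under $\vk_{i_0}$. A standard reduction --- guessing $i_0$ (factor-$n$ loss), planting the challenge key as $\vk_{i_0}$, answering its single query via the signing oracle, and generating all other honest keys and signatures itself --- yields the contradiction.

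I expect two points to need the most care. First is the recursive-extraction bookkeeping: composing knowledge extractors across $O(\log n/\log\log n)$ levels must keep the extractor size polynomial --- this is exactly what linear extraction buys, since with merely polynomial (super-linear) extraction the blowup over logarithmically many levels would be super-polynomial --- and the accumulated failure negligible; one must also use that the PCD system only certifies transcripts within the intended depth, so the forger cannot escape to a super-logarithmic DAG. Second is pinning down the compliance predicate tightly enough that ``accepted count $=$ number of distinct owners of a valid $m'$-signature under the committed keys'' holds with no slack: the Merkle-opening check against $\rho_{\vk}$ (the role of the CRH, which rules out the otherwise-fatal attack of one corrupt party signing $m'$ under its own key while claiming many honest indices) together with the interval-disjointness check must leave no room for the signature-reuse/double-counting attack flagged in \cref{sec:intro:technique}.
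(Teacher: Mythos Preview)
Your proposal is correct and follows essentially the same approach as the paper's proof: the construction in \cref{fig:srds_from_snarks} uses exactly your PCD-plus-Merkle-plus-signatures design, with the compliance predicate enforcing base-signature validity, Merkle openings of $\vk_i$ against the committed hash $\hash_\vk$, an interval-ordering constraint (the paper uses $\indexMin_{\inputvar,j}\le\indexMax_{\inputvar,j}<\indexMin_{\inputvar,j+1}$ rather than your ``pairwise disjoint'', but these serve the identical purpose of ruling out index reuse), and counter summation. The paper's unforgeability argument (\cref{lem:SNARKSunforge}) likewise runs a recursive extractor $\BS_{\mathsf{ext}}$ down the PCD proof to recover $\ge n/3$ base transcripts with distinct indices, invokes CRH soundness to pin each key to the committed $\vk_i$, and concludes with a digital-signature forgery---matching your outline, including the reliance on linear extraction to keep the recursive extractor polynomial over $O(\log n/\log\log n)$ levels.
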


The construction of the SRDS scheme is formally described in \cref{fig:srds_from_snarks} and the proof of \cref{thm:SNARKSconstruction} can be found in \cref{sec:SRDS:snark_cont}.

\paragraph{Overview of the construction.}
As discussed in the Introduction, a PCD system allows for propagation of information up the tree in a succinct and publicly verifiable way.
Having the parties  locally sign the message and keep track of the number of verified signatures aggregated so far via the PCD system, seems to capture most of our requirements for SRDS. However, in order to prevent an adversary from aggregating fake signatures or multiple copies of the same signature, we need to devise a mechanism of verifying the base signatures in the compliance predicate.

One approach is to hard-wire all verification keys into the compliance predicate and verify each base-level signature.
However, this will blow-up the size of the predicate to $O(n)$ and, as a result, the PCD-prover algorithm will run in time $O(n)$.
In this case, the scheme will no longer be succinct, as the algorithm $\TSAggr_2$ internally runs the PCD prover. Indeed, recall that in the BA protocol (in \cref{sec:app:ba:prot}) $\TSAggr_2$ is executed via an MPC protocol; hence, its complexity must be $\tilde{O}(1)$.

To get around this barrier, we use a \emph{Merkle tree} to hash all the verification keys; a Merkle tree enables a long string (here, the list of \emph{all} verification keys $(\vk_1,\dots,\vk_n)$) to be hashed to a short value in a committing way, such that one can prove inclusion of the key $\vk_i$ in the input string by providing an ``opening'' to $\vk_i$ in low complexity (here, logarithmic in $n$) (see \cref{sec:merkle-hash} for details).
Each incoming and outgoing PCD transcript will now contain this hash value $\hash_\vk$. The base-level transcript will also consist of:
\begin{enumerate}
	\item The signature $\gamma_i$ and corresponding verification key $k_i=\vk_i$.
	\item A Merkle proof $p_i$ certifying that $k_i$ is the \ith verification key in the computation of $\hash_\vk$.
\end{enumerate}

\noindent
The compliance predicate, in this case, will verify:
\begin{enumerate}
	\item The signature $\gamma_i$ \wrt the $k_i$.
	\item That $k_i$ is properly hashed in the Merkle tree.
\end{enumerate}
To prevent an adversary from using a different $\hash_\vk$ value, we add an additional check in the compliance predicate that the value of $\hash_\vk$ is consistent in all the incoming and outgoing transcripts. Finally, to prevent an adversary from potentially aggregating multiple copies of the same base signature, we encode a maxima $\indexMax$ and a minima $\indexMin$ of the indices of the keys used to sign the base signatures in each transcript of the PCD proof.

We proceed to give a more detailed overview of our construction.
Each base signature (and aggregate signature) corresponds to a ``truncated'' PCD transcript and a corresponding proof. For base signatures, this proof is set to $\bot$ and in the remaining aggregated signatures, this proof corresponds to a PCD proof. Each truncated transcript $z'=(m,c,\indexMax,\indexMin,\gamma)$ consists of a message $m$ over which the signature is computed, a counter $c$ to keep a count of the number of distinct keys used to sign this signature, a maxima $\indexMax$ and minima $\indexMin$ of the indices of the keys that signed the message, and a value $\gamma$ that in the base case is a signature on $m$ corresponding to $\vk_{\indexMax}$ and in all other cases is set to $\bot$.

Each party starts by locally signing the message using its signing key $\sk_i$ and preparing $z_i'$.
The algorithm $\TSAggr_1$ collects base signatures and/or partially aggregated signatures, checks for their validity and prepares their corresponding PCD transcripts.
For base signatures (where $z'=(m,1,i,i,\gamma)$ and $\pi=\bot$), $\TSAggr_1$ checks that $\gamma$ and $m$ verify \wrt $\vk_i$; if so, it prepares a Merkle proof $p$ for $\vk_i$ and the PCD transcript is set to $z=z||(\hash_\vk,\vk_i,p)$.
For partially aggregated signatures (where $z'=(m,c,\indexMax,\indexMin,\bot)$ and $\pi\neq\bot$), it completes the transcript by setting $z=z'||(\hash_\vk,\bot,\bot)$ and runs the PCD verification algorithm on $(z,\pi)$. The algorithm $\TSAggr_2$ computes the outgoing transcript that is compliant with the valid incoming PCD transcripts and computes a PCD proof certifying this, \ie that it is based on $c$ \emph{distinct and valid} individual signatures. Finally, to verify $(z',\pi)$, set $z=z'||(\hash_\vk,\bot,\bot)$, verify the PCD $(z,\pi)$, and count the total number of keys used for signing this signature.

\ifdefined\IsTrackChanges

\begin{nfbox}{\Srds from CRH and SNARKs}{fig:srds_from_snarks}
\begin{center}
\textbf{SRDS from CRH and SNARKs}
\end{center}
\vspace{-1em}
Let $(\DSGen,\DSSign,\DSVerify)$ be a digital signature scheme, let $(\PCDGen$, $\PCDProve$, $\PCDVerify)$ be a publicly verifiable proof-carrying data (PCD) system for \emph{logarithmic}-depth \emph{polynomial}-size compliance predicates $\compliance$, and let $(\msetup,\mhash,\mproof,\mverify)$ be the Merkle hash proof system corresponding to a hash function $\hash$.
Let $\alpha(n,\secParam)\in\poly(\log{n},\secParam)$.

\vspace{-.5em}
\begin{itemize}[leftmargin=*]
	\item $\TSCR(1^\secParam)$:
    Sample $\seed\gets\msetup(1^\secParam)$ and PCD keys corresponding to a compliance predicate $\compliance$ (defined below), as
	$
    (\PCDsigma,\PCDtau)\gets\PCDGen(1^\secParam,\compliance).
    $

    \textbf{The predicate $\compliance$:}
    Given an input vector $\vec{z}_\inputvar$ of length $\ell$, such that for $j\in[\ell]$ the \jth entry of $\vec{z}_\inputvar$ is of the form $\vec{z}_\inputvar[j]=(m_{\inputvar,j},c_{\inputvar,j},\indexMax_{\inputvar,j},\indexMin_{\inputvar,j},\gamma_{\inputvar,j},\hash_{\vk,j},k_{\inputvar,j},p_{\inputvar,j})$, and output data of the form $z_\outputvar=(m_\outputvar,c_\outputvar,\indexMax_\outputvar,\indexMin_\outputvar,\gamma_{\outputvar},\hash_{\vk,\outputvar},k_\outputvar,p_\outputvar)$, the predicate $\compliance(\vec{z}_\inputvar,z_\outputvar)$ equals 1 iff:
	\begin{enumerate}
		\item For every $j\in[\ell]$, it holds that $\hash_{\vk,j}=\hash_{\vk,\outputvar}$.
		\item For every $j\in[\ell]$, if it is a base level (\ie if $\indexMax_{\inputvar,j}=\indexMin_{\inputvar,j}$ and $\gamma_{\inputvar,j}\neq \bot$), then $\DSVerify(k_{\inputvar,j},m_{\inputvar,j},\gamma_{\inputvar,j})=1$ and $\mverify(\seed,(\indexMax_{\inputvar,j}||k_{\inputvar,j}),\hash_{\vk,\outputvar},p_{\inputvar,j})=1$.
        \item It holds that $\indexMin_{\inputvar,\ell}\leq\indexMax_{\inputvar,\ell}$ and
        for every $j\in[\ell-1]$ that $\indexMin_{\inputvar,j}\leq\indexMax_{\inputvar,j}<\indexMin_{\inputvar,j+1}$, \ie $\indexMax$ of an input is greater than or equal to its $\indexMin$ and less than the $\indexMin$ of the next input.
        \item $\indexMin$ of the output transcript is equal to the $\indexMin$ of the first input, \ie
        $\indexMin_\outputvar=\indexMin_{\inputvar,1}$.
        \item $\indexMax$ of the output transcript is equal to the $\indexMax$ of the last input, \ie
        $\indexMax_\outputvar=\indexMax_{\inputvar,\ell}$.
        \item $c_\outputvar$ stores a count of the number of signatures aggregated so far, \ie
        $c_\outputvar=\sum_{j\in[\ell]}c_{\inputvar,j}$.
    \end{enumerate}
    The output is $\pp=(1^\secParam,\PCDsigma,\PCDtau,\seed)$.
	\item $\TSGen(\pp)$:
    Parse $\pp=(1^\secParam,\PCDsigma,\PCDtau,\seed)$, compute $(\vk,\sk)\gets \DSGen(1^\secParam)$, output $(\vk,\sk)$.
	\item $\TSSignShare(\pp,i,\sk_i, m_i)$:
	Compute $\gamma_i\gets\DSSign(\sk_i,m_i)$, set $z'=(m_i,1, i,i,\gamma_i)$, and output $\sigma=(z',\bot)$.
	\item $\TSAggr_1(\pp, \sset{\vk_1,\ldots,\vk_n}, m,\sset{\sigma_1,\ldots,\sigma_q})$:
    Parse $\pp=(1^\secParam,\PCDsigma,\PCDtau,\seed)$. Compute $\hash_\vk=\mhash(\seed,(1||\vk_1),\ldots,(n||\vk_n))$ and set $\setsig= \sset{\hash_\vk}$.
    For each $i\in [q]$ do the following:
	\begin{itemize}
		\item Parse $\sigma_i=( z_i',\pi_i)$ and $z_i'=(m_i,c_i,\indexMax_i,\indexMin_i,\gamma_i)$
		\item For
        base level (where $\indexMax_i=\indexMin_i$, $m=m_i$, $\pi_i=\bot$, $\gamma_i\neq\bot$ and $\DSVerify(\vk_{\indexMax_i},m_i,\gamma_i)=1$), compute $p_i=\mproof(\seed,(1||\vk_1),\ldots,(n||\vk_n),(\indexMax_i||\vk_{\indexMax_{i}}))$, prepare the transcript $z_i=z_i'||(\hash_\vk,\vk_{\indexMax_{i}},p_i)$ and set $\setsig=\setsig\cup\sset{(z_i,\pi_i)}$.
        \item Else, set $z_i=z_i'||(\hash_\vk,\bot,\bot)$ and check whether $\PCDVerify(\PCDtau, z_i, \pi_i)=1$ and $m=m_i$. If so, set $\setsig=\setsig\cup\sset{(z_i,\pi_i)}$.
	\end{itemize}
    If $\|\setsig\|\leq \alpha(n,\secParam)$,\footnote{$\|\setsig\|$ stands for the bit length of $\setsig$.} output $\setsig$ ; else, output $\bot$.
	\item $\TSAggr_2(\pp, m,\setsig)$:
	Parse $\pp=(1^\secParam,\PCDsigma,\PCDtau,\seed)$ and set $c_\outputvar=0$.
	Parse $\setsig=\sset{\hash_\vk,\ldots}$. For each $\sigma_i\in \setsig\setminus \sset{\hash_\vk}$, parse $\sigma_i=(z_i',\pi_i)$ and $z_i'=(m_i,c_i,\indexMax_i,\indexMin_i ,\gamma_i)$ and set $c_\outputvar=c_\outputvar+c_i$.
	Let $(z_{\inputvar,1},\pi_{\inputvar,1})$ be the first element in $\setsig$ where $z_{\inputvar,1}=(\cdot,\cdot,\cdot,\indexMin_{\inputvar,1},\cdot)$. Set $\indexMin_\outputvar=\indexMin_{\inputvar,1}$. Similarly, denote $u=\ssize{\setsig}$ and let $(z_{\inputvar,u},\pi_{\inputvar,u})$ be the last element in $\setsig$ where $z_{\inputvar,u}=(\cdot,\cdot,\indexMax_{\inputvar,u},\cdot,\cdot)$. Set $\indexMax_\outputvar=\indexMax_{\inputvar,u}$, set $z'_\outputvar=(m,c_\outputvar,\indexMax_\outputvar,\indexMin_\outputvar,\bot)$, and set $z_\outputvar=(z'_\outputvar,\hash_\vk,\bot,\bot)$. Compute $\pi_\outputvar\gets \PCDProve(\PCDsigma, \setsig, \linp=\bot, z_\outputvar)$ and output $\sigma=(z'_\outputvar,\pi_\outputvar)$.
	\item $\TSVerify(\pp, \sset{\vk_i}_{i\in[n]},m,\sigma)$:
	Parse $\pp=(1^\secParam,\PCDsigma,\PCDtau ,\seed)$, $\sigma=(z',\pi)$, $z'=(m',c,\indexMax,\indexMin,\gamma_i)$.
    Compute $\hash_\vk=\mhash((1||\vk_1),\ldots,(n||\vk_n);\seed)$ and $z=z'||(\hash_\vk,\bot,\bot)$. If $m'=m$, $\PCDVerify(\PCDtau, z, \pi)=1$, $c\geq n/3$, and $\|\sigma\|\leq \alpha(n,\secParam)$, output 1; else, output 0.
\end{itemize}
\end{nfbox}

\fi

Since each signature/aggregate signature in this construction essentially consists of a transcript and a proof and the transcript encodes information about maxima and minima associated with the signature, it is easy to see that in this construction, given a signature/aggregate signature, the maxima and minima associated with it can be easily determined.

\section{Connection with Succinct Arguments}\label{sec:succinct_arguments}

In \cref{sec:SRDS:construction}, we showed how to construct SRDS with a strong setup assumption (trusted PKI) from OWF, and with relatively weak setup assumptions (bare PKI) at the expense of strong, non-falsifiable, cryptographic assumptions (SNARKs with linear extractors).
A natural approach towards constructing SRDS that balances the cryptographic and setup assumptions, is to augment a multi-signature scheme with some method of convincing the verifier that sufficiently many parties contributed to the signing process. Indeed, multi-signatures are known to exist under standard falsifiable assumptions in the \emph{registered PKI} model~\cite{LOSSW13}. In this model each party locally generates its own keys (as with bare PKI) but to publish its verification key, the party must prove knowledge of the corresponding secret key, see~\cite{Boldyreva03,LOSSW13} and a discussion in~\cite{BN06}.

In this section, we discuss challenges toward such an approach, by showing that in some cases this \emph{necessitates} some form of succinct non-interactive arguments.
We begin in \cref{sec:snargs} by formalizing the notion of SNARGs for average-case instances of a language, and formalizing the notion of SRDS ``based on'' multi-signatures. Next, in \cref{sec:snark1}, we show that any SRDS based on LOSSW multi-signatures imply SNARGs for average-case instances of the Subset-Product problem. Finally, in \cref{sec:snark-2}, we explore hardness of various Subset-$f$ problems and their connection to SRDS based on more general multi-signature schemes.

\ifdefined\IsTrackChanges\else

\fi

\subsection{Average-Case SNARGs and SRDS Based on Multi-signatures}\label{sec:snargs}

\paragraph{Average-case SNARGs.}
We consider a notion of SNARGs for \emph{average-case} instances of an NP language $\LS$. This constitutes a weaker primitive than standard SNARGs (as per \cite{BCCGLRT17}), which requires soundness against worst-case instances, and may be viewed as a variant of the notion for \emph{cryptographically hard languages} considered in \cite{BISW18}. An average-case SNARG for a language $\LS$ is parametrized by an efficiently sampleable distribution $\yesD$ over the instance-witness pairs in $\LS$, and an efficiently sampleable distribution $\noD$ over instances outside of $\LS$. In a similar way to regular SNARGs, average-case SNARGs consist of setup, prover, and verification algorithms. Intuitively, given any instance-witness pair $(x,w)$ in $\yesD$, the prover algorithm should output a verifying succinct proof with overwhelming probability. At the same time, it should be hard for an adversary to compute a verifying proof for a \emph{random} instance $x$ from $\noD$.

\begin{definition}[average-case SNARG for $(\yesD,\noD)$]
Let $\LS$ be an NP language associated with a relation $R_{\LS}$, and let $\yesD$ and $\noD$ be efficiently sampleable distributions over $(x,w)\in R_{\LS}$ and $x\notin\LS$, respectively. A succinct non-interactive argument system $\Pi$ for average-case $\LS$, parametrized by the distributions $(\yesD,\noD)$, is defined by PPT algorithms $(\ProofSetup,\ProofProve,\ProofVerify)$ as follows:
\begin{itemize}
	\item $\ProofSetup(1^\secParam, 1^n)\to\crs$. On input the security parameter $\secParam$ and the instance size $n$, the setup algorithm outputs a common reference string $\crs$.
	\item $\ProofProve(\crs, x,w)\to\pi$. On input the $\crs$ and an instance-witness pair $(x,w)\in R_{\LS}$, the prover algorithm outputs a proof $\pi$.
	\item  $\ProofVerify(\crs, x,\pi)\to b$. On input the $\crs$, an instance $x$, and a proof $\pi$, the verification algorithm outputs a bit $b\in\zo$.
\end{itemize}
\end{definition}

\noindent
We require the argument system to satisfy the following properties:
\begin{enumerate}
	\item \textbf{Succinctness:} $|\pi| = \poly(\log n,\secParam)$ for all $(x,w)\gets\yesD(1^n)$.
	\item \textbf{Completeness:} For any instance-witness pair $(x,w)$ in the support of $\yesD$, it holds that
	\begin{align*}
	\pr{\ProofVerify(\crs,x,\pi)=1~\mid~ \crs\gets\ProofSetup(1^\secParam,1^n), \pi\gets\ProofProve(\crs,x,w)}\geq 1-\negl(n,\secParam).
	\end{align*}

	\item \textbf{Average-case soundness:} For any non-uniform PPT prover $\PS^\ast$, it holds that
	\begin{align*}
	\pr{\ProofVerify(\crs,x,\pi)=1~\mid~ \crs\gets\ProofSetup(1^\secParam,1^n), x \gets\noD(1^n), \pi \gets \PS^*(\crs,x)}\leq\negl(n,\secParam).
	\end{align*}
\end{enumerate}

\paragraph{SRDS based on multi-signatures.}
We consider implications of SRDS constructions based on an underlying multi-signature scheme (see \cref{sec:ms}) in the following sense. While rigorously specifying the notion is rather involved, at a high level, such a  scheme is one that satisfies three natural properties:
\begin{enumerate}
	\item \textbf{Structure:} The aggregate SRDS signature is a pair $(\sigmams,\pi)$, where $\sigmams$ is a multi-signature and $\pi$ is some (small) auxiliary information (of size $\tilde{O}(1)$).
	\item \textbf{Completeness:} Given a valid multi-signature $\sigmams$ on a message $m$ corresponding to a sufficiently large subset of keys $\sset{\vk_i}_{i\in S}$, together with knowledge of the subset $S$, it is easy to compute a valid SRDS signature certifying $m$.
	\item \textbf{Soundness:} Given a set of honestly generated verification keys, it is difficult to output a verifying SRDS signature $(\sigmams,\pi)$ on a message $m$ such that the multi-signature $\sigmams$ does not verify on $m$ against any sufficiently large subset of keys.
\end{enumerate}
In order to prove that sufficiently many parties agree on a message $m$, it suffices to certify that there exists an $s$-size subset of parties (where $s$ is sufficiently large) who agree on the same message~$m$. Therefore, moving forward for SRDS based on multi-signatures, we only focus on proving that exactly $s$ parties agree on a particular message.
We now formalize SRDS based on multi-signatures.

\begin{definition}[SRDS based on multi-signatures]\label{def:SRDS_multisig}
An SRDS scheme $\Pi=(\TSCR(1^\secParam,1^n),$ $\TSGen,$ $\TSSignShare,$ $\TSAggr,$ $\TSVerify)$ with bare PKI, is \textsf{based on a multi-signature scheme} $(\MSSetup,\MSGen,\MSSign,\MSVerify, \MSCombine,\MSMverify)$ if there exists $\subsetsize(n)\in\Theta(n)$, for which the following hold:
\end{definition}
\begin{itemize}[leftmargin=*]
	\item \textbf{Structure.} The SRDS has the following structure:
	\begin{itemize}
		\item $\TSCR(1^\secParam,1^n)$: Outputs public parameters of the form $\ppsrds=(\ppms, \pp_2)$, where $\ppms\gets\MSSetup(1^\secParam)$ and $\pp_2$ are (potentially) additional public parameters.
		\item $\TSGen(\ppsrds)$: Parses $\ppsrds=(\ppms,\pp_2)$ and outputs $(\sk,\vk)\gets\MSGen(\ppms)$.
		\item $\TSAggr$: Any SRDS $\sigma$ output by $\TSAggr$ is of the form $(\sigmams,\pi)\in \XSms \times \zo^{\poly(\log n, \secParam)}$, where  $\sigmams\in\XSms$ is a multi-signature (in the support of $\MSCombine$).
	\end{itemize}
		
	\item \textbf{Completeness.}
	There exists a PPT algorithm $\progP$, such that with overwhelming probability (in $(n,\secParam)$) over honestly sampled $\ppsrds=(\ppms,\pp_2) \gets \TSCR(1^\secParam,1^n)$ and independently sampled verification keys $(\vk_i,\cdot)\gets\TSGen(\ppsrds)$ for $i\in[n]$, the following holds:
	
	Let $m\in\MS$, let $S\subseteq[n]$ of size $\subsetsize(n)$, and let $\sigmams\in\XSms$ be a multi-signature satisfying $\MSMverify(\ppms,\vk_1,\ldots,\vk_n, S,m,\sigmams)=1$. Then, with overwhelming probability (in $(n,\secParam)$) over the auxiliary information $\pi\gets\progP(\ppsrds,\vk_1,\ldots,\vk_n,S,m,\sigmams)$, it holds that $\TSVerify(\ppsrds,\vk_1,\ldots,\vk_n,m,(\sigmams,\pi))=1$.

\item \textbf{Soundness.}
Every non-uniform polynomial-time adversary $\Adv$ wins the following experiment with at most negligible probability (in $(n,\secParam)$):
\begin{enumerate}
	\item The challenger samples $\ppsrds=(\ppms,\pp_2) \gets  \TSCR(1^\secParam,1^n)$ and for every $i\in[n]$, sets $(\vk_i,\sk_i)\gets\TSGen(\ppsrds)$.
	\item The challenger gives \Adv the values $(\ppsrds,\vk_1,\ldots,\vk_n)$ and get back $(m, (\sigmams,\pi))$.
	\item The adversary wins the game if and only if $\TSVerify(\ppsrds,\vk_1,\ldots,\vk_n,m,(\sigmams,\pi))=1$ and, in addition, there does not exist a subset $S\subseteq[n]$ of size $\subsetsize(n)$, such that $\MSMverify(\ppms,\vk_1,\ldots,\vk_n, S,m,\sigmams)=1$.
\end{enumerate}
(Observe that the output of this experiment is not necessarily efficiently computable.)
\end{itemize}

Note that for our purposes it will suffice to consider soundness against an adversary $\Adv$ who does not have access to a subset of keys $\sset{\sk_i}_{i\in S}$ or to a signing oracle. This is a weaker requirement than a comparable soundness guarantee when given corrupted secret keys, which means a barrier against such primitive is stronger.

\subsection{Multi-signatures of \texorpdfstring{\citet{LOSSW13}}{Lg} and Subset-Product}\label{sec:snark1}
We proceed to show that any SRDS based on the multi-signature scheme of \citet{LOSSW13} (``LOSSW'') as defined above implies SNARGs for a natural average-case version of Subset-Product. Intuitively, an LOSSW multi-signature $\sigma_{\ms}$ for a set of parties $S\subseteq[n]$ is equivalent to a single signature under the \emph{product} of the verification keys $\prod_{i\in S}\vk_i$. In turn, existence of a large set of approving parties $S$ for $\sigma_{\ms}$ is equivalent to existence of a large set of verification keys $\{\vk_i\}_{i\in S}$ for which $\prod_{i\in S}\vk_i$ takes a particular desired value determined by $\sigma_{\ms}$.

\medskip
The multi-signature scheme of \citet{LOSSW13}, is based on the \emph{bilinear computational Diffie-Hellman} (BCDH) assumption, parametrized by a bilinear map $e:\GG\times\GG\to\GG_T$, where $\GG$ and $\GG_T$ are multiplicative cyclic groups of order $p$.
In \cref{lossw}, we formally describe the LOSSW multi-signature scheme; their scheme roughly works as follows:

\begin{construction}[LOSSW Multi-signatures \cite{LOSSW13}] ~
\begin{itemize}[leftmargin=*]
\item $\MSSetup(1^\secParam)$: The setup algorithm outputs public parameters $\ppms=(\GG, \GG_T, p, g, e)$.
\item $\MSGen(1^\secParam)$: The key-generation algorithm outputs a signing key $\sk\in\ZZ_p$, and the corresponding verification key $\vk\in\GG_T$, computed as $e(g,g)^{\sk}$.
\item $\MSSign(\ppms,\sk,m)$: There is a deterministic function $\fmsg$ that takes the public parameters $\ppms$ and the message $m\in\MS$ as input and outputs an element in $\GG$ (see \cref{lossw} for full specification). Given this $\fmsg$, a signature on $m$ with secret key $\sk$ is generated by sampling $r\gets\ZZ_p$ and computing $\sigma=(\sig_1,\sig_2)$ as follows:
    \[
    \sig_1=g^\sk\cdot \left(\fmsg(\ppms,m)\right)^r~\text{and}~\sig_2=g^r.
    \]
\item $\MSCombine(\ppms, \vk_1,\ldots,\vk_n,\sset{\sigma_i}_{i\in S},m)$: Given a set of individual signatures $\sset{\sigma_i}_{i\in S}$ on a message $m\in\MS$, the combine function parses each $\sigma_i$ as $(\sig_1^{(i)},\sig_2^{(i)})$ and computes:
    \[
    \sig_1=\prod_{i\in S} \sig_1^{(i)}\text{ and }\sig_2=\prod_{i\in S}\sig_2^{(i)}.
    \]
    The output is the combined multi-signature $\sigmams=(\sig_1,\sig_2)$.
	
\textbf{Remark.} Recall that $\sig_1^{(i)}$ and $\sig_2^{(i)}$ for each $i\in S$, is of the form $$\sig_1^{(i)}=g^{\sk_i}\cdot \left(\fmsg(\ppms,m)\right)^{r_i}~\text{and}~\sig_2^{(i)}=g^{r_i}$$
	for some $r_i\in\ZZ_p$. Therefore,
	\[
	\sig_1=g^{\sk^*}\cdot \left(\fmsg(\ppms,m)\right)^{r^*}\text{ and }\sig_2=g^{r^*},
	\]
	where $\sk^*=\sum_{i\in S}{\sk_i}$ and $r^*=\sum_{i\in S}r_i$. Note that the multi-signature $\sigmams=(\sig_1,\sig_2)$ can now be viewed as an individual signature on $m$ corresponding to secret key $\sk^*$ and randomness~$r^*$.
	\item $\MSMverify(\ppms,\vk_1,\ldots,\vk_n,S,m,\sigmams)$: Given a message $m$, a multi-signature $\sigmams$ and the corresponding set $S$ of verification keys, the verification algorithm outputs 1 if and only if
	\[
	e(\sig_1,g)\cdot e(\sig_2,\fmsg(\ppms,m))^{-1}=\prod_{i\in S} \vk_{i}.
	\]
	Note that the same algorithm can be used to verify individual signatures (with $S=\sset{i}$).
\end{itemize}
\end{construction}

\paragraph{Average-case subset-product problem.}
We proceed to show a connection between any SRDS based on LOSSW multi-signatures, and the following average-case version of the Subset-Product problem.

\begin{definition}[average-case subset-product problem]\label{def:subset-product}
Let $\subsetsize=\subsetsize(n)$ be an integer and let $\GG$ be a multiplicative group.
Given an instance $x=(a_1,\ldots,a_n,t)\in \GG^{n+1}$, the \textsf{$(\subsetsize,\GG)$-Subset-Product problem} is the problem of deciding if there exists a subset $S\subseteq[n]$ of size $\ssize{S}=\subsetsize$, such that $\prod_{i\in S}a_i=t$. All such instances are said to be in the $(s,\GG)$-Subset-Product language $\LS_{\times}$.

We consider the average-case version of this problem characterized by the following two distributions:
\begin{enumerate}
	\item $\yesD(1^n)\to(x,w)$:
    For $i\in[n]$, sample $a_i\in \GG$ uniformly at random. Sample a set $S\subseteq[n]$ of size $\subsetsize$ uniformly at random. Set $t=\prod_{i\in S}a_i$. Output $x=(a_1,\ldots,a_n,t)$ and $w=S$.
	\item $\noD(1^n)\to x$:
    For $i\in[n]$, sample $a_i\in \GG$ uniformly at random. Sample a target $t\in \GG$ uniformly at random. Output $x=(a_1,\ldots,a_n,t)$.
\end{enumerate}
\end{definition}
Note that for appropriate parameter regimes, $\noD$ yields instances $(x\notin\LS_{\times})$ with high probability.
For example, consider $s=n/2$ and $\GG=\ZZ_M^*$ for $M=2^{4n}$: the probability that there exists a subset $S$ such that $\prod_{i\in S} a_i$ is equal to a randomly chosen value $t$ is approximately $2^{-2n}$.
\paragraph{Remark.}
Subset-Product is a well-studied problem, with known NP-hardness results in the worst case, and conjectured hardness in the average-case version considered above.
\begin{itemize}[leftmargin=*]
	\item \textbf{Hardness of worst-case subset product:}
	For $\GG=\ZZ^*_M$ and $s\in\Theta(n)$, the hardness of \emph{worst-case} $(s,\GG)$-Subset-Product depends on the density $n/\log M$ of the instance. For $M=2^{\Theta(n)}$ (\ie $n/\log M\in\Theta(1)$), there exists $s\in\Theta(n)$ for which the $(\subsetsize,\ZZ^*_M)$-Subset-product is NP-complete~\cite{Karp72,GJ79,IN96,LPS10}.\footnote{The Subset-Sum problem for arbitrary subset-sizes in this parameter regime was amongst the initial 21 problems that were shown to be NP-complete by \citet{Karp72}. There exists a generic reduction to reduce any instance of 3-SAT to an instance of the Subset-Sum problem for arbitrary subset-sizes. A slight modification to this reduction shows that there exists some $s\in\Theta(n)$ for which $(s,\ZZ_M)$-Subset-Sum problem is also NP-complete. There also exists a generic reduction from any instance of the $(s,\ZZ_M)$-Subset-Sum problem, where $\ZZ_M$ is an additive group of order $M$, to an instance of the $(s,\GG_M)$-Subset-Product problem, where $\GG_M$ is a cyclic multiplicative group of order $M$ (with efficient exponentiation).}
	\item \textbf{Hardness of average-case subset product:}
	 The average-case version of Subset-Product is thought to be computationally hard when $n/\log M$ is constant or even $O(1/\log n)$ \cite{IN96}, with the best known algorithms requiring at least $2^{\Omega(n)}$ time. Hardness of distinguishing between distributions $\yesD$ and $\noD$ as above is used (in an indirect way) as a computational hardness assumption in an assortment of cryptographic systems \cite{IN96,AD97,Reg03,Reg05,Pei09,LPS10}.\footnote{This follows from the constructions in \cite{IN96,AD97,Reg03,Reg05,Pei09,LPS10} based on the Subset-Sum problem.}
\end{itemize}

\paragraph{LOSSW-based SRDS implies SNARGs for average-case subset-product.}
We now show that an SRDS scheme based on the LOSSW multi-signature scheme implies the existence of SNARGs for average-case Subset-Product over the target group $\GG_T$ for the underlying bilinear map $e:\GG \times \GG \to\GG_T$. Intuitively, the construction takes the following form.

Given an instance $x=(a_1,\ldots,a_n,t)\in\GG_T^{n+1}$ (coming from either $\yesD$ or $\noD$), we will interpret $a_1,\ldots,a_n$ and $a_{n+1}=t^{-1}$ as $n+1$ \emph{verification keys} $\sset{\vk_i}_{i\in[n+1]}$ for the SRDS scheme. The succinct proof certifying that $x$ is in the language will be an SRDS signature on a message $m\in\MS$ with respect to the set of parties $S\cup\sset{n+1}$ for which $\prod_{i\in S} a_i=t$. The scheme is succinct by construction; the required completeness and average-case soundness properties will hold as follows:
\begin{itemize}[leftmargin=*]
	\item \emph{Completeness:} If $x$ was generated as $(x,w)\gets\yesD$ for $w=S\subseteq[n]$, then by definition $\prod_{i\in S}a_i=t$ and consequently $t^{-1}\cdot\prod_{i\in S}a_i=1$. Knowledge of the corresponding set of verification keys thus enables the prover to generate a valid LOSSW multi-signature under these keys, using the \emph{trivial} $\sk^*=0$ for $\vk^*=\prod_{i\in S\cup\sset{n+1}}\vk_i=1$. By the completeness of the LOSSW-based SRDS (\cref{def:SRDS_multisig}), the prover can then translate this multi-signature to a valid SRDS.
	\item \emph{Average-case Soundness:} On the other hand, if $x$ was generated as $x\gets\noD$, then since $t$ and consequently $t^{-1}$ is uniform conditioned on $a_1,\ldots,a_n$, the resulting verification keys $\sset{\vk_i}_{i\in[n+1]}$ are \emph{jointly uniform}. Thus (for appropriate parameters $n$ and $\ssize{\GG_T}$), soundness of the argument system holds from the soundness property of the LOSSW-based SRDS (see \cref{def:SRDS_multisig}).
\end{itemize}

\begin{lemma}\label{lem:lossw-snargs}
Assume there exists an SRDS scheme based on the LOSSW multi-signature scheme, where $\TSSetup(1^\secParam,1^n)$ generates $\ppms=(\GG, \GG_T, p, g,e)$, as per \cref{def:SRDS_multisig}, with $n/\log|\GG_T|<1$. Let $0<\alpha<1$ be a constant and let $s(n)=\alpha \cdot n$.
Then, there exist SNARGs for average-case $(\subsetsize(n),\GG_T)$-Subset-Product (as defined in \cref{def:subset-product}).
\end{lemma}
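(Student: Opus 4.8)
The plan is to use the SRDS scheme directly as the SNARG, interpreting a Subset-Product instance as a list of verification keys. Concretely, the setup algorithm $\ProofSetup(1^\secParam,1^n)$ samples $\ppsrds=(\ppms,\pp_2)\gets\TSCR(1^\secParam,1^{n+1})$ and outputs $\crs=\ppsrds$; we also fix once and for all a canonical message $m\in\MS$. Given an instance $x=(a_1,\ldots,a_n,t)\in\GG_T^{n+1}$ (from either $\yesD$ or $\noD$), define the $n+1$ verification keys $\vk_i:=a_i$ for $i\in[n]$ and $\vk_{n+1}:=t^{-1}$. A SNARG proof for $x$ is a verifying SRDS signature $(\sigmams,\pi)$ on $m$, and $\ProofVerify(\crs,x,(\sigmams,\pi))$ simply runs $\TSVerify(\ppsrds,\vk_1,\ldots,\vk_{n+1},m,(\sigmams,\pi))$. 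Succinctness is immediate from the SRDS syntax, since $\sigmams\in\XSms$ and $\pi\in\zo^{\poly(\log n,\secParam)}$. Throughout, the subset size of the Subset-Product problem is taken to be $\subsetsize(n+1)-1\in\Theta(n)$, with $\subsetsize(\cdot)$ the parameter guaranteed by \cref{def:SRDS_multisig}; this matches the $\Theta(n)$ regime in the statement up to the additive $1$ coming from the extra coordinate $a_{n+1}=t^{-1}$.

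For completeness, let $(x,S)\gets\yesD$, so $\prod_{i\in S}a_i=t$ and $\ssize{S}=\subsetsize(n+1)-1$; set $S':=S\cup\sset{n+1}$, which has size $\subsetsize(n+1)$ and satisfies $\prod_{i\in S'}\vk_i=t\cdot t^{-1}=1$. The prover produces a multi-signature valid for the key-set $S'$ using the \emph{trivial} aggregate secret key $0$: sample $r^\ast\gets\ZZ_p$ and set $\sigmams=(\sig_1,\sig_2)$ with $\sig_1=\fmsg(\ppms,m)^{r^\ast}$ and $\sig_2=g^{r^\ast}$; by symmetry of the pairing, $e(\sig_1,g)\cdot e(\sig_2,\fmsg(\ppms,m))^{-1}=1=\prod_{i\in S'}\vk_i$, so $\MSMverify$ accepts. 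Applying the algorithm $\progP$ from the completeness clause of \cref{def:SRDS_multisig} to $(\ppsrds,\vk_1,\ldots,\vk_{n+1},S',m,\sigmams)$ then yields $\pi$ for which $\TSVerify$ accepts with overwhelming probability. The only subtlety is that \cref{def:SRDS_multisig} quantifies completeness over \emph{independently} honestly-sampled keys, whereas under $\yesD$ the key $\vk_{n+1}=(\prod_{i\in S}a_i)^{-1}$ is correlated with the rest; I would dispatch this by noting that for $\ssize{S}=\Theta(n)$ the induced joint distribution of $(\vk_1,\ldots,\vk_{n+1})$ is statistically close to uniform (a character-sum argument shows the sum of a random $\Theta(n)$-subset of uniform exponents is statistically close to uniform for all but a negligible fraction of exponent tuples), so the negligible ``bad-key'' set in the SRDS completeness statement is hit only with negligible probability.

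For average-case soundness I would reduce to the soundness clause of \cref{def:SRDS_multisig} together with the unforgeability of LOSSW multi-signatures (which holds under BCDH in the bilinear group contained in $\ppms$). Suppose $\PS^\ast$ breaks average-case soundness: on $\crs$ and $x\gets\noD$ it outputs a verifying proof $(\sigmams,\pi)$ with non-negligible probability $\epsilon$. Build an SRDS-soundness adversary $\Adv$: given honestly-sampled $(\ppsrds,\vk_1,\ldots,\vk_{n+1})$, which are jointly uniform in $\GG_T$ because $\TSGen$ coincides with $\MSGen$, set $a_i:=\vk_i$ and $t:=\vk_{n+1}^{-1}$, so that $x=(a_1,\ldots,a_n,t)$ is distributed exactly as $\noD(1^n)$; run $\PS^\ast(\crs,x)$ and output $(m,(\sigmams,\pi))$. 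Since $\ProofVerify$ accepting is literally $\TSVerify$ accepting, the SRDS soundness guarantee says that, except with negligible probability, whenever $\TSVerify$ accepts there \emph{exists} a size-$\subsetsize(n+1)$ set $S'\subseteq[n+1]$ with $\MSMverify(\ppms,\vk_1,\ldots,\vk_{n+1},S',m,\sigmams)=1$, i.e.\ $\prod_{i\in S'}\vk_i=e(\sig_1,g)\cdot e(\sig_2,\fmsg(\ppms,m))^{-1}=:v$. Producing such a $\sigmams$ when $v\neq 1$ means forging an LOSSW multi-signature under the unknown aggregate key $\prod_{i\in S'}\vk_i$, contradicting LOSSW unforgeability; hence, except with negligible probability, $v=1$, i.e.\ $\prod_{i\in S'}\vk_i=1$ for some $\subsetsize(n+1)$-subset $S'$. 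Now split on whether $n+1\in S'$: if so, then $\prod_{i\in S'\setminus\sset{n+1}}a_i=t$ exhibits a size-$(\subsetsize(n+1)-1)$ subset-product solution, so $x\in\LS_{\times}$, which happens only with negligible probability for $x\gets\noD$ in the density regime $n/\log\ssize{\GG_T}<1$; if not, then $\prod_{i\in S'}a_i=1$ for $S'\subseteq[n]$, and a union bound over the at most $2^{(1+o(1))n}$ candidate subsets against the uniform product in a group of size $\ssize{\GG_T}>2^n$ is again negligible. Hence $\epsilon$ is negligible, a contradiction.

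The main obstacle is exactly the gap in the soundness argument between what the SRDS soundness clause yields — the \emph{existence} of a large key-subset validating the multi-signature against an adversarially-chosen target $v$ — and what the Subset-Product language demands, namely a subset-product equal to the \emph{prescribed} target $t$. Bridging this is where LOSSW unforgeability is essential (to force $v=1$) and where the density restriction $n/\log\ssize{\GG_T}<1$ is used twice (to kill the ``$x$ accidentally lands in $\LS_{\times}$'' event and the ``$n+1\notin S'$'' event). Making the union bound genuinely negligible rather than merely small is the one calculation that pins down the exact relationship between $\alpha$, $\subsetsize$, and $\ssize{\GG_T}$; the completeness direction is routine modulo the key-distribution mismatch noted above.
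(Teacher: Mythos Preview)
Your construction differs from the paper's in one crucial respect: your $\ProofVerify$ only runs $\TSVerify$, whereas the paper's verifier \emph{additionally} checks that $\sigmams=(\sig_1,\sig_2)$ satisfies
\[
e(\sig_1,g)\cdot e(\sig_2,\fmsg(\ppms,m))^{-1}=1,
\]
i.e.\ that the aggregate key determined by $\sigmams$ is the identity. This extra check is precisely what makes soundness go through using only the SRDS soundness clause, with no appeal to LOSSW unforgeability at all: if both checks pass then, by uniqueness of the aggregate key in LOSSW verification, any $S'$ guaranteed to exist by SRDS soundness must satisfy $\prod_{i\in S'}\vk_i=1$; under $\noD$ the $\vk_i$ are jointly uniform, and a union bound over the $\le 2^{n+1}$ subsets (using $n/\log|\GG_T|<1$) shows no such $S'$ exists except with negligible probability. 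No case split on whether $n+1\in S'$ is needed.

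Your attempt to replace that check by an appeal to LOSSW unforgeability does not work. The step ``producing $\sigmams$ with $v\neq 1$ means forging an LOSSW multi-signature'' is false as stated: an adversary can produce a syntactically valid LOSSW signature against \emph{any} aggregate key $v=e(g,g)^a$ for which it knows $a$ (just set $\sig_1=g^a\cdot\fmsg(\ppms,m)^r$, $\sig_2=g^r$), so $v\neq 1$ is not by itself a forgery. What \emph{would} be a forgery is outputting $\sigmams$ together with a \emph{specific} set $S'$ of honest keys for which $\MSMverify$ accepts --- but your reduction cannot efficiently recover $S'$: the SRDS soundness clause guarantees only its \emph{existence}, and finding it given $v$ and the $\vk_i$'s is exactly a subset-product search. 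Without establishing $v=1$, both branches of your subsequent case analysis (``$n+1\in S'$'' and ``$n+1\notin S'$'') are vacuous, since they each presuppose $\prod_{i\in S'}\vk_i=1$.

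On completeness: you are right to flag the key-distribution mismatch, but your statistical-closeness patch fails in this parameter regime. For fixed $(\vk_1,\ldots,\vk_n)$ and uniform $S$ of size $s$, the value $(\prod_{i\in S}\vk_i)^{-1}$ ranges over at most $\binom{n}{s}\le 2^n$ points, while $|\GG_T|>2^n$; so the conditional law of $\vk_{n+1}$ is supported on a vanishing fraction of $\GG_T$ and the joint distribution is statistically \emph{far} from uniform on $\GG_T^{n+1}$. The paper glosses over this same point; the clean fix is to strengthen the completeness clause of \cref{def:SRDS_multisig} to hold for all key tuples rather than for an overwhelming fraction.
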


\begin{proof}
We construct average-case SNARGs for $(\subsetsize,\GG_T)$-Subset-Product using an SRDS scheme based on the LOSSW multi-signature scheme as per \cref{def:SRDS_multisig}. Recall that the LOSSW multi-signature scheme is parametrized by a bilinear map $e:\GG\times\GG\to\GG_T$, where $\GG$ and $\GG_T$ are multiplicative cyclic groups of order $p$. Let $\MS$ be the message domain of the LOSSW multi-signature scheme.
\begin{enumerate}
    \item
    $\ProofSetup(1^\secParam,1^n):$ Run the setup of the SRDS scheme $\ppsrds=(\ppms,\pp_2)\gets\TSSetup(1^\secParam,1^n)$ and output $\crs=\ppsrds$.
 	\item $\ProofProve(\crs,x,w):$
    Given an average-case $\yes$ instance-witness pair, $(x,w)\gets\yesD(1^n)$ of the form $x=(a_1,\ldots,a_n,t)$ and $w=S$, proceed as follows:
    \begin{itemize}[leftmargin=*]
    	\item Parse $\crs=(\ppms,\pp_2)$ and interpret the set $\sset{a_1,\ldots,a_n,t^{-1}}$ as a set of $(n+1)$ verification keys $\sset{\vk_1,\ldots,\vk_{n+1}}$. Note that $\prod_{i\in S'}\vk_i=1$ for $S'=S\cup\sset{n+1}$.
    	\item Since in the LOSSW multi-signature scheme, (aggregate) verification key $\vk^*=\prod_{i\in S'}\vk_i$ corresponds to a valid signing key $\sk^*=\sum_{i\in S'}\sk_i$, where $\vk^*=e(g,g)^{\sk^*}$, it holds that if $\vk^*=1$, then $\sk^*=0$. Choose an arbitrary $m\in\MS$ and sample $r\gets\ZZ_p$. Compute an LOSSW signature $\sigmams=(\sig_1,\sig_2)$ on $m$ with respect to $\vk^*=1$, where
   	\[
   	\sig_1=g^{0}\cdot\left(\fmsg\left(\ppms,m\right) \right)^r \quad \text{ and } \quad \sig_2=g^r.
    \]
    \item  Use the algorithm $\progP$ (that is guaranteed to exist by \cref{def:SRDS_multisig}) to compute the auxiliary information $\pi\gets\progP(\ppsrds,\vk_1,\ldots,\vk_{n+1}, S', m,\sigmams)$ from the signature $\sigmams$ and the set $S'\subseteq[n+1]$.
    \item Finally output $(m,\sigmams,\pi)$.
    \end{itemize}

 	\item $\ProofVerify(\crs,x,(m,\sigmams,\pi)):$ Parse $x=(a_1,\ldots,a_n,t)$ and proceed as follows:
 	\begin{itemize}[leftmargin=*]
 		\item Parse $\crs=(\ppms,\pp_2)$ and interpret the set $\sset{a_1,\ldots,a_n,t^{-1}}$ as a set of $(n+1)$ verification keys $\sset{\vk_1,\ldots,\vk_{n+1}}$.
 		\item  Run the verification algorithm of the LOSSW multi-signature scheme with respect to combined verification key $\vk^*=1$, \ie parse $\sigmams=(\sig_1,\sig_2)$ and check if
 		\[
 		e(\sig_1,g)\cdot e(\sig_2,\fmsg(\ppms,m))^{-1}=1.
 		\]
 		In other words, compute
 		\[
 		b' = \Bigg\{\begin{array}{lr}
 		1, \text{ if } e(\sig_1,g)\cdot e(\sig_2,\fmsg(\ppms,m))^{-1}=1\\
 		0, \text{ otherwise } \\
 		\end{array}.
 		\]
 		\item Run the verification algorithm of the SRDS scheme on $(\sigmams,\pi)$ with respect to $m$:
        \[
        b\gets\TSVerify(\ppsrds,\vk_1,\ldots,\vk_{n+1},m,(\sigmams,\pi)).
        \]
 		\item Output $b\wedge b'$.
 	\end{itemize}
\end{enumerate}
We now argue succinctness, completeness, and average-case soundness for this construction:

\smallskip\noindent\textbf{Succinctness.}
Succinctness follows from the succinctness of the SRDS scheme.

\smallskip\noindent\textbf{Completeness.}
Given any average-case $\yes$ instance-witness pair $(x,w)\gets\yesD(1^n)$, with $x=(a_1,\ldots,a_n,t)$ and $w=S$, it holds that $\prod_{i\in S} a_{j} =t$ or equivalently $t^{-1}\cdot\prod_{i\in S} a_{j} =1$. Let $S'=S\cup\sset{n+1}$. Recall that in the LOSSW scheme
\[
\prod_{i\in S'} \vk_{i}
= \prod_{i\in S'} e(g,g)^{\sk_{i}}
= e(g,g)^{\sum_{i\in S'} \sk_{i}},
\]
where $\sk_i$ is the secret key associated with $\vk_i$. It follows that if $\prod_{i\in S'} \vk_{i} =1$ then $\sum_{i\in S'} \sk_{i} =0$.
Hence, $\sigmams=\left(\left(\fmsg\left(\ppms,m\right)\right)^r, g^r\right)$ is a valid multi-signature on $m$ with respect to $\sset{\sk_i}_{i\in S'}$.
Since the multi-signature verifies  $\MSMverify(\ppms,\vk_1,\ldots,\vk_{n+1}, S', m, \sigmams)=1$, completeness of SRDS based on a multi-signature scheme (see \cref{def:SRDS_multisig}) implies that the output of $\progP$, given this signature and $S'$, will be a valid SRDS signature.

\smallskip\noindent\textbf{Average-case soundness.}
Recall that each of the values $(a_1,\ldots,a_n,t)$ in $x\gets\noD(1^n)$ are sampled uniformly at random. Since  $t$ is a randomly sampled value, so is $t^{-1}$. Therefore, the verification keys $\sset{\vk_1,\ldots,\vk_n,\vk_{n+1}}$, where $\vk_i=a_i$ for $i\in[n]$ and $\vk_{n+1}=t^{-1}$, are uniformly distributed over $\GG_T^{n+1}$.
Since by assumption, $n/\log |\GG_T|<1$, it holds with overwhelming probability (bounded by ${2^{n+1}}/{|\GG_T|}$) that there does not exist a subset $S'\subseteq[n+1]$ of size $\subsetsize+1$, such that $\prod_{i\in S'}\vk_i=1$.

Given $(m,\sigmams,\pi)$, we check if: (1) $\sigmams$ is a valid multi-signature on $m$ with respect to $\vk^*=1$, and (2) if $(\sigmams,\pi)$ is a valid SRDS on $m$. Recall that given a multi-signature $\sigma_{\ms}=(\sig_1,\sig_2)$, a message $m$, the public parameters $\pp_{\ms}$, and a set of verification keys $\{\vk_i\}_{i\in S}$, the verification algorithm of the LOSSW multi-signature scheme checks if
\[
e(\sig_1,g)\cdot e(\sig_2,\fmsg(\ppms,m))^{-1}=\prod_{i\in S} \vk_{i}.
\]
In other words, given a valid multi-signature $\sigma_{\ms}$ on a message $m$, there exists a unique aggregate verification key $\prod_{i\in S} \vk_{i}$ for which $\sigma_{\ms}$ verifies. Therefore, if check (1) goes through, then $\vk^*=1$ is the only aggregate verification key for which $\sigma_{\ms}$ is a valid multi-signature on $m$. As argued earlier, with a high probability there does not exist a subset $S'\subseteq[n+1]$ such that $\Pi_{i\in S'}\vk_i=1$. Also, from the soundness of SRDS based on a multi-signature scheme (\cref{def:SRDS_multisig}), we know that if there does not exist a subset $S'\subseteq[n+1]$ of size $s+1$, such that $\sigmams$ is a valid multi-signature on $m$ with respect to $\sset{\vk_i}_{i\in S'}$, then the probability of an adversary computing a valid SRDS $(\sigmams,\pi)$ on a message $m$ is negligible. Soundness now follows from the soundness of SRDS based on a multi-signature scheme.
\QED
\end{proof}

\subsection{General Multi-Signatures and the Subset-\texorpdfstring{$f$}{Lg} Problem}\label{sec:snark-2}
Although the proof of \cref{lem:lossw-snargs} depends on the specific LOSSW multi-signature scheme, the overall approach only depends on certain properties of that scheme; in particular, there is no inherent reliance on the structure of \emph{multiplication} of keys and Subset-Product.
Motivated by this observation, in this section, we start by exploring hardness of \emph{Subset-$f$ problems} for a more general class of functions $f$, focusing on the class of \emph{elementary symmetric polynomials} $\phi_{\ell}$. We begin by demonstrating (worst-case) NP-hardness results for Subset-$\phi_{\ell}$. We then abstract out the properties used in \cref{lem:lossw-snargs} (deemed ``SNARG-compliance''), and show that existence of SRDS based on a SNARG-compliant multi-signature scheme implies existence of SNARGs for corresponding Subset-$\phi_{\ell}$ problems.

\paragraph{The subset-$f$ problem.}
We first define the following analogous variant of average-case Subset-Product problem for more general functions $f$. We restrict our attention to the natural setting of symmetric functions $f$; one can extend to arbitrary $f$, \eg given a canonical ordering of inputs.

\begin{definition}[average-case subset-$f$]\label{def:subset-f}
Let $\subsetsize=\subsetsize(n)$ be an integer, let $R$ be a ring, and let $f:R^\subsetsize\to R$ an efficiently computable symmetric function.
Given an instance $x=(a_1,\ldots,a_n,t)\in R^{n+1}$, the \textsf{$(\subsetsize,R)$-Subset-$f$ problem} is the problem of deciding if there exists a subset $S\subseteq[n]$ of size $\ssize{S}=\subsetsize$, such that $f((a_i)_{i\in S})=t$.
Such instances are said to be in the $(s,R)$-Subset-$f$ language $\LS_f$.

We consider the average-case version of this problem characterized by the following two distributions:
\begin{enumerate}
	\item $\yesD(1^n)\to(x,w)$: For each $i\in[n]$, sample $a_i\in R$ uniformly at random. Sample a set $S\subseteq[n]$ of size $\subsetsize$ uniformly at random. Set $t=f((a_i)_{i\in S})$.  Output $x=(a_1,\ldots,a_n,t)$, $w=S$.
	\item $\noD(1^n)\to x$:  For each $i\in[n]$, sample $a_i\in R$ uniformly at random. Sample a target $t\in R$ uniformly at random. Output $x=(a_1,\ldots,a_n,t)$.
\end{enumerate}
We also consider a variant of the $(s,R)$-Subset-$f$ problem,  where the instance does not include the size of the subset, \ie given an instance $x=(a_1,\ldots,a_n,t)\in R^{n+1}$, the $R$-{\sf Subset-$f$ problem} is the problem of deciding if there exists a subset $S\subseteq[n]$ of any size such that $f((a_i)_{i\in S})=t$.
\end{definition}

Note that Subset-$f$ is within NP for any function $f$ describable by a polynomial-size circuit. For appropriate parameter regimes, the hardness of Subset-$f$ problems depends on the function~$f$. In \cref{thm:sym-poly-nphard}, we show that for rings (of appropriate size) with Hadamard product, Subset-$f$ for all \emph{elementary symmetric polynomials} $f$ is NP-complete.

\paragraph{NP-hardness of subset-$\phi_{\ell}$.} Recall that Hadamard product (also known as entry-wise product) takes two vectors of the same dimension and produces another vector of matching dimension where the \ith element of the resulting vector is a product of the \ith elements of the two input vectors.
\begin{definition}[Hadamard product]
Let $\field$ be a field and let $\vec{a}=(a_1\ldots,a_n),\vec{b}=(b_1\ldots,b_n)\in \field^n$ be vectors of length $n$. The \textsf{Hadamard product} of $\vec{a}$ and $\vec{b}$ is the vector $\vec{a}\odot \vec{b}=(a_1 b_1,\ldots, a_n b_n)\in \field^n$.
\end{definition}

\noindent
We now define elementary symmetric polynomials.

\begin{definition}[elementary symmetric polynomials]
Let $n\in\NN$ and $\ell\in[n]$.
The elementary symmetric polynomial $\phi_\ell(x_1,\ldots,x_n)$ is defined as:
\[
\phi_\ell(x_1,\ldots,x_n)=\sum_{1\leq j_1<\ldots< j_\ell\leq n} x_{j_1}\cdot\ldots\cdot x_{j_\ell}.
\]
\end{definition}

In the following theorem, we show that for certain rings $R$ that admit Hadamard product, and any elementary symmetric polynomial $\phi_\ell$, the $(\subsetsize,R)$-Subset-$\phi_\ell$ problem is NP-complete. In particular, we show this for suitably sized rings of the form $R=\field^n$, where for $\ell=2$, the characteristic of the field must be at least 63 and for $\ell>2$, the characteristic of the field must be at least $\ell+2$.
\def\ThmSymPolyNPHard
{
There exists $\subsetsize(n)\in\Theta(n)$ such that, for any field $\field$ with $\characteristic(\field)\geq \indexMax(\ell+2,63)$, any ring $R=\field^n$ of size $\ssize{R}=2^{\Theta(n)}$ with Hadamard product, and any elementary symmetric polynomial $\phi_\ell$, the $(\subsetsize,R)$-Subset-$\phi_\ell$ problem is NP-complete.
}
\begin{theorem}\label{thm:sym-poly-nphard}
\ThmSymPolyNPHard
\end{theorem}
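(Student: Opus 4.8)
The plan is to reduce from the known NP-completeness of $(\subsetsize,R)$-Subset-Sum (\ie the elementary symmetric polynomial $\phi_1$) for an appropriate $\subsetsize(n)\in\Theta(n)$ and an appropriate ring, as guaranteed by the remark following \cref{def:subset-product} (via the standard Karp-style reduction from 3-SAT, modified to a fixed subset size). Containment in NP is immediate since $\phi_\ell$ is computable by a polynomial-size circuit and the witness $S$ has size $\subsetsize\le n$; the work is the hardness direction. The key idea is to \emph{encode} a Subset-Sum instance over $\field$ into a Subset-$\phi_\ell$ instance over $R=\field^n$ using the Hadamard product, so that the multiplicative cross-terms in $\phi_\ell$ can be forced to cancel or take controlled values, leaving only a linear (sum-like) constraint in the coordinate that matters.

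First I would handle the case $\ell=2$. Given a Subset-Sum instance $(b_1,\dots,b_n,\tau)\in\field^{n+1}$ with target subset size $\subsetsize$, I would build vectors $a_i\in\field^n$ whose coordinates are chosen so that, for any $S$ of size $\subsetsize$,
\[
\phi_2\big((a_i)_{i\in S}\big)=\sum_{\{i,j\}\subseteq S} a_i\odot a_j
\]
has a designated ``payload'' coordinate equal to (an affine function of) $\sum_{i\in S} b_i$, while all other coordinates are forced to a fixed constant independent of the choice of $S$. The standard trick: write $\big(\sum_{i\in S}a_i\big)^{\odot 2}=\sum_{i\in S}a_i^{\odot 2}+2\,\phi_2((a_i)_{i\in S})$ coordinatewise, so controlling $\sum_{i\in S}a_i$ in one coordinate and $\sum_{i\in S}a_i^{\odot 2}$ in another lets one solve for $\phi_2$ in a third. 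To pin $\sum_{i\in S} 1 = \subsetsize$ (the subset-size bookkeeping) and to make the ``junk'' coordinates constant, one uses extra coordinates with $a_i$ equal to fixed field elements (\eg all-ones) and solves the resulting system; the requirement $\characteristic(\field)\ge 63$ comes from needing several small constants (including $2$, and the auxiliary multipliers coming from squaring and from a gadget that forces $|S|=\subsetsize$) to be invertible and distinct, together with the size bound $\ssize{R}=2^{\Theta(n)}$ forcing $\field$ itself not too small. I would then set the Subset-$\phi_2$ target $t\in\field^n$ to the vector of these computed constants with the payload coordinate set to the image of $\tau$, and argue the two instances are equivalent: a size-$\subsetsize$ solution on one side maps to one on the other, using that the junk coordinates hold automatically for \emph{every} size-$\subsetsize$ subset.

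Next I would bootstrap from $\ell=2$ (or directly from $\ell-1$) to general $\ell$. The clean way is a telescoping identity relating $\phi_\ell$ of a set to $\phi_{\ell-1}$ and power sums via Newton's identities, which in a ring of characteristic $\ge \ell+2$ are invertible enough to express $\phi_\ell\big((a_i)_{i\in S}\big)$ as a fixed polynomial combination of $\phi_1,\dots,\phi_{\ell-1}$ on $S$ and of the power sums $p_k(S)=\sum_{i\in S} a_i^{\odot k}$. By choosing $a_i$ to be Hadamard powers $\big(b_i,b_i^2,\dots,b_i^{\ell},1,\dots\big)$ plus bookkeeping coordinates, each power sum $p_k(S)$ becomes a controllable quantity (a Subset-Sum-type expression in the $b_i^k$), and with $|S|$ fixed to $\subsetsize$ by a gadget coordinate, all the $\phi_{<\ell}(S)$ terms become \emph{constants determined solely by $\subsetsize$ and the chosen constants}, not by which $S$ is picked. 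This again collapses the $\phi_\ell$ constraint to a single affine Subset-Sum constraint in one payload coordinate, giving the reduction; the characteristic bound $\ell+2$ is exactly what Newton's identities need, and the $\max(\ell+2,63)$ just takes the worse of the two gadget requirements.

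The main obstacle I anticipate is the \emph{constant-payload} bookkeeping: making every coordinate other than the single payload coordinate evaluate to a value that does not depend on the particular size-$\subsetsize$ subset $S$. Because $\phi_\ell$ mixes coordinates multiplicatively across the chosen set, one cannot simply ``zero out'' nuisance coordinates — they must be engineered (via the all-ones and power-of-$b_i$ gadget coordinates together with the fixed $|S|=\subsetsize$ constraint) so that their $\phi_\ell$-value is a universal constant. Getting this to work simultaneously for all $\ell$ with a \emph{single} $\subsetsize(n)\in\Theta(n)$, while keeping $\ssize{R}=2^{\Theta(n)}$ (so the density of the underlying Subset-Sum instance stays in the NP-complete regime), and while only assuming $\characteristic(\field)\ge\max(\ell+2,63)$, is the delicate part; I expect the bulk of the proof to be verifying that the linear-algebraic system defining the gadget coordinates is solvable over $\field$ under exactly these characteristic hypotheses, and that the reduction remains polynomial-time (the $a_i$ have only $O(\ell+\log n)$ coordinates, so this should be fine).
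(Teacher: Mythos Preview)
Your reduction strategy has a genuine gap. You plan to express $\phi_\ell\big((a_i)_{i\in S}\big)$ via Newton's identities as a polynomial in the power sums $p_k(S)=\sum_{i\in S}a_i^{\odot k}$ (and/or lower $\phi_{<\ell}$), and then claim that ``with $|S|$ fixed to $\subsetsize$ by a gadget coordinate, all the $\phi_{<\ell}(S)$ terms become constants determined solely by $\subsetsize$ and the chosen constants, not by which $S$ is picked.'' That claim is only true in coordinates where every $a_i$ takes the \emph{same} value (your all-ones bookkeeping coordinates). In any payload coordinate where $a_i$ carries the Subset-Sum data $b_i$, both the lower $\phi_{<\ell}$ and the power sums $p_2,\dots,p_\ell$ genuinely depend on which $S$ is chosen, and Newton's identities give $\phi_\ell$ as a degree-$\ell$ polynomial in these quantities, not an affine function of $\sum_{i\in S}b_i$. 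Your $\ell=2$ sketch illustrates the problem: from $(\sum a_i)^2=\sum a_i^2+2\phi_2$ you say ``controlling $\sum a_i$ in one coordinate and $\sum a_i^2$ in another lets one solve for $\phi_2$ in a third,'' but in a Subset-$\phi_2$ instance you only get to fix the \emph{$\phi_2$-value} in each coordinate, never $\sum a_i$ or $\sum a_i^2$ directly; in the payload coordinate the $\phi_2$ target imposes a single quadratic relation between $\sum b_i$ and $\sum b_i^2$, not the linear Subset-Sum constraint you need. Attempts to kill $\sum b_i^2$ (e.g.\ $b_i\in\{\pm1\}$) collapse the source problem to something easy.

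The paper takes a completely different and much more combinatorial route: it reduces \emph{directly from 3-SAT}, not via Subset-Sum or Newton's identities. The point is that with Hadamard product the evaluation of $\phi_\ell$ is coordinatewise, and if in a given coordinate all chosen entries are $0$ or $1$ then $\phi_\ell$ there equals $\binom{k}{\ell}$ with $k$ the number of $1$'s among the selected elements; setting the target to $1$ then forces exactly $\ell$ ones in that coordinate. The construction (elements $\alpha_0,\dots,\alpha_{\ell-1}$, variable elements $v_i,v_i'$, and clause elements $c_j^1,\dots,c_j^{\ell-1}$) is engineered so that ``exactly $\ell$ ones per coordinate'' encodes the 3-SAT constraints; the bound $\characteristic(\field)\ge\ell+2$ is exactly what ensures $\binom{\ell+1}{\ell}=\ell+1\neq 1$, ruling out $\ell+1$ ones. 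The special case $\ell=2$ needs non-binary clause values $9,4,2$ in the gadget, which is where the otherwise mysterious bound $\characteristic(\field)\ge 63$ comes from --- not from inverting $2$ or any Newton-identity coefficient as you guessed.
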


We next present a high-level overview of the proof; the full proof can be found in \cref{sec:sym-poly}. We start with a recap of the proof for NP-completeness of subset sum by \citet{Karp72}.
\paragraph{NP-completeness of subset sum.}
The  proof for NP-completeness of $\ZZ_M$-Subset-Sum~\cite{Karp72}\footnote{Recall that this is a variation of the $(s,\ZZ_M)$-Subset-sum problem, where the instance does not include the size of the subset, as defined in \cref{def:subset-f}.} shows a polynomial-time reduction from 3-SAT. At a high level, the reduction proceeds as follows: Given a 3-SAT instance with $N$ variables $\sset{x_i}_{i\in[N]}$ and $m$ clauses $\sset{C_j}_{j\in[m]}$, define a $\ZZ_M$-Subset-Sum instance with the following $2(N+m)$ numbers, each with $N+m$ digits, for $M\geq 10^{N+m}$:
\begin{enumerate}
	\item For each input $i\in[N]$, define two numbers $v_i$ and $v_i'$. The \ith  least significant digit of both these numbers is set to 1.  If $x_i\in C_j$, then the \iith{(N+j)} least significant digit of $v_i$ is set to 1, else if $\neg x_i\in C_j$, then the \iith{(N+j)} least significant digit of $v'_i$ is set to 1. The remaining digits in both these numbers are set to 0.
	\item For each clause $j\in[m]$, define two numbers $c_j^1$ and $c^2_j$. The \iith{(N+j)} least significant digit of both these numbers is set to 1 and all the remaining digits are set to 0.
	\item The target number $t$ is also an $(N+m)$-digit number in which the first $N$ digits are set to 1, while the remaining digits are set to 3.
\end{enumerate}

Intuitively, given a satisfying assignment for the 3-SAT instance, the corresponding witness for the $\ZZ_M$-Subset-sum instance includes the following: For each $i\in[N]$, it includes $v_i$ if $x_i=1$, and $v_i'$ if $x_i=0$. For each $j\in[m]$, it includes any one of $c^1_j$ or $c_j^2$ if there are two literals with value 1 in the \jth clause, and both $c^1_j$ or $c_j^2$ if there is only one literal with a value of 1 in the \jth clause.

\begin{proof}[Proof sketch of \cref{thm:sym-poly-nphard}]
We extend this reduction to show that $R$-Subset-$\phi_\ell$ for $\ell>1$ is also NP-complete, where $R$ is a ring of appropriate size with Hadamard product.
Each of the $a_i$ (for $i\in[n]$) elements and the target value $t$ in an instance of $R$-Subset-$\phi_{\ell}$ is an element in $R$ and thereby a vector of elements in $\field$. Unlike simple addition, since $\phi_{\ell}$ is a sum of products, if (any) \kth entry in the target value is a non-zero element in $\field$, the solution to a  $\yes$ instance of $R$-Subset-$\phi_{\ell}$ must consist of at least $\ell$ elements with  non-zero \kth entries.
Therefore, depending on $\ell$, we need to define additional elements in the reduction.
We give an overview of our reduction from any 3-SAT instance to $R$-Subset-$\phi_\ell$ for $\ell\geq 3$; the special case of $\ell=2$ requires a slight modification that is addressed in \cref{sec:sym-poly}. In a similar way to Subset-Sum, this reduction can also be adjusted to show that there exists $s\in\Theta(n)$, for which $(s,R)$-Subset-$\phi_\ell$ problem is also NP-complete, which is sketched in \cref{sec:sym-poly}.
\begin{figure}
\centering
\begin{tikzpicture}
\node at (11.3,6) {$\alpha_0$};
\node at (12.05,6.5) {$1$};
\draw [draw=black] (11.8, 5.75) rectangle ++(.5,.5) node[pos=.5] {$1$};

\draw[draw=black] (12.5,5.75) rectangle ++(.5,.5) node[pos=.5] {$0$};
\draw[draw=black] (13,5.75) rectangle ++(.5,.5) node[pos=.5] {$0$};
\draw[draw=black] (13.5,5.75) rectangle ++(.5,.5) node[pos=.5] {$0$};
\draw[draw=black] (14,5.75) rectangle ++(.5,.5) node[pos=.5] {$0$};
\draw [->] (14.9,6.5) --  (17,6.5);
\node at (14.75,6.5) {$N$};
\draw [<-] (12.5,6.5) --  (14.6,6.5);
\draw[draw=black] (14.5,5.75) rectangle ++(.5,.5) node[pos=.5] {$0$};
\draw[draw=black] (15,5.75) rectangle ++(.5,.5) node[pos=.5] {$0$};
\draw[draw=black] (15.5,5.75) rectangle ++(.5,.5) node[pos=.5] {$0$};
\draw[draw=black] (16,5.75) rectangle ++(.5,.5) node[pos=.5] {$0$};
\draw[draw=black] (16.5,5.75) rectangle ++(.5,.5) node[pos=.5] {$0$};

\draw[draw=black] (17.2,5.75) rectangle ++(.5,.5) node[pos=.5] {$0$};
\draw[draw=black] (17.7,5.75) rectangle ++(.5,.5) node[pos=.5] {$0$};
\draw[draw=black] (18.2,5.75) rectangle ++(.5,.5) node[pos=.5] {$0$};
\draw[draw=black] (18.7,5.75) rectangle ++(.5,.5) node[pos=.5] {$0$};
\draw[draw=black] (19.2,5.75) rectangle ++(.5,.5) node[pos=.5] {$0$};
\draw [->] (20.2,6.5) --  (22.7,6.5);
\node at (19.95,6.5) {$m$};
\draw [<-] (17.2,6.5) --  (19.7,6.5);
\draw[draw=black] (19.7,5.75) rectangle ++(.5,.5) node[pos=.5] {$0$};
\draw[draw=black] (20.2,5.75) rectangle ++(.5,.5) node[pos=.5] {$0$};
\draw[draw=black] (20.7,5.75) rectangle ++(.5,.5) node[pos=.5] {$0$};
\draw[draw=black] (21.2,5.75) rectangle ++(.5,.5) node[pos=.5] {$0$};
\draw[draw=black] (21.7,5.75) rectangle ++(.5,.5) node[pos=.5] {$0$};
\draw[draw=black] (22.2,5.75) rectangle ++(.5,.5) node[pos=.5] {$0$};

\node at (10.55,5.25) {$\alpha_1,\ldots,\alpha_{\ell-1}$};
\draw [draw=black] (11.8, 5) rectangle ++(.5,.5) node[pos=.5] {$1$};

\draw[draw=black] (12.5,5) rectangle ++(.5,.5) node[pos=.5] {$1$};
\draw[draw=black] (13,5) rectangle ++(.5,.5) node[pos=.5] {$1$};
\draw[draw=black] (13.5,5) rectangle ++(.5,.5) node[pos=.5] {$1$};
\draw[draw=black] (14,5) rectangle ++(.5,.5) node[pos=.5] {$1$};
\draw[draw=black] (14.5,5) rectangle ++(.5,.5) node[pos=.5] {$1$};
\draw[draw=black] (15,5) rectangle ++(.5,.5) node[pos=.5] {$1$};
\draw[draw=black] (15.5,5) rectangle ++(.5,.5) node[pos=.5] {$1$};
\draw[draw=black] (16,5) rectangle ++(.5,.5) node[pos=.5] {$1$};
\draw[draw=black] (16.5,5) rectangle ++(.5,.5) node[pos=.5] {$1$};

\draw[draw=black] (17.2,5) rectangle ++(.5,.5) node[pos=.5] {$0$};
\draw[draw=black] (17.7,5) rectangle ++(.5,.5) node[pos=.5] {$0$};
\draw[draw=black] (18.2,5) rectangle ++(.5,.5) node[pos=.5] {$0$};
\draw[draw=black] (18.7,5) rectangle ++(.5,.5) node[pos=.5] {$0$};
\draw[draw=black] (19.2,5) rectangle ++(.5,.5) node[pos=.5] {$0$};
\draw[draw=black] (19.7,5) rectangle ++(.5,.5) node[pos=.5] {$0$};
\draw[draw=black] (20.2,5) rectangle ++(.5,.5) node[pos=.5] {$0$};
\draw[draw=black] (20.7,5) rectangle ++(.5,.5) node[pos=.5] {$0$};
\draw[draw=black] (21.2,5) rectangle ++(.5,.5) node[pos=.5] {$0$};
\draw[draw=black] (21.7,5) rectangle ++(.5,.5) node[pos=.5] {$0$};
\draw[draw=black] (22.2,5) rectangle ++(.5,.5) node[pos=.5] {$0$};

\node at (11.3,4.5) {$v_i$};
\node at (11.3,3.25) {$v'_i$};
\draw [-]
(10.75,4.6) -- (10.75,3.15) node [black,midway,xshift=-0.6cm]
{\footnotesize $\forall i\in [N]~~$  };
\node at (10.51,2.4) {$\forall j\in[m]$};
\node at (10.51,1.9) {$c_j^1,\ldots,c_j^{\ell-1}$};
\node at (11.3,0.75) {$t$};
\draw [draw=black] (11.8, 4.25) rectangle ++(.5,.5) node[pos=.5] {$0$};

\draw[draw=black] (12.5,4.25) rectangle ++(.5,.5) node[pos=.5] {$0$};
\draw[draw=black] (13,4.25) rectangle ++(.5,.5) node[pos=.5] {$0$};
\draw[draw=black] (13.5,4.25) rectangle ++(.5,.5) node[pos=.5] {$0$};
\draw[draw=black] (14,4.25) rectangle ++(.5,.5) node[pos=.5] {$1$};
\node at (14.25,3.95) {$i$};
\draw [<-] (14,3.95) --  (14.15,3.95);
\draw [->] (14.35,3.95) --  (14.5,3.95);
\draw[draw=black] (14.5,4.25) rectangle ++(.5,.5) node[pos=.5] {$0$};
\draw[draw=black] (15,4.25) rectangle ++(.5,.5) node[pos=.5] {$0$};
\draw[draw=black] (15.5,4.25) rectangle ++(.5,.5) node[pos=.5] {$0$};
\draw[draw=black] (16,4.25) rectangle ++(.5,.5) node[pos=.5] {$0$};
\draw[draw=black] (16.5,4.25) rectangle ++(.5,.5) node[pos=.5] {$0$};

\draw[draw=black] (17.2,4.25) rectangle ++(.5,.5) node[pos=.5] {$0$};
\draw[draw=black] (17.7,4.25) rectangle ++(.5,.5) node[pos=.5] {$1$};
\node at (17.95,3.95) {$p$};
\draw [<-] (17.7,3.95) --  (17.85,3.95);
\draw [->] (18.05,3.95) --  (18.2,3.95);
\draw[draw=black] (18.2,4.25) rectangle ++(.5,.5) node[pos=.5] {$0$};
\draw[draw=black] (18.7,4.25) rectangle ++(.5,.5) node[pos=.5] {$0$};
\draw[draw=black] (19.2,4.25) rectangle ++(.5,.5) node[pos=.5] {$0$};
\draw[draw=black] (19.7,4.25) rectangle ++(.5,.5) node[pos=.5] {$0$};
\draw[draw=black] (20.2,4.25) rectangle ++(.5,.5) node[pos=.5] {$0$};
\draw[draw=black] (20.7,4.25) rectangle ++(.5,.5) node[pos=.5] {$0$};
\draw[draw=black] (21.2,4.25) rectangle ++(.5,.5) node[pos=.5] {$1$};
\node at (21.45,3.95) {$q$};
\draw [<-] (21.2,3.95) --  (21.35,3.95);
\draw [->] (21.55,3.95) --  (21.7,3.95);
\draw[draw=black] (21.7,4.25) rectangle ++(.5,.5) node[pos=.5] {$0$};
\draw[draw=black] (22.2,4.25) rectangle ++(.5,.5) node[pos=.5] {$0$};
\node at (23.9, 4.5) {if $x_i\in C_p$};
\node at (23.9, 3.95) {and $x_i\in C_q$};

\draw [draw=black] (11.8, 3) rectangle ++(.5,.5) node[pos=.5] {$0$};

\draw[draw=black] (12.5,3) rectangle ++(.5,.5) node[pos=.5] {$0$};
\draw[draw=black] (13,3) rectangle ++(.5,.5) node[pos=.5] {$0$};
\draw[draw=black] (13.5,3) rectangle ++(.5,.5) node[pos=.5] {$0$};
\draw[draw=black] (14,3) rectangle ++(.5,.5) node[pos=.5] {$1$};
\node at (14.25,2.7) {$i$};
\draw [<-] (14,2.7) --  (14.15,2.7);
\draw [->] (14.35,2.7) --  (14.5,2.7);
\draw[draw=black] (14.5,3) rectangle ++(.5,.5) node[pos=.5] {$0$};
\draw[draw=black] (15,3) rectangle ++(.5,.5) node[pos=.5] {$0$};
\draw[draw=black] (15.5,3) rectangle ++(.5,.5) node[pos=.5] {$0$};
\draw[draw=black] (16,3) rectangle ++(.5,.5) node[pos=.5] {$0$};
\draw[draw=black] (16.5,3) rectangle ++(.5,.5) node[pos=.5] {$0$};

\draw[draw=black] (17.2,3) rectangle ++(.5,.5) node[pos=.5] {$1$};
\draw[draw=black] (17.7,3) rectangle ++(.5,.5) node[pos=.5] {$0$};
\draw[draw=black] (18.2,3) rectangle ++(.5,.5) node[pos=.5] {$0$};
\draw[draw=black] (18.7,3) rectangle ++(.5,.5) node[pos=.5] {$0$};
\draw[draw=black] (19.2,3) rectangle ++(.5,.5) node[pos=.5] {$0$};
\draw[draw=black] (19.7,3) rectangle ++(.5,.5) node[pos=.5] {$1$};
\draw[draw=black] (20.2,3) rectangle ++(.5,.5) node[pos=.5] {$0$};
\draw[draw=black] (20.7,3) rectangle ++(.5,.5) node[pos=.5] {$0$};
\draw[draw=black] (21.2,3) rectangle ++(.5,.5) node[pos=.5] {$0$};
\draw[draw=black] (21.7,3) rectangle ++(.5,.5) node[pos=.5] {$0$};
\draw[draw=black] (22.2,3) rectangle ++(.5,.5) node[pos=.5] {$0$};
\node at (17.45,2.7) {$r$};
\draw [<-] (17.2,2.7) --  (17.35,2.7);
\draw [->] (17.55,2.7) --  (17.7,2.7);
\node at (19.95,2.7) {$s$};
\draw [<-] (19.7,2.7) --  (19.85,2.7);
\draw [->] (20.05,2.7) --  (20.2,2.7);
\node at (24, 3.25) {if $\neg x_i\in C_r$};
\node at (24, 2.7) {and $\neg x_i\in C_s$};

\draw [draw=black] (11.8, 1.75) rectangle ++(.5,.5) node[pos=.5] {$0$};

\draw[draw=black] (12.5,1.75) rectangle ++(.5,.5) node[pos=.5] {$0$};
\draw[draw=black] (13,1.75) rectangle ++(.5,.5) node[pos=.5] {$0$};
\draw[draw=black] (13.5,1.75) rectangle ++(.5,.5) node[pos=.5] {$0$};
\draw[draw=black] (14,1.75) rectangle ++(.5,.5) node[pos=.5] {$0$};
\draw[draw=black] (14.5,1.75) rectangle ++(.5,.5) node[pos=.5] {$0$};
\draw[draw=black] (15,1.75) rectangle ++(.5,.5) node[pos=.5] {$0$};
\draw[draw=black] (15.5,1.75) rectangle ++(.5,.5) node[pos=.5] {$0$};
\draw[draw=black] (16,1.75) rectangle ++(.5,.5) node[pos=.5] {$0$};
\draw[draw=black] (16.5,1.75) rectangle ++(.5,.5) node[pos=.5] {$0$};

\draw[draw=black] (17.2,1.75) rectangle ++(.5,.5) node[pos=.5] {$0$};
\draw[draw=black] (17.7,1.75) rectangle ++(.5,.5) node[pos=.5] {$0$};
\draw[draw=black] (18.2,1.75) rectangle ++(.5,.5) node[pos=.5] {$1$};
\draw[draw=black] (18.7,1.75) rectangle ++(.5,.5) node[pos=.5] {$0$};
\draw[draw=black] (19.2,1.75) rectangle ++(.5,.5) node[pos=.5] {$0$};
\draw[draw=black] (19.7,1.75) rectangle ++(.5,.5) node[pos=.5] {$0$};
\draw[draw=black] (20.2,1.75) rectangle ++(.5,.5) node[pos=.5] {$0$};
\draw[draw=black] (20.7,1.75) rectangle ++(.5,.5) node[pos=.5] {$0$};
\draw[draw=black] (21.2,1.75) rectangle ++(.5,.5) node[pos=.5] {$0$};
\draw[draw=black] (21.7,1.75) rectangle ++(.5,.5) node[pos=.5] {$0$};
\draw[draw=black] (22.2,1.75) rectangle ++(.5,.5) node[pos=.5] {$0$};
\node at (18.45,1.45) {$j$};
\draw [<-] (18.2,1.45) --  (18.35,1.45);
\draw [->] (18.55,1.45) --  (18.7,1.45);


\draw [draw=black] (11.8, 0.5) rectangle ++(.5,.5) node[pos=.5] {$1$};

\draw[draw=black] (12.5,0.5) rectangle ++(.5,.5) node[pos=.5] {$1$};
\draw[draw=black] (13,0.5) rectangle ++(.5,.5) node[pos=.5] {$1$};
\draw[draw=black] (13.5,0.5) rectangle ++(.5,.5) node[pos=.5] {$1$};
\draw[draw=black] (14,0.5) rectangle ++(.5,.5) node[pos=.5] {$1$};
\draw[draw=black] (14.5,0.5) rectangle ++(.5,.5) node[pos=.5] {$1$};
\draw[draw=black] (15,0.5) rectangle ++(.5,.5) node[pos=.5] {$1$};
\draw[draw=black] (15.5,0.5) rectangle ++(.5,.5) node[pos=.5] {$1$};
\draw[draw=black] (16,0.5) rectangle ++(.5,.5) node[pos=.5] {$1$};
\draw[draw=black] (16.5,0.5) rectangle ++(.5,.5) node[pos=.5] {$1$};

\draw[draw=black] (17.2,0.5) rectangle ++(.5,.5) node[pos=.5] {$1$};
\draw[draw=black] (17.7,0.5) rectangle ++(.5,.5) node[pos=.5] {$1$};
\draw[draw=black] (18.2,0.5) rectangle ++(.5,.5) node[pos=.5] {$1$};
\draw[draw=black] (18.7,0.5) rectangle ++(.5,.5) node[pos=.5] {$1$};
\draw[draw=black] (19.2,0.5) rectangle ++(.5,.5) node[pos=.5] {$1$};
\draw[draw=black] (19.7,0.5) rectangle ++(.5,.5) node[pos=.5] {$1$};
\draw[draw=black] (20.2,0.5) rectangle ++(.5,.5) node[pos=.5] {$1$};
\draw[draw=black] (20.7,0.5) rectangle ++(.5,.5) node[pos=.5] {$1$};
\draw[draw=black] (21.2,0.5) rectangle ++(.5,.5) node[pos=.5] {$1$};
\draw[draw=black] (21.7,0.5) rectangle ++(.5,.5) node[pos=.5] {$1$};
\draw[draw=black] (22.2,0.5) rectangle ++(.5,.5) node[pos=.5] {$1$};
\end{tikzpicture}
\caption{Reducing an instance of 3-SAT with $N$ variables  $\sset{x_i}_{i\in[N]}$ and $m$ clauses $\sset{C_j}_{j\in[m]}$ to an instance of $R$-Subset-$\phi_{\ell}$ for $\ell\geq 3$ with $n=\ell+2N+(\ell-1)m$ elements in $R$, where $R=\FF^{1+N+m}$. Here, $0$ (resp., $1$) values inside the vectors refer to the $0$ (resp., $1$) element of $\FF$.}
\label{fig:subset-f-reduction}
\end{figure}

Given a 3-SAT instance with $N$ variables $\sset{x_i}_{i\in[N]}$ and $m$ clauses $\sset{C_j}_{j\in[m]}$, define a $R$-Subset-$\phi_{\ell}$ instance with $\ell+2N+(\ell-1) m$ elements, where $R=\field^{1+N+m}$. As shown in \cref{fig:subset-f-reduction}, each of these elements is a vector of $1+N+m$ elements in the field $\field$ and are defined as follows:
\begin{itemize}[leftmargin=*]
	\item An element $\alpha_0\in R$, whose first entry is 1. All the remaining entries in $\alpha_0$ correspond to 0.
	\item For each $k\in[\ell-1]$, define $\alpha_k\in R$, whose first $N+1$ entries correspond to 1, and the remaining entries correspond to 0.
    \item For each $i\in[N]$, define two elements $v_i\in R$ and $v_i'\in R$. The \iith{(1+i)} entry of both these numbers is set to 1. If $x_i\in C_j$, then the \iith{(1+N+j)} entry of $v_i$ is set to 1, else if $\neg x_i\in C_j$, then the \iith{(1+N+j)} entry of $v'_i$ is set to 1. All the remaining entries correspond to 0.
	\item For each $j\in[m]$ and $k\in[\ell-1]$, define element $c_j^k\in R$. The \iith{(1+N+j)} entry in $c_j^k$ corresponds to 1 and the remaining entries correspond to 0.
	\item The target element $t$ is also a vector of $1+N+m$ elements in $\field$, with all its entries set to 1.
\end{itemize}

Now, given a satisfying assignment for the 3-SAT instance, the corresponding witness for the $R$-Subset-$\phi_\ell$ instance includes the following: It includes $\alpha_0$ and each $\alpha_k$ for $k\in[\ell-1]$. For each $i\in[N]$, it includes $v_i$ if $x_i=1$, and $v_i'$ if $x_i=0$. For each $j\in[m]$, it includes any $\ell-3$ of the elements $c_j$ if all three literals in the \jth clause have value 1, else if any two literals have value 1 then it includes any $\ell-2$ of the elements $c_j$ and if only one of the literals has value 1 then all the $\ell-1$ elements $c_j$ are included in the witness. This guarantees that the value 1 appears precisely $\ell$ times in the column of each satisfied clause, so that $\phi_{\ell}$ will evaluate to the target value 1 in these positions.

Similarly for soundness, a valid witness $S$ for the $R$-Subset-$\phi_\ell$ instance must include $a_0,\ldots,a_{\ell-1}$ in order to get $\ell$ times the value 1 in the first column. Apart from $a_1,\ldots,a_{\ell-1}$, the only other elements that have the value 1 in the next $N$ columns are $v_i$ and $v'_i$. For each $i\in[N]$, if both $v_i$ and $v'_i$ are included in the set $S$, a total of $\ell+1$ elements in $S$ will have value 1 in the \iith{(i+1)} column. The \iith{(i+1)} entry in the result obtained by applying $\phi_{\ell}$ over such a set is $\ell+1$. Since the characteristic of the field $\field$ is at least $\ell+2$, we know that $\ell+1\neq 1$.
Therefore, $S$ can either contain $v_i$ (implying $x_i=1$) or $v'_i$ (implying $\neg x_i=1$) for each $i\in[N]$, but not both.  For each of the last $m$ columns, $S$ can contain some or all of the elements $c_j$ (for each $j\in[m]$). But since this set of $c_j$ elements can only contribute at most $\ell-1$ times the value 1 in the \iith{(1+N+j)} column,
we need at least one of the $v$ or $v'$ elements to contribute a 1 value to that column, in order to get a non-zero \iith{(1+N+j)} entry in the result of $\phi_{\ell}$.
This guarantees at least one variable with a value of 1 in each clause. We give a full proof of completeness and soundness for this reduction \cref{sec:sym-poly}.
\QED
\end{proof}

\paragraph{SNARG-compliant multi-signatures and subset-$\phi_\ell$.}  We now identify the properties of the LOSSW multi-signature scheme used in \cref{lem:lossw-snargs} to provide the connection with average-case SNARGs. Roughly, these properties are:
\begin{itemize}[leftmargin=*]
	\item Verification keys are sampled independently and uniformly from the key-space of the multi-signature scheme. This property is important for arguing soundness in \cref{lem:lossw-snargs}.
	\item The verification algorithm with keys $\sset{\vk_i}_{i\in S}$, is equivalent to the verification algorithm with a single \emph{aggregate} key $\vkagg=\prod_{i\in S}\vk_i$. In other words, there exists a \emph{key-aggregation function} $\fagg$ (\eg $\fagg=\prod$ in the LOSSW multi-signature scheme), such that the verification algorithm can be decomposed into first applying $\fagg$ over the set of keys to obtain an aggregate key $\vkagg$ and then running some residual function $\MSVerifyAgg$ to perform the remaining verification with respect to $\vkagg$.
	\item Given a valid multi-signature $\sigma_{\ms}$ on a message $m$, there exists a \emph{unique and well-defined} aggregate key $\vk$ for which the residual function $\MSVerifyAgg$ (as defined in the previous bullet) outputs~1. Moreover, this aggregate key is easy to compute. For example, for LOSSW, this property is crucially used for arguing soundness in \cref{lem:lossw-snargs}.
	\item And finally, there exist degenerate keys $\skdeg$ and $\vkdeg$ (\eg $\skdeg=0\in\GG$ and $\vkdeg=1\in\GG_T$ in the LOSSW multi-signature scheme) that allow forging a multi-signature on any message. This property is used in the completeness argument in \cref{lem:lossw-snargs}.
\end{itemize}
We call multi-signature schemes that satisfy these properties as \emph{SNARG-compliant} multi-signature schemes. We formally define this notion in \cref{def:SRDScompliant} in \cref{sec:snarg-compliant}.
Finally, by using the properties of a SNARG-compliant multi-signature scheme, we are able to prove a generalized version of \cref{lem:lossw-snargs}. Namely, we show in \cref{lem:snarg-compliant} that an SRDS scheme based on a SNARG-compliant multi-signature scheme with key-aggregation function $\fagg=\phi_{\ell}$, implies SNARGs for average-case Subset-$\phi_{\ell}$.

\paragraph{Acknowledgements.}
E.\ Boyle’s research is supported in part by ISF grant 1861/16 and AFOSR Award FA9550-17-1-0069 and ERC project HSS (852952).
R.\ Cohen’s research was done in part while the author was at Northeastern University and supported by NSF grant 1646671.
A.\ Goel’s work was done in part while visiting the FACT Center at IDC Herzliya, Israel, and while the author was at Johns Hopkins University, supported in part by an NSF CNS grant 1814919, NSF CAREER award 1942789 and Johns Hopkins University Catalyst award.

{\small{
\ifdefined\IsLLNCS
\begingroup
\let\clearpage\relax
\bibliographystyle{abbrvnat}
\bibliography{crypto_lncs}
\endgroup
\else
\bibliographystyle{abbrvnat}
\bibliography{crypto}
\fi
}}

\appendix

\section{Preliminaries (Cont'd)}\label{sec:Preliminaries_cont}
In this section, we provide additional definitions: for SNARKs,  proof-carrying data and multi-signatures.

\newcommand{\snarkgen}{\mathsf{SNARK.Gen}}
\newcommand{\snarkprover}{\mathsf{SNARK.Prover}}
\newcommand{\snarkverify}{\mathsf{SNARK.Verify}}
\newcommand{\sigmasnark}{\sigma}
\newcommand{\tausnark}{\tau}
\newcommand{\relU}{\mathcal{R}_\mathcal{U}}
\newcommand{\langU}{\mathcal{L}_\mathcal{U}}
\newcommand{\extractor}{\mathcal{E}}

\subsection{SNARKs}\label{sec:SNARKs}
We follow the notation from \cite{BCCT13}.
The universal relation $\relU$ \cite{BG08} is the set of instance-witness pairs $(y,w) =((M,x,t), w)$, where $|y|, |w| \leq t$ and $M$ is a random-access machine, such that $M$ accepts $(x, w)$ after running at most $t$ steps. The universal language $\langU$ is the language corresponding to $\relU$.

A \emph{non-interactive argument system} for $\relU$ is a triple of algorithms $(\snarkgen$, $\snarkprover$, $\snarkverify)$ with the following syntax:
\begin{itemize}
	\item $\snarkgen(1^\secParam,B)\to(\sigmasnark,\tausnark)$: on input the security parameter $\secParam$ and a time bound $B\in\NN$, the generation algorithm outputs a common reference string $(\sigmasnark,\tausnark)$ consisting of a prover reference string $\sigmasnark$ and a verification state $\tausnark$.
	\item $\snarkprover(\sigmasnark,y,w)\to\pi$: given a prover reference string $\sigmasnark$, an instance $y=(M,x,t)$ with $t\leq B$, and a witness $w$ such that $(y,w)\in R$, the algorithm produces a proof $\pi$.
	\item $\snarkverify(\tausnark,y,\pi)\to b$: given a verification state $\tausnark$, an instance $y$, and a proof $\pi$, the verifier algorithm outputs a bit $b$.
\end{itemize}

\begin{definition}[SNARK]\label{def:SNARKs}
A non-interactive argument system $(\snarkgen$, $\snarkprover$, $\snarkverify)$ for $\relU$ is a \emph{SNARK}
if the following conditions are satisfied:
\begin{enumerate}
\item \emph{Completeness:}
For every large enough security parameter $\secParam\in\NN$, every time bound $B \in\NN$, and every instance-witness pair $(y,w)=((M,x,t),w)\in \relU$ with $t\leq B$,
\[
\pr{
\begin{tabular}{l|l}
$\snarkverify(\tausnark,y,\pi)=1$ &
\Centerstack[l]{$(\sigmasnark,\tausnark)\gets\snarkgen(1^\secParam,B)$\\
$\pi\gets\snarkprover(\sigmasnark,y,w)$}\\
\end{tabular}
}
=1.
\]

\item \emph{Proof of Knowledge:}
For every polynomial-size prover $P^*$, there exists a polynomial-size extractor $\extractor_{P^*}$ such that for every security parameter $\secParam\in\NN$, every auxiliary input $\aux \in\zo^{\poly(\secParam)}$, and every time bound $B \in \NN$,
\[
\pr{
\begin{tabular}{l|l}
\Centerstack[l]{$\snarkverify(\tausnark, y, \pi) = 1$ \\ $(y,w)\notin\relU$}
&
\Centerstack[l]{$(\sigmasnark,\tausnark) \gets\snarkgen(1^\secParam, B)$\\
$(y,\pi) \gets P^*(\aux,\sigmasnark)$\\
$w\gets\extractor_{P^*}(\aux,\sigmasnark)$} \\
\end{tabular}
}
\leq \negl(\secParam).
\]

\item \emph{Efficiency:}
There exists a universal polynomial $p(\cdot)$ such that, for every large enough security parameter $\secParam\in\mathbb{N}$, every time bound $B\in\mathbb{N}$, and every instance $y=(M,x,t)$ with $t\leq B$,
\begin{itemize}[leftmargin=*]
	\item The generator algorithm $\snarkgen(1^\secParam,B)$ runs in time $p(k+B)$ for a fully succinct SNARK (and in time $p(k+\log B)$ for a preprocessing SNARK).
	\item The prover algorithm $\snarkprover(\sigmasnark,y,w)$ runs in time $p(\secParam+|M|+t+\log B)$ for a fully succinct SNARK (and in time $p(k+|M|+|x|+B)$ for a preprocessing SNARK).
	\item The verifier algorithm $\snarkverify(\tausnark,y,\pi)$ runs in time $p(\secParam+|M|+|x|+\log B)$.
	\item An honestly generated proof has size $p(\secParam + \log B)$.
\end{itemize}
\end{enumerate}
\end{definition}

\subsection{Proof-Carrying Data}\label{sec:PCD}
A \emph{proof-carrying data system} (PCD system) is a cryptographic primitive introduced by \citet{CT10}. Informally speaking, given a predicate $\compliance$, consider a distributed system where nodes perform computations; each computation takes as input messages and generates a new output message. The security goal is to ensure that each output message is compliant with the predicate $\compliance$. Proof-carrying data ensures this goal by attaching short and easy-to-verify proofs of $\compliance$-compliance to each message.

Concretely, a generator $\PCDGen$ first sets up a reference string and a verification state. Anyone can then use the
prover algorithm $\PCDProve$, which is given as input the reference string, prior messages $z_\inputvar$ with proofs $\pi_\inputvar$, and an output message
$z_\outputvar$, to generate a proof $\pi_\outputvar$ attesting that $z_\outputvar$ is $\compliance$-compliant. Anyone can use the verification algorithm $\PCDVerify$, which is given as input
the verification state, a message $z$, and a proof $\pi$, to verify that $z$ is $\compliance$-compliant.

Crucially, the running time of proof generation and proof verification are ``history independent'': the first only depends on the time to execute $\compliance$ on input a node's messages, while the second only on the message length.

We now formally define the notions associated with a PCD system as defined in \cite{BCCT13}. We refer the reader to \cite{BCCT13,BCTV14} for a detailed discussion.

\begin{definition}
A \textsf{(distributed computation) transcript} is a triplet $\trans = (\graph, \linp, \data)$, where $\graph = (\vertex,\edge)$ is a directed acyclic graph, $\linp: \vertex\to\zo^\ast$ are local inputs (node labels), and $\data: \edge \to \zo^\ast$ are edge labels (messages sent on the edge). The output of $\trans$, denoted $\outputvar(\trans)$, is equal to $\data(\tilde{u},\tilde{v})$ where $(\tilde{u},\tilde{v})$ is the lexicographically first edge such that $\tilde{v}$ is a sink.
\end{definition}

Syntactically a proof-carrying transcript is a transcript where messages are augmented by proof strings, \ie a function $\prove : \edge\to\zo^\ast$ provides for each edge $(u, v)$ an additional label prove $(u, v)$, to be interpreted as a proof
string for the message $\data(u, v)$
\begin{definition}
A \textsf{proof-carrying (distributed computation) transcript} PCT is a pair $(\trans, \prove)$ where $\trans$ is a transcript and $\prove : \edge\to\zo^\ast$ is an edge label.
\end{definition}

Next, we define what it means for a distributed computation to be compliant, which as defined in \cite{BCCT13} is the notion of ``correctness with respect to a given local property.'' Compliance is captured via an efficiently computable compliance predicate $\compliance$, which must be locally satisfied at each vertex; here, ``locally'' means with respect to a node's local input, incoming data, and outgoing data. For convenience, for any vertex $v$, we let $\child(v)$ and $\parent(v)$ be the vector of $v$'s children and parents respectively, listed in lexicographic order.

\begin{definition}
Given a polynomial-time predicate $\compliance$, we say that a distributed computation transcript $\trans = (\graph, \linp, \data)$ is $\compliance$-\textsf{compliant} (denoted by $\compliance(\trans) = 1$) if for every $v \in \vertex$ and $w \in \child(v)$ it holds that
\[
\compliance(\data(v, w); \linp(v), \inputs(v)) = 1,
\]
where $\inputs(v) \coloneqq \data(u_1, v),\ldots, \data(u_c, v)$ and $(u_1, \ldots , u_c)\coloneqq\parent(v)$. Furthermore, we say that a message $z$ from node $v$ to $w$ is $\compliance$-compliant if $\compliance(\data(v, w); \linp(v), \inputs(v)) = 1$ and there is a transcript $\trans$ such that $v$ is the sink and $\compliance(\trans) = 1$.
\end{definition}

\begin{definition}
Given a distributed computation transcript $\trans = (\graph, \linp, \data)$ and any edge $(v, w) \in \edge$, we denote by $t_{\trans,\compliance}(v, w)$ the time required to evaluate $\compliance(\data(v, w); \linp(v), \inputs(v))$. We say that $\trans$ is $B$-\textsf{bounded} if $t_{\trans,\compliance}(v, w) \leq B$ for every edge $(v, w)$.
\end{definition}

\begin{definition}
The \textsf{depth of a transcript} $\trans$, denoted $d(\trans)$, is the largest number of nodes on a source-to-sink path in $\trans$ minus 2 (to exclude the source and the sink). The depth of a compliance predicate $\compliance$, denoted $d(\compliance)$, is defined to be the maximum depth of any transcript $\trans$ compliant with $\compliance$. If $d(\compliance)\coloneqq\infty$ (\ie paths in $\compliance$-compliant transcripts can be arbitrarily long) we say that $\compliance$ has unbounded depth.
\end{definition}

We note that for our application in \cref{sec:app:BA}, we can assume that for every $v\in\vertex$, the label input is $\linp(v)=\bot$.

We now give a formal definition of a PCD system.
\begin{definition}\label{def:PCD}
A \textsf{proof-carrying data (PCD) system} for a class of compliance predicates $C$ is a triple of algorithms $(\PCDGen,\PCDProve,\PCDVerify)$ that work as follows:
\begin{itemize}[leftmargin=*]
	\item
    $\PCDGen(1^\secParam,\compliance)\to(\PCDsigma,\PCDtau)$: on input the security parameter $\secParam$ and compliance predicate $\compliance\in C$, the (probabilistic) generator $\PCDGen$ outputs a reference string $\PCDsigma$ and a corresponding verification state $\PCDtau$.
	\item
    $\PCDProve(\PCDtau,z_\inputvar,\pi_\inputvar,\linp,z_\outputvar)\to\pi_\outputvar$: given a reference string $\PCDtau$, inputs $z_\inputvar$ with corresponding proofs $\pi_\inputvar$, a local input $\linp$, and an output $z_\outputvar$, the (honest) prover algorithm $\PCDProve$ produces a proof $\pi_\outputvar$ attesting to consistency of $z_\outputvar$ with a $\compliance$-compliant transcript.
	\item
    $\PCDVerify(\PCDtau, z_\outputvar, \pi_\outputvar)\to b$: given the verification state $\PCDtau$, an output $z_\outputvar$, and a proof string $\pi_\outputvar$, the verifier algorithm $\PCDVerify$ accepts if it is convinced that $z_\outputvar$ is consistent with some $\compliance$-compliant transcript.
\end{itemize}
\end{definition}

After the generator $\PCDGen$ is run to obtain $\PCDsigma$ and $\PCDtau$, the prover $\PCDProve$ is used (along with $\PCDsigma$) at each node of a distributed computation transcript to dynamically compile it into a proof-carrying transcript by generating and adding a proof to each edge. Each of these proofs can be checked using the verifier $\PCDVerify$ (along with $\PCDtau$). A PCD system $(\PCDGen,\PCDProve,\PCDVerify)$  must satisfy the following properties:

\paragraph{Completeness:}
An honest prover can convince a verifier that the output of any compliant transcript is indeed compliant. Namely, for every security parameter $\secParam$, compliance predicate $\compliance$, and distributed-computation generator G (described below),
\[
\pr{
\begin{tabular}{l|l}
\Centerstack[l]{$\trans$ is $B$-bounded \\ $\compliance(\trans)=1$ \\ $\PCDVerify(\PCDtau, z, \pi) \neq 1$}
&
\Centerstack[l]{$(\PCDsigma,\PCDtau) \gets\PCDGen(1^\secParam, \compliance)$\\
$(z,\pi,\trans) \gets \ProofGen(\compliance, \PCDsigma, G, \PCDProve)$}\\
\end{tabular}
}
\leq \negl(\secParam).
\]

Above, $\ProofGen$ is an interactive protocol between a distributed-computation generator $\dcgen$ and the PCD prover $\PCDProve$, in which both are given the compliance predicate $\compliance$ and the reference string $\PCDsigma$. Essentially, at every time step, $\dcgen$ chooses to do one of the following actions: (1) add a new unlabeled vertex to the computation transcript so far (this corresponds to adding a new computing node to the computation), (2) label an unlabeled vertex (this corresponds to a choice of local data by a computing node), or (3) add a new labeled edge (this corresponds to a new message from one node to another). In case $\dcgen$ chooses the third action, the PCD prover $\PCDProve$ produces a proof for the $\compliance$-compliance of the new message, and adds this new proof as an additional label to the new edge. When $\dcgen$ halts, the interactive protocol outputs the distributed computation transcript $\trans$, as well as $\trans$'s output and corresponding proof. Intuitively, the completeness property requires that if $\trans$ is compliant with $\compliance$, then the proof attached to the output (which is the result of dynamically invoking $\PCDProve$ for each message in $\trans$, as $\trans$ was being constructed by $\dcgen$) is accepted by the verifier.
	
\paragraph{Proof of knowledge (and soundness):}
Loosely speaking, if the verifier accepts a proof for a message, the prover ``knows'' a compliant transcript $\trans$ with output $z$.
For every polynomial-size prover $P^*$ there exists a polynomial-size extractor $\extractor_{P^*}$ such that for every polynomial-size compliance predicate $\compliance\in C$ and every auxiliary input $\aux \in\zo^{\poly(\secParam)}$,
\[
\pr{
\begin{tabular}{l|l}
\Centerstack[l]{$\PCDVerify(\PCDtau, z, \pi) = 1$ \\ $\outputvar(\trans)\neq z\vee \compliance(\trans)\neq 1$}
&
\Centerstack[l]{$(\PCDsigma,\PCDtau) \gets\PCDGen(1^\secParam, \compliance)$\\
  $(z,\pi) \gets P^*(\PCDsigma,\aux  )$\\
  $\trans\gets\extractor_{P^*}(\PCDsigma,\aux)$} \\
\end{tabular}
}
\leq \negl(\secParam).
\]

\paragraph{Succinctness:}
There exists a universal polynomial $p(\cdot)$ such that for every compliance predicate $\compliance\in C$, every time bound $B\in\mathbb{N}$, and every $B$-bounded distributed computation transcript $\trans$,
\begin{itemize}[leftmargin=*]
	\item
	The computation time of $\PCDProve(\PCDsigma,z_\inputvar, \pi_\inputvar, \linp,z_\outputvar)$ is $p(\secParam + |\compliance| + B)$.
	\item The verification algorithm $\PCDVerify(\PCDtau, z, \pi)$  runs in time $p(\secParam + |\compliance| + |z| + \log B)$
	\item An honestly generated proof has size $p(\secParam + \log B)$.
\end{itemize}

\begin{theorem}[\cite{BCCT13}] \label{PCD_from_snarks}
Let the size of a compliance predicate $\compliance$, denoted by $s(\compliance)$, be the largest number of nodes in any transcript compliant with $\compliance$. Assuming the existence of SNARKs with linear extraction (\ie $\ssize{\mathbb{E}_{\mathcal{P}^*}}\leq c|\mathcal{P^*}|$ for some constant $c$), there exist PCD systems for logarithmic-depth and polynomial-size compliance predicates.	
\end{theorem}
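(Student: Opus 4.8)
The plan is to establish this (which restates a result of~\citet{BCCT13}) via the standard \emph{recursive proof composition} paradigm: the PCD proof attached to a message $z$ will be a SNARK proof asserting that $z$ is the output of a single compliant computation step whose incoming messages themselves already carry verifying SNARK proofs. Concretely, first I would fix, for a polynomial-size compliance predicate $\compliance$, the ``compliance-and-verification'' relation $R_\compliance$: an instance is a message $z_\outputvar$ (with the descriptions of $\compliance$ and the SNARK verification state hard-wired), and a witness is a tuple $(\linp, z_\inputvar, \pi_\inputvar)$ of a local input, a vector of incoming messages, and a vector of incoming SNARK proofs; the relation accepts iff $\compliance(z_\outputvar;\linp,z_\inputvar)=1$ and $\snarkverify(\PCDtau, z_{\inputvar,j},\pi_{\inputvar,j})=1$ for every incoming edge $j$ (a source node has an empty incoming vector, so only local $\compliance$-compliance is checked). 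This is expressible as a single instance of the universal language $\langU$ with a fixed machine computing $\compliance$ and $\snarkverify$ and a suitable time bound $t$. I would then let $\PCDGen(1^\secParam,\compliance)$ run $\snarkgen(1^\secParam,B)$ for $B=\mathrm{poly}(\secParam,|\compliance|)$ chosen large enough (see below), output $\PCDsigma:=\sigmasnark$ and $\PCDtau:=\tausnark$ (with $\PCDtau$ also hard-wired into the machine for $R_\compliance$), let $\PCDProve$ assemble the witness from its arguments and call $\snarkprover$ on $R_\compliance$, and set $\PCDVerify:=\snarkverify$.

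The easy parts are completeness and succinctness. \emph{Completeness} follows by induction along the DAG from SNARK completeness: if every incoming proof verifies and $\compliance$ holds locally, the assembled witness satisfies $R_\compliance$, so $\PCDVerify$ accepts the fresh proof; the base case is just local $\compliance$-compliance at sources. \emph{Succinctness} is where it matters that $\snarkverify$ runs in time $p(\secParam+|M|+|x|+\log B)$ \emph{independently} of the (possibly huge, but $B$-bounded) computation that produced each incoming proof: hence evaluating $R_\compliance$ at a node costs only $\mathrm{poly}(\secParam,|\compliance|,\log B)$ plus one evaluation of $\compliance$, which gives (i) $\PCDProve$ running in time $p(\secParam+|\compliance|+B)$, (ii) $\PCDVerify$ running in time $p(\secParam+|\compliance|+|z|+\log B)$, and (iii) proofs of size $p(\secParam+\log B)$. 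Since the cost of $R_\compliance$ is itself only $\mathrm{poly}(\secParam,|\compliance|)$, a self-consistent choice of $B$ dominating that cost exists.

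The heart of the argument, and the step I expect to be the main obstacle, is \emph{proof of knowledge}: from a polynomial-size PCD prover $P^*$ that outputs $(z,\pi)$ with $\PCDVerify(\PCDtau,z,\pi)=1$, I must extract a $\compliance$-compliant transcript $\trans$ with output $z$. The idea is a \emph{tree of extractors}. At the root, $P^*$ is a SNARK prover for $R_\compliance$, so the SNARK proof-of-knowledge yields an extractor $\extractor_{P^*}$ producing a witness $(\linp,z_\inputvar,\pi_\inputvar)$ that, except with negligible probability, satisfies $R_\compliance$; this recovers the sink's local input and the messages/proofs on its incoming edges. For each incoming edge $j$ I then define a new SNARK prover $P^*_j$ that runs $P^*$, then $\extractor_{P^*}$, and outputs $(z_{\inputvar,j},\pi_{\inputvar,j})$, and recurse. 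The \emph{linear-extraction} hypothesis $|\extractor_{P^*}|\le c\,|P^*|$ gives $|P^*_j|\le c'\,|P^*|$ for a constant $c'$, so after $d$ levels the provers have size $\le (c')^d\,|P^*|$; for this to remain polynomial one needs $d=d(\compliance)=O(\log\secParam)$, which is exactly the ``logarithmic-depth'' restriction (and is why ordinary polynomial, rather than linear, extraction would only yield constant-depth PCD). Each local extraction step is correct by SNARK proof-of-knowledge (a witness failing $R_\compliance$ either fails $\compliance$ at that node or exhibits an incoming SNARK proof that verifies yet has no good witness, \ie a forgery), and a union bound over the polynomially many nodes of the extracted transcript keeps the overall failure probability negligible. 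The care points I would still need to nail down are standard but non-trivial: ensuring all the composed extractors are simultaneously polynomial in size \emph{and} running time, threading them correctly through the same reference string and auxiliary input, and doing the book-keeping that reassembles the graph $\graph$ of $\trans$ from the recursively extracted messages --- precisely the analysis of~\cite{BCCT13}, to which I would ultimately defer for the low-level details.
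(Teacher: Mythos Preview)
The paper does not actually prove this theorem: it is stated as a black-box citation of \cite{BCCT13} and invoked later (in \cref{lem:SNARKSsuccinct,lem:SNARKSrobust,lem:SNARKSunforge}) without any accompanying argument. Your proposal is a faithful sketch of the recursive-composition construction and the tree-of-extractors analysis from \cite{BCCT13}, including the key point that linear extraction keeps the size blowup $(c')^{d}$ polynomial precisely when $d(\compliance)=O(\log \secParam)$; so there is nothing to compare against here beyond noting that you have correctly reconstructed the cited result.
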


\subsection{Merkle Hash Proof System}\label{sec:merkle-hash}

A Merkle hash proof system~\cite{Merkle89} corresponding to a hash function $\hash:\zo^\secParam \times \zo^\lambda\to\zo^{\lambda/2}$ is defined by a tuple of algorithms $(\msetup,\mhash,\mproof,\mverify)$ as follows:
\begin{itemize}
	\item $\msetup(1^\secParam)$: On input the security parameter, the setup algorithm samples and outputs a random $\seed\gets\zo^\secParam$ for the hash function.
	\item $\mhash(\seed,x_1,\ldots,x_n)$: On input the seed and a vector $x_1,\ldots,x_n$, the Merkle hash algorithm computes a hash using a Merkle tree as follows:
	\begin{itemize}
		\item For each $i\in[n]$, compute $y_i^0=\hash(\seed,x_i)$.
		\item For each $\ell\in[\log(n)]$ and $i\in[n/2^\ell]$,\footnote{For simplicity, we assume that $n$ is a power of 2. The general case follows by including additional elements $0^\lambda$, such that the length of the resulting input string becomes a power of 2.} compute $y^\ell_i=\hash(\seed,y^{\ell-1}_{2i-1}||y^{\ell-1}_{2i})$.
	\end{itemize}
Output $y=y_1^{\log(n)}$.

\item $\mproof(\seed,x_1,\ldots,x_n,x_i)$: On input the seed, a vector $x_1,\ldots,x_n$, and an element $x_i$, the Merkle proof algorithm computes and outputs a proof $p$ as follows:
\begin{itemize}
	\item For each $k\in[n]$, compute $y_k^0=\hash(\seed,x_k)$.
	\item For each $\ell\in[\log(n)]$ and $k\in[n/2^\ell]$ compute $y^\ell_k=\hash(\seed,y^{\ell-1}_{2k-1}||y^{\ell-1}_{2k})$.
	\item Initialize the proof $p=\sset{(i,\sibling(y^ 0_{i}))}$ and for each level $\ell\in[\log(n)]$, set $p=p\cup\sset{(\ceil{i/2^\ell},\sibling(y^\ell_{\ceil{i/2^\ell}}))}$.
\end{itemize}

\item $\mverify(\seed,x_i,y,p)$: On input the seed, an input element $x_i$, Merkle hash $y$, and a Merkle proof $p$, the Merkle verification algorithm parses $p=((i_0,x^0),\ldots,(i_{\log(n)},x^{\log(n)}))$ and proceed as follows:
    \begin{itemize}
    \item If $i_0$ is an even number, compute $y^1=\hash(\seed,\hash(\seed,x_i)||x^0)$, else compute $y^1=\hash(\seed,x^0||\hash(\seed,x_i))$.
    \item For each $\ell\in[\log(n)]$, if $i_\ell$ is an even number, compute $y^\ell=\hash(\seed,\hash(\seed,y^{\ell-1})||x^\ell)$, else compute $y^\ell=\hash(\seed,x^\ell||\hash(\seed,y^{\ell-1}))$.
    \end{itemize}
If $y^{\log(n)}=y$, output 1; else, output 0.
\end{itemize}

The Merkle Hash Proof System has the following properties.
\begin{theorem}[Merkle hash proof system]\label{thm:merkle-hash}
Assuming existence of a length-halving, seeded, collision resistant hash function $\hash:\zo^\secParam \times \zo^\lambda\to\zo^{\lambda/2}$, the Merkle hash proof system $(\msetup,\mhash,\mproof,\mverify)$ satisfies the following properties:
\begin{itemize}
\item \textbf{Completeness:} For any input string $x_1,\ldots,x_n\in\zo^{n\lambda}$ and $i\in[n]$, it holds that:
\[
\pr{
	\begin{tabular}{l|l}
	~ & $\seed\gets\msetup(1^\secParam)$\\
	$\mverify(\seed,x_i,y,p)=1$ & $y=\mhash(\seed,x_1,\ldots,x_n)$\\
	~ & $p=\mproof(\seed,x_1,\ldots,x_n,x_i)$ \\
	\end{tabular}
}
=1.
\]
\item \textbf{Soundness:} No PPT adversary $\Adv$, can win the following game with more than negligible probability (in $\secParam$):
\begin{enumerate}
	\item The challenger samples $\seed\gets\msetup(1^\secParam)$ and sends to $\Adv$.
	\item $\Adv$ responds with $(i, \sset{x_{j}}_{j\in[n]\setminus\sset{i}})$.
	\item The challenger samples $x_i\gets\zo^\lambda$, computes $\mhash(\seed,x_1,\ldots,x_n)=y$ and sends $(x_i,y)$ to $\Adv$.
	\item $\Adv$ responds with a pair $(x',p)$, and wins if $\mverify(\seed,x',y,p)=1 $ and $x'\neq x_i$ for every $i\in[n]$.
	\end{enumerate}
\end{itemize}
\end{theorem}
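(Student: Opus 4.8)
The statement splits into the two asserted properties of the Merkle hash proof system, and I would prove them separately. \emph{Completeness} is a purely syntactic claim and will follow by induction on the levels of the Merkle tree: the proof $p$ produced by $\mproof$ is exactly the list of sibling hashes along the root-path of the leaf $x_i$, and the plan is to show that $\mverify$, fed these siblings, recomputes precisely the node labels that $\mhash$ assigns along that path, so its final value equals $y$ with probability $1$. \emph{Soundness} will be proved by a reduction to collision resistance of $\hash$: from any PPT adversary $\Adv$ winning the soundness game with non-negligible probability I will build an adversary $\Adv'$ that outputs a collision for $\hash$ with (essentially) the same probability, contradicting the assumed collision resistance.

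\textbf{Completeness.}
Fix an input vector $x_1,\dots,x_n$, an index $i$, and a seed $\seed\gets\msetup(1^\secParam)$. Let $y^\ell_k$ denote the Merkle-tree labels computed inside $\mhash(\seed,x_1,\dots,x_n)$, so $y = y^{\log n}_1$, and let $p = \mproof(\seed,x_1,\dots,x_n,x_i)$. By construction $p$ records, for each level $\ell$, the index $\ceil{i/2^\ell}$ of the ancestor of leaf $i$ together with that ancestor's sibling label. The claim to establish by induction on $\ell$ is that the value $y^\ell$ computed by $\mverify$ equals $y^\ell_{\ceil{i/2^\ell}}$, the honest label of the $\ell$-th ancestor of leaf $i$. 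The base case $\ell = 1$ is immediate from the definition of $y^0_i = \hash(\seed,x_i)$ and the fact that $p$ supplies $\sibling(y^0_i)$ on the correct side (as dictated by the parity recorded in $p$); the inductive step combines the induction hypothesis with the supplied sibling label at level $\ell$. At $\ell = \log n$ this gives $y^{\log n} = y^{\log n}_1 = y$, so $\mverify$ outputs $1$. (This is a routine calculation tracking left/right placement according to the parities stored in $p$; I would present it compactly.)

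\textbf{Soundness.}
Suppose $\Adv$ wins the soundness game with non-negligible probability $\varepsilon$. The reduction $\Adv'$ receives the challenge seed $\seed$ for $\hash$, runs $\Adv(\seed)$ to obtain $(i,\{x_j\}_{j\in[n]\setminus\{i\}})$, then \emph{itself} samples $x_i\gets\zo^\lambda$ (this internal sampling is exactly what the game's challenger does, so it introduces no difficulty), computes $y = \mhash(\seed,x_1,\dots,x_n)$ together with the honest node labels $y^\ell_k$ along the root-path of leaf $i$, and hands $(x_i,y)$ to $\Adv$. With probability $\varepsilon$, $\Adv$ returns $(x',p)$ with $\mverify(\seed,x',y,p)=1$ and $x'\neq x_i$. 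Now $\mverify$ defines a chain of hash evaluations $g_0 = \hash(\seed,x'),\ g_1,\dots,g_{\log n}=y$, where each $g_\ell$ is $\hash(\seed,\cdot)$ of a string built from $g_{\ell-1}$ and the level-$\ell$ sibling in $p$; the honest computation gives an analogous chain $h_0 = \hash(\seed,x_i),\ h_1,\dots,h_{\log n}=y$. If $g_0 = h_0$ then $\hash(\seed,x') = \hash(\seed,x_i)$ with $x'\neq x_i$, a collision, and $\Adv'$ outputs $(x',x_i)$. Otherwise $g_0 \neq h_0$ but $g_{\log n} = h_{\log n}$, so there is a smallest level $\ell^\ast\ge 1$ with $g_{\ell^\ast} = h_{\ell^\ast}$ and $g_{\ell^\ast-1}\neq h_{\ell^\ast-1}$; the two preimages hashed to obtain $g_{\ell^\ast}$ and $h_{\ell^\ast}$ are then distinct strings with equal hash, and $\Adv'$ outputs that pair. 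Either way $\Adv'$ breaks collision resistance with probability $\ge \varepsilon$, completing the reduction.

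\textbf{Main obstacle.}
The one place that needs care is arguing that the two preimages extracted at the merge level $\ell^\ast$ are genuinely \emph{distinct} strings (so that they constitute a valid collision), since $\Adv$ controls the sibling values and the recorded parities in $p$. The clean way to handle this is to note that $g_{\ell^\ast-1}$ sits in one half of $\mverify$'s preimage at level $\ell^\ast$ while $h_{\ell^\ast-1}$ sits in the corresponding half of the honest preimage, and $g_{\ell^\ast-1}\neq h_{\ell^\ast-1}$; a malformed proof that places $g_{\ell^\ast-1}$ on the ``wrong'' side can only help, since then equality of the two preimages would force further coincidences that themselves yield an earlier collision (pushing the analysis down a level) or contradict $g_{\ell^\ast-1}\neq h_{\ell^\ast-1}$. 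I expect writing out this position/parity bookkeeping to be the most delicate part; everything else is standard Merkle-tree reasoning.
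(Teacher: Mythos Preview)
The paper states this theorem as a standard result about Merkle trees and does not include a proof. Your overall strategy is the standard one, and the completeness argument is fine. The soundness reduction, however, has a real gap.

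You compare the adversary's verification chain $g_0,\dots,g_{\log n}$ to the honest chain $h_0,\dots,h_{\log n}$ rooted at leaf $i$, the particular index the challenger filled in. But the proof $p$ that $\Adv$ returns carries its own sequence of indices/parities $(i_0,\dots,i_{\log n})$, and $\mverify$ follows \emph{that} path, which need not be leaf $i$'s path at all. If the adversarial path and leaf $i$'s path diverge at some level $\ell$ (going down from the root), then even when no collision has been found above $\ell$, the value $g_\ell$ equals an honest node in a \emph{different} subtree than $h_\ell$. From that point on your ``first merge level'' argument loses traction: you can have $g_{\log n}=h_{\log n}$ and $g_0\neq h_0$ without ever exhibiting a collision along the comparison, simply because the two chains are tracking different nodes of the honest tree.

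The fix is to compare $g_\ell$ not to leaf $i$'s path but to the honest labels along the path determined by the parities in $p$ (equivalently, run the argument top-down from the root). At the leaf level this yields $\hash(\seed,x')=\hash(\seed,x_{j^\ast})$ for whichever leaf $j^\ast$ the adversarial proof points to, and here you invoke the full winning condition $x'\neq x_j$ for every $j\in[n]$ (not merely $x'\neq x_i$) to conclude this is a collision. With that correction the parity/side bookkeeping you flagged as the main obstacle also evaporates: along the correctly chosen comparison path, the two preimages at each level have their variable halves on the \emph{same} side by construction, so $g_{\ell^\ast-1}\neq h_{\ell^\ast-1}$ immediately forces the full preimages to differ.
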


\subsection{Multi-signatures}\label{sec:ms}
In a multi-signature scheme, a single short object---the \emph{multi-signature}---can take the place of $n$ signatures by $n$ signers, all on the same message.\footnote{Note that multi-signatures are a special case of \emph{aggregate} signatures \cite{BGLS03}, which in contrast allow combining signatures from $n$ different parties on $n$ \emph{different} messages.}
The first formal treatment of multi-signatures was given by \citet*{MOR01}. We consider a variant of this model due to \citet{Boldyreva03} that is also used by \citet{LOSSW13}. In this model, the adversary is given a single challenge verification key $\vk$, and a signing oracle for that key. His goal is to output a forged multi-signature $\sigma^*$ on a message $m^*$ under keys $\vk_1,\ldots,\vk_\ell$, where at least one of these keys is a challenge verification key (without loss of generality, $\vk_1$). For the forgery to be nontrivial, the adversary must not have queried the signing oracle at $m^*$.
The adversary is allowed to choose the remaining keys, but must prove knowledge of the private keys corresponding to them.

\begin{definition} \label{def:multisig}
A \emph{multi-signature} scheme is a tuple of algorithms
\begin{itemize}[leftmargin=*]
	\item $\MSSetup(1^\secParam)\to\pp$: On input the security parameter, the setup algorithm outputs public parameters $\pp$.
	\item $\MSGen(\pp)\to(\vk,\sk)$: On input the public parameters $\pp$, the key-generation algorithm outputs a pair of verification/signing keys $(\vk,\sk)$.
	\item $\MSSign(\pp,\sk,m)\to \sigma$: On input $\pp$, a signing key $\sk$, and a message $m$, the signing algorithm outputs a signature $\sigma$.
	\item $\MSVerify(\pp,\vk,\sigma,m)\to b$: On input $\pp$, a verification key $\vk$, a signature $\sigma$, and a message $m$, the verification algorithm outputs a bit $b\in\zo$.
	\item $\MSCombine(\pp,\sset{\vk_i,\sigma_i}_{i=1}^\ell, m)\to\sigma$: On input $\pp$, a collection of signatures (or multi-signatures), and a message $m$, the combine algorithm outputs a combined multi-signature $\sigma$, with respect to the union of verification keys.
	\item $\MSMverify(\pp,\sset{\vk_1,\ldots,\vk_n},S, m, \sigma)\to b$: On input $\pp$, the set of all verification keys, a subset $\S\subseteq[n]$, a message $m$, and a multi-signature $\sigma$, the multi-signature verification algorithm outputs a bit $b\in\zo$.
\end{itemize}
\end{definition}

\noindent
We require the following properties from a multi-signature scheme.

\paragraph{Correctness:}
The correctness requirement of digital signatures must hold for $(\MSSetup$, $\MSGen, \MSSign, \MSVerify)$.
In addition, for any message $m$, any collection of honestly generated signatures $\sset{\sigma_i \gets \MSSign(\pp,\sk_i,m)}_{i \in S}$ on $m$ (for some $S \subseteq [n]$), the combined multi-signature formed by $\bar\sigma \gets \MSCombine(\pp,\{\vk_i, \sigma_i\}_{i \in S}, m)$
will properly verify with overwhelming probability, \ie
$\pr{ 1 \gets \MSMverify(\pp,\sset{\vk_1,\ldots,\vk_n},S, m, \bar\sigma)} \ge 1 - \negl(k)$.

\paragraph{Unforgeability:}
For any PPT adversary $\Adv$, the probability that the challenger outputs 1 when interacting with $\Adv$ in the following game is negligible in the security parameter $\secParam$:

\begin{enumerate}
    \item \emph{Setup.}
    $\Adv$ selects a proper subset $\IS \subseteq [n]$ (corresponding to corrupted parties). The challenger samples a pair of verification/signing keys $(\vk_i, \sk_i) \gets \MSGen(\pp)$ for every $i \in [n]\setminus \IS$, and gives $\Adv$ all verification keys $\sset{\vk_i}_{i \in [n]\setminus \IS}$.
    Next, $\Adv$ chooses keys $\sset{\sk_i,\vk_i}_{i \in \IS}$ for the corrupted parties and sends them to the challenger.
	
    \item \emph{Signing queries.}
    $\Adv$ can make polynomially many adaptive signature queries of the form $(m, \vk_i)$.
    For each query, the challenger responds with a signature $\sigma \gets \MSSign(\pp,\sk_i,m)$ on the message $m$ with respect to the signing key $\sk_i$ corresponding to $\vk_i$.
	
    \item \emph{Output.}
    $\Adv$ outputs a triple $(\bar\sigma^*, m^*, \sset{\vk_i}_{i \in S})$.
    The challenger outputs 1 if at least one of the provided verification keys $\vk_i$ corresponds to a challenge (honest party) key, the message $m^*$ was not queried to the signature oracle with this verification key $\vk_i$, and the provided forgery $\sigma^*$ is a valid multi-signature, \ie $1 \gets \MSMverify(\pp,\sset{\vk_1,\ldots,\vk_n},S, m^*, \sigma^*)$.
\end{enumerate}

\subsection{The Multi-Signatures Scheme of \texorpdfstring{\citet{LOSSW13}}{Lg}}\label{lossw}
In this section we describe the LOSSW multi-signature scheme that is used in \cref{sec:succinct_arguments}.
We will let $\mathbb{G}$ and $\mathbb{G}_T$ are multiplicative groups of prime order $p$, and denote $g$ a generator of $\mathbb{G}$. In addition, let $e: \mathbb{G}\times\mathbb{G}\rightarrow\mathbb{G}_T$ be an efficiently computable non-degenerate bilinear map.
The multi-signature scheme of \citet{LOSSW13} is based on the Bilinear Computational Diffie-Hellman (BCDH) assumption. The message space is $\zo^k$ for some fixed $k$. The following is taken verbatim from \cite{LOSSW13}:
\begin{itemize}[leftmargin=*]
	\item $\MSSetup(1^\secParam)$:
    Sample random elements $u',u_1,\ldots,u_k\in \mathbb{G}$ and output the public parameters $\ppms$, consisting of descriptions of $\GG, \GG_T, p, e$, $u',u_1,\ldots,u_k$ and the generator $g$ of $\GG$.
	\item $\MSGen(\ppms)$:
    Sample a random signing key $\sk\in\ZZ_p$ and set the corresponding verification key $\vk$ as $e(g,g)^{\sk}$.
	\item $\MSSign(\ppms,\sk,m)$:
    Parse the message $m$ as $(m_1,\ldots,m_k)\in\zo^k$, sample $r\gets\ZZ_p$, and compute $\sigma=(\sig_1,\sig_2)$ as follows:
	\[
    \sig_1=g^\sk\cdot \bigg(u'\cdot\prod_{i=1}^k u_i^{m_i}\bigg)^r~\text{and}~\sig_2=g^r.
    \]
	\item $\MSVerify(\ppms, \vk, m, \sigmams):$
    Parse the message $m$ as $(m_1,\ldots,m_k)\in\sset{0,1}^k$ and $\sigmams=(\sig_1,\sig_2)$, and outputs 1 if and only if
	\[
	e(\sig_1,g)\cdot e\bigg(\sig_2,u'\cdot\prod_{i=1}^k u_i^{m_i}\bigg)^{-1}= \vk.
	\]
    \item $\MSCombine(\ppms, \sset{\vk_i,\sigma_i}_{i\in S},m)$:
    Parse each $\sigma_i$ as $(\sig_1^{(i)},\sig_2^{(i)})$ and compute the combined multi-signature $\sigmams=(\sig_1,\sig_2)$ as follows:
	\[
	\sig_1=\prod_{i\in S} \sig_1^{(i)}\text{ and }\sig_2=\prod_{i\in S}\sig_2^{(i)}.
	\]
    \item $\MSMverify(\ppms,\sset{\vk_1,\ldots,\vk_n},S,m,\sigmams)$:
    Output 1 if and only if
	\[
	e\left(\sig_1,g\right)\cdot e\bigg(\sig_2,u'\cdot\prod_{i=1}^k u_i^{m_i}\bigg)^{-1}=\prod_{i\in S} \vk_{i}.
	\]
\end{itemize}

\section{Balanced Communication-Efficient BA (Cont'd)}\label{sec:app:BA_cont}
In this section, we provide supplementary material for \cref{sec:app:BA}.
\subsection{Balanced Byzantine Agreement from SRDS (Cont'd)}\label{sec:ba_from_srds_cont}

In this section, we give the proof of \cref{lem:ba} and discuss applications of our Byzantine agreement protocol.

\begin{proof}
Let $\Adv$ be a PPT adversary for $\protba$. We construct a simulator $\Sim$ as follows.
The simulator $\Sim$ starts by simulating the setup for the protocol, while allowing adaptive corruptions by \Adv (in a similar way to the robustness and unforgeability games). First, $\Sim$ runs the setup algorithm as $\pp\gets\TSCR(1^\secParam,1^{n\cdot z})$, and for every $i\in[n]$ and $j\in[z]$ computes $(\vk_{i,j},\sk_{i,j})\gets\TSGen(\pp)$.
Next, \Sim sends $(1^\secParam, 1^{n\cdot z},\pp,\sset{\vk_{i,j}}_{i\in[n],j\in[z]})$ to $\Adv$.
As long as $\ssize{\IS}\leq \beta\cdot n$ and $\Adv$ requests to corrupt a party $\Party_i$, the simulator sends $\sset{\sk_{i,j}}_{j\in[z]}$ to $\Adv$ and receives back $\sset{\vk_{i,j}'}_{j\in[z]}$; in the bare-PKI mode, $\Sim$ updates each $\vk_{i,j}=\vk_{i,j}'$.
Let $\sset{\vk_{i,j}}_{i\in[n],j\in[z]}$ be the PKI keys at the end of this process.

The simulator \Sim proceeds to simulate the protocol execution towards \Adv. Initially, \Sim receives from $\fba$ the input bits of all honest parties $\sset{x_i}_{i\notin\IS}$.
To simulate $\faecomm$ in Step \ref{step:faecomm_first}, the simulator receives from \Adv the communication-tree $\commtree$ defining the set of isolated parties $\DS$.
The simulator simulates sending the output to every corrupted party. Let $\CS$ denote the supreme committee (the parties assigned to the root).

To simulate $\fba$ for the supreme committee in Step~\ref{step:fba}, \Sim sends to $\Adv$ the input bit $x_i$ for every $i\in\CS\setminus\IS$ and receives inputs $\sset{x_i}_{i\in\IS\cap\CS}$. If $2/3$ of the honest committee members' bits are the same, denote this value by $y$; otherwise, let \Adv determine $y$. Output the value $y$ to every corrupted party in $\CS$. To simulate $\fct$ in Step~\ref{step:fct}, sample a random $s\in\zo^\secParam$ and send $s$ to $\Adv$ for every $\Party_i$ for $i\in\IS\cap\CS$.

To simulate the call to $\faecomm$ in Step~\ref{step:faecomm_second}, receive inputs from $\Adv$ on behalf of corrupted supreme-committee members, and send $(y,s)$ to $\Adv$ for every $i\in\IS$. In addition, receive $(y_i,s_i)$ for every $i\in\DS$ from $\Adv$.

Next, for every honest party $\Party_i$ for $i\notin\IS$ do the following:
\begin{itemize}
    \item
    For $i\notin\DS$, compute $\sigma_{i,j}\gets\TSSignShare(\pp,\mapping(i,j),\sk_{i,j},(y,s))$ for each $j\in[z]$.
    \item
    For $i\in\DS$, compute $\sigma_{i,j}\gets\TSSignShare(\pp,\mapping(i,j),\sk_{i,j},(y_i,s_i))$ for each $j\in[z]$.
\end{itemize}

To simulate Step~\ref{step:sign}, for every $i\in[n]\setminus\IS$, let $L_i=\sset{\leafnode_{i_1},\ldots,\leafnode_{i_z}}\subseteq\vertexCommtree$ be the subset of leaves assigned to $\Party_i$. For each $j\in[z]$, send $\sigma_{i,j}$ to all corrupted parties assigned to the leaf node~$\leafnode_{i_j}$ on behalf of $\Party_i$. In addition, for every $\Party_i$ assigned to a leaf node $v$, receive a signature $\sigma_{j,k}$ from every corrupt $\Party_j$ for which $v=\leafnode_{j_k}\in L_j$.

To simulate Step~\ref{step:recurse}, for each level $\ell=1,\ldots,\ell^*$ of the tree and each node $v$ on level $\ell$:
\begin{enumerate}
	\item For each $i\in \party(v)\setminus \IS$, prepare the set of signatures received in Step~\ref{step:recurse:firstlevel} as follows:
	\begin{itemize}[leftmargin=*]
	\item
    For $\ell=1$: let $S^{\ell,i,1}_\sig$ be the set of following signatures. For every honest $\Party_j$ with $v=\leafnode_{j_k}\in L_j$, the signature $\sigma_{j,k}$ simulated in the previous step. For every corrupt $\Party_j$ with $v=\leafnode_{j_k}\in L_j$, the signature $\sigma^i_{j,k}$ received from the adversary (note that the adversary might send different signatures to different parties).
 	\item
    For $\ell>1$: let $S^{\ell,i,1}_\sig$ be the set of following signatures. For each child node $u\in\child(v)$ and each $j\in \party(u)\setminus \IS$, the signature $\sigma_{u}$ (that was simulated for level $\ell-1$). For each $j\in \party(u)\cap \IS$, the signature $\sigma^i_u$ received from the adversary $\Adv$ (note that the adversary might send different signatures to different parties).
	\end{itemize}
    \item Next, simulate $\ssize{\party(v)}$ broadcast protocols in Step~\ref{step:recurse:ba}, where for every $i\in \party(v)$, party $\Party_i$ broadcasts $S^{\ell,i,1}_\sig$. Let $S^{\ell,i,2}_\sig$ be the union of the sets of the broadcasted signatures.
    \item
    To simulate Step~\ref{step:recurse:secondlevel}, for each party assigned to the node, \ie for each $i\in\party(v)\setminus \IS$, compute $S^{\ell,i,3}_\sig\gets \TSAggr_1(\pp,\sset{\vk_{1,1},\ldots,\vk_{n,z}},(y,s),S^{\ell,i,2}_\sig)$.
   	If $\ell=1$, for each  $\signsig$ in $S^{i,\ell,3}_\sig$ check if $\indexMin(\signsig)=\indexMax(\signsig)$ and if $\indexMin(\signsig)\in\Range(v)$, whereas if $\ell>1$ check if $\exists v'\in\child(v)$ such that the range  $[\indexMin(\signsig),\indexMax(\signsig)]$ falls within the range $\Range(v')$. If this check fails for any $\signsig$, it updates $S^{i,\ell,3}_\sig=S^{i,\ell,3}_\sig\setminus\sset{\signsig}$.
    To simulate $\faggrsig$, for every $i\in\party(v)\cap \IS$, receive from $\Adv$ a message $((\tilde y_i,\tilde s_i),\tilde S^{\ell,i,3}_\sig)$.
    If $\ssize{\party(v)\setminus\sset{\IS\cup\DS}}\geq2\ssize{\party(v)}/3$ (\ie the node is good), compute
	\[
	\sigma_v\gets\TSAggr_2\Big(\pp, (y,s), S^{\ell,i,3}_\sig\Big).
	\]
    Else (\ie the node is bad), get $\sigma_v$ from $\Adv$. Finally, send $\sigma_v$ to $\Adv$ as the output of $\faggrsig$.
	\item
    If $\ell<\ell^*$, send for every $\sigma_v$ from each honest party in $\party(v)$ to every corrupt party in $\party(u)$, where $u=\parent(v)$. In addition, receive from $\Adv$ a signature $\sigma_v'$ from every corrupt party in $\party(v)$ to every honest party in $\party(u)$.
\end{enumerate}

To simulate the call to $\faecomm$ in Step~\ref{step:faecomm_third}, receive inputs from $\Adv$ on behalf of corrupted supreme-committee members, and send $(y,s,\sigma_\rroot)$ to $\Adv$ for every $i\in\IS$. In addition, receive $(y'_j,s'_j,\sigma_j')$ for $j\in\DS$ from $\Adv$.
Finally, to simulate Step~\ref{step:prf}, for every $i\notin\IS\cup\DS$ evaluate $\CS_i=F_s(i)$ and simulate party $\Party_i$ sending $(y,s,\sigma_\rroot)$ to every party $\Party_j$ for $j\in\IS\cap\CS_i$. For every $i\in\DS$ evaluate $\CS_i=F_{s'_j}(i)$ and simulate party $\Party_i$ sending $(y'_j,s'_j,\sigma'_j)$ to every party $\Party_j$ for $j\in\IS\cap\CS_i$.

To conclude the simulation, the simulator sends the value $y$ to the ideal functionality $\fba$ as the ``tie-breaker'' value and outputs whatever \Adv outputs.

Note that \Sim simulates a random honest execution towards the adversary, with only the syntactic difference that \Sim simulates the ideal functionalities computing $\faecomm$, $\fba$, $\fct$ and $\faggrsig$ (rather than using trusted parties).
Thus, the view of the adversary is perfectly distributed in the real and ideal worlds. What remains to prove is that conditioned on the view of the adversary, the output of the honest parties is correct and identical in the
real and ideal worlds. In other words, we need to show that this Byzantine agreement protocol satisfies both agreement and validity.

\begin{claim}[Agreement] \label{claim:agreement}
For any adversarial strategy of $\Adv$, all honest parties output the same value, except for negligible probability.
\end{claim}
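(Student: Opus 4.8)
\textbf{Proof plan for \cref{claim:agreement} (Agreement).}
The plan is to reduce agreement to the \emph{unforgeability} and \emph{robustness} properties of the underlying SRDS scheme, together with the security of the PRF and the guarantees of $\faecomm$. First I would observe that in Step~\ref{step:output} an honest party $\Party_j$ outputs $y$ only if it receives some message $(y,s,\sigma)$ from a party $\Party_i$ with $j\in F_s(i)$ such that $\TSVerify(\pp,\sset{\vk_{1,1},\ldots,\vk_{n,z}},(y,s),\sigma)=1$. So the whole claim boils down to: (i) every honest party \emph{does} receive such an accepting message (liveness/termination of the output step), and (ii) no honest party receives an accepting message on a \emph{different} value $y'\neq y$. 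I would handle these two directions separately.

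For direction (ii), the non-forgery direction, I would argue as follows. Let $(y,s)$ be the pair that the supreme committee agrees on in Steps~\ref{step:fba}--\ref{step:fct}; by the honest majority in $\CS$ and the guarantees of $\fba,\fct$, this pair is well-defined. The honest parties outside $\DS$ all receive $(y,s)$ from $\faecomm$ in Step~\ref{step:faecomm_second} (the isolated parties in $\DS$ may receive something else chosen by $\Adv$). Thus every honest non-isolated party signs exactly $(y,s)$ with each of its $z$ virtual identities, and only the isolated honest parties ($\goodP$, in the SRDS-robustness terminology) and the corrupted parties might sign other messages. Suppose for contradiction that with non-negligible probability some honest party receives an accepting SRDS signature $\sigma'$ on $(y',s')\neq(y,s)$. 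I would then build a reduction to the SRDS unforgeability game (\cref{fig:expforge}): the reduction plays the challenger's role in $\protba$, using the SRDS challenger's keys for the $n\cdot z$ virtual parties, forwarding the adaptive corruptions, taking $\setP$ to be the virtual identities of the isolated honest parties together with enough slack so that $\ssize{\setP\cup\IS}<n\cdot z/3$ (this uses that $\ssize{\DS}$ is a $o(1)$ fraction and $\beta<1/3$, so even after multiplying identities the bad fraction stays below $1/3$), requesting signatures on $(y,s)$ for the remaining honest virtual identities and on the appropriate $m_i$ for isolated ones, simulating the rest of the protocol, and finally extracting $\sigma'$ and $(y',s')$ from the transcript as the forgery. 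This contradicts unforgeability. Here I must also invoke the $\indexMin/\indexMax$ range checks performed in Step~\ref{step:recurse:secondlevel} and the planarity/increasing-ID ordering of the communication tree, which ensure the aggregation never double-counts a base signature — this is precisely what makes the reduction's bookkeeping match the SRDS robustness/unforgeability tree structure described just before \cref{lem:ba}.

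For direction (i), liveness, I would invoke \emph{robustness} of the SRDS: because at least a $1-o(1)$ fraction of leaves have a good path to the root, and the parties assigned to good nodes are honest-majority, the recursive aggregation up the tree in Step~\ref{step:recurse} produces, at the root, a signature $\sigma_\rroot$ on $(y,s)$ that passes $\TSVerify$ — formally this is the statement that $\ExptTSrobust=1$ except with negligible probability, with the $(n\cdot z,\cdot)$-almost-everywhere-communication tree obtained by flattening $\commtree$ as described in the paragraph before \cref{lem:ba}. Then the supreme committee redistributes $(y,s,\sigma_\rroot)$ via $\faecomm$ in Step~\ref{step:faecomm_third}, so every honest non-isolated party holds an accepting triple. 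In Step~\ref{step:prf} each honest non-isolated $\Party_i$ sends it to $\CS_i=F_s(i)$. Since $1-o(1)$ of the parties are honest and non-isolated, and since $s$ was produced by $\fct$ \emph{independently} of the adversary's choices so the PRF outputs $\sset{F_s(i)}$ behave like random $\polylog(n)$-size subsets, a standard covering/expansion argument (as in \cite{BGT13}) shows that every party — in particular every isolated honest party — lies in $F_s(i)$ for some honest non-isolated $\Party_i$, except with negligible probability; that $\Party_j$ then verifies the signature and outputs $y$. I expect the main obstacle to be making the unforgeability reduction in direction (ii) fully rigorous: one has to carefully argue that the simulated $\protba$-transcript the reduction feeds $\Adv$ is identically (or statistically close to) distributed as a real execution, that the corruption budget $\ssize{\setP\cup\IS}<n\cdot z/3$ is genuinely respected after the virtual-identity blow-up, and that the forgery extracted from the transcript is on a message $m'\neq m$ as required by the experiment — matching the subtlety flagged in the paragraph preceding \cref{lem:ba}.
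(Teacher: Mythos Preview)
Your proposal is correct and follows essentially the same approach as the paper: the paper's proof also decomposes agreement into (1) a PRF/covering argument that every honest party (including isolated ones) is reached by some non-isolated honest sender in the final round, (2) a reduction to SRDS \emph{robustness} showing $\sigma_\rroot$ is a valid signature on $(y,s)$, and (3) a reduction to SRDS \emph{unforgeability} showing no accepting signature on any $(y',s')\neq(y,s)$ exists---exactly your directions (i) and (ii), just presented in a different order. One small remark: the $\indexMin/\indexMax$ range checks you invoke are really used on the \emph{robustness} side (so the honest aggregation matches the SRDS robustness tree and avoids double-counting), not in the unforgeability reduction, where the adversary is free to aggregate however it likes and one simply hands the forged $(\sigma',(y',s'))$ to the SRDS challenger.
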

We show that our protocol satisfies agreement in three main steps; (1) We start by showing that with an overwhelming probability, every isolated party receives a message from at least one non-isolated honest party in the last round. (2) Next, we show that the aggregate signature $\sigma_\rroot$ obtained by the end of Step~\ref{step:recurse} is a valid SRDS on $(y,s)$, where $y$ and $s$ are the outputs of $\fba$ and $\fct$ in Step~\ref{step:fct}, respectively. We prove this by showing a reduction to the robustness property of the SRDS scheme. Thereby showing that every honest party receives a valid SRDS on the same message $(y,s)$. (3) Finally we prove that every honest party only receives one valid SRDS (which is on $(y,s)$). We prove this by showing that an adversary cannot forge a valid SRDS on any other message by relying on the unforgeability of the SRDS scheme. Thus, each honest party outputs the same value $y$.
Now we proceed to the formal proof.
\begin{proof}[Proof of \cref{claim:agreement}.]
Let $F_s$ be a truly random function. Then the set $\CS_i$ defined by $F_s(i)$ is chosen randomly for each $i\in[n]$. Therefore, in expectation, each party $P_j$ appears in $\polylog(n)$ sets. From Chernoff bound, except with some negligible probability (in $n$), each party receives messages from $\polylog(n)\pm\delta$ for $\delta=O(1)$ other parties.
Similarly, except with negligible probability, each party receives messages from at least one non-isolated honest party. Therefore, each isolated party $P_i$ for $i\in\DS$ in the initial phase of the protocol, receives a message from at least one non-isolated honest party $P_j\in[n]\setminus\sset{\IS\cup\DS}$. If this is true for a truly random function, the same must also hold for a pseudorandom $F_s$ with overwhelming probability (in $\secParam$) over a random seed $s$.
Recall that the message sent by $P_j$ to $P_i$ is $(y,s,\sigma_\rroot)$ (where $y$ is the output of $\fba$ in Step~\ref{step:fba}).
It remains to show the following:
\begin{enumerate}
	\item Except with some negligible probability, $\sigma_\rroot$ is a valid SRDS on $(y,s)$.
	\item Except with some negligible probability, no adversary can compute a valid SRDS on any message other than $(y,s)$.
\end{enumerate}

\paragraph{1. Receiving valid signatures on $(y,s)$.}
Let us assume for the sake of contradiction that $\sigma_\rroot$ is \emph{not} a valid SRDS on $(y,s)$. We now construct an adversary $\BS$ that can break \emph{robustness} of the SRDS scheme. The adversary $\BS$ interacts with the challenger of the SRDS scheme and the adversary $\Adv$ and proceeds as follows:
\begin{itemize}[leftmargin=*]
	\item $\BS$ maps each corrupt virtual party to a party in the SRDS robustness game, i.e., elements in the set $[n\cdot z]$. In other words, the challenger of the SRDS scheme runs the setup algorithm as $\pp\gets\TSCR(1^\secParam,1^{n\cdot z})$, and for every $i\in[n]$ and $j\in[z]$ computes $(\vk_{i,j},\sk_{i,j})\gets\TSGen(1^\secParam)$.
	Next, it sends $(1^\secParam, 1^{n\cdot z},\pp,\sset{\vk_{i,j}}_{i\in[n],j\in[z]})$ to $\BS$, which it forwards to $\Adv$. For each $i\in\IS$ that $\Adv$ requests to corrupt, $\BS$ chooses to corrupt the corresponding parties $\sset{(i,j)}_{j\in[z]}$ in the SRDS robustness game and receives $\sset{\sk_{i,j}}_{j\in[z]}$ from the challenger, which it forwards to $\Adv$.
Next, $\BS$ receives verification keys $\sset{\vk_{i,j}'}_{i\in\IS, j\in[z]}$ of the corrupted parties from $\Adv$.
 \item
For the bare-PKI mode, $\BS$ updates $\vk_{i,j}=\vk_{i,j}'$ for each $i\in\IS$ and $j\in[z]$.
	\item $\BS$ then simulates step \ref{step:faecomm_first}, as described in the simulator to receive the $(n,\IS)$-\textsf{almost-everywhere-communication tree with repeated parties} $\tree$ from $\Adv$. It  transforms this tree into an $(n\cdot z,\{(i,j)\}_{i\in\IS,j\in[z]})$-\textsf{almost-everywhere-communication tree} by augmenting it with level $0$ comprising of $n\cdot z$ nodes (representing the $n\cdot z$ virtual parties in the SRDS game), and adding an edge between each of these nodes and the leaf node that it (\ie the party that they represent) is assigned to. It forwards this transformed tree to the challenger of the SRDS game.

	\item $\BS$ then proceeds to simulate steps \ref{step:fba}, \ref{step:fct}, and \ref{step:faecomm_second} as described in the simulator and learns $(y,s)$ and $(y_i,s_i)$ for each $i\in\DS$. $\BS$ sets $m=(y,s)$, $m_{(i,j)}=(y_{i},s_i)$ for each $i\in\DS$, $j\in[z]$ and for each $(i,j)\in \goodP\setminus\{(i,j)\}_{i\in\DS,j\in[z]}$, where $\goodP$ is the set of all honest parties in the SRDS game that are assigned to leaf nodes that do not have a good path in the transformed tree (described in the previous step), it sets $m_{(i,j)}=(y,s)$. It sends these messages to the challenger of the SRDS game.
	
	\item $\BS$ receives signatures $\sset{\sigma_{i,j}}_{i\in [n]\setminus \IS,j\in[z]}$ of the honest parties from the challenger and forwards them to the adversary $\Adv$.

	\item
	For each level $\ell=1,\ldots,\ell^*$ of the communication tree and each node $v$ on level $\ell$, it simulates Step~\ref{step:recurse} as described in the simulator, except in Step~\ref{step:recurse:secondlevel}, if the node is good, it sets $\sigma_v$ to the partially aggregated signature sent by the challenger and if the node is bad, it forwards the partially aggregated signature $\sigma_v$ received from $\Adv$ to the challenger of the SRDS game.
\end{itemize}
Note that if for some adversarial strategy $\Adv$, the signature $\sigma_\rroot$ is not a valid SRDS on $(y,s)$, then by construction, $\BS$ wins the robustness game of the SRDS scheme. From robustness of the SRDS scheme, we know that this only happens with at most negligible probability, therefore our assumption is incorrect and with overwhelming probability, $\sigma_\rroot$ is a valid SRDS on $(y,s)$.

\paragraph{2. Not receiving valid signatures on other values.}
We now show that if the adversary $\Adv$ can forge an SRDS on any other message, then we can use this adversary to construct another adversary $\BS$ that can  break unforgeability of the SRDS scheme. The adversary $\BS$ proceeds as follows:
\begin{itemize}[leftmargin=*]
	\item $\BS$ maps each corrupt virtual party to a party in the SRDS game, i.e., elements in the set $[n\cdot z]$. In other words, the challenger of the SRDS scheme runs the setup algorithm as $\pp\gets\TSCR(1^\secParam,1^{n\cdot z})$, and for every $i\in[n]$ and $j\in[z]$ computes $(\vk_{i,j},\sk_{i,j})\gets\TSGen(1^\secParam)$.
	Next, it sends $(1^\secParam, 1^{n\cdot z},\pp,\sset{\vk_{i,j}}_{i\in[n],j\in[z]})$ to $\BS$, which it forwards to $\Adv$. For each $i\in\IS$ that $\Adv$ requests to corrupt, $\BS$ chooses to corrupt the corresponding parties $\sset{(i,j)}_{j\in[z]}$ in the SRDS game and receives $\sset{\sk_{i,j}}_{j\in[z]}$ from the challenger, which it forwards to $\Adv$.
	Next, $\BS$ receives verification keys $\sset{\vk_{i,j}'}_{i\in\IS, j\in[z]}$ of the corrupted parties from $\Adv$.
In the bare-PKI mode, $\BS$ updates $\vk_{i,j}=\vk_{i,j}'$ for each $i\in\IS, j\in[z]$.
	
	\item $\BS$ then proceeds to simulate Steps \ref{step:faecomm_first}, \ref{step:fba}, \ref{step:fct}, and \ref{step:faecomm_second} as described in the simulator.
	$\BS$ chooses $m=(y,s)$ and $\setP=\DS$ and sends it to the challenger. For each $i\in\DS$ and $j\in[z]$, it sets $m_{i,j}=(y_i,s_i)$ as received from the adversary.
	\item
	$\BS$ receives signatures $\sset{\sigma_{i,j}}_{i\in [n]\setminus \IS,j\in[z]}$ of the honest parties from the challenger and forwards them to the adversary $\Adv$.
	\item $\BS$ then simulates Steps \ref{step:sign}, \ref{step:recurse}, \ref{step:faecomm_third}, \ref{step:prf}, and \ref{step:output} as described in the simulator.
	\item Finally if $\Adv$ manages to send a valid SRDS on a message other than $(y,s)$ to any of the honest parties,
	$\BS$ forwards that to the challenger.
\end{itemize}
Clearly, $\BS$ wins the forgery game only if $\Adv$ succeeds in forging a valid SRDS on a message other than $(y,s)$. Since our SRDS scheme is unforgeable, this only happens with negligible probability.
\QED
\end{proof}	
\begin{claim}[Validity]\label{claim:validity}
For any adversarial strategy of $\Adv$, if there exists a value $x$ such that $x_i=x$ for each honest party $P_i\in[n]\setminus\IS$, then the output of all honest parties is $y=x$.
\end{claim}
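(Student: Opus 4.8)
The plan is to trace the value $x$ through the protocol and invoke the guarantees of the hybrid functionalities together with robustness of the SRDS scheme. Suppose every honest party $\Party_i$ with $i \in [n]\setminus\IS$ holds input $x_i = x$. Since $\beta < 1/3$, the supreme committee $\CS$ obtained from $\faecomm$ in Step~\ref{step:faecomm_first} has a $2/3$ honest majority, so at least $2/3$ of the members of $\CS$ feed $x$ into $\fba$ in Step~\ref{step:fba}. By the validity guarantee of $\fba$ (recall its output equals the supermajority value whenever one exists), every honest committee member receives $y = x$. The committee then runs $\fct$ to obtain a common seed $s$, and disseminates $(y,s)$ via the subsequent call to $\faecomm$ in Step~\ref{step:faecomm_second}; by the guarantees of $\faecomm$, every non-isolated honest party $\Party_i$ with $i \notin \DS$ receives $(y_i,s_i) = (y,s) = (x,s)$, while the isolated parties in $\DS$ may receive adversarially chosen pairs.

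Next I would argue that the aggregated signature $\sigma_\rroot$ produced in Step~\ref{step:recurse} is an accepting SRDS signature on $(y,s)$: this is exactly the content of part~1 of the proof of \cref{claim:agreement} (the reduction $\BS$ to the robustness experiment $\ExptTSrobust$), so I would simply invoke it here rather than reprove it. In particular, with overwhelming probability $\TSVerify(\pp,\sset{\vk_{1,1},\ldots,\vk_{n,z}},(y,s),\sigma_\rroot)=1$. The committee then distributes $(y,s,\sigma_\rroot)$ in Step~\ref{step:faecomm_third}, and in Step~\ref{step:prf} each party forwards it along the PRF-determined set $\CS_i = F_s(i)$. As shown in the proof of \cref{claim:agreement}, with overwhelming probability over the seed $s$ every party — including every isolated party — receives a message from at least one non-isolated honest party, and that honest party relays precisely $(y,s,\sigma_\rroot)$ with $y = x$. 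Combined with part~2 of \cref{claim:agreement} (unforgeability, ruling out an accepting signature on any $(y',s')\neq(y,s)$), the only value carried by an accepting message is $x$. Hence in Step~\ref{step:output} every honest party outputs $y = x$.

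The main subtlety — and the step I expect to be the actual obstacle — is handling the \emph{isolated} honest parties, who never see $(y,s)$ directly from the committee and whose inputs therefore played no role in determining $y$; one must rely entirely on the fact that their PRF-neighborhood intersects the non-isolated honest set so that they receive the relayed certified pair, and that unforgeability prevents the adversary from confusing them with a certified message for $1-x$. Since these facts are already established inside the proof of \cref{claim:agreement}, the proof of \cref{claim:validity} reduces to: (i) the validity of $\fba$ applied to the honest supermajority of $\CS$, (ii) the correctness of propagation through $\faecomm$, and (iii) citing the relevant parts of \cref{claim:agreement} for delivery of the certified value. I would write it as a short argument assembling these three ingredients, with the only quantitative care being that all failure events (the $\faecomm$/$\fba$/$\fct$ hybrid calls, the robustness reduction, and the PRF-neighborhood coverage bound) are negligible, so their union still is, and thus all honest parties output $x$ with all but negligible probability as required by \cref{def:ba:properties}.
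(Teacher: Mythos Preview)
Your proposal is correct and uses the same ingredients as the paper. The paper's proof is simply more economical: it invokes \cref{claim:agreement} as a black box (which already establishes that every honest party outputs the value $y$ produced by $\fba$ in Step~\ref{step:fba}) and then only argues the one new point---that $y=x$ because $\CS$ has a $2/3$ honest majority and all honest inputs equal $x$---whereas you re-trace the protocol and cite the internal parts of the Agreement proof explicitly.
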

\begin{proof}
From \cref{claim:agreement}, we know that with overwhelming probability, the final output $y$ of all honest parties is the same as the output of $\fba$ in Step~\ref{step:fba}. All that remains to prove now is that if there exists a value $x$, such that $x_i=x$ for each honest party $P_i\in[n]\setminus\IS$, then the output of $\fba$ in Step~\ref{step:fba} is $x$. Recall that $\fba$ in Step~\ref{step:fba} is computed over the inputs of all parties in the supreme committee $\CS$. From \cref{def:com-tree}, we know that at least $2/3$ fraction of the parties in $\CS$ are honest. Therefore, if there exists a value $x$ such that $x$ is the input of all honest parties, then the input of all honest parties in $\CS$ is also $x$. Now, irrespective of the inputs of the remaining malicious parties in $\CS$, from the validity of $\fba$, we are guaranteed that the output of $\fba$ is $y=x$.
\QED
\end{proof}

This concludes the proof of \cref{lem:ba}.
\QED
\end{proof}

\section{Constructions of SRDS (Cont'd)}\label{sec:SRDS_cont}

In this section we present the proofs on the SRDS constructions from \cref{sec:SRDS:construction}.

\subsection{SRDS from One-Way Functions (Cont'd)}\label{sec:SRDS:owf_cont}

We now present the proof of \cref{thm:srds_owf}.

\medskip\noindent
\textbf{\cref{thm:srds_owf}.}
\emph{\ThmSRDSOWF}

\begin{proof}[Proof of \cref{thm:srds_owf}]
In \cref{lem:srds_owf_succinct}, we prove succinctness, in \cref{lem:srds_owf_robust}, we prove robustness, and in \cref{lem:srds_owf_forge}, we prove unforgeability.

\begin{lemma}\label{lem:srds_owf_succinct}
The construction in \cref{fig:srds_from_owf} is succinct.
\end{lemma}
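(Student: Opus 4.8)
\textbf{Proof plan for \cref{lem:srds_owf_succinct}.}
The plan is to verify the two conditions in the definition of succinctness directly against the construction in \cref{fig:srds_from_owf}. First I would address the \emph{size of signatures} requirement. In this construction, a base signature is of the form $\sigma = \sset{(i, m, \signsig)}$, where $\signsig$ is a single digital signature produced by $\DSSign$. Since the underlying signature scheme is instantiated from one-way functions (concretely, Lamport-style one-time signatures extended to multi-message via standard transformations, as in \cref{claim:obl_keygen}), an individual signature has length $\poly(\secParam)$, and the index $i \in [n]$ contributes $O(\log n)$ bits; hence each base signature is of size $\poly(\log n, \secParam)$. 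An aggregated signature is simply the concatenation (a set) of such base signatures, but the algorithms $\TSAggr_1$, $\TSAggr_2$, and $\TSVerify$ all explicitly enforce the bound $\sum_{\sigma \in \setsig} \|\sigma\| \leq \alpha(n,\secParam)$, outputting $\bot$ otherwise. Therefore every signature in the support of $\TSSignShare$ and $\TSAggr$ has bit-length at most $\alpha(n,\secParam) \in \poly(\log n, \secParam)$, so taking $\XS \subseteq \zo^{\alpha(n,\secParam)}$ establishes condition~1.

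Next I would verify \emph{decomposability}. The construction already presents $\TSAggr$ split into $\TSAggr_1$ and $\TSAggr_2$. I need to check: (i) $\TSAggr_1$ is deterministic --- this is immediate, as it only parses the input signatures, runs the (deterministic) verification algorithm $\DSVerify$ on each, discards duplicates and invalid ones, and checks the length bound; (ii) the output $\setsig$ of $\TSAggr_1$ has size $\poly(\log n, \secParam)$ --- this follows because $\TSAggr_1$ explicitly truncates to ensure $\sum_{\sigma \in \setsig} \|\sigma\| \leq \alpha(n,\secParam)$ before outputting, and outputs $\bot$ otherwise; and (iii) $\TSAggr_2$ takes as input only $\pp$, $m$, and $\setsig$ (not the $n$ verification keys), which is syntactically the case in the construction --- here $\TSAggr_2$ just re-checks the length bound and returns $\setsig$. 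I would also note, as the footnote in the construction points out, that the aggregation here is in fact deterministic, so $\TSAggr_2$ does not use fresh randomness; the split into two algorithms is purely to match the syntactic interface required by the BA protocol of \cref{sec:ba_from_srds}.

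I do not anticipate a genuine obstacle here: the construction was designed with the length bound $\alpha(n,\secParam)$ baked into every algorithm precisely so that succinctness holds by inspection. The only mild subtlety worth spelling out is that one must confirm an honestly generated base signature (from $\TSSignShare$) and any honestly aggregated signature (from $\TSAggr_2$ applied to $\setsig$ of size $\polylog(n)$ containing $\polylog(n)$-size base signatures) indeed fits within $\alpha(n,\secParam)$ --- i.e., that the bound is chosen large enough to not vacuously reject all honest signatures. This is a matter of fixing $\alpha(n,\secParam)$ to be a sufficiently large fixed polynomial in $\log n$ and $\secParam$, which is consistent with its declared type $\alpha(n,\secParam) \in \poly(\log n, \secParam)$; I would state this choice explicitly at the start of the proof and then the two conditions follow.
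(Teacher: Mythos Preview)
Your proposal is correct and aligns with the paper's approach: both verify the size bound and decomposability directly from the construction, and both observe that the aggregation is deterministic so $\TSAggr_2$ is trivial. The one point the paper makes explicit that you leave as a ``subtlety'' is a Chernoff-bound argument showing $\ell/2 \le |\CS| \le 3\ell/2$ with overwhelming probability (where $\CS = \{i : \sk_i \neq \bot\}$), which together with obliviousness is precisely what justifies that $\alpha(n,\secParam)$ can be fixed in $\poly(\log n,\secParam)$ while still accommodating all honest aggregations.
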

\begin{proof}
We start by proving the size of the signatures is succinct.
Let $\CS=\sset{i\mid \sk_i\neq\bot}$ and let $X$ be a random variable representing $\ssize{\CS}$.
By construction, $\Exp[X]=\ell$ and $\ell=\omega(\log(n))$. Therefore, by Chernoff bound for $\mu=\ell$ and $\delta=1/2$,\footnote{The exact Chernoff bound used is $\pr{\ssize{X-\mu}\geq\delta \mu}\leq 2e^{-\mu\delta^2/3}$ for $0<\delta<1$, where $\mu=\Exp[X]$.} it holds that
\[
\pr{\ssize{X-\ell}\geq \ell/2} \leq 2e^{-\ell/12} = \negl(n).
\]
We therefore conclude that $\ell/2\leq \ssize{\CS}\leq 3\ell/2$ with overwhelming probability (in $n$).
By definition of digital signatures, every signature in the support of $\DSSign$ is polynomial in $\secParam$. Therefore, every $\sigma$ in the support of $\TSSignShare$ (of the SRDS scheme) is also polynomial in $\secParam$.
By construction, unless an adversary is able to successfully break the obliviousness of the signature scheme (which only happens with negligible probability in $\secParam$), an aggregate signature only consists of $\ssize{\CS}$ ``base'' signatures from the parties in $\CS$.
Further, in the negligible event where the aggregate signature consists of more than $\ssize{\CS}$ base signatures, the output is $\bot$.
Therefore, the length of an aggregated signature is bounded by $\alpha(n,\secParam)\in\poly(\log{n},\secParam)$.

Proving decomposability is immediate. Since the aggregation algorithm is deterministic, it can be entirely captured by the first algorithm $\TSAggr_1$, which outputs a set of $\polylog(n)$ signatures (since there are at most $\ssize{\CS}$ signatures, with all but negligible probability). The second algorithm $\TSAggr_2$ simply outputs the same set of signatures.
\QED
\end{proof}

\begin{lemma}\label{lem:srds_owf_robust}
The construction in \cref{fig:srds_from_owf} is $\beta n$-robust.
\end{lemma}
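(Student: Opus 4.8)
The plan is to prove that, except with negligible probability, the signature $\sigma_\rroot$ reconstructed at the root of the tree verifies on the challenge message $m$; since $\ExptTSrobust_{\trpki,\Pi,\Adv}$ outputs exactly $\TSVerify(\pp,\sset{\vk_i},m,\sigma_\rroot)$, this is precisely $\beta n$-robustness (\cref{def:TSig:robust}). Recall from \cref{fig:srds_from_owf} that $\TSVerify$ accepts $\sigma_\rroot$ on $m$ iff the set $\verifyset$ of distinct, validly-signed base tuples on $m$ that it encodes has size greater than $\ell'/3=\ell/6$, and $\sigma_\rroot$ has bit length at most $\alpha(n,\secParam)$. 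So there are two things to establish: an upper bound on the size of $\sigma_\rroot$, and a lower bound on the number of honest base signatures on $m$ that survive aggregation up to the root.

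First I would control the set $\CS=\sset{i:\sk_i\neq\bot}$ of parties holding a genuine signing key. Since each key is sampled via a coin that is $\heads$ with probability $\ell/n$ and $\ell=\log^c n=\omega(\log n)$, a Chernoff bound gives $\ell/2\le\ssize{\CS}\le 3\ell/2$ with overwhelming probability (the same bound already used in \cref{lem:srds_owf_succinct}). Next, using the obliviousness property of the signature scheme (\cref{def:Okeygen}) via a standard reduction that plants the challenge key $\ovk$ at a uniformly random index, I would argue that no PPT adversary outputs a pair that verifies under an obliviously-generated verification key (i.e.\ under $\vk_i$ for $i\notin\CS$) except with negligible probability. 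Consequently, with overwhelming probability every valid base tuple that $\TSAggr_1$ ever collects at any node lies under a key in $\CS$, so after any node the cleaned set of distinct valid base tuples has size at most $\ssize{\CS}\le 3\ell/2$; choosing the parameter $\alpha(n,\secParam)\ge 3\ell/2\cdot\poly(\secParam)$ (legitimate, since $\alpha\in\poly(\log n,\secParam)$), no invocation of $\TSAggr_1,\TSAggr_2$ along a path of good nodes outputs $\bot$ for the length reason, and in particular $\sigma_\rroot$ — produced by the challenger at the good root — has bit length at most $\alpha$.

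The core of the argument is the lower bound on $\ssize{\verifyset}$. Every honest party outside $\goodP$ signs $m$; among them define the \emph{useful signers} $G:=\sset{i: i\in\CS,\ i\notin\IS,\ i\text{'s leaf is on a good path to the root}}$. For each such $i$, party $i$'s base signature on $m$ is forwarded up a path consisting entirely of good nodes, each aggregated honestly by the challenger; since $\TSAggr_1$ only discards invalid or duplicate tuples and base signatures of distinct parties carry distinct indices, an induction on the level shows every $i\in G$'s base tuple is present in $\sigma_\rroot$, so $\ssize{\verifyset}\ge\ssize{G}$. It then remains to show $\ssize{G}>\ell/6$ with overwhelming probability. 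Writing $\ssize{G}=\ssize{\CS}-\ssize{\CS\cap\IS}-\ssize{\CS\cap B}$, where $B$ is the set of uncorrupted parties whose leaf lies on a bad path, the properties of \cref{def:rob-com-tree} give $\ssize{B}\le 3n/\log n=o(n)$ and $\ssize{\IS}\le\beta n$. Bounding $\ssize{\CS\cap\IS}\le\beta\ell+o(\ell)$ and $\ssize{\CS\cap B}\le o(\ell)$, and combining with $\ssize{\CS}\ge\ell/2$, yields $\ssize{G}\ge(1/2-\beta)\ell-o(\ell)>\ell/6$ precisely because $\beta<1/3$; the arithmetic thresholds of the construction were chosen to make this margin positive.

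The main obstacle — and the only delicate point — is that both $\IS$ and the communication tree (hence $B$) are chosen \emph{adaptively} by the adversary as a function of all verification keys and of the secret keys revealed upon corruption, so a direct Chernoff bound on $\ssize{\CS\cap\IS}$ and $\ssize{\CS\cap B}$ is not immediately available (the posterior on whether a given $\vk_i$ has a signing key can in principle be skewed by what the adversary sees). I would resolve this exactly as one handles ``sortition'' arguments: by the indistinguishability half of \cref{def:Okeygen}, a hybrid over the $n$ keys shows that the joint distribution of the verification keys in $\ExptTSrobust$ is computationally close to the one in which every key is generated obliviously and the signing coins $\sset{b_i}_{i\notin\IS}$ are resampled independently afterwards; since each corruption query reveals $b_i$ only for the (already chosen) corrupted index, the adversary's choice of $\IS$ and of the tree is, up to a negligible statistical effect, independent of $\sset{b_i}_{i\notin\IS}$. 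In that idealized experiment $\IS$ and $B$ are fixed sets of sizes $\le\beta n$ and $o(n)$ chosen before the relevant coins, so $\ssize{\CS\cap\IS}$ and $\ssize{\CS\cap B}$ concentrate around their means by Chernoff (again exploiting $\ell=\omega(\log n)$), and transferring the bound back costs only a negligible additive term. Assembling the size bound from paragraph two with the count bound from paragraph three (and absorbing the negligible obliviousness/indistinguishability losses) gives $\pr{\ExptTSrobust_{\trpki,\Pi,\Adv}(\secParam,n,t)=0}\le\negl(\secParam,n)$, which is the claim.
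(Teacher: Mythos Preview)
Your proposal is correct and follows essentially the same route as the paper's proof: control $\ssize{\CS}$ by Chernoff, use obliviousness to keep the aggregate within the size cap $\alpha(n,\secParam)$, and show by concentration that more than $\ell'/3$ honest signers with keys lie on good paths (your set $G$ coincides with the paper's $\CS\cap\HS_S$). The only difference is organizational---the paper lower-bounds $\ssize{\CS\cap\HS_S}$ directly from $\ssize{\HS_S}>(1/3+\epsilon)n$ via a single Chernoff inequality, whereas you write $\ssize{G}=\ssize{\CS}-\ssize{\CS\cap\IS}-\ssize{\CS\cap B}$ and bound the two subtracted pieces separately---and you are more explicit than the paper about decoupling the adversary's adaptive choices from the uncorrupted signing coins via the indistinguishability property of \cref{def:Okeygen}, which the paper only asserts informally.
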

\begin{proof}
Let \Adv be a PPT adversary. We will show that \Adv can win the game $\ExptTSrobust_{\trpki,\Pi,\Adv}(\secParam,n,\beta n)$ (with the trusted PKI mode) with at most negligible probability.
The game begins when the challenger computes $\pp\gets\TSCR(1^\secParam, 1^n)$ and $(\vk_i,\sk_i)\gets \TSGen(\pp)$ for every $i\in[n]$. Denote $\CS=\sset{i\mid \sk_i\neq\bot}$. Next, the adversary adaptively selects the set of corrupted parties; denote by $\IS$ the set of corrupted parties.

In the \emph{robustness challenge} phase, the adversary \Adv chooses an $(n,\IS,\mathsf{robust})$-\textsf{almost-everywhere communication tree} $\tree=(\vertex,\edge)$ (see Definition \ref{def:rob-com-tree}). It also chooses a message $m\in \MS$ and $\{m_i\}_{i\in \goodP}$, where $\goodP$ is the set of honest parties that are assigned to leaf nodes that do not have a good path to the root.

Recall that there are $n/\log^5 n$ leaf nodes in this tree out of which all but $3/\log n$ fraction have a good path to the root. In other words, the signatures of the parties assigned to ``good'' leaf nodes are guaranteed to be part of the final aggregate signature. Total number of parties assigned to the good leaf nodes are $\log^5 n \left(1-\frac{3}{\log n}\right)\frac{n}{\log^5 n}=n\frac{\log n-3}{\log n}$. Let us use $S$ to denote this set of parties. We proceed to show that with overwhelming probability (in $n$), there are more than $\ell'/3$ honest parties in $S$, that have a valid signing key, where $\ell'=\ell/2$.

\begin{claim}\label{claim:robusthonestcommittee}
$\pr{\ssize{\CS\cap(S\setminus\IS)}\leq\ell'/3}\leq \negl(n)$.
\end{claim}
\begin{proof}
We know that $\ssize{S}=n\frac{\log n-3}{\log n}>2n/3$. In order to maximize its chance of winning the robustness game, an adversary who is allowed to arbitrarily choose the set $S$, will without loss of generality include all the corrupted parties in $S$.
Denote by $\HS_S=S\setminus\IS$ the set of honest parties in $S$. Since $\ssize{\IS}=(1/3-\epsilon)\cdot n$ (where $\epsilon=1/3-\beta$), it holds that
\[
\ssize{\HS_S}> \frac{2}{3}\cdot n - \left(\frac{1}{3}-\epsilon\right)\cdot n=\left(\frac{1}{3}+\epsilon\right)\cdot n.
\]
Thus, there are more than $(1/3+\epsilon)\cdot n$ honest parties in the set $S$.
Given the information with the adversary and the fact that the set of parties with valid signing keys are chosen at random, he will get the same success probability for any arbitrary choice of $\CS$.
Let $X$ be a random variable representing the number of honest parties in $S$ who have a valid signing key, \ie $\ssize{\CS\cap\HS_S}$.
If $\pr{\ssize{\CS\cap\HS_S}\leq\ell'/3}\leq \negl(n)$ holds for $\ssize{\CS}=\ell'$, it will also hold for any $\ssize{\CS}>\ell'$.
By \cref{lem:srds_owf_succinct}, we know that $\ssize{\CS}\geq \ell'$ with an overwhelming probability.
Therefore, we can assume that $\ssize{\CS}=\ell'$; in this case it holds that
\[
\Exp[\ssize{\CS\cap \HS_S}]=\frac{\ell'}{n} \cdot \left(\frac{1}{3}+\epsilon\right)\cdot n=\left(\frac{1}{3}+\epsilon\right)\cdot\ell'.
\]
By Chernoff bound for $\mu=\ell'\left(1/3+\epsilon\right)$ and $\delta=3\epsilon/(1+3\epsilon)$,\footnote{The exact Chernoff bounds used is $\pr{X\leq (1-\delta)\mu}\leq e^{-\frac{\delta^2}{2}\mu}$ for $0<\delta<1$, where $\mu=\mathbb{E}[X]$.} it holds that
\begin{align*}
\pr{X\leq (1-\delta)\mu}
&=
\pr{X\leq\left(1-\frac{3\epsilon}{1+3\epsilon}\right)\cdot\ell'\cdot\left(\frac{1}{3}+\epsilon\right)}\\
&=
\pr{X\leq\left(\frac{1}{1+3\epsilon}\right)\cdot\ell'\cdot\left(\frac{1+3\epsilon}{3}\right)}\\
&=
\pr{X\leq\ell'/3}\\
&\leq
e^{-\frac{9\epsilon^2}{2(1+3\epsilon)^2}\ell'(1/3+\epsilon)}\\
&=
e^{-\frac{3\epsilon^2}{2(1+3\epsilon)}\ell'}.
\end{align*}
Since $\epsilon>0$ is constant, we conclude that
\[
\pr{X\leq\ell'/3}\leq e^{-\omega(\log{n})}=\negl(n).
\]
Hence, for any arbitrary strategy deployed by the adversary, the probability that less than $\ell'/3$ honest parties with a valid signing key are chosen in the set $S$ is negligible.
\QED
\end{proof}

The robustness phase proceeds with the challenger signing the message $m$ on behalf of all the honest parties $\sset{\sigma_i}_{i\in [n]\setminus (\IS\cup \goodP)}$ and signing the respective messages $m_i$ on behalf of parties in $\goodP$ and handing their signatures to \Adv who responds with signatures for corrupted parties $\sset{\sigma_i}_{i\in\sset{\IS}}$ (potentially also for parties whose signing key is $\bot$). As described in \cref{fig:exp_robust}, using these ``base'' signatures $\sset{\sigma_i}_{i\in [n]}$, the challenger then interacts with the adversary according to $\tree=(\vertex,\edge)$ to compute the aggregate signature $\sigma$.

\begin{claim}\label{claim:robustness}
$\pr{\TSVerify(\pp,\sset{\vk_1,\ldots,\vk_n},m,\sigma)=0}\leq\negl(\secParam,n)$.
\end{claim}
\begin{proof}
An accepting signature on a message $m$ consists of at least $\ell'/3$ valid signatures of the form $\sigma_i=(i,m,\signsig_i)$, satisfying $\DSVerify(\vk_i,m,\signsig_i)=1$. As proved earlier in \cref{lem:srds_owf_succinct}, since $\ell\in\omega(\log n)$ it holds with overwhelming probability that $\ell/2\leq \ssize{\CS}\leq 3\ell/2$; therefore, by the obliviousness of the signature scheme that the aggregate signature can consist of at most $\ssize{\CS}$ base signatures.

The aggregate algorithm then checks if the ``base'' signatures contain a valid signature on $m$. We rely on the correctness of the underlying digital signature scheme to ensure that only valid signatures from the adversary (\ie by committee members) get aggregated with an overwhelming probability (in $\secParam$).

Additionally, in the case where the adversary does not provide sufficiently many valid signatures, from  \cref{claim:robusthonestcommittee} we know that the number of honest parties in $S$ with a valid signing key is more than $\ell'/3$ with an overwhelming probability (in $n$). Therefore, the signatures of these honest parties are sufficient for generating an accepting signature.
\QED
\end{proof}

This concludes the proof of \cref{lem:srds_owf_robust}.
\QED
\end{proof}

\begin{lemma}\label{lem:srds_owf_forge}
The construction in \cref{fig:srds_from_owf} is $\beta n$-unforgeable.
\end{lemma}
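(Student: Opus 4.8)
The plan is to reduce unforgeability of the SRDS construction in \cref{fig:srds_from_owf} directly to the (standard) existential unforgeability of the underlying one-time digital signature scheme with oblivious key generation. Recall that in this construction an aggregate SRDS signature $\sigma$ on a message $m'$ verifies only if it contains more than $\ell'/3$ distinct tuples $(i,m',\signsig_i)$ with $\DSVerify(\vk_i,m',\signsig_i)=1$; moreover (as shown in \cref{lem:srds_owf_succinct}) with overwhelming probability $\ell/2 \le \ssize{\CS} \le 3\ell/2$, where $\CS=\sset{i\mid\sk_i\neq\bot}$ is the (randomly chosen) set of parties holding a genuine signing key. Since $\ell'/3 = \ell/6 > 0$, a forgery on $m'\neq m$ must contain at least one verifying ``base'' signature $(i^\ast,m',\signsig_{i^\ast})$ that was \emph{not} produced by the challenger: the honest parties in the forgery game only ever sign $m$ (on the set $[n]\setminus(\setP\cup\IS)$) or some $m_i$ (on $i\in\setP$), so any verifying signature on $m'$ either comes from a corrupted verification key, or it is a genuine forgery against an honest key, or it is a signature that verifies under a key sampled obliviously (i.e. for which $\sk_{i^\ast}=\bot$).

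First I would set up the reduction: an adversary $\BS$ against the digital signature scheme receives a challenge key $\vk^\ast$, guesses an index $i^\ast \in [n]$ uniformly at random, embeds $\vk^\ast$ as $\vk_{i^\ast}$, and samples all other keys exactly as $\TSGen$ does (tossing the biased coin, using $\DSGen$ or $\DSOGen$ accordingly). It then runs the SRDS unforgeability experiment with $\Adv$, answering all signing requests for honest parties $j \neq i^\ast$ with its own keys, and answering a signing request involving $\Party_{i^\ast}$ by querying its one-time signing oracle on the appropriate message (this is where the one-time restriction matters — each honest party signs exactly one message in the experiment, either $m$ or $m_i$, so one oracle query suffices). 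When $\Adv$ outputs a forgery $(\sigma',m')$ with $m'\neq m$ that verifies, $\BS$ parses $\sigma'$, finds a verifying tuple $(i,m',\signsig_i)$ with $i\notin\IS$ and $i \notin \setP$ if possible — call it the ``culprit'' — and if $i = i^\ast$, outputs $(m',\signsig_{i^\ast})$ as its forgery. Since $i^\ast$ is hidden from $\Adv$ (both $\DSGen$- and $\DSOGen$-generated keys look the same from $\Adv$'s view by indistinguishability), the guess is correct with probability at least $1/n$, and the forgery is nontrivial: $\BS$ never queried $m'$ to its oracle because honest party $i^\ast$ only signed $m$ or some $m_{i^\ast}$, and we can argue $m' \neq m_{i^\ast}$ for the culprit (the forgery message $m'$ is fresh relative to $m$, and with care in selecting the culprit we ensure it lies outside $\setP$).

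The main subtlety — and the step I expect to require the most care — is handling the case where every verifying tuple in the forgery corresponds either to a corrupted index $i\in\IS$ (whose key $\Adv$ controls, so no contradiction there), or to an obliviously-sampled key (where $\sk_i=\bot$). Corrupted indices are not a problem: a forgery needs $> \ell'/3$ verifying tuples, but $\ssize{\IS} = \beta n$ corrupted parties can supply at most $\ssize{\IS}$ of them, and the deduplication in $\TSVerify$ means a forgery contains $> \ell'/3 \ge \ell/6 = \omega(\log n)$ distinct indices — far more than could plausibly all be obliviously sampled (only $\approx \ell \le 3\ell/2$ such keys exist with overwhelming probability) unless many collide with $\CS$, i.e. with \emph{honest} genuine keys; so there must exist a culprit $i^\ast \in \CS \setminus (\IS \cup \setP)$ whose verifying signature $\BS$ did not produce, except possibly if $\Adv$ produced a signature verifying under an obliviously-generated $\vk_i$. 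That last event is exactly ruled out by the \emph{obliviousness} property of \cref{def:Okeygen}: I would argue that if with non-negligible probability $\Adv$ produces $(m',\signsig)$ with $\DSVerify(\ovk,m',\signsig)=1$ for an obliviously generated $\ovk$, then a straightforward reduction breaks obliviousness. Combining the union bound over these two failure modes (standard unforgeability via the $1/n$-guess reduction, and obliviousness) with the $\negl(n)$ bound on the ``bad'' event $\ssize{\CS}\notin[\ell/2,3\ell/2]$ from \cref{lem:srds_owf_succinct}, I conclude that $\pr{\ExptTSforge_{\trpki,\Pi,\Adv}(\secParam,n,\beta n)=1}\le \negl(\secParam,n)$, which completes the proof of \cref{lem:srds_owf_forge} and hence of \cref{thm:srds_owf}.

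\begin{proof}[Proof of \cref{lem:srds_owf_forge}]
\textbf{[Proof plan — to be filled in.]}
We reduce to existential unforgeability and obliviousness of the underlying one-time signature scheme with oblivious key generation, using that any verifying SRDS signature on a message $m'\neq m$ must contain $>\ell'/3=\omega(\log n)$ distinct verifying base signatures, of which at most $\beta n$ can originate from corrupted keys, and (with overwhelming probability, by \cref{lem:srds_owf_succinct}) only $O(\ell)$ keys are obliviously sampled; hence a verifying signature on $m'$ under some honest genuine key $\vk_{i^\ast}$ with $i^\ast\notin\IS\cup\setP$ that the challenger never produced, yielding a forgery after guessing $i^\ast$ with probability $1/n$.
\QED
\end{proof}
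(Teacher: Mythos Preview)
Your reduction sketch at the end (embed a challenge key at a random index, simulate the experiment, extract a base-signature forgery) is reasonable, but the analysis leading up to it has a genuine conceptual error and a missing step.

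First, you have the key counts reversed. In the construction of \cref{fig:srds_from_owf}, each key is \emph{genuine} (i.e., in $\CS$) with probability $\ell/n$, so $\ssize{\CS}\approx\ell$ and the \emph{obliviously sampled} keys number $\approx n-\ell$, not $\approx\ell$. Your sentences ``only $\approx\ell\le 3\ell/2$ such [obliviously sampled] keys exist'' and ``only $O(\ell)$ keys are obliviously sampled'' are exactly backwards.

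Second, and more importantly, your counting argument does not establish the existence of a culprit. You write that corrupted parties can supply at most $\ssize{\IS}=\beta n$ tuples, but $\beta n\gg\ell'/3=\ell/6$, so this bound is vacuous: in the trusted-PKI model the adversary sees all public keys before choosing $\IS$ and $\setP$, and if it could identify which keys lie in $\CS$ it could simply corrupt all of them (there are only $\approx\ell$). What actually prevents this is the \emph{indistinguishability} of obliviously generated keys from genuine ones, which forces the adversary's choice of $\IS\cup\setP$ to be (computationally) independent of $\CS$. The paper's proof makes this explicit as \cref{claim:unforgeablehonestcommittee}: by indistinguishability, $\ssize{\CS\cap(\IS\cup\setP)}$ concentrates around its expectation $\approx(3\ell'/n)\cdot\ssize{\IS\cup\setP}<\ell'$, and a Chernoff bound shows it is below $\ell'/3$ with overwhelming probability. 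Only after this step do you know that the adversary's ``budget'' of legitimate signatures on $m'$ (from corrupted genuine keys plus honest parties in $\setP$) falls short of the threshold, forcing either a forgery under an honest genuine key or a signature under an oblivious key---which is where your unforgeability and obliviousness reductions, respectively, kick in.
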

\begin{proof}
Let \Adv be a PPT adversary.
We will show that \Adv can win the game $\ExptTSforge_{\trpki,\Pi,\Adv}(\secParam,n,\beta n)$ with at most negligible probability. The game begins when the challenger computes $\pp\gets\TSCR(1^\secParam,1^n)$ and $(\vk_i,\sk_i)\gets \TSGen(\pp)$ for every $i\in[n]$. Next, the adversary adaptively selects the set of corrupted parties;
denote by $\IS$ the set of corrupted parties.

In the \emph{forgery challenge} phase, the adversary $\Adv$ chooses a subset $S\subseteq[n]\setminus\IS$ such that  $\ssize{S\cup\IS}< (1/3-\epsilon')n$ for some constant $0<\epsilon'<\epsilon$, where $\epsilon=1/3-\beta$,
and messages $m$ and $\sset{m_i}_{i\in S}$ from $\MS$. We now prove that with an overwhelming probability (in $n$), the fraction of parties who have a valid signing key in $S\cup\IS$ is less than a third.

\begin{claim}\label{claim:unforgeablehonestcommittee}
The number of parties with a valid signing key in a set $S\cup\IS$ is less than $\ell'/3$ with an overwhelming probability in $n$, \ie
\[
\Pr\left[ \ssize{\CS\cap(S\cup\IS)} \geq \ell'/3\right] \leq \negl(n).
\]
\end{claim}
\begin{proof}
The parties with a valid signing key are chosen at random, and the information about whether a party has a valid signing key is not revealed to the adversary $\Adv$, unless it chooses to corrupt that party or it sees a signature from that party.
The adversary chooses the honest set $S$ only based on the knowledge of corrupted parties and their signing keys. Given this information with the adversary and the fact that the parties with valid signing keys are chosen at random, he will get the same success probability for any arbitrary choice of $S$.

Let $X$ be a random variable representing the number of parties in $\CS\cap(S\cup\IS)$.
If for $\ssize{\CS}=3\ell/2$ it holds that $\pr{\ssize{\CS\cap (S\cup\IS)}\leq\ell'/3}\leq \negl(n)$, it will also hold for any $\ssize{\CS}<3\ell/2$. By \cref{lem:srds_owf_succinct}, we know that $\ssize{\CS}<3\ell/2$ with an overwhelming probability.
Therefore, we an assume that $\ssize{\CS}=3\ell/2=3\ell'$; in this case it holds that $\Exp[X]=3(1/3-\epsilon'')\ell'$ for some $\epsilon''>\epsilon'$.
By Chernoff bound for $\mu=3\ell'\left(1/3-\epsilon''\right)$ and $\delta=\frac{9\epsilon''-2}{3-9\epsilon''}$ (note that $\delta>0$ since $0<\epsilon''<1/3$),\footnote{The exact Chernoff bound used is $\pr{X\geq (1+\delta)\mu}\leq e^{-\frac{\delta^2}{(2+\delta)}\mu}$ where $\mu=\mathbb{E}[X]$} it holds that
\begin{align*}
\pr{X\geq(1+\delta)\mu}
&=
\pr{X\geq\left(1+\frac{9\epsilon''-2}{3-9\epsilon''}\right)\cdot\ell'\cdot\left(\frac{1}{3}-\epsilon''\right)\cdot 3}\\
&=
\pr{X\geq\left(\frac{1}{3-9\epsilon''}\right)\cdot\ell'\cdot\left(\frac{3-9\epsilon''}{3}\right)}\\
&=
\pr{X\geq\ell'/3}\\
&\leq
e^{-\frac{\delta^2}{2+\delta}\mu}\\
&=
e^{-\frac{(9\epsilon''-2)^2/(3-9\epsilon'')^2}{(4-9\epsilon'')/(3-9\epsilon'')}3\ell'(1/3-\epsilon'')}\\
&=
e^{-\frac{(9\epsilon''-2)^2}{3(4-9\epsilon'')}\ell'}.
\end{align*}
Since $0<\epsilon''<1/3$ is a constant, it holds that $4-9\epsilon''>0$, hence we conclude that
\[
\pr{X\geq \ell'/3}\leq e^{-\omega(\log{n})}=\negl(n).
\]
Hence, the probability that for any arbitrary strategy deployed by the adversary, the probability that more than $\ell'/3$ of the parties with a valid signing key are in $S\cup\IS$ is negligible.
\QED
\end{proof}

The \emph{forgery challenge} phase proceeds when for each $i\in S$, the challenger signs the message $m_i$ on behalf of honest $\Party_i$, and signs the message $m$ on behalf of all the remaining honest parties $i\not\in S\cup\IS$.
Next, the challenger hands these signatures $\sset{\sigma_i}_{i\in[n]\setminus\IS}$ to $\Adv$ who responds with an aggregate signature $\sigma'\in\XS$ and a message $m'\in\MS$.
\begin{claim}
$\Pr[(\TSVerify(\pp,\sset{\vk_1,\ldots,\vk_n}, m',\sigma')=1) \wedge (m'\neq m)]\leq\negl(\secParam,n)$.
\end{claim}
\begin{proof}
An accepting signature on any message $m'\neq m$ consists of at least $\ell'/3$ valid signatures of the form $\sigma_i=(i,m',\signsig_i)$, satisfying  $\DSVerify(\vk_i,m',\signsig_i)=1$.
	
By \cref{claim:unforgeablehonestcommittee}, the number of parties with a valid signing key in $S\cup\IS$ is less than $\ell'/3$ with an overwhelming probability (in $n$). Essentially, the adversary receives valid signatures on a message other than $m$ only from less than $\ell'/3$  parties (in $\CS\cap(S\cup\IS)$). Hence, the only way \Adv can produce more than $\ell'/3$ valid signatures on any message other than $m$ is by forging a valid signature for a corrupt party whose signing key is $\bot$ or by forging a signature for an honest party. Since the verification keys of the parties whose signing keys are $\bot$ correspond to oblivious keys, we rely on the obliviousness of these keys (see \cref{def:Okeygen}) to ensure that this only happens with negligible probability (in $\secParam$). Similarly we can rely on the unforgeability of a digital signature scheme to ensure that an adversary will be able to forge a valid signature for an honest party with a valid signing key only with a negligible probability (in $\secParam$).	
Hence, except with negligible probability $\negl(\secParam,n)$, the adversary is unable to forge an accepting SRDS signature.
\QED
\end{proof}
This concludes the proof of \cref{lem:srds_owf_forge}
\QED
\end{proof}
This concludes the proof of \cref{thm:srds_owf}.
\QED
\end{proof}

\subsection{SRDS from SNARKs (Cont'd)}\label{sec:SRDS:snark_cont}

We present the proof of \cref{thm:SNARKSconstruction}.

\medskip\noindent
\textbf{\cref{thm:SNARKSconstruction}.}
\emph{\ThmSRDSSNARKs}

\begin{proof}[Proof of \cref{thm:SNARKSconstruction}]
In \cref{lem:SNARKSsuccinct} we will show that the construction in \cref{fig:srds_from_snarks} is succinct, in \cref{lem:SNARKSrobust}, we will show robustness and in \cref{lem:SNARKSunforge}, we will show unforgeability.

\begin{lemma}\label{lem:SNARKSsuccinct}
The construction in \cref{fig:srds_from_snarks} is succinct.
\end{lemma}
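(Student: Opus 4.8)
The plan is to verify directly the two clauses of the succinctness definition --- the bound on $\ssize{\XS}$ and the decomposition of $\TSAggr$ into a deterministic, key-dependent filtering step $\TSAggr_1$ and a succinct, key-independent aggregation step $\TSAggr_2$ --- by chasing sizes and running times through the construction in \cref{fig:srds_from_snarks}, leaning on the succinctness guarantee of the underlying PCD system and on \cref{PCD_from_snarks}.

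First I would bound the size of an arbitrary object output by $\TSSignShare$ or $\TSAggr_2$. Every such object has the form $\sigma=(z',\pi)$ with $z'=(m,c,\indexMax,\indexMin,\gamma)$: the message $m$ comes from the message space (assumed of length $\poly(\log n,\secParam)$, e.g.\ $m=(y,s)$ in the BA application); the counter $c$ is bounded by the number of virtual parties and the indices $\indexMin,\indexMax$ lie in $[n]$ (or $[n\cdot z]$), contributing $O(\log n)$ bits each; and $\gamma$ is either $\bot$ or a digital signature of length $\poly(\secParam)$. Hence $\ssize{z'}\in\poly(\log n,\secParam)$. For $\pi$, which is $\bot$ for base signatures and otherwise an honestly generated PCD proof, I would invoke the succinctness property of the PCD system: an honestly generated proof has size $p(\secParam+\log B)$, where $B$ bounds the time to evaluate the compliance predicate $\compliance$ on one edge. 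The crux is that $B\in\poly(\log n,\secParam)$: at any node, $\compliance$ is evaluated on $\ell\in\poly(\log n)$ incoming transcripts (at most $\log^5 n$ at a leaf and $\log n$ higher up), each carrying a Merkle opening of length $O(\log n)\cdot\poly(\secParam)$ and a digital signature of length $\poly(\secParam)$, and all the checks it performs --- equality of the $\ell$ hash values, $\ell$ runs of $\DSVerify$, $\ell$ runs of $\mverify$, the $O(\ell)$ ordering and propagation tests, and summing $\ell$ counters --- run in time $\poly(\log n,\secParam)$. Therefore $\log B\in O(\log\log n+\log\secParam)$ and $\ssize{\pi}\in\poly(\log n,\secParam)$. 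Choosing $\alpha(n,\secParam)\in\poly(\log n,\secParam)$ to dominate $\ssize{z'}+\ssize{\pi}$ (with the extra slack needed to also hold an intermediate set $\setsig$ of $\poly(\log n)$ such signatures plus the hash $\hash_\vk$), and recalling that $\TSVerify$ and $\TSAggr_1$ reject any string longer than $\alpha(n,\secParam)$, we may take $\XS=\zo^{\alpha(n,\secParam)}$, which is clause~(1).

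For decomposability, the construction already writes $\TSAggr$ as $\TSAggr_1$ then $\TSAggr_2$, so it remains to check the promised properties. $\TSAggr_1$ only computes $\hash_\vk=\mhash(\seed,\dots)$, runs the (deterministic) checks $\DSVerify$, $\mverify$, or $\PCDVerify$ on each input, appends Merkle proofs, and explicitly outputs $\bot$ once $\|\setsig\|>\alpha(n,\secParam)$; so it is deterministic and its output has size $\poly(\log n,\secParam)$. The key point is $\TSAggr_2$: it takes only $(\pp,m,\setsig)$ --- in particular \emph{not} the list of $n$ verification keys --- and its sole nontrivial operation is invoking $\PCDProve$ on $\setsig$ and $z_\outputvar$, whose running time is $p(\secParam+\ssize{\compliance}+B)$. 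Since the Merkle-tree design keeps $\ssize{\compliance}$ polylogarithmic (the predicate hard-wires only the hash seed $\seed\in\zo^\secParam$, never the verification keys) and $B\in\poly(\log n,\secParam)$ as above, $\TSAggr_2$ is computed by a circuit of size $\poly(\log n,\secParam)$ --- exactly the $\tilde O(1)$ requirement needed for its MPC evaluation inside $\protba$.

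The only non-routine step is the bound $\ssize{\compliance},B\in\poly(\log n,\secParam)$, and this is precisely what the Merkle-hashing trick delivers: a naive compliance predicate that hard-wires all $n$ verification keys to check base signatures would have $\ssize{\compliance}\in\Theta(n)$ and force $\PCDProve$ --- hence $\TSAggr_2$ --- to run in time $\Omega(n)$, breaking decomposability; replacing that by the single value $\hash_\vk$ together with logarithmic-length Merkle openings collapses both the predicate size and the per-edge evaluation time to polylogarithmic. Everything else is bookkeeping, using \cref{PCD_from_snarks} to instantiate the PCD system (its compliance predicate has depth equal to the tree height $O(\log n/\log\log n)$, within the logarithmic-depth regime of that theorem, and size $O(n)$, polynomial) and the PCD succinctness guarantee to convert the bound on $B$ into the proof-size and prover-time bounds.
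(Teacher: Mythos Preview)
Your proposal is correct and follows essentially the same approach as the paper: bound the signature size via the PCD succinctness guarantee (proof size $p(\secParam+\log B)$), verify decomposability by checking that $\TSAggr_1$ is deterministic with $\poly(\log n,\secParam)$-size output and that $\TSAggr_2$ avoids the $n$ verification keys so its circuit size is $\poly(\log n,\secParam)$, and emphasize the Merkle-hashing trick as what keeps $\ssize{\compliance}$ and $B$ polylogarithmic. Your bookkeeping on $B$ is in fact cleaner than the paper's --- you bound it directly via the $\poly(\log n)$ fan-in at each node, whereas the paper writes $B=\ssize{\setsig}\cdot(\secParam+\polylog(n))$ with the loose $\ssize{\setsig}\le q\le n$ and only later uses the $\alpha(n,\secParam)$ cap to recover the polylog bound needed for $\PCDProve$'s running time.
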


\begin{proof}
We start by proving the size of the signatures is succinct.
Each SRDS signature consists of a ``truncated transcript'' $z'$ of size $(\ssize{m}+\ssize{c}+\ssize{\indexMax}+\ssize{\indexMin}+\ssize{\gamma})$ along with a proof $\pi$.
For ``base'' SRDS signatures, $\gamma$ corresponds to a digital signature, and in all other cases $\gamma=\bot$.
By definition, the size of each digital signature is $\poly(\secParam)$. Hence, the total size of each truncated transcript $z'$ is $\poly(\secParam)+\log n+\log n+\log n+\log n=\poly(\secParam)+O(\log(n))$.
Since $\pi=\bot$ for base signatures, the total size of each base SRDS signature (truncated transcript + digital signature) is $\poly(\secParam)+O(\log(n))$, and is thus succinct.

In each aggregate SRDS signature, this proof corresponds to the output of $\PCDProve$. In our construction, the size of PCD transcript $z$ is $\ssize{z'}+\ssize{\hash_\vk}+\ssize{k}+\ssize{p}=\poly(\secParam)+O(\log(n))$.
The Merkle verification algorithm runs in time $\poly(\secParam)+\polylog(n)$; therefore, by construction, the size of the compliance predicate is $\poly(\secParam)+\polylog(n)$ and the bound $B$ on its running time is $\ssize{\setsig}\cdot(\secParam+\polylog(n))$, where $\ssize{\setsig}\leq q\leq n$. Therefore, by the \emph{succinctness} property of PCD systems (see \cref{sec:PCD}), the size of each proof is  $\poly(k+\log B)=\poly(\secParam)\cdot\polylog(n)$. Hence, the total size of each aggregate signature is $\poly(\secParam)\cdot\polylog(n)$.

The time required to verify validity of each ``base'' signature in this construction is $\poly(\log n, \secParam)$ (here $\polylog(n)$ appears because of the binary representation of indices).  The time required to verify a PCD proof in our construction is $\poly(\secParam+|C|+|z|+\log B)=\poly(\secParam+\log n)$ (\cref{def:PCD}). Finally, the time required to generate an aggregate signature is equal to the time required to compute $z_\outputvar$ and the time to run $\PCDProve$. The time required to generate $z_\outputvar$ includes the time required to compute Merkle hash on all the verification keys, which is $\poly(\secParam,n)$, and the time required to verify in the incoming transcripts and proofs, which is $q\cdot\poly(\secParam+\log n)$. Therefore, the running time of $\TSAggr_1$ is $q\cdot\poly(\secParam,n)$.
The time required to run $\TSAggr_2$ includes the time required for computing $z_\outputvar$ given the above information, which is $\ssize{\setsig}\cdot O(\log n)$ and the time required to run $\PCDProve$, which is $O(\log n)+\poly(\secParam+|C|+\log B)=\poly(\secParam+\log n)$ (see \cref{def:PCD}). Therefore, the total time required to run $\TSAggr_2$ is $\ssize{\setsig}\cdot\poly(\secParam+\log n)=\poly(\log n,k)$ (since $\|\setsig\|$ is bounded by $\alpha(n,\secParam)\in\poly(\log n, \secParam)$ as enforced by the check in $\TSAggr_1$).
\QED
\end{proof}

\begin{lemma}\label{lem:SNARKSrobust}
The construction in \cref{fig:srds_from_snarks} is $t$-robust.
\end{lemma}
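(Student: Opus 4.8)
The plan is to reduce robustness of the construction in \cref{fig:srds_from_snarks} to the properties of the underlying primitives: the correctness/unforgeability of the digital signature scheme, the completeness of the PCD system, and the completeness of the Merkle hash proof system. The key observation is that robustness is a \emph{completeness-type} statement: we must show that when the honest parties run the aggregation process along a good path of the tree, the resulting signature at the root verifies, regardless of what the corrupted nodes do. Since every node on a good path has a two-thirds honest majority, the corresponding $\faggrsig$-call (here, the $\TSAggr$ step run by the challenger) is executed faithfully on the signatures filtered by $\TSAggr_1$.

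First I would set up notation: let $\tree$ be the $(n,\IS)$-almost-everywhere-communication tree chosen by $\Adv$, let $S$ be the set of honest parties assigned to leaf nodes with a good path to the root (so $[n]\setminus(\IS\cup\goodP) \supseteq S$, and each party in $S$ signs the challenge message $m$), and recall from \cref{def:rob-com-tree} that $\ssize{S} > 2n/3 - \ssize{\IS} \ge (1/3+\epsilon)n$ for $\epsilon = 1/3-\beta$. I would then argue, by induction up the tree along good paths, the following invariant: for every good node $v$, the signature $\sigma_v = (z'_v,\pi_v)$ computed by the challenger satisfies $\PCDVerify(\PCDtau, z_v, \pi_v)=1$ (where $z_v = z'_v||(\hash_\vk,\bot,\bot)$), the embedded $\hash_{\vk}$ equals the correct Merkle hash of $(1||\vk_1,\dots,n||\vk_n)$, the counter $c_v$ equals the number of distinct base signatures on $m$ aggregated in the subtree of $v$, and the $[\indexMin_v,\indexMax_v]$ interval is contained in $\Range(v)$ with the good-leaf base signatures in $v$'s subtree all counted. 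The base case ($\ell=1$, leaf nodes) uses: correctness of $\DSSign/\DSVerify$ (so honest parties' signatures pass the check in $\TSAggr_1$), completeness of the Merkle proof system (\cref{thm:merkle-hash}, so $\mverify$ accepts the honestly generated $p_i$), and the fact that the planar-ordering property guarantees each good leaf's virtual IDs lie in its assigned range. The inductive step uses completeness of the PCD system: the challenger, running $\TSAggr_2 = \PCDProve$ on a $\compliance$-compliant collection of incoming valid transcripts, produces a proof that verifies; here I would verify clause-by-clause that the predicate $\compliance$ (all six conditions) is satisfied by the honestly assembled incoming/outgoing transcripts, which is where the range-ordering bookkeeping from the induction hypothesis is needed.

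Having established the invariant at the root $\rroot$, I would conclude: $\sigma_\rroot$ verifies under $\PCDVerify$, its embedded hash is the correct $\hash_\vk$, $m' = m$, $\|\sigma_\rroot\| \le \alpha(n,\secParam)$ by the succinctness bound (\cref{lem:SNARKSsuccinct}) and the $\TSAggr_1$ length check, and $c_{\rroot} \ge \ssize{S} > n/3$, so $\TSVerify$ outputs $1$. The only way this can fail is if one of the underlying primitive properties fails during the (polynomially many) honest aggregation steps — e.g.\ $\DSVerify$ rejects an honest signature, $\mverify$ rejects an honest Merkle proof, or $\PCDVerify$ rejects an honestly generated proof — each of which occurs with negligible probability, and a union bound over the polynomially many steps keeps the total failure probability negligible in $(\secParam,n)$.

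The main obstacle I anticipate is carefully handling the interaction between the \emph{adversarially chosen} bad nodes and the good nodes on a good path. A subtle point is that a good node $v$ may have bad children: the adversary supplies arbitrary $\sigma_u$ for bad $u \in \child(v)$, and these may even be ``valid'' (verifying under $\PCDVerify$) yet encode bogus counts or out-of-range intervals. I need to argue that the range/ordering checks performed by $\TSAggr_1$ at $v$ (checking $[\indexMin,\indexMax] \subseteq \Range(v')$ for some child $v'$) together with the fact that the good children's transcripts already ``cover'' all the good-leaf base signatures in their subtrees, ensure that $c_v$ still counts at least all good-leaf signatures below $v$ — even if the adversary's contributions are simply discarded or add spurious-but-harmless extra signatures. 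This is precisely why robustness only needs a \emph{lower} bound on $c_{\rroot}$ (it suffices that enough honest signatures survive), and the delicate part is showing the filtering in $\TSAggr_1$ never accidentally discards an honest good-path contribution — which follows from the disjointness of the $\Range(v')$ intervals guaranteed by the planar no-crossover ordering of level-0 virtual IDs.
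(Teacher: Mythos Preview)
Your high-level plan---count the honest parties on good paths to obtain $c_\rroot \ge n/3$, then reduce to correctness of the digital-signature scheme, completeness of the Merkle hash proof system, and completeness of the PCD---matches the paper, and you correctly flag the central obstacle (bad children of a good node). But your resolution of that obstacle rests on a misreading of the construction: you claim that ``the range/ordering checks performed by $\TSAggr_1$ at $v$ (checking $[\indexMin,\indexMax] \subseteq \Range(v')$ for some child $v'$)'' handle the adversary's out-of-range contributions, yet the SRDS $\TSAggr_1$ in \cref{fig:srds_from_snarks} performs \emph{no} such range filtering---it only tests $\DSVerify$ for base inputs and $\PCDVerify$ otherwise. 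The per-child range checks you invoke live only in Step~\ref{step:recurse:secondlevel} of the BA protocol $\protba$, which is layered \emph{on top} of the scheme; in the robustness experiment of \cref{fig:exp_robust} the challenger runs the bare $\TSAggr$, so those checks are absent. Consequently your inductive step does not go through as written: the adversary at a bad child $u$ can hand the good parent $v$ a $\PCDVerify$-accepting proof whose $[\indexMin,\indexMax]$ interval collides with a good sibling's interval, causing condition~(3) of $\compliance$ to fail at $v$ and breaking your invariant.

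The paper's proof takes a structurally different route to this point. Rather than a node-by-node induction, it casts the entire bottom-up aggregation as a single $\ProofGen$ interaction (with the tree $\tree$ playing the role of the distributed-computation generator $G$) and invokes PCD \emph{soundness}---not just completeness---to argue that any adversarially supplied partial signature surviving $\TSAggr_1$'s $\PCDVerify$ check is backed by a genuine $\compliance$-compliant sub-transcript; robustness then follows from PCD completeness applied once to the assembled global transcript together with the $\ge n/3$ count of honest good-path base signatures. You never invoke PCD soundness, and your purely inductive, completeness-only framing cannot recover this global-compliance argument without it (or without the range filtering that is not actually present in the SRDS-level $\TSAggr_1$).
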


\begin{proof}
Let \Adv be a PPT adversary.
We will show that \Adv can win the game $\ExptTSrobust_{\bbpki,\Pi,\Adv}(\secParam,n,t)$ with at most negligible probability.
The game begins when the challenger computes $\pp=(1^\secParam,\PCDsigma,\PCDtau,\seed)\gets\TSCR(1^\secParam)$ and $(\vk_i,\sk_i)\gets \TSGen(\pp)$ for every $i\in[n]$. Next, the adversary adaptively selects the set of corrupted parties $\IS$ and determines their verification keys.

In the \emph{robustness challenge} phase, the adversary \Adv chooses an $(n,\IS,\mathsf{robust})$-\textsf{almost-everywhere communication tree} $\tree=(\vertex,\edge)$ (See Definition \ref{def:rob-com-tree}). It also chooses a message $m\in \MS$ and $\{m_i\}_{i\in \goodP}$, where $\goodP$ is the set of honest parties that are assigned to leaf nodes that do not have a good path to the root.
Recall that there are $n/\log^5 n$ leaf nodes in this tree out of which all but $3/\log n$ fraction have a good path to the root. In other words, the signatures of the parties assigned to ``good'' leaf nodes are guaranteed to be part of the final aggregate signature. Total number of parties assigned to the good leaf nodes are $\log^5 n \left(1-\frac{3}{\log n}\right)\frac{n}{\log^5 n}=n\frac{\log n-3}{\log n}$. Let us use $S$ to denote this set of parties. Then $\ssize{S}=n\frac{\log n-3}{\log n}>2n/3$. Since $\ssize{\IS}<n/3$, it holds that the number of honest parties $\HS_S$ in the set $S$ is at least $\ssize{\HS_S}\geq 2n/3-\ssize{\IS}>n/3$.

Next, the adversary gets signatures $\sset{\sigma_i}_{i\in [n]\setminus\IS}$ of all the honest parties on the respective messages (i.e., on message $m_i$ for $i\in \goodP$ and on message $m$ for $i\in[n]\setminus(\mathcal{I}\cup \goodP)$) and it computes signatures of corrupted parties $\sset{\sigma_i}_{i\in \IS}$.
As described in \cref{fig:exp_robust}, the challenger then interacts with the adversary according to $\tree=(\vertex,\edge)$, to compute the aggregate signature $\sigma$.

Recall that the aggregation algorithm first checks the validity of incoming transcripts and proofs and only aggregates transcripts with a convincing proof. Starting from the ``base'' signatures, if the adversary does not provide valid signatures on  $m$ on behalf of the corrupted parties, they will not pass the validity check at level $\ell=2$ (this follows from the \emph{correctness} of the digital signature scheme). The aggregation algorithm on the remaining ``verified'' base signatures mimics the interactive protocol $\ProofGen$  (as described in the \emph{completeness} definition of PCD in \cref{sec:PCD}).
The tree $\tree$ chosen by the adversary acts as the \emph{distributed-computation generator} $G$ (see \cref{def:PCD}). For each ``good'' node in $\tree$, the reconstruction algorithm aggregates the signatures (\ie computes a $\compliance$-compliance transcript and PCD proof)  from its incoming edges and labels the outgoing edges from the node with this partially aggregated signature. For every ``bad'' node in $\tree$, the adversary can provide an arbitrary signature of its choice. From soundness of PCDs, it follows that the adversary cannot give a faulty proof/partially aggregate signature that verifies. The aggregation algorithm halts at the root node and outputs the corresponding truncated transcript and proof (\ie the aggregated signature $(z_{\outputvar}',\pi_\outputvar)$). From this construction, we now have that the output transcript is compliant with $\compliance$, and even if the adversary does not provide valid partially aggregate signatures for bad nodes, since there were at least $n/3$ honest signatures from the honest parties that also had a good path to the root node, from the correctness of the digital signature scheme and completeness of the Merkle hash proof system, it follows that $c_{\outputvar}\geq n/3$. Robustness now follows from the \emph{completeness} and \emph{succinctness} of the PCD system.
\QED
\end{proof}

\begin{lemma}\label{lem:SNARKSunforge}
The construction in \cref{fig:srds_from_snarks} is $t$-unforgeable.
\end{lemma}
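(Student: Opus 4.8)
The plan is to reduce unforgeability to three ingredients in sequence: the proof-of-knowledge (soundness) of the PCD system, the soundness of the Merkle hash proof system (equivalently, collision resistance of $\hash$), and the unforgeability of the underlying digital signature scheme. Suppose toward a contradiction that a PPT adversary $\Adv$ wins $\ExptTSforge_{\bbpki,\Pi,\Adv}(\secParam,n,t)$ with non-negligible probability $\epsilon$; write the forgery as $\sigma'=(z',\pi')$ with $z'=(m',c,\indexMax,\indexMin,\gamma)$, so that with probability $\epsilon$ we have $m'\neq m$, $c\geq n/3$, $\|\sigma'\|\leq\alpha(n,\secParam)$, and $\PCDVerify(\PCDtau,z,\pi')=1$ for $z=z'\|(\hash_\vk,\bot,\bot)$, where $\hash_\vk$ is the honest Merkle hash of $((1\|\vk_1),\ldots,(n\|\vk_n))$. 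First I would view the entire forgery experiment (the setup, the adaptive corruption of $\IS$ by $\Adv$, the choice of $\setP$ and the messages, the honest-party signing, and $\Adv$'s final output) as defining a polynomial-size PCD prover $P^*$ that on input $(\PCDsigma,\aux)$ outputs the pair $(z,\pi')$. Invoking the proof-of-knowledge property of the PCD system, there is a polynomial-size extractor $\extractor_{P^*}$ that, except with negligible probability, outputs a $\compliance$-compliant, bounded-depth distributed-computation transcript $\trans$ with $\outputvar(\trans)=z$ (the $\|\sigma'\|\leq\alpha$ check keeps the relevant size/time bounds polynomial).

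Next I would extract from $\trans$ a set of honest signatures on $m'$. The transcript is a logarithmic-depth DAG, and $\compliance$-compliance gives the following structural facts. Condition~1 of $\compliance$, applied inductively from the sink down to the sources, forces every edge of $\trans$ to carry the same Merkle-hash field $\hash_\vk$ (the one appearing in the sink output $z$, i.e.\ the honest hash). Conditions~3--6 support a standard counting argument up the DAG: the root counter $c$ equals the total contribution of the base-level leaves (leaves with $\indexMax=\indexMin$ and $\gamma\neq\bot$, which honestly carry counter $1$), and condition~3 guarantees that the singleton index ranges of these base leaves are pairwise distinct. Hence there is a set $T\subseteq[n]$ with $|T|\geq c\geq n/3$ such that for each $i\in T$ the transcript contains a base-level leaf whose fields satisfy $\DSVerify(k_i,m',\gamma_i)=1$ and $\mverify(\seed,(i\|k_i),\hash_\vk,p_i)=1$ (using condition~2, the propagated $\hash_\vk$, and that base-level digital-signature verification is tied to the common output message $m'$). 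Since $\hash_\vk$ is the honest Merkle hash of the real key vector, soundness of the Merkle hash proof system (hence collision resistance of $\hash$) implies that, except with negligible probability, $k_i=\vk_i$ for every honest index $i\in T$; otherwise a valid opening of position $i$ to a value other than $(i\|\vk_i)$ yields a collision of $\hash$ somewhere along the Merkle path, giving an efficient CRH-breaker.

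With $k_i=\vk_i$ for all honest $i\in T$, the transcript exhibits $\geq n/3$ distinct indices carrying a valid digital signature on $m'$ under the corresponding real verification key. Since the adversary is constrained to $|\setP\cup\IS|<n/3$, we get $T\setminus(\setP\cup\IS)\neq\emptyset$, so there is an honest index $i^\ast\notin\setP$ holding a valid signature $\gamma_{i^\ast}$ on $m'\neq m$ under $\vk_{i^\ast}$. But in the forgery experiment the honest party $\Party_{i^\ast}$ (being outside $\setP$) only ever signs the single message $m$, so $\gamma_{i^\ast}$ is a valid forgery for the digital signature scheme. This yields a reduction $\BS$: it receives a challenge key $\vk^\ast$, picks $i^\ast\in[n]$ uniformly and plants $\vk_{i^\ast}=\vk^\ast$ (aborting if $\Adv$ later corrupts $i^\ast$ or places it in $\setP$ — each event has probability below $1/3$ by indistinguishability of the planted key from an honestly generated one), simulates the remainder of the experiment itself (generating all other keys and the PCD and Merkle setup, and answering for $\Party_{i^\ast}$ with a single signing-oracle query on $m$), runs $\extractor_{P^*}$, and outputs $(m',\gamma_{i^\ast})$. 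A routine averaging argument shows $\BS$ succeeds with probability $\Omega(\epsilon/n)-\negl(\secParam,n)$, contradicting unforgeability of the digital signature scheme, and therefore $\epsilon$ is negligible.

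I expect the main obstacle to be the structural/counting step inside the extracted transcript — in particular, arguing that a verifying proof with $c\geq n/3$ genuinely encodes $\geq n/3$ distinct valid base signatures on $m'$ under honest keys. This is precisely what the $\indexMax/\indexMin$ range bookkeeping and the base-level fields of $\compliance$ are designed to enforce, so it reduces to checking that $\compliance$ as specified indeed (a) propagates $\hash_\vk$ across all edges, (b) enforces strictly increasing, pairwise-disjoint index ranges together with faithful counter summation and base counter $1$, and (c) ties base-level signature verification to the common output message; any slack in the predicate would have to be closed by strengthening it accordingly before this step goes through. Once the combinatorial skeleton is fixed, the two cryptographic reductions (PCD proof-of-knowledge and Merkle/CRH soundness) and the final digital-signature forgery are standard, with only the polynomial guessing loss for $i^\ast$ to track.
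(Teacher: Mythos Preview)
Your proposal is correct and follows essentially the same route as the paper: invoke the PCD proof-of-knowledge to extract a $\compliance$-compliant transcript, use the $\indexMax/\indexMin$ bookkeeping to argue the root counter $c\geq n/3$ unpacks into $\geq n/3$ distinct base-level signatures, pin each $k_i$ to the real $\vk_i$ via collision resistance of the Merkle hash, and then conclude by pigeonhole that some honest $i^\ast\notin\setP\cup\IS$ carries a valid digital signature on $m'\neq m$. The only notable difference is presentational: the paper phrases the extraction as a recursive call $\BS_{\mathsf{ext}}$ on sub-proofs and is terse at the final step (simply invoking unforgeability of the signature scheme), whereas you invoke the PCD extractor once for the whole transcript and spell out an explicit key-planting reduction with a $1/n$ guessing loss; your caveat about the predicate needing to tie the base-level message to the output message is well taken and is treated at the same level of informality in the paper.
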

\begin{proof}
Let \Adv be a PPT adversary.
We will show that \Adv can win the game $\ExptTSforge_{\bbpki,\Pi,\Adv}(\secParam,n,t)$ with at most negligible probability. The game begins when the challenger computes $\pp=(1^\secParam,\PCDsigma,\PCDtau,\seed)\gets\TSCR(1^\secParam)$ and $(\vk_i,\sk_i)\gets\TSGen(\pp)$ for every $i\in[n]$. Next, the adversary adaptively selects the set of corrupted parties and determines their verification keys; denote by $\IS$ the set of corrupted parties.

In the \emph{forgery challenge} phase, the adversary $\Adv$ chooses a subset $S\subseteq[n]\setminus\IS$, such that $\ssize{S\cup\IS}<n/3$, and messages $m$ and $\sset{m_i}_{i\in S}$ from $\MS$. Subsequently, for each $i\in S$, the challenger signs the message $m_i$ on behalf of honest $\Party_i$, and signs the message $m$ on behalf of all the remaining honest parties $i\not\in S\cup\IS$.
Next, the challenger hands these signatures $\sset{\sigma_i}_{i\in[n]\setminus\IS}$ to $\Adv$ who responds with an aggregate signature $\sigma'\in\XS$ and a message $m'\in\MS$.

Let us assume for the sake of contradiction that the adversary manages to generate an aggregate signature $\sigma'=(z',\pi)$, such that $\TSVerify(\pp,\sset{\vk_1,\ldots,\vk_n},m',\sigma')=1$ and  $m'\neq m$. From the proof of knowledge property of the PCD system, we know that given a verifying proof from a polynomial-size prover, there exists a polynomial-size extractor $\mathbb{E}_{\PCDProve}$ that can extract the witness. Recall that given a vector of input transcripts $\vect{z}_\inputvar$ and an output transcript $z_\outputvar$, the compliance predicate in our construction checks if the maximas and minimas of the input and output transcripts are ordered properly, the value of counter $c$ in the output transcript is equal to the sum of the counter values in the input transcripts and that the same Merkle hash of keys is used in all transcripts. Additionally, if any of the input transcripts correspond to base signatures, the compliance predicate also checks that the signature is valid \wrt the verification key specified in that transcript and also verifies the Merkle proof corresponding to this key and the Merkle hash. We now design an adversary $\BS$ that uses this extractor to either break unforgeability of the digital signature scheme or break soundness of the Merkle hash proof system.
The adversary $\BS$ starts by computing $z=z'||(\hash_\vk,\bot,\bot)$, where $\hash_\vk=\mhash(\seed,(1||\vk_1),\ldots,(n||\vk_n))$, initializing $\validset=\emptyset$ and running the following recursive algorithm $\BS_{\mathsf{ext}}(\PCDsigma,z)$:
\begin{enumerate}
	\item Compute $\trans\gets\mathbb{E}_{\PCDProve}(\PCDsigma,z)$.
	\item If $\compliance(\trans)=1$, for each valid input ``base'' transcript in $\trans$ of the form $(z_i,\bot)$ on $m'$ with $z_i=(m',1,i,i,\gamma_i,\hash_\vk,k_i,p_i)$ and $\gamma_i\neq\bot$, set  $\validset=\validset\cup\sset{(z_i,\pi_i)}$. For each partially aggregated signature on $m'$ in $\trans$ of the form $(z_i,\pi_i)$ with $z_i=(m',\cdot,\cdot,\cdot,\cdot,\cdot,\cdot,\cdot)$, check whether $\PCDVerify(\PCDtau, z_i, \pi_i)=1$ and if so, run $\BS_{\mathsf{ext}}(\PCDsigma,z_i)$.
\end{enumerate}
If $\ssize{\validset}\geq n/3$, the adversary $\BS$ succeeds in extracting at least $n/3$ transcripts of the form $(m',1,i,i,\gamma_i,\hash_\vk,k_i,p_i)$, each with a distinct $i$ (as enforced by the checks on the maximas and minimas) such that the following holds for each of these transcripts:
\begin{enumerate}[label=(\alph*)]
	\item $\DSVerify(\vk_i,m',\gamma_i)=1$.
	\item $\mverify(\seed,(i||k_i),\hash_\vk,p_i)=1$.
\end{enumerate}
Since $\hash_\vk$ was computed honestly by $\BS$, it holds for each extracted ``base'' transcript that either $\gamma_i$ is a valid signature \wrt $k_i=\vk_i$, or if $k_i\neq \vk_i$, then the adversary $\Adv$ has managed to break the soundness of the Merkle proof hash proof system. However, from \cref{thm:merkle-hash}, we know that this only happens with  at most negligible probability (in $\secParam$).
Now, since each $i$ (and thereby each $k_i$) is distinct in the extracted ``base'' transcripts, adversary $\BS$ has managed to extract at least $n/3$ valid signatures ($\gamma_i$) on $m'$. Since the adversary only had access to signatures on $m'$ from less than $n/3$ parties, this would imply that it has successfully forged signatures of some honest parties in the set $[n]\setminus S$. From \emph{unforgeability} of the digital signature scheme, we know that this can only happen with at most negligible probability (in $\secParam$).
\QED
\end{proof}

\cref{lem:SNARKSsuccinct,lem:SNARKSrobust,lem:SNARKSunforge} rely on PCD systems for logarithmic-depth and polynomial-size compliance predicates. By \cref{PCD_from_snarks}, such PCD systems exist assuming the existence of SNARKs with linear extraction.
This concludes the proof of \cref{thm:SNARKSconstruction}.
\QED
\end{proof}

\section{Connection with Succinct Arguments (Cont'd)}
In this section, we provide supplementary material for \cref{sec:succinct_arguments}.
In \cref{sec:sym-poly}, we prove \cref{thm:sym-poly-nphard}, and in \cref{sec:snarg-compliant} we formally define SNARG-compliant multi-signature schemes.

\subsection{Proof of \texorpdfstring{\cref{thm:sym-poly-nphard}}{Lg}}\label{sec:sym-poly}

\noindent{\bf \cref{thm:sym-poly-nphard}.}
\emph{\ThmSymPolyNPHard}

\medskip
\begin{proof}
We divide the proof as follows: (1) First, we show that for any ring $R=\field^n$ with Hadamard product satisfying $|R|=2^{\Theta(n)}$, then for any elementary symmetric polynomial $\phi_2$, the $R$-Subset-$\phi_2$ problem (see \cref{def:subset-f}) is NP-complete by showing a reduction to 3-SAT. (2) Second, we show the same for any $\phi_\ell$, where $\ell\geq 3$. (3) Finally, we show how these reductions can be modified to prove the existence of $s\in\Theta(n)$, for which $(s,R)$-Subset-$\phi_{\ell}$ (see \cref{def:subset-f}) is also NP-complete.
\medskip

\noindent\textbf{For $R$-Subset-$\phi_2$: }Given a 3-CNF formula $\Phi$ over variables $x_1,\ldots,x_N$ with clauses $C_1,\ldots,C_m$, each containing exactly three distinct literals, the reduction algorithm constructs an instance $x=(a_1,\ldots,a_{2+2N+3m},t)$ of the $R$-Subset-$\phi_2$ problem such that $\Phi$ is satisfiable if and only if there exists a subset $S\subseteq[2+2N+3m]$, such that $\phi_2(\sset{a_i}_{i\in S})=t$. The reduction algorithm constructs elements in $R=\field^{1+N+m}$ as follows:
\begin{enumerate}
	\item A special element $a_1=\alpha_0\in R$, whose first entry is 1 and all other entries are 0.
	\item A special element $a_2=\alpha_1\in R$ whose first $n+1$ entries correspond to $1$.
	\item For each variable $x_i$ (for $i\in [N]$), define two elements $a_{2+2i+1}=v_i\in R$ and $a_{2+2i+2}=v_i'\in R$ such that the \iith{(1+i)} entry of these elements is set to $1$.
	\item Define three elements $a_{2+2N+3j+1}=c_j^1,a_{2+2N+3j+2}=c_j^2,$ and $a_{2+2N+3j+3}=c_j^3$ corresponding to each clause $C_j$ (for $j\in[m]$). The \iith{(1+N+j)} entry in $c_j^1$ corresponds to 9, the \iith{(1+N+j)} entry in $c_j^2$ corresponds to 4 and the \iith{(1+N+j)} entry in $c_j^3$ corresponds to 2.
	The remaining entries in each of these correspond to $0$.
	\item The target element $t$ is also a  vector of $1+N+m$ elements in $\field$. The first $1+N$ entries in $t$ are set to $1$, while the remaining entries are set to 9.
\end{enumerate}

We now prove completeness and soundness of this reduction:
\paragraph{Completeness.}
Suppose $\Phi$ has a satisfying assignment $X$. We will construct a subset $S\subseteq [2+2N+3m]$ such that $\phi_2(\sset{a_i}_{i\in S})=t$. For each variable $x_i$, if $x_i$ is set to 1 in $X$, we include $a_{2+2i+1}=v_i$ in $S$, else we include $a_{2+2i+2}=v_i'$ in $S$.
We also include the two special elements $a_1=\alpha_0$ and $a_2=\alpha_1$ in $S$. Note that, $\alpha_0$ and $\alpha_1$ are the only elements whose first entry is 1, the first entry of all other elements is 0. This ensures that we have exactly 2 elements with value 1 in the first column.
Thus, the first entry of $t$ is guaranteed to be 1.
Also, apart from $\alpha_1$, for each $1\leq i\leq N$, there are only two other elements $v_i$ and $v_i'$ whose \iith{(1+i)} entry is set to 1. Including one of these for each $1\leq i\leq n$ along with $\alpha_1$ ensures that there are exactly two elements with value 1 in the \iith{(1+i)} column. Therefore, we are guaranteed to get 1 in each of the first $1+N$ entries of $t$.
	
Since $X$ is a satisfying assignment, each clause must contain at least one literal with the value 1. For each clause $C_j$, if there is exactly one literal with value 1 in the satisfying assignment $X$, we include $c_j^1$.
Note that $S$ now has exactly one element whose \iith{(1+N+j)} entry is set to 1 and exactly one element with 9 in this column.
All other elements in the subset have 0's in this position. This ensures that the \iith{(1+N+j)} entry of $t$ adds up to 9. If there are exactly two literals with value 1, we include $c_j^2$. In this case, there are exactly two elements that have value 1 in the \iith{(1+N+j)} column and exactly one element that has a value of 4 in this position. All other elements in the subset have 0's in this position. This ensures that the \iith{(1+N+j)} entry of $t$ adds up to
\[
(1\cdot 1) + (1\cdot 4) + (1\cdot 4)=9.
\]
Finally, if there are exactly three literals with value 1, we include $c_j^3$.  In this case, there are exactly three elements that have value 1 in \iith{(1+N+j)} column and exactly one element that has a value of 2 in this position. All other elements in the subset have 0's in this position. This ensures that the \iith{(1+N+j)} entry in $t$ adds up to
\[
(1\cdot 1) + (1\cdot 1)+(1\cdot 1)+(1\cdot 2)+(1\cdot 2) + (1\cdot 2)=9.
\]
Thus, the last $m$ entries in $t$ all add up to 9.

\paragraph{Soundness.}
Suppose there exists a subset $S\subseteq [2+2N+3m]$ whose pairwise sum of products is $t$. We show that this implies that there must be a satisfying assignment for $\Phi$. Note that $\alpha_0$ and $\alpha_1$ are the only elements whose first entry is 1, while the first entry of all other elements is set to 0. Since the first entry in $t$ is required to be 1, both $\alpha_0$ and $\alpha_1$ must be included in the set $S$.
	
For each $1\leq i\leq N$, there are exactly three elements $\alpha_1$, $v_i$ and $v_i'$ whose \iith{(i+1)} entry is 1. Since we have already included $\alpha_1$ in $S$, if we include both $v_i$ and $v'_i$, then the \iith{(i+1)} entry in result of $\phi_2$ applied over $S$ will be $(1\cdot1)+(1\cdot1)=2$. Since the characteristic of the field $\field$ is at least $63$, we know that $2\neq 1$. Therefore, we are assured that only one of $v_i$ or $v_i'$ can be included, but not both. Therefore, for each $1\leq i\leq n$, the set $S$ contains either $v_i$ or $v_i'$. If $v_i\in S$, we set $x_i=1$; else we set $x_i=0$.

We want the last $m$ entries in $t$ to all add up to 9 each. We note that for each $1\leq j\leq m$, there must be at least one element of the form $v_i$ or $v_i'$ in the subset $S$ that has its \iith{(1+n+j)} entry set to 1. This is because none of the combinations of $c_j^1,c_j^2,c_j^3$ that have $9,4,2$ in this position, respectively, can add up to give 9 when all other elements have 0 in this position:
\begin{itemize}
	\item If only one of either $c_j^1$ or $c_j^2$ or $c_j^3$ are included in $S$, then the \iith{(1+n+j)} entry in the result is trivially 0.
	\item If any two of $c_j^1$, $c_j^2$ and $c_j^3$ are included in $S$, then the \iith{(1+n+j)} entry in the result is $(9\cdot 4)=36$ or $(9\cdot 2)=18$ or $(4\cdot 2)=8$, depending on which $c_j$ values are included. Since the characteristic of the field $\field$ is at least $63$, we know that $36,18,8$ are all different than 9.
	\item If all three of $c_j^1$, $c_j^2$ and $c_j^3$ are included in $S$, then the \iith{(1+n+j)} entry in the result is $(9\cdot 4)+(9\cdot 2)+(4\cdot 2)=62$. As before, since the characteristic of the field $\field$ is at least $63$, we know that $62\neq 9$.
\end{itemize}
Therefore, there is at least one literal in each clause $C_j$ whose value is 1 and $\Phi$ has a satisfying assignment.

\medskip
Having proved NP-completeness of $R$-Subset-$\phi_2$, we proceed to prove the general case of $R$-Subset-$\phi_\ell$ for $\ell\geq 3$.

\noindent\textbf{For $R$-Subset-$\phi_{\ell}$, where $\ell\geq 3$:}
The reduction algorithm for reducing a given 3-CNF formula $\Phi$ with $N$ variables $x_1,\ldots,x_N$ and $m$ clauses $C_1,\ldots,C_m$, each containing exactly three distinct literals to an instance of $R$-Subset-$\phi_{\ell}$ and the proof of soundness for that reduction has already been discussed in the proof sketch of \cref{thm:sym-poly-nphard} in \cref{sec:snark-2}. Here we only prove the completeness for that reduction.

\paragraph{Completeness.}
Completeness follows similarly to the previous case. For a satisfying assignment $X$ for $\Phi$, for each $i\in[N]$, either $v_i$ or $v_i'$ is included in subset $S$. Since each monomial is a combination of $\ell$ numbers, we include all the special elements $\alpha_0,\alpha_1,\ldots, \alpha_{\ell-1}$ to get the value 1 in the first column $\ell$ times. This guarantees that the first $N+1$ entries in $t$ are all 1.
Since $X$ is a satisfying assignment, each clause contains at least one literal with the value 1. For each clause $C_j$ (for $j\in[m]$), if there is exactly one literal with value 1, we include all the $\ell-1$ elements $c_j$. If there are exactly two literals with value 1, we include $\ell-2$ elements $c_j$. And if there are exactly three literals with value 1, we include $\ell-3$ elements $c_j$. As before, this ensures that the value 1 appears exactly $\ell$ times in the last $m$ columns and $\phi_{\ell}$ will evaluate to the target value 1 in these positions.

\medskip
\noindent\textbf{For $(s,R)$-Subset-$\phi_{\ell}$ for some $s\in\Theta(n)$:}
Let $\Phi$ be a given $3$-CNF formula with $N$ variables $x_1,\ldots,x_N$ and $m$ clauses $C_1,\ldots,C_m$. It is easy to see that this instance can be reduced to another 3-CNF instance $\Phi'$ with $n'=\mathsf{max}(m,N)$ variables and $n'=\mathsf{max}(m,N)$ clauses by adding ``dummy'' variables and clauses.  We can then use the reduction algorithms discussed above to reduce $\Phi'$ to an instance of $R$-Subset-$\phi_{\ell}$ with $n=\ell+2n'+(\ell-1)n'$ elements in $R$. Recall that this reduction is such that for a satisfying assignment $X'$ for $\Phi'$, the corresponding witness $S$ for the $R$-Subset-$\phi_{\ell}$ instance contains the following:
\begin{itemize}
	\item $\ell$ elements: It contains elements $\alpha_0,\ldots,\alpha_{\ell-1}$.
	\item $n'$ elements: For each $i\in[n']$, it either contains $v_i$ or $v_i'$.
	\item At least $(\ell-3)n'$ elements: Depending on how many literals have value 1, in clause $C_j$ (for $j\in[n']$), $S$ contains at least $\ell-3$ elements $c_j$.
\end{itemize}
As a result, the subset $S$ for the $R$-Subset-$\phi_{\ell}$ instance  contains at least $\ell+n'+(\ell-3)n'$ out of $n=\ell+2n'+(\ell-1)n'$ elements, \ie $s=|S|\in\Theta(n)$ for each $\ell\in[n]$. In other words,  there exists $s\in\Theta(n)$, for which $(s,R)$-Subset-$\phi_{\ell}$ is NP-complete.

\medskip
This concludes the proof of \cref{thm:sym-poly-nphard}.
\QED
\end{proof}

\subsection{SNARG-Compliant Multi-Signatures and Subset-\texorpdfstring{$\phi_\ell$}{Lg}}\label{sec:snarg-compliant}

In this section, we identify the properties of multi-signatures used in \cref{lem:lossw-snargs} to provide the connection with average-case SNARGs. We call multi-signature schemes that satisfy these properties as \emph{SNARG-compliant} multi-signature schemes.
\begin{definition}[SNARG-compliant multi-signatures]\label{def:SRDScompliant}
A multi-signature scheme $(\MSGen$, $\MSSign$, $\MSVerify$, $\MSCombine$, $\MSMverify)$ is \textsf{SNARG compliant} if it satisfies the following properties:
\begin{enumerate}
	\item The algorithm $\MSMverify$ is deterministic.
	\item Verification keys are independently and uniformly sampled from a ring $R=\field^k$ (for some $k$) with Hadamard Product. 	
	\item There exist polynomial-time algorithms $\MSVerifyAgg$ and $\fagg$, such that given a multi-signature $\sigmams\in\XSms$ on a message $m\in\MS$, corresponding to a set of keys $\sset{\vk_{i}}_{i\in S}$ for some subset $S\subseteq[n]$, the algorithm $\MSMverify(\pp_\ms, \{\vk_{i}\}_{i\in[n]},S,m,\sigma_{\ms})$ can be decomposed as follows:
    \begin{enumerate}
        \item $\vkagg=\fagg(\sset{\vk_i}_{i\in S})$.
        \item $b=\MSVerifyAgg(\ppms, \vkagg, m,\sigmams)$.
    \end{enumerate}
    \item There exists a PPT algorithm $\MSVerifyInv$ that on input the public parameters $\ppms$, a message $m$ and a multi-signature $ \sigmams$, outputs $\vk\in R$.

        We require that for $\vkagg=\fagg(\sset{\vk_i}_{i\in S})\in R$ and $\MSVerifyAgg(\ppms,\vkagg, m,\sigmams)=1$, it holds that $\MSVerifyInv$ computes the corresponding unique and well-defined key $\vkagg$, \ie
        \[
        \MSVerifyInv(\ppms,m,\sigmams)=\fagg(\sset{\vk_i}_{i\in S}).
        \]
    \item There exist degenerate keys $\skdeg$ and $\vkdeg$, and a PPT algorithm $\MSSignDeg$ such that $\sigmams\gets\MSSignDeg(\ppms,\skdeg,m)$ satisfies $\MSVerifyAgg(\pp_{\ms},\vkdeg,m,\sigma_{\ms})=1$.
    \end{enumerate}	
\end{definition}

We now show that an SRDS scheme based on a SNARG-compliant multi-signature scheme with key-aggregation function $\fagg=\phi_{\ell}$, implies SNARGs for average-case Subset-$\phi_{\ell}$. This reduction can be viewed as a generalization of \cref{lem:lossw-snargs}.

\begin{lemma} \label{lem:snarg-compliant}
Let $\field$ be a field, let $R=\field^k$ (for some $k$) be a ring with Hadamard product, let $\phi_\ell$ (for some $\ell\in\NN$, $\ell>1$) be an elementary symmetric polynomial over $R$, let $0<\alpha<1$ be a constant, and let $s(n)=\alpha \cdot n$. Assume that $\ssize{\field}=n^{\omega(1)}$ and that $n/\log|R|<1$.

If there exists an SRDS scheme based on a SNARG-compliant multi-signature scheme with key-aggregate function $\fagg=\phi_{\ell}$, then there exist SNARGs for average-case $(\subsetsize,R)$-Subset-$\phi_{\ell}$.
\end{lemma}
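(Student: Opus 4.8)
The plan is to generalize the proof of \cref{lem:lossw-snargs}, replacing the multiplicative-group structure of the LOSSW scheme (and its concrete algorithms) by the abstract interface of a SNARG-compliant multi-signature scheme (\cref{def:SRDScompliant}) and replacing the key-aggregation function $\prod$ by the elementary symmetric polynomial $\fagg=\phi_\ell$. As there, $\ProofSetup(1^\secParam,1^n)$ runs the SRDS setup and outputs $\crs=\ppsrds$. On a $\yes$ instance-witness pair $(x,w)$ with $x=(a_1,\ldots,a_n,t)\in R^{n+1}$, $w=S$ of size $\subsetsize(n)$, and $\phi_\ell((a_i)_{i\in S})=t$, the prover $\ProofProve$ interprets $a_1,\ldots,a_n$ as SRDS verification keys $\vk_1,\ldots,\vk_n$, appends a short set of \emph{gadget} keys $\vk_{n+1},\ldots,\vk_{n+g}$ computed deterministically from $t$ (see below), uses the degenerate signing algorithm to obtain $\sigmams\gets\MSSignDeg(\ppms,\skdeg,m)$ for an arbitrary $m\in\MS$, and invokes the algorithm $\progP$ of \cref{def:SRDS_multisig} on the subset $S'=S\cup\{n+1,\ldots,n+g\}$ to obtain the auxiliary string $\pi$; the output is $(m,\sigmams,\pi)$. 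The verifier $\ProofVerify$ re-derives $\vk_1,\ldots,\vk_{n+g}$ from $x$ and accepts iff $\MSVerifyAgg(\ppms,\vkdeg,m,\sigmams)=1$ and $\TSVerify(\ppsrds,\vk_1,\ldots,\vk_{n+g},m,(\sigmams,\pi))=1$.

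The crux, and the step I expect to be the main obstacle, is the gadget construction. In LOSSW a single extra key $\vk_{n+1}=t^{-1}$ sufficed, because $\prod$ is a group operation, so $\prod_{i\in S'}\vk_i$ equals the degenerate aggregate key $\vkdeg=1$ whenever $\prod_{i\in S}a_i=t$. For $\fagg=\phi_\ell$ the plan is to build a set $G(t)$ of size $g=O(\ell)$ (hence $n+g=n+O(1)$, since $\ell$ is a constant) such that, coordinate-wise over $R=\field^k$, the lower elementary symmetric functions of $G(t)$ vanish, $\phi_j(G(t))=0$ for $1\le j\le\ell-1$, while $\phi_\ell(G(t))=\vkdeg-t$. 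Using the expansion $\phi_\ell(A\cup G)=\sum_{j=0}^{\ell}\phi_j(A)\,\phi_{\ell-j}(G)$ and the convention $\phi_0=1$, all cross-terms then collapse and $\phi_\ell(\{\vk_i\}_{i\in S'})=\phi_\ell((a_i)_{i\in S})+(\vkdeg-t)=\vkdeg$ for \emph{every} admissible witness $S$. Producing $O(\ell)$ field elements whose top elementary symmetric function takes a prescribed value and whose lower ones vanish is, via Newton's identities (valid once $\characteristic(\field)>\ell$, in the spirit of the characteristic bound of \cref{thm:sym-poly-nphard}), equivalent to writing a prescribed value as a sum of $\ell$-th powers with the lower power sums vanishing --- a Waring-type statement that holds over $\field$ once $\ssize{\field}$ is large enough, which is exactly what $\ssize{\field}=n^{\omega(1)}$ buys. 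I will also need a little bookkeeping: choosing $\alpha$ (equivalently, mildly padding the instance) so that $\ssize{S'}=\subsetsize(n)+g$ matches the SRDS's built-in subset parameter on $n+g$ parties, and checking that the $g=O(1)$ non-uniform gadget keys do not interfere with the completeness guarantee of $\progP$, which is stated for uniformly sampled keys --- this is where property 2 of \cref{def:SRDScompliant} (keys uniform over $R$, matching the uniform $a_i$ in the support of $\yesD$) enters, with the gadget keys absorbed into the overwhelming-probability statement.

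Granting the gadget, the remaining properties follow the LOSSW proof closely. \emph{Succinctness} is immediate from SRDS succinctness: $\ssize{\sigmams},\ssize{\pi}\in\poly(\log n,\secParam)$. \emph{Completeness} combines (a) the gadget identity $\phi_\ell(\{\vk_i\}_{i\in S'})=\vkdeg$; (b) property 5 of \cref{def:SRDScompliant}, giving $\MSVerifyAgg(\ppms,\vkdeg,m,\sigmams)=1$, which by the decomposition of property 3 is exactly $\MSMverify(\ppms,\vk_1,\ldots,\vk_{n+g},S',m,\sigmams)=1$; and (c) the completeness of the SRDS based on a multi-signature scheme (\cref{def:SRDS_multisig}), so that $\progP$ outputs a $\pi$ with $\TSVerify(\ppsrds,\ldots,m,(\sigmams,\pi))=1$ with overwhelming probability. \emph{Average-case soundness}: on $x\gets\noD(1^n)$ the target $t$ is uniform over $R$, so the key tuple consists of the uniform $a_1,\ldots,a_n$ together with the $O(1)$ gadget keys determined by $t$. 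A union bound over the at most $2^{n+g}$ candidate subsets $S'$ of size $\subsetsize'$, combined with a Schwartz-Zippel-type estimate that uses the multilinearity of $\phi_\ell$ in each $a_i$ and the largeness of $\field$ (fixing one free coordinate $a_i$, $i\le n$, inside $S'$ pins $\phi_\ell(\{\vk_j\}_{j\in S'})$ to an affine function of $a_i$ whose linear part is, on most coordinates, a unit of $R$), shows that with overwhelming probability \emph{no} $S'$ satisfies $\phi_\ell(\{\vk_i\}_{i\in S'})=\vkdeg$; this is where $n/\log\ssize{R}<1$ (used with quantitatively generous $\ssize{R}$, as in the remark following \cref{def:subset-product}) is needed. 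Finally, property 4 of \cref{def:SRDScompliant} makes the aggregate key recovered by $\MSVerifyInv(\ppms,m,\sigmams)$ \emph{unique}, so if the verifier accepts then $\MSVerifyAgg(\ppms,\vkdeg,m,\sigmams)=1$ forces $\MSVerifyInv(\ppms,m,\sigmams)=\vkdeg$; hence ``$\TSVerify$ accepts, yet no $S'$ of size $\subsetsize'$ has $\phi_\ell(\{\vk_i\}_{i\in S'})=\vkdeg$'' is precisely a break of the soundness of the SRDS based on a multi-signature scheme, which occurs with negligible probability. Combining these bounds gives $\Pr[\ProofVerify(\crs,x,\pi)=1\mid x\gets\noD]\le\negl(n,\secParam)$, completing the argument. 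As indicated, the part requiring the most work is the gadget construction over a general field and the reconciliation of its distributional side-conditions with the completeness hypothesis of \cref{def:SRDS_multisig}.
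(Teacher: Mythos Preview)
Your overall framework matches the paper's, but the gadget construction diverges substantially from what the paper actually does, and this is where the real gap lies.

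The paper does not build a witness-independent gadget set $G(t)$ with $\phi_1(G)=\cdots=\phi_{\ell-1}(G)=0$ and $\phi_\ell(G)=\vkdeg-t$. Instead it appends a \emph{single} extra key $a_{n+1}$, using the elementary one-element expansion
\[
\phi_\ell\big(\{a_i\}_{i\in S}\cup\{a_{n+1}\}\big)=\phi_\ell\big(\{a_i\}_{i\in S}\big)+a_{n+1}\cdot\phi_{\ell-1}\big(\{a_i\}_{i\in S}\big),
\]
and sets $a_{n+1}=(\vkdeg-t)\cdot\alpha^{-1}$ where $\alpha=\phi_{\ell-1}(\{a_i\}_{i\in S})$. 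The only side condition is that $\alpha$ be a unit in $R=\field^k$, i.e., have no zero coordinate; with the $a_i$ uniform and $\ssize{\field}=n^{\omega(1)}$ this holds with probability $1-k/\ssize{\field}=1-\negl$, and this is the sole place that hypothesis enters completeness. Soundness then runs exactly as in the LOSSW case: for $x\gets\noD$ the uniform $t$ makes $a_{n+1}$ uniform, so the $n{+}1$ keys are jointly uniform over $R^{n+1}$ and the $2^{n+1}/\ssize{R}$ union bound (from $n/\log\ssize{R}<1$) applies directly; property~4 of \cref{def:SRDScompliant} is used just as you outline.

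Your gadget, by contrast, asks for $g=O(\ell)$ elements of $\field$ (coordinate-wise) with $e_1=\cdots=e_{\ell-1}=0$ and $e_\ell$ prescribed. For $g=\ell$ these are exactly the roots of $x^\ell+(-1)^\ell(\vkdeg-t)$, so you need that value to be an $\ell$-th power in $\field$ and $\field$ to contain a primitive $\ell$-th root of unity---neither is implied by $\ssize{\field}=n^{\omega(1)}$. For $g>\ell$ the question becomes whether some degree-$g$ monic polynomial with its top $\ell$ nontrivial coefficients pinned splits over $\field$; the density of completely-split polynomials over a finite field of size $q$ is $\approx 1/g!$ regardless of $q$, so field size alone does not help. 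The ``Waring-type statement'' is therefore not justified by the stated hypotheses, and the step you correctly flag as the main obstacle is in fact unresolved. The paper's one-element trick avoids all of this; the trade-off is that its $a_{n+1}$ depends on the witness $S$ through $\alpha$, not only on $t$---which may be what you were trying to engineer around, but the paper simply accepts it.
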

\begin{proof}
We give a construction of average-case SNARGs for $(\subsetsize,R)$-Subset-$\phi_{\ell}$ using an SRDS scheme based on an SRDS-compliant multi-signature scheme as per Definitions \ref{def:SRDS_multisig} and \ref{def:SRDScompliant}.
\begin{enumerate}
    \item
    $\ProofSetup(1^\secParam,1^n):$ Run the setup of the SRDS scheme $\TSSetup(1^\secParam,1^n)$ to output $\crs=(\ppms,\pp_2)$.
    \item $\ProofProve(\crs,x,w):$
    Given an average-case $\yes$ instance-witness pair $(x,w)\gets\yesD(1^n)$ of the form $x=(a_1,\ldots,a_n,t)$ and $w=S$, proceed as follows:
    \begin{itemize}[leftmargin=*]
        \item
        Let $\alpha=\phi_{\ell-1}(\sset{a_i}_{i\in S})$ and let $\vkdeg$ be the degenerate aggregate verification key.
        If $\alpha$ does not have an inverse in $R$, output $\bot$ and terminate. Else, compute
        \[
        a_{n+1}=(\vkdeg-\phi_\ell(\sset{a_i}_{i\in S}))\cdot \alpha^{-1}=(\vkdeg-t)\cdot \alpha^{-1}.
        \]
        Parse $\crs=(\ppms,\pp_2)$ and interpret the set $\sset{a_1,\ldots,a_n,a_{n+1}}$ as a set of $n+1$ verification keys $\sset{\vk_1,\ldots,\vk_{n+1}}$. Note that $\phi_{\ell}(\sset{\vk_i}_{i\in S'})=\vkdeg$ for $S'=S\cup\sset{n+1}$.
        \item Choose an arbitrary $m\in\MS$ and use $\MSSignDeg$ (as defined in \cref{def:SRDScompliant}) to compute
        \[
        \sigmams\gets\MSSignDeg(\ppms,\skdeg,m).
        \]
        \item Use the algorithm $\progP$ (that exists from \cref{def:SRDS_multisig}) to compute
        \[
        \pi\gets\progP(\crs,\vk_1,\ldots,\vk_{n+1}, S', m,\sigmams).
        \]
        \item Finally, output $(m,\sigmams,\pi)$.
    \end{itemize}
    \item $\ProofVerify(\crs,x,\pi):$ Parse $\crs=(\ppms,\pp_2)$ and $x=(a_1,\ldots,a_n,t)$, and proceed as follows:
    \begin{itemize}[leftmargin=*]
    	\item Compute $a_{n+1}$ as in the prover algorithm.
    	Interpret the set $\sset{a_1,\ldots,a_n,a_{n+1}}$ as a set of $n+1$ verification keys $\sset{\vk_1,\ldots,\vk_{n+1}}$.
    	\item
        Compute $\vk=\MSVerifyInv(\ppms,m,\sigmams)$ and check if $\vk$ equals the degenerate verification key $\vkdeg$ (that, by construction, satisfies $\vkdeg=\phi_{\ell}(\sset{\vk_i}_{i\in S'})$). Set $b'=1$ if $\vk=\vkdeg$ and $b'=0$ otherwise.
    	\item Run the verification algorithm
    	of the SRDS scheme $$b\gets\TSVerify((\ppms,\pp_2),\vk_1,\ldots,\vk_n,m,(\sigmams,\pi)).$$
    	\item Finally output $b\wedge b'$.
    \end{itemize}
\end{enumerate}

We now argue succinctness, completeness, and average-case soundness for this construction:

\smallskip\noindent\textbf{Succinctness.}
Succinctness follows from the succinctness of the SRDS scheme.

\smallskip\noindent\textbf{Completeness.}
Recall that each of the values $(a_1,\ldots,a_n)$ in an average case $\yes$ instance is sampled uniformly at random; hence, the output of an elementary symmetric polynomial on a randomly chosen subset $S$ of these values is also uniformly distributed. Given any average-case $\yes$ instance-witness pair $(x,w)\gets\yesD(1^n)$ of the form $x=(a_1,\ldots,a_n,t)$ and $w=S$, the probability that $\phi_{\ell-1}(\sset{a_i}_{i\in S})$ has an inverse in $R$ is $1-k/|\field|$.\footnote{We note that all elements of $R=\field^k$, except for the ones with a 0 in any of its vector coordinates, have an inverse in $R$.} Since our proof system only works for such instances, the rest of this argument assumes that this is the case.
Given $x=(a_1,\ldots,a_n,t)$ and $w=S$, it holds that $\fagg(\sset{a_i}_{i\in S})=\phi_{\ell-1}(\sset{a_i}_{i\in S})=t$ or equivalently, it holds for $S'=S\cup\sset{n+1}$ that
\[
\phi_\ell(\sset{a_i}_{i\in S'})=\phi_\ell(\sset{a_i}_{i\in S})+ a_{n+1}\cdot\phi_{\ell-1}(\sset{a_i}_{i\in S}) =\vkdeg.
\]
Recall in an SRDS-compliant multi-signature scheme, it holds that
\[
\MSVerifyInv(\ppms,m,\sigmams)=\phi_{\ell}(\sset{\vk_i}_{i\in S'}) =\vkdeg.
\]
Hence, $\MSVerifyAgg(\ppms, \vkdeg, m, \sigmams)=1$, \ie $\sigmams$ is a valid multi-signature on $m$ with respect to $\vkdeg$.
Since the multi-signature satisfies $\MSVerifyInv(\ppms, m, \sigmams)=\vkdeg$, completeness of SRDS based on an SRDS-compliant multi-signature scheme (see \cref{def:SRDS_multisig}) implies that the output of $\progP$, given this signature and $S'$ will be a valid SRDS signature. Completeness now holds with an overwhelming probability since $\phi_{\ell-1}(\sset{a_i}_{i\in S})$ has an inverse in $R$ with an overwhelming probability of $1-k/|\field|$.

\smallskip\noindent\textbf{Average-case soundness.}
Recall that each of the values $(a_1,\ldots,a_n,t)$ in $x\gets\noD(1^n)$ is sampled uniformly at random.
Let $\alpha=\phi_{\ell-1}(\sset{a_i}_{i\in S})$ and assume that $\alpha^{-1}$ exists.
Since $t$ is a randomly sampled value, so is $a_{n+1}=(\vkdeg-t)\cdot\alpha^{-1}$ for any $S\subseteq[n]$. We interpret the set of $n+1$ verification keys as  $\vk_i=a_i$ for $i\in[n+1]$; thus, the verification keys $\sset{\vk_1,\ldots,\vk_{n+1}}$ are uniformly distributed over $R$.
Since $n/\log |R|<1$ and the output of elementary symmetric polynomials is uniformly distributed, then with overwhelming probability (bounded by ${2^{n+1}}/{|R|}$),
there does not exist a subset $S'\subseteq[n+1]$ of size $\subsetsize+1$, such that $\phi_{\ell}(\sset{a_i}_{i\in S'})=\vkdeg$.

Given $(m,\sigmams,\pi)$, we check if: (1) $\sigmams$ is a valid multi-signature on $m$ with respect to $\vkdeg$ and (2) if $(\sigmams,\pi)$ is a valid SRDS on $m$. Recall that in a SNARG-compliant multi-signature scheme, given a multi-signature $\sigma_{\ms}$,  a message $m$, and public parameters $\pp_{\ms}$, there exists a unique aggregate verification key $\vkagg$ with respect to which $\sigma_{\ms}$ verifies, \ie
\[
\MSVerifyInv(\pp_{\ms},m,\sigma_{\ms})=\vkagg.
\]
Therefore, if check (1) goes through, then $\vkagg=\vkdeg$ is the only aggregate verification key for which $\sigma_{\ms}$ is a valid multi-signature on $m$. As argued earlier, with a high probability there does not exist a subset $S'\subseteq[n+1]$ such that $\phi_{\ell}(\{\vk_i\}_{i\in S'})=\vkdeg$. Also, from the soundness of SRDS based on a multi-signature scheme (\cref{def:SRDS_multisig}), we know that if there does not exist a subset $S'\subseteq[n+1]$ of size $s+1$, such that $\sigmams$ is a valid multi-signature on $m$ with respect to $\sset{\vk_i}_{i\in S'}$, then the probability of an adversary computing a valid SRDS $(\sigmams,\pi)$ on a message $m$ is negligible. Soundness now follows from the soundness of SRDS based on a multi-signature scheme.
\QED
\end{proof} 

\end{document}